\newcolumntype{C}{>{\centering\arraybackslash}X}
\DeclareMathOperator{\id}{id}
\DeclareMathOperator{\spn}{span}
\newtheorem{theorem}{Theorem}
\newtheorem*{theorem*}{Theorem}
\newtheorem{lemma}[theorem]{Lemma}
\newtheorem{proposition}[theorem]{Proposition}
\newcommand{\red}{\color{red}}
\def \be {\begin{equation}}
\def \ee {\end{equation}}
\def \sofc2{{\cal S}({\mathbb C}^2)}
\def\>{\rangle}
\def\<{\langle}
\DeclareMathOperator\supp{supp}
\begin{document}

\title{Generalized Quantum Stein's Lemma for Classical-Quantum Dynamical Resources}

\author{Masahito Hayashi}
\email{hmasahito@cuhk.edu.cn}
\affiliation{School of Data Science, The Chinese University of Hong Kong, Shenzhen, Longgang District, Shenzhen, 518172, China}
\affiliation{International Quantum Academy, Futian District, Shenzhen 518048, China}
\affiliation{Graduate School of Mathematics, Nagoya University, Chikusa-ku, Nagoya 464--8602, Japan}
\author{Hayata Yamasaki}
\email{hayata.yamasaki@gmail.com}
\affiliation{
Department of Computer Science, Graduate School of Information Science and Technology, The University of Tokyo, 7--3--1 Hongo, Bunkyo-ku, Tokyo, 113--8656, Japan
}

\begin{abstract}
Channel conversion constitutes a pivotal paradigm in information theory and its applications to quantum physics, providing a unified problem setting that encompasses celebrated results such as Shannon's noisy-channel coding theorem. Quantum resource theories (QRTs) offer a general framework to study such problems under a prescribed class of operations, such as those for encoding and decoding. In QRTs, quantum states serve as static resources, while quantum channels give rise to dynamical resources. A recent major advance in QRTs is the generalized quantum Stein's lemma, which characterizes the optimal error exponent in hypothesis testing to discriminate resource states from non-resourceful states, enabling a reversible QRT framework for static resources where asymptotic conversion rates are fully determined by the regularized relative entropy of resource. However, applications of QRTs to channel conversion require a framework for dynamical resources. The earlier extension of the reversible framework to a fundamental class of dynamical resources, represented by classical-quantum (CQ) channels, relied on state-based techniques and imposed an asymptotic continuity assumption on operations, which prevented its applicability to conventional channel coding scenarios. To overcome this problem, we formulate and prove a generalized quantum Stein's lemma directly for CQ channels, by developing CQ-channel counterparts of the core proof techniques used in the state setting. Building on this result, we construct a reversible QRT framework for CQ channel conversion that does not require the asymptotic continuity assumption, and show that this framework applies to the analysis of channel coding scenarios. These results establish a fully general toolkit for CQ channel discrimination and conversion, enabling their broad application to core conversion problems for this fundamental class of channels.
\end{abstract}

\maketitle

\tableofcontents

\section{Introduction}

\paragraph*{Background}

A foundational result in information theory~\cite{cover2012elements} is Shannon's noisy-channel coding theorem~\cite{6773024}, which characterizes the maximum rate at which messages can be transmitted through repeated uses of a classical communication channel.
Physically, this gives an operational measure of how close a noisy, dissipative process, such as that for telecommunication, is to being noiseless.
The same concept is also fundamental in quantum information theory~\cite{hayashi2016quantum,Holevo+2019,watrous_2018,Wilde_2017}, where channels may take quantum states as inputs or outputs.
The central question then becomes: how many identity channels can be simulated per use of a noisy quantum channel, with vanishing error?
The corresponding quantum analogues of capacities~\cite{IEEE-IT-651037,PhysRevA.56.131,PhysRevLett.83.3081,1035117,PhysRevA.55.1613,shor2002quantum,1377491} quantify the noiselessness of quantum dynamics.
Beyond communication, these ideas have found broad applications in physics, offering a unifying tool for the quantitative study of many-body systems~\cite{10.1063/1.1499754} and quantum gravity~\cite{Patrick_Hayden_2007}.
An apparently different yet equally fundamental task is its reverse problem: implementing as many uses of a given noisy channel as possible using only a noiseless
identity channel.
The optimal achievable rate for this reverse process is characterized by the celebrated reverse Shannon theorems~\cite{1035117,6757002}.
Together, these problems are unified in the paradigm of channel conversion, which serves as a foundation for quantum information theory.

The quantum generalization of channels is not unique.
One may consider classical-quantum (CQ) channels with classical inputs and quantum outputs, or fully quantum (QQ) channels with both quantum inputs and outputs~\cite{hayashi2016quantum,Holevo+2019}.
While QQ channels are natural, they are often intractable to analyze due to the lack of analytical techniques to handle quantum inputs.
CQ channels, by contrast, provide a more tractable extension of classical channels while exhibiting distinctive features absent in general QQ channels, as seen for example in channel resolvability~\cite{11005630}.
They therefore constitute a fundamental class of channels in the study of channel conversion problems.

Quantum resource theories (QRTs)~\cite{Kuroiwa2020,Chitambar2018} provide systematic techniques for analyzing the problems of converting resources for quantum information processing.
In QRTs, free operations specify the allowed conversions, and resources may appear either in states, called static resources, or in channels, called dynamical resources~\cite{Chitambar2018}.
A recent major advance in QRTs is the generalized quantum Stein's lemma, originally proposed in Refs.~\cite{Brand_o_2008,brandao2010reversible,Brandao2010} and proven in Refs.~\cite{hayashi2025generalizedquantumsteinslemma,10898013}, following multiple prior attempts~\cite{berta2023gap,yamasaki2024generalized}.\footnote{Note that Refs.~\cite{fang2025generalizedquantumasymptoticequipartition,fang2025errorexponentsquantumstate} more recently studied another variant of composite quantum hypothesis testing; however, their analyses impose an additional assumption on stability of polar sets under tensor product and therefore do not apply to the original setting of the generalized quantum Stein's lemma in Refs.~\cite{Brand_o_2008,brandao2010reversible,Brandao2010}, where the assumptions were later relaxed in Ref.~\cite{hayashi2025generalizedquantumsteinslemma}.}
It shows that the optimal error exponent in hypothesis testing for distinguishing independent and identically distributed (IID) copies of a state from a non-IID set of non-resource states is given by the regularized relative entropy of resource.
Using this, one can construct a reversible QRT framework for state conversion~\cite{Brandao2015,regula2023reversibility,hayashi2025generalizedquantumsteinslemma}, where the optimal rate of asymptotic conversion between resource states is fully characterized by the regularized relative entropy of resource of the states.

However, applications of QRTs to channel conversion problems require a framework for dynamical resources~\cite{Takagi2020}, and extending the reversible QRT framework from static to dynamical resources is generally nontrivial.
Reference~\cite{hayashi2025generalizedquantumsteinslemma} proposed such an extension of the reversible framework to CQ channels, but with two major restrictions.
First, it reduced the channel conversion problem to the state setting by considering the Choi states of channels, so distinguishability was measured by the trace distance between Choi states.
This measure reflects average-case distinguishability over channel inputs, but it is incompatible with the operationally natural diamond distance~\cite{10.1145/276698.276708}, which captures worst-case distinguishability over inputs.
Second, it imposed an asymptotic continuity condition on the free operations for channel conversion.
Because general superchannels for channel conversion do not satisfy this condition, the framework could not be applied to important tasks such as channel coding, where input optimization is essential; problematically, optimization over channel inputs typically violates the asymptotic continuity of the operations used in channel conversion.
As long as one relies on the existing state setting of the generalized quantum Stein's lemma, it remains challenging to derive a channel-conversion framework without these limitations.

\paragraph*{Main results and impact}

In this work, we resolve this challenge by formulating and proving a generalized quantum Stein's lemma directly for CQ channels, a fundamental class of dynamical resources.
Our result applies to the task of quantum hypothesis testing to distinguish IID copies of a CQ channel from a non-IID set of non-resource CQ channels, achieved by choosing an optimal classical input to the channel and performing the corresponding positive operator-valued measure (POVM) on its output.
The optimal performance in this task is characterized by the regularized channel divergence~\cite{cooney2016strong} between multiple copies of the given CQ channel and the closest CQ channel in the set, which serves as a natural extension of the relative entropy of resource from states to channels~\cite{Gour2019a}.
The study of channel discrimination was initiated in Ref.~\cite{5165184}, which mainly considered distinguishing IID copies of two classical channels.
This is extended to replacer channels, i.e., a special class of CQ channels, in Ref.~\cite{cooney2016strong}.
The quantum Stein's lemma for discriminating IID copies of CQ channels was analyzed in Ref.~\cite{wilde2020amortized}.
Similarly, Hoeffding bounds for asymptotic discrimination of CQ channels were obtained in Ref.~\cite{PhysRevA.105.022419}.

Extending beyond these works, our generalized quantum Stein's lemma for CQ channels applies to the discrimination of IID copies of a CQ channel from a non-IID set of CQ channels, in direct analogy with the generalized quantum Stein's lemma for states~\cite{hayashi2025generalizedquantumsteinslemma,10898013,Brandao2010}, which extends the original quantum Stein's lemma~\cite{hiai1991proper,887855} from IID states to a non-IID set of states.
The extension to the CQ-channel setting is nontrivial because CQ channels involve multiple possible inputs, in contrast to the single-state setting.
Nevertheless, we develop CQ-channel counterparts of the key techniques used in the state version of the generalized quantum Stein's lemma in Ref.~\cite{hayashi2025generalizedquantumsteinslemma}, including the pinching technique~\cite{hayashi2002optimal}, the information spectrum method~\cite{4069150}, and error-exponent bounds derived from R\'{e}nyi relative entropies~\cite{887855,cooney2016strong}.

Building on the CQ-channel version of the generalize quantum Stein's lemma, we further construct a reversible QRT framework for CQ channel conversion.
Unlike the earlier framework of Ref.~\cite{hayashi2025generalizedquantumsteinslemma}, which reduced channels to their Choi states and assessed distinguishability via the trace distance between these states, our formulation works directly with CQ channels and characterizes distinguishability using channel divergence and the diamond distance.
This shift is critical: whereas the trace distance between Choi states captures average-case distinguishability, the diamond distance captures worst-case distinguishability, thereby making the approximation requirement in channel conversion strictly more demanding.
At the same time, our framework removes the asymptotic continuity requirement imposed in Ref.~\cite{hayashi2025generalizedquantumsteinslemma}, so that the asymptotic resource-non-generating property now becomes the only condition on free operations, directly paralleling the reversible framework for static resources originally proposed in Refs.~\cite{Brand_o_2008,brandao2010reversible,Brandao2010}.
Taken together, this framework embodies a nontrivial trade-off: it requires addressing a strictly harder channel-conversion task under a more relaxed class of operations, making it a priori unclear whether the achievable conversion rates should be higher or lower than those in the earlier framework of Ref.~\cite{hayashi2025generalizedquantumsteinslemma}.
Nevertheless, we prove that the optimal asymptotic conversion rate between CQ channels in our framework is exactly given by the regularized relative entropy of resource~\cite{Gour2019a}, defined here through channel divergence~\cite{cooney2016strong}.
This establishes a reversible framework for channel conversion that is both conceptually stronger and practically more applicable than previous approaches, directly accommodating conventional channel coding scenarios where input optimization is essential.

As an application, we derive bounds on channel capacities and conversion rates of CQ channels by applying our framework to the case where the set of non-resources consists of replacer channels, which by construction have zero capacity.
In this setting, the resulting reversible framework not only recovers known capacity bounds but also serves as a natural extension of no-signaling, entanglement-assisted, and randomness-assisted scenarios, without requiring additional assumptions on asymptotic continuity.
Overall, this advances the theory of reversible QRTs from static to dynamical resources and establishes a general toolkit for analyzing discrimination and conversion in this fundamental class of channels.

Finally, we emphasize the significance of focusing on CQ channels as a fundamental and tractable class of dynamical resources.
Their classical inputs allow us to extend proof techniques beyond the state setting of the generalized quantum Stein's lemma, while still capturing nontrivial quantum features at the channel output, making them rich enough to model quantum communication scenarios yet sufficiently structured to enable rigorous analysis.
By contrast, in fully quantum settings with QQ channels, it remains unclear whether analogous properties hold at all, and addressing this more general case is left as a natural open question for future work.
Nevertheless, as our results demonstrate, CQ channels serve as a natural bridge between static resources and the full generality of dynamical resources; they recover the state-based results~\cite{hayashi2025generalizedquantumsteinslemma} as the special case of a single channel input, while also encompassing classical channels as further special cases, thereby establishing a unified treatment of discrimination and conversion tasks for states and channels across these settings.
Our results therefore provide not only a powerful tool for analyzing the conversion problems for CQ channels, but also a robust theoretical foundation that clarifies the role of this fundamental class of channels in the broader landscape of QRTs for static and dynamical resources.

\paragraph*{Organization of this paper}

The remainder of this paper is structured as follows.
In Sec.~\ref{sec:QRTs}, we formulate QRTs for CQ channels and introduce the assumptions underlying our setting.
In Sec.~\ref{sec:stein}, we analyze quantum hypothesis testing for CQ dynamical resources and prove the generalized quantum Stein's lemma for CQ channels, characterizing its optimal error exponent.
In Sec.~\ref{sec:reversible}, we present the reversible framework of QRTs for converting CQ channels and characterize the optimal asymptotic conversion rate in this framework.
In Sec.~\ref{sec:application}, we demonstrate applications of the reversible framework to analyzing channel capacities.
Finally, in Sec.~\ref{sec:conclusion} concludes with a summary and outlook.

\section{CQ dynamical resources}
\label{sec:QRTs}

In this section, we introduce QRTs for CQ channels, i.e., CQ dynamical resources, and the notations used in our analysis.
For the basic concepts of quantum information theory, we refer readers to the standard textbooks~\cite{hayashi2016quantum,Holevo+2019,watrous_2018,Wilde_2017}.
In Sec.~\ref{sec:CQ_channels}, we define CQ channels along with the relevant distance measures and divergences.
In Sec.~\ref{sec:superchannels_of_CQ_channels}, we introduce a class of superchannels that convert CQ channels into CQ channels.
Building on this, in Sec.~\ref{sec:QRTs_for_CQ_channels}, we formulate QRTs for CQ channels and present a set of axioms that we will assume throughout our analysis.
Finally, in Sec.~\ref{sec:properties_channel_divergence}, we present properties of channel divergences for CQ channels, showing that under these axioms the regularized relative entropy of resource converges.

\subsection{CQ channels}
\label{sec:CQ_channels}

Let $\mathcal{X}$ be a finite set of alphabets, with its cardinality denoted by $\qty|\mathcal{X}|$.
A quantum system is represented as a finite-dimensional Hilbert space $\mathcal{H}$, with its dimension denoted by $\dim\qty(\mathcal{H})$.
The set of linear operators on $\mathcal{H}$ is denoted by $\mathcal{L}\qty(\mathcal{H})$.
The identity operator on $\mathcal{H}$ is denoted by $\mathds{1}_\mathcal{H}$, which we may write $\mathds{1}$ if the space $\mathcal{H}$ it acts on is obvious from the context.

A quantum state is represented a density operator on $\mathcal{H}$, and the set of density operators on $\mathcal{H}$ is denoted by
\begin{align}
    \mathcal{D}\qty(\mathcal{H})\coloneqq\qty{\rho\in\mathcal{L}\qty(\mathcal{H}):\rho\geq0,\Tr\qty[\rho]=1}.
\end{align}
Following the convention in Refs.~\cite{hayashi2016quantum,Holevo+2019}, we define a CQ channel as a map $\Phi:\mathcal{X}\to\mathcal{D}\qty(\mathcal{H})$.
We may write the input and output as
\begin{align}
    \label{eq:Phi}
    x\in\mathcal{X},~\Phi(x)\in\mathcal{D}\qty(\mathcal{H}).
\end{align}
The dimensions of the input and output spaces are denoted by
\begin{align}
    X=|\mathcal{X}|,~D=\dim\qty(\mathcal{H}),
    \label{eq:X}
\end{align}
which are assumed to be finite throughout our work.
The set of CQ channels is denoted by
\begin{align}
    \mathcal{C}\qty(\mathcal{X}\to\mathcal{H})\coloneqq\qty{\Phi:\mathcal{X}\to\mathcal{D}\qty(\mathcal{H})}.
\end{align}
Note that in Ref.~\cite{hayashi2025generalizedquantumsteinslemma}, CQ channels are defined as measure-and-prepare channels, which are a special case of QQ channels, but with $\mathcal{N}$ denoting such measure-and-prepare QQ channel, the above definition~\eqref{eq:Phi} of CQ channels is equivalent by considering $\mathcal{N}\qty(\rho)=\sum_{x\in\mathcal{X}}\bra{x}\rho\ket{x}\Phi(x)$.

For two states $\rho_1,\rho_2\in\mathcal{D}\qty(\mathcal{H})$, the trace distance is defined as
\begin{align}
\label{eq:d_trace}
    d_{\mathrm{T}}\qty(\rho_1,\rho_2)\coloneqq\frac{1}{2}\left\|\rho_1-\rho_2\right\|_1.
\end{align}
For two CQ channels $\Phi_1,\Phi_2:\mathcal{X}\to\mathcal{D}\qty(\mathcal{H})$, the diamond distance~\cite{10.1145/276698.276708} is defined as
\begin{widetext}
\begin{align}
\label{eq:d_diamond}
    d_{\diamond}\qty(\Phi_1,\Phi_2)&\coloneqq\max_{x\in\mathcal{X}}d_{\mathrm{T}}\qty(\Phi_1\qty(x),\Phi_2\qty(x))\\
    &=\max_{p}d_{\mathrm{T}}\left(\sum_{x\in\mathcal{X}}p(x)\ket{x}\bra{x}\otimes\Phi_1\qty(x),\sum_{x\in\mathcal{X}}p(x)\ket{x}\bra{x}\otimes\Phi_2\qty(x)\right),
\end{align}
\end{widetext}
where $\max_p$ denotes maximization over all probability distributions $p$ over the set of channel inputs.
This makes $\mathcal{C}\qty(\mathcal{X}\to\mathcal{H})$ a metric space in terms of $d_\diamond$.

For two states $\rho_1,\rho_2\in\mathcal{D}\qty(\mathcal{H})$, the quantum relative entropy~\cite{10.2996/kmj/1138844604} is defined as
\begin{align}
    \label{eq:D}
    D\left(\rho_1\middle\|\rho_2\right)\coloneqq\Tr\qty[\rho_1\qty(\log\qty[\rho_1]-\log\qty[\rho_2])],
\end{align}
where $\log$ is the natural logarithm throughout this work, and the right-hand side is considered $\infty$ unless their supports satisfy $\supp\qty(\rho_1)\subseteq\supp\qty(\rho_2)$.
For two CQ channels $\Phi_1,\Phi_2:\mathcal{X}\to\mathcal{D}\qty(\mathcal{H})$, the channel divergence~\cite{cooney2016strong} is defined as
\begin{widetext}
\begin{align}
    \label{eq:D_channel}
    D\left(\Phi_1\middle\|\Phi_2\right)&\coloneqq\max_{x\in\mathcal{X}}D\left(\Phi_1\qty(x)\middle\|\Phi_2\qty(x)\right)\\
    &=\max_{p}D\left(\sum_{x\in\mathcal{X}}p(x)\ket{x}\bra{x}\otimes\Phi_1\qty(x)\middle\|\sum_{x\in\mathcal{X}}p(x)\ket{x}\bra{x}\otimes\Phi_2\qty(x)\right),
\end{align}
\end{widetext}
where the right-hand side is considered $\infty$ unless we have for all $x\in\mathcal{X}$
\begin{align}
\supp\qty(\Phi_1\qty(x))\subseteq\supp\qty(\Phi_2\qty(x)).
\end{align}
Similarly, for any parameter $\alpha>1$, the sandwiched R\'enyi relative entropy~\cite{10.1063/1.4838856,wilde2014strong} is defined as
\begin{align}
    \label{eq:D_alpha}
    \widetilde{D}_\alpha\left(\rho_1\middle\|\rho_2\right)\coloneqq-\frac{1}{1-\alpha}\log\qty[\Tr\qty[\qty(\rho_2^{\frac{1-\alpha}{2\alpha}}\rho_1\rho_2^{\frac{1-\alpha}{2\alpha}})^\alpha]],
\end{align}
and the sandwiched R\'enyi channel divergence as
\begin{align}
    \label{eq:D_alpha_channel}
    \widetilde{D}_\alpha\left(\Phi_1\middle\|\Phi_2\right)\coloneqq\max_{x\in\mathcal{X}}\widetilde{D}_\alpha\left(\Phi_1\qty(x)\middle\|\Phi_2\qty(x)\right).
\end{align}
Note that the sandwiched R\'enyi relative entropy can be defined for a wider parameter region of $\alpha$, but we define it within a parameter region relevant to our analysis.
For the fixed inputs, the sandwiched R\'enyi relative entropy and the sandwiched R\'enyi channel divergence do not decrease as $\alpha$ increases; also, in the limit, we obtain~\cite{10.1063/1.4838856,wilde2014strong}
\begin{align}
    \label{eq:D_alpha_limit}
    \lim_{\alpha\to 1}\widetilde{D}_\alpha\left(\Phi_1\middle\|\Phi_2\right)&=D\left(\Phi_1\middle\|\Phi_2\right).
\end{align}
Note that our analysis would also work if we use the Petz-R\'enyi relative entropy
$D_\alpha\left(\rho_1\middle\|\rho_2\right)\coloneqq\frac{1}{\alpha-1}\log\Tr\qty[\rho_1^\alpha\rho_2^{1-\alpha}]$~\cite{PETZ198657} instead of the sandwiched R\'enyi relative entropy; however, we here use the sandwiched R\'enyi relative entropy since it provides a tighter bound than the Petz-R\'enyi relative entropy in the relevant parameter region due to the Araki-Lieb-Thirring inequality~\cite{lieb1976inequalities,araki1990inequality}.

For any CQ channels $\Phi_1,\Phi_{1,\delta},\Phi_2\in\mathcal{C}\qty(\mathcal{X}\to\mathcal{H})$ satisfying
\begin{align}
    \lim_{\delta\to 0}d_\diamond\qty(\Phi_{1,\delta},\Phi_1)&=0,\\
    \supp\qty(\Phi_1\qty(x))&\subseteq\supp\qty(\Phi_2\qty(x))~\text{for all $x\in\mathcal{X}$},\\
    \supp\qty(\Phi_{1,\delta}\qty(x))&\subseteq\supp\qty(\Phi_2\qty(x))~\text{for all $x\in\mathcal{X}$ and $\delta>0$},
\end{align}
the channel divergence $D$ in~\eqref{eq:D_channel} satisfies the continuity with respect to the first argument
\begin{align}
\label{eq:D_continuity}
    \lim_{\delta\to 0}\qty|D\left(\Phi_{1,\delta}\middle\|\Phi_2\right)-D\left(\Phi_{1}\middle\|\Phi_2\right)|=0.
\end{align}
due to the continuity bound of the quantum relative entropy with respect to the first argument~\cite{10129917,10504886}.

\subsection{Superchannels of CQ channels}
\label{sec:superchannels_of_CQ_channels}

We define superchannels that convert CQ channels into CQ channels.
In Ref.~\cite{hayashi2025generalizedquantumsteinslemma}, superchannels that convert measure-and-prepare QQ channels into measure-and-prepare QQ channels are formulated as a subclass of superchannels for QQ channel conversion, which can be represented as a quantum comb~\cite{PhysRevLett.101.060401,Chiribella_2008,PhysRevA.80.022339}.
Here, we provide an equivalent formulation of the subclass of superchannels for convertion between CQ channels defined as~\eqref{eq:Phi}.

To define a superchannel, we consider a supermap $\Theta$ that converts an input map $\Phi_\mathrm{in}:\mathcal{X}_\mathrm{in}\to\mathcal{L}\qty(\mathcal{H}_\mathrm{in})$ to an output map $\Phi_\mathrm{out}=\Theta[\Phi_\mathrm{in}]:\mathcal{X}_\mathrm{out}\to\mathcal{L}\qty(\mathcal{H}_\mathrm{out})$.
For $N$ maps $\Phi_1,\ldots,\Phi_N:\mathcal{X}\to\mathcal{L}\qty(\mathcal{H})$, their linear combination $\sum_{n=1}^N \alpha(n)\Phi_n:\mathcal{X}\to\mathcal{L}\qty(\mathcal{H})$ with coefficients $\alpha(n)\in\mathbb{C}$ is defined as a map satisfying for all $x\in\mathcal{X}$
\begin{align}
\label{eq:linear_combination}
    \qty(\sum_{n=1}^N \alpha(n)\Phi_n)\qty(x)\coloneqq\sum_{n=1}^N \alpha(n)\Phi_n(x).
\end{align}
A supermap $\Theta$ is said to be linear if it satisfies
\begin{align}
    \Theta\qty[\sum_{n=1}^N \alpha(n)\Phi_n]=\sum_{n=1}^N \alpha(n)\Theta[\Phi_n].
\end{align}
For $N$ maps $\Phi_1:\mathcal{X}_1\to\mathcal{L}\qty(\mathcal{H}_1),\ldots,\Phi_N:\mathcal{X}_N\to\mathcal{L}\qty(\mathcal{H}_N)$, their tensor product is defined as a map $\bigotimes_{n=1}^N\Phi_n:\mathcal{X}_1\times\cdots\times\mathcal{X}_N\to\mathcal{L}\qty(\bigotimes_{n=1}^{N}\mathcal{H}_n)$ satisfying for any input $x^{(N)}\coloneqq\qty(x_1,\ldots,x_N)\in\mathcal{X}_1\times\cdots\times\mathcal{X}_N$
\begin{align}
    \label{eq:tensor_product_CQ_channel}
    \qty(\bigotimes_{n=1}^N \Phi_n)\qty(x^{(N)})\coloneqq\bigotimes_{n=1}^N \Phi_n\qty(x_n).
\end{align}
Using the same notation, the tensor product $\bigotimes_{n=1}^N \Theta_n$ of linear supermaps $\Theta_1,\ldots,\Theta_N$ is defined as a map satisfying
\begin{align}
    \qty(\bigotimes_{n=1}^N \Theta_n)\qty[\bigotimes_{n=1}^N \Phi_n]\coloneqq\bigotimes_{n=1}^N \qty(\Theta_n\qty[\Phi_n]).
\end{align}
The action of $\bigotimes_{n=1}^N \Theta_n$ extends to any map $\Phi^{(N)}:\mathcal{X}^N\to\mathcal{L}\qty(\mathcal{H}^{\otimes N})$ due to the linearity of $\bigotimes_{n=1}^N \Theta_n$, since $\Phi^{(N)}\qty(x^{(N)})$ for every input $x^{(N)}\in\mathcal{X}^{(N)}$ can be represented as a linear combination of the tensor product of $N$ linear operators acting on $\mathcal{H}$.

As in the superchannels for converting QQ channels in Refs.~\cite{PhysRevLett.101.060401,Chiribella_2008,PhysRevA.80.022339}, we write the set of superchannels that convert CQ channels from $\mathcal{C}\qty(\mathcal{X}_\mathrm{in}\to\mathcal{H}_\mathrm{in})$ to $\mathcal{C}\qty(\mathcal{X}_\mathrm{out}\to\mathcal{H}_\mathrm{out})$ as $\mathcal{C}\qty(\qty(\mathcal{X}_\mathrm{in}\to\mathcal{H}_\mathrm{in})\to\qty(\mathcal{X}_\mathrm{in}\to\mathcal{H}_\mathrm{out}))$, where each superchannel
\begin{align}
\label{eq:C}
    \Theta\in\mathcal{C}\qty(\qty(\mathcal{X}_\mathrm{in}\to\mathcal{H}_\mathrm{in})\to\qty(\mathcal{X}_\mathrm{out}\to\mathcal{H}_\mathrm{out}))
\end{align}
is a linear supermap that converts any input CQ channel $\Phi_\mathrm{in}\in\mathcal{C}\qty(\mathcal{X}_\mathrm{in}\times\mathcal{X}_\mathrm{aux}\to\mathcal{H}_\mathrm{in}\otimes\mathcal{H}_\mathrm{aux})$ to an output CQ channel $\Phi_\mathrm{out}=\qty(\Theta\otimes\id)\qty[\Phi_\mathrm{in}]\in\mathcal{C}\qty(\mathcal{X}_\mathrm{out}\times\mathcal{X}_\mathrm{aux}\to\mathcal{H}_\mathrm{out}\otimes\mathcal{H}_\mathrm{aux})$, with $\mathcal{X}_\mathrm{aux}$ and $\mathcal{H}_\mathrm{aux}$ representing the classical and quantum auxiliary systems, and $\id$ denoting the identity supermap that converts any map $\Phi_\mathrm{aux}:\mathcal{X}_\mathrm{aux}\to\mathcal{L}\qty(\mathcal{H}_\mathrm{aux})$ to $\Phi_\mathrm{aux}$ itself.
As in the superchannel for QQ channels~\cite{Chiribella_2008,PhysRevA.80.022339},
the superchannels for CQ channels can be written in the form of
\begin{align}
    &\qty(\qty(\Theta\otimes\id)\qty[\Phi_\mathrm{in}])\qty(x_\mathrm{out},x_\mathrm{aux})\notag\\
    \label{eq:theta}
    &=\sum_{x_\mathrm{in}\in\mathcal{X}_\mathrm{in}}p_\Theta\qty(x_\mathrm{in}|x_\mathrm{out})\qty(\qty(\mathcal{N}_{\Theta,x_\mathrm{in},x_\mathrm{out}}\otimes\id)\circ\Phi_\mathrm{in})\qty(x_\mathrm{in},x_\mathrm{aux}),
\end{align}
where $x_\mathrm{in}\in\mathcal{X}_\mathrm{in}$, $x_\mathrm{out}\in\mathcal{X}_\mathrm{out}$, and $x_\mathrm{aux}\in\mathcal{X}_\mathrm{aux}$ are classical inputs to $\Phi_\mathrm{in}$ and $\Phi_\mathrm{out}$, $p_\Theta\qty(x_\mathrm{in}|x_\mathrm{out})$ is a conditional probability distribution, $\mathcal{N}_{\Theta,x_\mathrm{in},x_\mathrm{out}}:\mathcal{L}\qty(\mathcal{H}_\mathrm{in})\to\mathcal{L}\qty(\mathcal{H}_\mathrm{out})$ is a completely positive and trace-preserving (CPTP) linear map depending on $x_\mathrm{in}$ and $x_\mathrm{out}$, and $\id:\mathcal{L}\qty(\mathcal{H}_\mathrm{aux})\to\mathcal{L}\qty(\mathcal{H}_\mathrm{aux})$ is the identity map.
The dimensions of the input and output spaces are denoted by
\begin{align}
\label{eq:dim}
    &X_\mathrm{in}=\left|\mathcal{X}_\mathrm{in}\right|,~X_\mathrm{out}=\left|\mathcal{X}_\mathrm{out}\right|,\notag\\
    &D_\mathrm{in}=\dim\qty(\mathcal{H}_\mathrm{in}),~D_\mathrm{out}=\dim\qty(\mathcal{H}_\mathrm{out}),
\end{align}
and all dimensions are assumed to be finite throughout our work as in~\eqref{eq:X}.

For any superchannel $\Theta$ for converting CQ channels, the channel divergence defined in~\eqref{eq:D_channel} satisfies the monotonicity
\begin{align}
\label{eq:monotonicity_D_channel}
    D\left(\Theta[\Phi_1]\middle\|\Theta[\Phi_2]\right)\leq D\left(\Phi_1\middle\|\Phi_2\right),
\end{align}
which follows from the monotonicity of the quantum relative entropy under CPTP linear maps along with the form~\eqref{eq:theta} of $\Theta$.
The diamond distance in~\eqref{eq:d_diamond} also satisfies the monotonicity
\begin{align}
\label{eq:monotonicity_diamond}
    d_\diamond\qty(\Theta[\Phi_1],\Theta[\Phi_2])\leq d_\diamond\qty(\Phi_1,\Phi_2),
\end{align}
which also follows from the monotonicity of the trace distance under CPTP linear maps along with the form~\eqref{eq:theta} of $\Theta$.
The diamond distance between CQ channels is convex; that is, for any $p\in[0,1]$ and any CQ channels $\Phi_1,\Phi_1',\Phi_2,\Phi_2'\in\mathcal{C}\qty(\mathcal{X}\to\mathcal{H})$, we have
\begin{align}
\label{eq:d_diamond_convexity}
    &d_\diamond\qty(p\Phi_1+(1-p)\Phi_1',p\Phi_2+(1-p)\Phi_2')\notag\\
    &\leq pd_\diamond\qty(\Phi_1,\Phi_2)+\qty(1-p)d_\diamond\qty(\Phi_1',\Phi_2'),
\end{align}
which follows from the convexity of the trace distance between quantum states.

\subsection{QRTs for CQ dynamical resources}
\label{sec:QRTs_for_CQ_channels}

QRTs are formulated by specifying free operations~\cite{Kuroiwa2020,Chitambar2018}.
We take a set of free operations as a subset of superchannels converting CQ channels.
In particular, the set of free operations are denoted by
\begin{align}
\label{eq:O}
    &\mathcal{O}\qty(\qty(\mathcal{X}_\mathrm{in}\to\mathcal{H}_\mathrm{in})\to\qty(\mathcal{X}_\mathrm{in}\to\mathcal{H}_\mathrm{out}))\notag\\
    &\subseteq\mathcal{C}\qty(\qty(\mathcal{X}_\mathrm{in}\to\mathcal{H}_\mathrm{in})\to\qty(\mathcal{X}_\mathrm{in}\to\mathcal{H}_\mathrm{out})),
\end{align}
where $\mathcal{C}$ is given by~\eqref{eq:C}, and the left-hand side may be written as $\mathcal{O}$ if the arguments are obvious from the context.
Given $\mathcal{O}$, a family of sets of free CQ channels is specified by the CQ channels that can be obtained from any (non-resourceful) CQ channels by some operation in $\mathcal{O}$; in particular, we write this family as
\begin{align}
\label{eq:F}
    \mathcal{F}\qty(\mathcal{X}\to\mathcal{H})&\coloneqq\{\Phi\in\mathcal{C}\qty(\mathcal{X}\to\mathcal{H}):\notag\\
    &\quad\forall \Phi_\mathrm{in}\in\mathcal{C}\qty(\mathcal{X}_\mathrm{in}\to\mathcal{H}_\mathrm{in}),\notag\\
    &\quad\exists\Theta\in\mathcal{O}\qty(\qty(\mathcal{X}_\mathrm{in}\to\mathcal{H}_\mathrm{in})\to\qty(\mathcal{X}\to\mathcal{H})),\notag\\
    &\quad\Phi=\Theta[\Phi_\mathrm{in}]\},
\end{align}
which we may write $\mathcal{F}$ if the argument specifying each set is obvious from the context.

Using $\mathcal{F}$, we can define various resource measures~\cite{Chitambar2018}.
For a CQ channel $\Phi\in\mathcal{C}\qty(\mathcal{X}\to\mathcal{H})$, the relative entropy of resource is denoted by
\begin{align}
\label{eq:D_F}
    D\left(\Phi\middle\|\mathcal{F}\right)\coloneqq\min_{\Phi_\mathrm{free}\in\mathcal{F}}D\left(\Phi\middle\|\Phi_\mathrm{free}\right),
\end{align}
where $D$ is defined as~\eqref{eq:D_channel}.
We may also write this as a function of $\Phi$
\begin{align}
\label{eq:relative_entropy_of_resource}
    R_\mathrm{R}\qty(\Phi)\coloneqq D\left(\Phi\middle\|\mathcal{F}\right).
\end{align}
For any family $\mathcal{F}$ of sets of free CQ channels satisfying Axioms~\ref{p:full_rank},~\ref{p:compact}, and~\ref{p:tensor_product}, and~\ref{p:convex}, and any sequences $\qty{\Phi^{(n)}\in\mathcal{C}\qty(\mathcal{X}^n\to\mathcal{H}^{\otimes n})}_{n=1,2,\ldots}$ and $\qty{\Phi^{(n)\prime}\in\mathcal{C}\qty(\mathcal{X}^n\to\mathcal{H}^{\otimes n})}_{n=1,2,\ldots}$ of CQ channels satisfying
\begin{align}
    &\lim_{n\to \infty}d_\diamond\qty(\Phi^{(n)},\Phi^{(n)\prime})=0,
\end{align}
the relative entropy of resource satisfies the asymptotic continuity
\begin{align}
\label{eq:D_asymptotic_continuity}
    \liminf_{n\to 0}\frac{1}{n}R_\mathrm{R}\qty(\Phi^{(n)})&=\liminf_{n\to 0}\frac{1}{n}R_\mathrm{R}\qty(\Phi^{(n)\prime}),\notag\\
    \limsup_{n\to 0}\frac{1}{n}R_\mathrm{R}\qty(\Phi^{(n)})&=\limsup_{n\to 0}\frac{1}{n}R_\mathrm{R}\qty(\Phi^{(n)\prime}),
\end{align}
which follows from the argument in Ref.~\cite[Lemma~7]{Winter2016}, by replacing $D_C$ in Ref.~\cite{Winter2016} with $R_\mathrm{R}$ and states in Ref.~\cite{Winter2016} with CQ channels, and then taking the limit. 
The regularized relative entropy of resource is defined as
\begin{align}
\label{eq:regularized_relative_entropy_of_resource}
    R_\mathrm{R}^\infty\qty(\Phi)\coloneqq \lim_{n\to\infty}\frac{1}{n}D\left(\Phi^{\otimes n}\middle\|\mathcal{F}\right).
\end{align}
Using $\widetilde{D}_\alpha$ in~\eqref{eq:D_alpha_channel} instead of $D$, we also define
\begin{align}
\label{eq:D_alpha_F}
    \widetilde{D}_\alpha\left(\Phi\middle\|\mathcal{F}\right)&\coloneqq\min_{\Phi_\mathrm{free}\in\mathcal{F}}\widetilde{D}_\alpha\left(\Phi\middle\|\Phi_\mathrm{free}\right).
\end{align}
The generalized robustness is defined as
\begin{align}
\label{eq:R_G}
    R_\mathrm{G}\qty(\Phi)\coloneqq\min\qty{s:\frac{\Phi+s\Phi'}{1+s}\in\mathcal{F},\Phi'\in\mathcal{C}\qty(\mathcal{X}\to\mathcal{H})}.
\end{align}

We introduce the following axioms on QRTs, which are imposed on $\mathcal{F}$ while $\mathcal{F}$ is determined from $\mathcal{O}$ by~\eqref{eq:F}.
\begin{enumerate}[label={CQ\arabic*}]
    \item\label{p:full_rank}For any CQ channel $\Phi\in\mathcal{F}\qty(\mathcal{X}\to\mathcal{H})$, there exists a free CQ channel $\Phi_\mathrm{full}\in\mathcal{F}\qty(\mathcal{X}\to\mathcal{H})$ such that, for every input $x\in\mathcal{X}$, the output $\Phi_\mathrm{full}(x)$ satisfies the relation $\supp\qty(\Phi\qty(x))\subseteq\supp\qty(\Phi_\mathrm{full}\qty(x))$ on their supports; equivalently, each set $\mathcal{F}\qty(\mathcal{X}\to\mathcal{H})$ includes the free CQ channel $\Phi_\mathrm{full}$ such that, for every input $x\in\mathcal{X}$, the output $\Phi_\mathrm{full}(x)$ is a full-rank state $\supp\qty(\Phi_\mathrm{full}(x))=\mathcal{H}$.
    \item\label{p:compact}Each set $\mathcal{F}\qty(\mathcal{X}\to\mathcal{H})$ is compact.
    \item\label{p:tensor_product}The family $\mathcal{F}$ of sets is closed under the tensor product in~\eqref{eq:tensor_product_CQ_channel}; that is, if $\Phi_\mathrm{free}\in\mathcal{F}\qty(\mathcal{X}\to\mathcal{H})$ and $\Phi_\mathrm{free}'\in\mathcal{F}\qty(\mathcal{X}'\to\mathcal{H}')$, then $\Phi_\mathrm{free}\otimes\Phi_\mathrm{free}^\prime\in\mathcal{F}\qty(\mathcal{X}\times\mathcal{X}'\to\mathcal{H}\otimes\mathcal{H}^\prime)$.
    \item\label{p:convex}Each set $\mathcal{F}\qty(\mathcal{X}\to\mathcal{H})$ is convex with respect to the convex combination in the form of~\eqref{eq:linear_combination} with $\alpha$ being a probability distribution.
\end{enumerate}
As discussed in Refs.~\cite{Kuroiwa2020,Chitambar2018}, it would also be conventional to additionally impose axioms directly on operations in $\mathcal{O}$ to ensure that QRTs are physically well-motivated, such as the requirement that the composition of multiple operations in $\mathcal{O}$ remains in $\mathcal{O}$.
From this requirement, an essential property of QRTs follows: free operations in $\mathcal{O}$ always map free CQ channels in $\mathcal{F}$ to free CQ channels in $\mathcal{F}$ and, therefore, never generate resources from any free CQ channel.
However, in our analysis, rather than focusing solely on $\mathcal{O}$ under such axioms, we will also introduce a relaxed version $\tilde{O}$ of free operations, using the resource-non-generating property of free operations as a guiding principle, as discussed in more detail in Sec.~\ref{sec:reversible}.

The full-rank condition in Axiom~\ref{p:full_rank} ensures that for all CQ channel $\Phi\in\mathcal{C}\qty(\mathcal{X}\to\mathcal{H})$, the relative entropy of resource $R_\mathrm{R}\qty(\Phi)$ in~\eqref{eq:relative_entropy_of_resource} and their variant in~\eqref{eq:D_alpha_F} with the sandwiched R\'enyi channel divergence do not diverge to infinity, making the equations appearing our analysis well-defined throughout.
Axiom~\ref{p:full_rank} is equivalent to ensuring that there exists a positive real constant $\Lambda_{\min}\in(0,1]$, along with $\Phi_\mathrm{full}\in\mathcal{F}\qty(\mathcal{X}\to\mathcal{H})$, such that, for every $x\in\mathcal{X}$,
\begin{align}
\label{eq:Lambda_min}
    \Phi_\mathrm{full}\qty(x)\geq\Lambda_{\min}\mathds{1}.
\end{align}
The compactness in Axiom~\ref{p:compact} guarantees that the minimum in the definition~\eqref{eq:D_F} of $R_\mathrm{R}$ exists.
As we will show below in Sec.~\ref{sec:properties_channel_divergence}, Axiom~\ref{p:tensor_product} is essential for the existence of the limit in the definition~\eqref{eq:regularized_relative_entropy_of_resource} of $R_\mathrm{R}^\infty$.
Finally, Axiom~\ref{p:convex} is necessary for the generalized quantum Stein's lemma to hold, due to the counterexamples with a nonconvex QRT for static resources shown in Ref.~\cite{hayashi2025generalizedquantumsteinslemma}.

When the input dimension of CQ channels is one, the QRTs for CQ channels reduce to those for states, leading to the state version of Axioms~\ref{p:full_rank},~\ref{p:compact},~\ref{p:tensor_product}, and~\ref{p:convex}.
In the original proposal of the generalized quantum Stein's lemma in Ref.~\cite{Brandao2010}, two additional axioms were imposed on $\mathcal{F}$, namely, closedness under partial trace and closedness under permutation of subsystems.
We also note that Ref.~\cite{Brandao2010} originally presented Axioms~\ref{p:compact} and~\ref{p:convex} jointly as a single axiom, but here we distinguish them since some of our results depend only on one of these conditions.
Assuming the generalized quantum Stein's lemma, Ref.~\cite{Brandao2015} then developed a reversible QRT framework for static resources, relying on the same set of axioms as Ref.~\cite{Brandao2010}.
Note that Ref.~\cite{Brandao2015} does not explicitly mention the full-rank condition in Axiom~\ref{p:full_rank}, but it is necessary for their arguments.
In Ref.~\cite{hayashi2025generalizedquantumsteinslemma}, the generalized quantum Stein's lemma for states is proven under Axioms~\ref{p:full_rank},~\ref{p:compact},~\ref{p:tensor_product}, and~\ref{p:convex}, while Ref.~\cite{10898013} provides an alternative proof of the weaker, original version of the lemma by also using the two additional axioms in Ref.~\cite{Brandao2010}.
Note that Refs.~\cite{fang2025generalizedquantumasymptoticequipartition,fang2025errorexponentsquantumstate} provide a similar yet different generalization of quantum Stein's lemma, but their analysis do not apply to the original setting of the generalized quantum Stein's lemma, especially in the case of entanglement theory as originally envisioned in Refs.~\cite{Brand_o_2008,brandao2010reversible,Brandao2010}, since they require an additional assumption on  stability of polar sets under tensor product.
In contrast, our analysis builds upon a CQ-channel extension of the minimal set of axioms under which the generalized quantum Stein's lemma for states was proven in Ref.~\cite{hayashi2025generalizedquantumsteinslemma}, thereby advancing from static resources to the fundamental class of dynamical resources without imposing any stronger assumptions.

We consider the task of CQ channel conversion under a class of superchannels.
Under $\mathcal{O}$, this task involves converting many copies of a CQ channel $\Phi_\mathrm{in}$ into as many copies as possible of another CQ channel $\Phi_\mathrm{out}$ using operations in $\mathcal{O}$, within errors that vanish asymptotically.
In our work, the errors are measured in terms of the diamond distance $d_\diamond$ in~\eqref{eq:d_diamond}.
The conversion rate $r_\mathcal{O}\qty(\Phi_\mathrm{in}\to\Phi_\mathrm{out})$ under $\mathcal{O}$ is the supremum of achievable rates in this asymptotic conversion, i.e.,
\begin{align}
\label{eq:conversion_rate_O}
    &r_\mathcal{O}\qty(\Phi_\mathrm{in}\to\Phi_\mathrm{out})\coloneqq\sup\left\{r\geq 0:\notag\right.\\
    &\left.\exists\qty{\Theta^{(n)}\in\mathcal{O}}_n,\liminf_{n\to\infty}d_\diamond\qty(\Theta^{(n)}\qty(\Phi_\mathrm{in}^{\otimes n}),\Phi_\mathrm{out}^{\otimes \lceil rn\rceil})=0\right\},
\end{align}
where $\lceil{}\cdots{}\rceil$ is the ceiling function, and the diamond distance $d_\diamond$ is given by~\eqref{eq:d_diamond}.

A fundamental problem in QRTs is when the conversion rate has a simple characterization by a single function representing a resource measure~\cite{Brand_o_2008,brandao2010reversible,Brandao2010,Brandao2015,hayashi2025generalizedquantumsteinslemma}; if this is the case, any two resources that have the same amount of resource should always be convertible into each other, and we call such a framework of QRTs a reversible framework.
To address this issue, as in the previous works~\cite{Brand_o_2008,brandao2010reversible,Brandao2010,Brandao2015,hayashi2025generalizedquantumsteinslemma}, we will consider a broader, axiomatically defined class $\tilde{\mathcal{O}}$ of operations as a relaxation of $\mathcal{O}$.
A fundamental question in QRTs is whether it is possible to establish a general reversible framework of QRTs with an appropriate choice of such a class $\tilde{\mathcal{O}}$ of operations and a single resource measure $R$ such that the resource measure, i.e., $R(\Phi_\mathrm{in})$ and  $R(\Phi_\mathrm{out})$, fully determines the convertibility from $\Phi_\mathrm{in}$ to $\Phi_\mathrm{out}$ at a given conversion rate, which is called the second law of QRTs~\cite{Brand_o_2008,brandao2010reversible,Brandao2010,Brandao2015,hayashi2025generalizedquantumsteinslemma}.

\subsection{Properties of channel divergences for CQ channels}
\label{sec:properties_channel_divergence}

We show properties of channel divergences for CQ channels that are relevant to our proof of the generalized quantum Stein's lemma for CQ channels; especially, we will show the existence of the limit in the definition~\eqref{eq:regularized_relative_entropy_of_resource} of $R_\mathrm{R}^\infty$ under Axioms~\ref{p:full_rank},~\ref{p:compact}, and~\ref{p:tensor_product}.
Importantly, the argument in this section relies critically on the fact that the channel inputs are classical, making it nontrivial in a sense that the same properties cannot be generally shown for QQ channels in the same way.

A crucial property for our proof, especially for the strong converse part of the generalized quantum Stein's lemma, will be the additivity of channel divergences for CQ channels.
For states $\rho_1,\rho_1',\rho_2,\rho_2'$, it is known that the sandwidged R\'{e}nyi relative entropy satisfies the additivity~\cite{10.1063/1.4838856,wilde2014strong}
\begin{align}
    \label{eq:additivity_state}
    \widetilde{D}_\alpha\left(\rho_1\otimes\rho_1'\middle\|\rho_2\otimes\rho_2'\right)=\widetilde{D}_\alpha\left(\rho_1\middle\|\rho_2\right)+\widetilde{D}_\alpha\left(\rho_1'\middle\|\rho_2'\right).
\end{align}
The following lemma shows that this additivity extends to CQ channels as well.

\begin{lemma}[\label{lem:additivity_CQ_channels}Additivity of channel divergences for CQ channels]
    For any CQ channels $\Phi_1,\Phi_2\in\mathcal{C}\qty(\mathcal{X}\to\mathcal{H})$ and $\Phi_1',\Phi_2'\in\mathcal{C}\qty(\mathcal{X}'\to\mathcal{H}')$, we have the additivity
    \begin{align}
        D\left(\Phi_1\otimes\Phi_1'\middle\|\Phi_2\otimes\Phi_2'\right)&=D\left(\Phi_1\middle\|\Phi_2\right)+D\left(\Phi_1'\middle\|\Phi_2'\right),\\
        \widetilde{D}_\alpha\left(\Phi_1\otimes\Phi_1'\middle\|\Phi_2\otimes\Phi_2'\right)&=\widetilde{D}_\alpha\left(\Phi_1\middle\|\Phi_2\right)+\widetilde{D}_\alpha\left(\Phi_1'\middle\|\Phi_2'\right),
    \end{align}
    where $D$ is defined as~\eqref{eq:D_channel},
    and $\widetilde{D}_\alpha$ is defined as~\eqref{eq:D_alpha_channel}.
\end{lemma}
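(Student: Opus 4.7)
The plan is to reduce the claim directly to the additivity of the underlying state-level divergences, exploiting the fact that the classical inputs to the tensor product CQ channel factor independently.

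First, I would unfold the definition in~\eqref{eq:D_channel} and~\eqref{eq:D_alpha_channel}. For the tensor product channel $\Phi_1\otimes\Phi_1'\in\mathcal{C}(\mathcal{X}\times\mathcal{X}'\to\mathcal{H}\otimes\mathcal{H}')$, the definition~\eqref{eq:tensor_product_CQ_channel} gives $(\Phi_1\otimes\Phi_1')(x,x')=\Phi_1(x)\otimes\Phi_1'(x')$ for every pair $(x,x')\in\mathcal{X}\times\mathcal{X}'$, and likewise for $\Phi_2\otimes\Phi_2'$. Hence
\begin{align}
D\left(\Phi_1\otimes\Phi_1'\middle\|\Phi_2\otimes\Phi_2'\right)
=\max_{(x,x')\in\mathcal{X}\times\mathcal{X}'} D\left(\Phi_1(x)\otimes\Phi_1'(x')\middle\|\Phi_2(x)\otimes\Phi_2'(x')\right),
\end{align}
and the analogous identity holds for $\widetilde{D}_\alpha$.

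Next I would apply the known additivity of the state-level quantum relative entropy and of the sandwiched R\'enyi relative entropy~\eqref{eq:additivity_state} inside the maximization, which turns the right-hand side into
\begin{align}
\max_{(x,x')}\left[ D\left(\Phi_1(x)\middle\|\Phi_2(x)\right) + D\left(\Phi_1'(x')\middle\|\Phi_2'(x')\right) \right].
\end{align}
The key observation, and the entire reason this works for CQ channels, is that the two summands depend on disjoint variables $x$ and $x'$, so the joint maximum decouples into a sum of independent maxima; that is, $\max_{(x,x')}[f(x)+g(x')]=\max_x f(x)+\max_{x'} g(x')$. Re-identifying each term via~\eqref{eq:D_channel} (resp.~\eqref{eq:D_alpha_channel}) yields the claimed additivity. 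For the sandwiched R\'enyi case one additionally needs to verify that the support condition $\supp(\Phi_1(x)\otimes\Phi_1'(x'))\subseteq\supp(\Phi_2(x)\otimes\Phi_2'(x'))$ reduces to the componentwise conditions, and in the degenerate case where one side is $+\infty$ both sides are $+\infty$ simultaneously.

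There is no real obstacle here; the only subtle point worth flagging is that the analogous statement is known to fail for fully QQ channels (sandwiched R\'enyi channel divergence is not additive in general), so the proof critically uses the classical nature of the input alphabet, which makes the optimization set a Cartesian product $\mathcal{X}\times\mathcal{X}'$ and lets the max separate. This is precisely the point emphasized in the prose preceding the lemma that the argument is nontrivial in the sense of not extending to QQ channels. The proof is otherwise a one-line consequence of state-level additivity.
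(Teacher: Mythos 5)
Your proof is correct and follows essentially the same route as the paper: unfold the tensor-product definition, apply state-level additivity inside the maximization, and then use that the maximum over the Cartesian product $\mathcal{X}\times\mathcal{X}'$ of a sum of functions of disjoint variables decouples. The only cosmetic difference is that the paper handles $D$ by taking the $\alpha\to 1$ limit of the $\widetilde{D}_\alpha$ case via~\eqref{eq:D_alpha_limit}, whereas you run the identical argument directly for $D$; both are valid.
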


\begin{proof}
    Due to~\eqref{eq:D_alpha_limit}, we will show the statement on $\widetilde{D}_\alpha$ for any $\alpha>1$, which also indicates that for $D$.
    For $\widetilde{D}_\alpha$, we have
    \begin{widetext}
    \begin{align}
        \label{eq:additivity_1}
        \widetilde{D}_\alpha\left(\Phi_1\otimes\Phi_1'\middle\|\Phi_2\otimes\Phi_2'\right) &=\max_{(x,x')\in\mathcal{X}\times\mathcal{X}'}\widetilde{D}_\alpha\left(\qty(\Phi_1\otimes\Phi_1')\qty(x,x')\middle\|\qty(\Phi_2\otimes\Phi_2')\qty(x,x')\right)\\
        \label{eq:additivity_2}
        &=\max_{(x,x')\in\mathcal{X}\times\mathcal{X}'}\widetilde{D}_\alpha\left(\Phi_1\qty(x)\otimes\Phi_1'\qty(x')\middle\|\Phi_2\qty(x)\otimes\Phi_2'\qty(x)\right)\\
        \label{eq:additivity_3}
        &=\max_{(x,x')\in\mathcal{X}\times\mathcal{X}'}\left\{\widetilde{D}_\alpha\left(\Phi_1\qty(x)\middle\|\Phi_2\qty(x)\right)+\widetilde{D}_\alpha\left(\Phi_1'\qty(x')\middle\|\Phi_2'\qty(x')\right)\right\},
    \end{align}
    \end{widetext}
    where~\eqref{eq:additivity_1} follows~\eqref{eq:D_alpha_channel},~\eqref{eq:additivity_2} is the definition~\eqref{eq:tensor_product_CQ_channel} of the tensor product of CQ channels, and~\eqref{eq:additivity_3} uses the additivity~\eqref{eq:additivity_state} for states.
    Then, as the channel inputs are classical, we have
    \begin{widetext}
    \begin{align}
        \max_{(x,x')\in\mathcal{X}\times\mathcal{X}'}\left\{\widetilde{D}_\alpha\left(\Phi_1\qty(x)\middle\|\Phi_2\qty(x)\right)+\widetilde{D}_\alpha\left(\Phi_1'\qty(x')\middle\|\Phi_2'\qty(x')\right)\right\}
        &=\max_{x\in\mathcal{X}}\widetilde{D}_\alpha\left(\Phi_1\qty(x)\middle\|\Phi_2\qty(x)\right)+\max_{x'\in\mathcal{X}'}\widetilde{D}_\alpha\left(\Phi_1'\qty(x')\middle\|\Phi_2'\qty(x')\right)\\
        \label{eq:additivity_5}
        &=\widetilde{D}_\alpha\left(\Phi_1\middle\|\Phi_2\right)+\widetilde{D}_\alpha\left(\Phi_1'\middle\|\Phi_2'\right),
    \end{align}
    \end{widetext}
    leading to the conclusion.
\end{proof}

Using this additivity, we have the following lemma on the subadditivity of the channel divergence between CQ channels and the set $\mathcal{F}$ of free CQ channels.

\begin{lemma}[\label{lem:subadditivity_CQ_channels}Subadditivity of channel divergences between CQ channels and the sets of free CQ channels]
    For any $n,n'\in\qty{1,2,\ldots}$, any family $\mathcal{F}$ of sets of free CQ channels satisfying Axioms~\ref{p:full_rank},~\ref{p:compact}, and~\ref{p:tensor_product}, and any CQ channel $\Phi\in\mathcal{C}\qty(\mathcal{X}\to\mathcal{H})$, it holds that
\begin{align}
    D\left(\Phi^{\otimes n+n'}\middle\|\mathcal{F}\right)&\leq D\left(\Phi^{\otimes n}\middle\|\mathcal{F}\right)+D\left(\Phi^{\otimes n'}\middle\|\mathcal{F}\right),\\
    \widetilde{D}_\alpha\left(\Phi^{\otimes n+n'}\middle\|\mathcal{F}\right)&\leq\widetilde{D}_\alpha\left(\Phi^{\otimes n}\middle\|\mathcal{F}\right)+\widetilde{D}_\alpha\left(\Phi^{\otimes n'}\middle\|\mathcal{F}\right),
\end{align}
    where $D$ is defined as~\eqref{eq:D_F},
    and $\widetilde{D}_\alpha$ is defined as~\eqref{eq:D_alpha_F}.
\end{lemma}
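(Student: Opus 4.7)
The plan is to exploit the closure of the free set $\mathcal{F}$ under tensor product (Axiom~\ref{p:tensor_product}) together with the additivity of channel divergences established in Lemma~\ref{lem:additivity_CQ_channels}, upgrading equality at the level of single free channels to an inequality at the level of minima over $\mathcal{F}$. The compactness in Axiom~\ref{p:compact} and the full-rank condition in Axiom~\ref{p:full_rank} are invoked to ensure that the minima in the definitions~\eqref{eq:D_F} and~\eqref{eq:D_alpha_F} are attained by some free channels with finite divergence values, so there is no issue with the optimization.

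Concretely, I would first pick $\Phi_{\mathrm{free}}^{(n)}\in\mathcal{F}(\mathcal{X}^n\to\mathcal{H}^{\otimes n})$ and $\Phi_{\mathrm{free}}^{(n')}\in\mathcal{F}(\mathcal{X}^{n'}\to\mathcal{H}^{\otimes n'})$ attaining the minima $\widetilde{D}_\alpha(\Phi^{\otimes n}\|\mathcal{F})$ and $\widetilde{D}_\alpha(\Phi^{\otimes n'}\|\mathcal{F})$, respectively. By Axiom~\ref{p:tensor_product}, their tensor product $\Phi_{\mathrm{free}}^{(n)}\otimes\Phi_{\mathrm{free}}^{(n')}$ belongs to $\mathcal{F}(\mathcal{X}^{n+n'}\to\mathcal{H}^{\otimes(n+n')})$. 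Since $\Phi^{\otimes(n+n')}=\Phi^{\otimes n}\otimes\Phi^{\otimes n'}$, taking the minimum over $\mathcal{F}(\mathcal{X}^{n+n'}\to\mathcal{H}^{\otimes(n+n')})$ on the left and using this particular feasible choice on the right, together with the additivity from Lemma~\ref{lem:additivity_CQ_channels}, yields
\begin{align}
\widetilde{D}_\alpha\!\left(\Phi^{\otimes(n+n')}\middle\|\mathcal{F}\right)
&\leq \widetilde{D}_\alpha\!\left(\Phi^{\otimes n}\otimes\Phi^{\otimes n'}\middle\|\Phi_{\mathrm{free}}^{(n)}\otimes\Phi_{\mathrm{free}}^{(n')}\right)\notag\\
&= \widetilde{D}_\alpha\!\left(\Phi^{\otimes n}\middle\|\Phi_{\mathrm{free}}^{(n)}\right)+\widetilde{D}_\alpha\!\left(\Phi^{\otimes n'}\middle\|\Phi_{\mathrm{free}}^{(n')}\right)\notag\\
&= \widetilde{D}_\alpha\!\left(\Phi^{\otimes n}\middle\|\mathcal{F}\right)+\widetilde{D}_\alpha\!\left(\Phi^{\otimes n'}\middle\|\mathcal{F}\right).
\end{align}
The same argument applied with $D$ in place of $\widetilde{D}_\alpha$ (using the additivity of $D$ in Lemma~\ref{lem:additivity_CQ_channels}) gives the claim for $D$; alternatively one could take the limit $\alpha\to 1$ via~\eqref{eq:D_alpha_limit}.

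There is essentially no hard step here: the proof is a one-line manipulation once Lemma~\ref{lem:additivity_CQ_channels} and Axiom~\ref{p:tensor_product} are in hand. The only subtle points to verify are (i) that the minimizers actually exist, which is secured by compactness (Axiom~\ref{p:compact}) together with the continuity of $\widetilde{D}_\alpha$ and $D$ in the second argument on the compact feasible set, using Axiom~\ref{p:full_rank} to guarantee the support condition needed for finiteness, and (ii) that the tensor product of the two selected free channels genuinely lies in $\mathcal{F}$ at level $n+n'$, which is exactly the content of Axiom~\ref{p:tensor_product}. Convexity (Axiom~\ref{p:convex}) is not needed for this lemma.
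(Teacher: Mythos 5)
Your proposal is correct and follows essentially the same route as the paper: select minimizers at sizes $n$ and $n'$ (existence via Axioms~\ref{p:full_rank} and~\ref{p:compact}), note their tensor product is free by Axiom~\ref{p:tensor_product}, and apply the additivity of Lemma~\ref{lem:additivity_CQ_channels}, with the $D$ case handled either directly or via the limit~\eqref{eq:D_alpha_limit} just as the paper does. No gaps.
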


\begin{proof}
    As in the proof of Lemma~\ref{lem:additivity_CQ_channels}, due to~\eqref{eq:D_alpha_limit}, it suffices to prove the statement for $\widetilde{D}_\alpha$.

    Axiom~\ref{p:full_rank} ensures that $\widetilde{D}_\alpha$ does not diverge, and Axiom~\ref{p:compact} guarantees that the minimum in its definition~\eqref{eq:D_alpha_F} exists.
    Let $\Phi_\mathrm{free}^{(n)}$ and $\Phi_\mathrm{free}^{(n')}$ be free CQ channels achieving the minima in
\begin{align}
    \widetilde{D}_\alpha\left(\Phi^{\otimes n}\middle\|\Phi_\mathrm{free}^{(n)}\right)&=\min_{\Phi_\mathrm{free}\in\mathcal{F}}\widetilde{D}_\alpha\left(\Phi^{\otimes n}\middle\|\Phi_\mathrm{free}\right),\\
    \widetilde{D}_\alpha\left(\Phi^{\otimes n}\middle\|\Phi_\mathrm{free}^{(n')}\right)&=\min_{\Phi_\mathrm{free}'\in\mathcal{F}}\widetilde{D}_\alpha\left(\Phi^{\otimes n'}\middle\|\Phi_\mathrm{free}'\right).
\end{align}
Axiom~\ref{p:tensor_product} ensures that
\begin{align}
    \Phi_\mathrm{free}^{(n)}\otimes\Phi_\mathrm{free}^{(n')}\in\mathcal{F}.
\end{align}

Then, using the additivity in Lemma~\ref{lem:additivity_CQ_channels}, we have
\begin{align}
    &\widetilde{D}_\alpha\left(\Phi^{\otimes n+n'}\middle\|\mathcal{F}\right)\notag\\
    &=\min_{\Phi_\mathrm{free}\in\mathcal{F}}\widetilde{D}_\alpha\left(\Phi^{\otimes n+n'}\middle\|\Phi_\mathrm{free}\right)\\
    &\leq\widetilde{D}_\alpha\left(\Phi^{\otimes n+n'}\middle\|\Phi_\mathrm{free}^{(n)}\otimes\Phi_\mathrm{free}^{(n')}\right)\\
    &=\widetilde{D}_\alpha\left(\Phi^{\otimes n}\middle\|\Phi_\mathrm{free}^{(n)}\right)+\widetilde{D}_\alpha\left(\Phi^{\otimes n'}\middle\|\Phi_\mathrm{free}^{(n')}\right)\\
    &=\min_{\Phi_\mathrm{free}\in\mathcal{F}}\widetilde{D}_\alpha\left(\Phi^{\otimes n}\middle\|\Phi_\mathrm{free}\right)+\min_{\Phi_\mathrm{free}'\in\mathcal{F}}\widetilde{D}_\alpha\left(\Phi^{\otimes n'}\middle\|\Phi_\mathrm{free}'\right)\\
    &=\widetilde{D}_\alpha\left(\Phi^{\otimes n}\middle\|\mathcal{F}\right)+\widetilde{D}_\alpha\left(\Phi^{\otimes n'}\middle\|\mathcal{F}\right).
\end{align}
\end{proof}

In the following proposition, as a corollary of the subadditivity, we see that the limit in the definition~\eqref{eq:regularized_relative_entropy_of_resource} of $R_\mathrm{R}^\infty$ exists under our axioms.
The proof is based on Fekete's subadditive lemma~\cite{fekete1923verteilung} (see also Ref.~\cite[Lemma~A.1]{hayashi2016quantum}): for any subadditive sequence $\qty{a_n}_{n=1,2\ldots}$, i.e.,
\begin{align}
\label{eq:Fekete_assumption}
    a_{n+n'}\leq a_n+a_{n'}~\text{for all $n,n'$},
\end{align}
the limit
\begin{align}
    \lim_{n\to\infty}\frac{a_n}{n}
\end{align}
exists.

\begin{proposition}[\label{prp:existence_limit}Existence of the regularized relative entropy of resource for CQ channels]
For any family $\mathcal{F}$ of sets of free CQ channels satisfying
Axioms~\ref{p:full_rank},~\ref{p:compact}, and~\ref{p:tensor_product}, and any CQ channel $\Phi\in\mathcal{C}\qty(\mathcal{X}\to\mathcal{H})$, the limit
\begin{align}
    \lim_{n\to\infty}\frac{1}{n}D\left(\Phi^{\otimes n}\middle\|\mathcal{F}\right)
\end{align}
exists, where $D$ is defined as~\eqref{eq:D_F}.
\end{proposition}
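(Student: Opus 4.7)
The plan is to deduce Proposition~\ref{prp:existence_limit} as a direct corollary of Lemma~\ref{lem:subadditivity_CQ_channels} via Fekete's subadditive lemma, which is explicitly recalled just before the statement. Concretely, I would define the sequence
\begin{align}
    a_n\coloneqq D\left(\Phi^{\otimes n}\middle\|\mathcal{F}\right),\qquad n=1,2,\ldots,
\end{align}
and verify the two hypotheses needed to conclude existence of $\lim_{n\to\infty}a_n/n$: subadditivity and boundedness below.

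First I would invoke Lemma~\ref{lem:subadditivity_CQ_channels} to assert that, under Axioms~\ref{p:full_rank},~\ref{p:compact},~\ref{p:tensor_product}, the inequality $a_{n+n'}\le a_n+a_{n'}$ holds for all positive integers $n,n'$, so that $\{a_n\}$ satisfies the hypothesis~\eqref{eq:Fekete_assumption} of Fekete's lemma. Second, I would observe that each $a_n$ is finite and nonnegative: nonnegativity follows from $D(\rho_1\|\rho_2)\ge 0$ applied pointwise in $x\in\mathcal{X}^n$ in the definition~\eqref{eq:D_channel}, and finiteness follows from Axiom~\ref{p:full_rank} by choosing, via the tensor-product closure in Axiom~\ref{p:tensor_product}, the free channel $\Phi_\mathrm{full}^{\otimes n}\in\mathcal{F}$ whose output on every input is full rank (equivalently satisfies~\eqref{eq:Lambda_min}), ensuring that the support condition $\supp(\Phi^{\otimes n}(x^{(n)}))\subseteq\supp(\Phi_\mathrm{full}^{\otimes n}(x^{(n)}))$ holds and hence $a_n\le \widetilde{D}_\alpha(\Phi^{\otimes n}\|\Phi_\mathrm{full}^{\otimes n})<\infty$ for every $n$.

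With these two facts in hand, Fekete's subadditive lemma immediately yields existence of
\begin{align}
    \lim_{n\to\infty}\frac{a_n}{n}=\inf_{n\ge 1}\frac{a_n}{n}\in[0,\infty),
\end{align}
which is exactly the regularized relative entropy of resource $R_\mathrm{R}^\infty(\Phi)$ in~\eqref{eq:regularized_relative_entropy_of_resource}. Since Lemma~\ref{lem:subadditivity_CQ_channels} already crucially used Axiom~\ref{p:tensor_product} (for $\Phi_\mathrm{free}^{(n)}\otimes\Phi_\mathrm{free}^{(n')}\in\mathcal{F}$), Axiom~\ref{p:compact} (for the minimum in~\eqref{eq:D_F} to be attained), and Axiom~\ref{p:full_rank} (to avoid $+\infty$), no additional work is needed beyond citing that lemma and Fekete's. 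I do not anticipate a genuine obstacle; the only subtlety to be careful about is ensuring $a_n$ is finite so that the subadditivity $a_{n+n'}\le a_n+a_{n'}$ is meaningful rather than a vacuous $\infty\le\infty$, and this is precisely what Axiom~\ref{p:full_rank} combined with Axiom~\ref{p:tensor_product} secures.
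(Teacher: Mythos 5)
Your proposal is correct and follows essentially the same route as the paper: set $a_n\coloneqq D\left(\Phi^{\otimes n}\middle\|\mathcal{F}\right)$, invoke the subadditivity of Lemma~\ref{lem:subadditivity_CQ_channels}, and conclude via Fekete's subadditive lemma. Your extra remarks on finiteness and nonnegativity (via Axioms~\ref{p:full_rank} and~\ref{p:tensor_product}) are a harmless elaboration of what the paper handles implicitly through the same axioms.
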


\begin{proof}
    Due to the subadditivity shown in Lemma~\ref{lem:subadditivity_CQ_channels}, 
    by setting $a_n\coloneqq D\left(\Phi^{\otimes n}\middle\|\mathcal{F}\right)$ in~\eqref{eq:Fekete_assumption}, Fekete's subadditive lemma shows that the limit exists.
\end{proof}

\section{Analysis of generalized quantum Stein's lemma for CQ channels}
\label{sec:stein}

In this section, we formulate and prove the generalized quantum Stein's lemma for CQ channels.
In Sec.~\ref{sec:formulation_stein}, we formulate a quantum hypothesis testing task for CQ dynamical resources and present the generalized quantum Stein's lemma for CQ channels.
In Sec.~\ref{sec:minimax}, we analyze properties of the error exponent appearing in this lemma, which will be useful for our analysis.
In Sec.~\ref{sec:proof_generalized_quantum_steins_lemma}, we present proof of the generalized quantum Stein's lemma for CQ channels.

\subsection{Formulation of generalized quantum Stein's lemma for CQ channels and its property}
\label{sec:formulation_stein}

In this section, we formulate quantum hypothesis testing for discriminating CQ dynamical resources and correspondingly present the generalized quantum Stein's lemma for CQ channels.

We define a task of quantum hypothesis testing for CQ dynamical resources, which aims to distinguish a given CQ channel from any free CQ channel in a set $\mathcal{F}$.
In this task, we are initially given an integer $n\in\{1,2,\ldots\}$ and classical descriptions of a CQ channel $\Phi\in\mathcal{C}\qty(\mathcal{X}\to\mathcal{H})$ and the family $\mathcal{F}$ of sets of free CQ channels satisfying Axioms~\ref{p:full_rank},~\ref{p:compact},~\ref{p:tensor_product}, and~\ref{p:convex}, along with an unknown CQ channel in $\mathcal{C}\qty(\mathcal{X}^n\to\mathcal{H}^{\otimes n})$.
The goal of the task is to distinguish the following two cases:
\begin{itemize}
    \item Null hypothesis: The given unknown CQ channel is $n$-fold copies $\Phi^{\otimes n}$ of $\Phi$.
    \item Alternative hypothesis: The given unknown CQ channel is some free CQ channel $\Phi_\mathrm{free}\in\mathcal{F}\qty(\mathcal{X}^n\to\mathcal{H}^{\otimes n})$, where $\Phi_\mathrm{free}$ may have a composite form over the $n$-fold input and output spaces.
\end{itemize}
To achieve this goal, we choose a probability distribution $p\qty(x^{(n)})$ over classical inputs $x^{(n)}\in\mathcal{X}^n$ to the unknown CQ channel, so that we can perform a two-outcome measurement of the corresponding output quantum state on $\mathcal{H}^{\otimes n}$ by a POVM $\{T_{x^{(n)}},\mathds{1}-T_{x^{(n)}}\}$, where $0\leq T_{x^{(n)}}\leq\mathds{1}$.
The POVM $\{T_{x^{(n)}},\mathds{1}-T_{x^{(n)}}\}$ for every input $x^{(n)}\in\mathcal{X}^{(n)}$ is specified by choosing a family $\qty{T_{x^{(n)}}}_{x^{(n)}\in\mathcal{X}^{(n)}}$ of POVM elements.

When we input $x^{(n)}$ sampled from $p$, if the measurement outcome is $T_{x^{(n)}}$, we conclude that the unknown CQ channel state was $\Phi^{\otimes n}$, and if $\mathds{1}-T_{x^{(n)}}$, then was some free CQ channel $\Phi_\mathrm{free}\in\mathcal{F}\qty(\mathcal{X}^n\to\mathcal{H}^{\otimes n})$.
For this hypothesis testing, we define the following two types of errors.
\begin{itemize}
    \item Type I error: The mistaken conclusion that the given CQ channel was some free state $\Phi_\mathrm{free}\in\mathcal{F}\qty(\mathcal{X}^n\to\mathcal{H}^{\otimes n})$ when it was $\Phi^{\otimes n}$, which happens with probability
    \begin{align}
    \label{eq:type_I_error}
        \sum_{x^{(n)}\in\mathcal{X}^n}p\qty(x^{(n)})\Tr\qty[\qty(\mathds{1}-T_{x^{(n)}})\Phi^{\otimes n}\qty(x^{(n)})].
    \end{align}
    \item Type II error: The mistaken conclusion that the given CQ channel was $\Phi^{\otimes n}$ when it was some free state $\Phi_\mathrm{free}\in\mathcal{F}\qty(\mathcal{X}^n\to\mathcal{H}^{\otimes n})$, which happens,  in the worst case, with probability
    \begin{align}
    \label{eq:type_II_error}
        \max_{\Phi_\mathrm{free}\in\mathcal{F}\qty(\mathcal{X}^n\to\mathcal{H}^{\otimes n})}\sum_{x^{(n)}\in\mathcal{X}^n}p\qty(x^{(n)})\Tr\qty[T_{x^{(n)}}\Phi_\mathrm{free}\qty(x^{(n)})].
    \end{align}
\end{itemize}

In the setting of the generalized quantum Stein's lemma, we constrain that the type I error should be below a fixed parameter $\epsilon$, and the task aims to minimize the type II error under this constraint.
As $n$ goes to infinity, the type II error decreases exponentially in $n$, and its exponent characterizes how fast the type II error may decay.
The generalized quantum Stein's lemma characterizes the optimal exponent of the type II error.
To analyze these errors, for a parameter $\epsilon\geq 0$, a CQ channel $\Phi\in\mathcal{C}\qty(\mathcal{X}\to\mathcal{H})$, and a probability distribution $p$ over $\mathcal{X}$, we let $\mathcal{T}_{\epsilon,\Phi,p}$ denote the set of the POVM elements achieving the type I error~\eqref{eq:type_I_error} below $\epsilon$ when the input to the CQ channel $\Phi$ is given according to $p$, i.e.,
\begin{align}
\label{eq:T_set}
    \mathcal{T}_{\epsilon,\Phi,p}&\coloneqq\left\{\qty{T_x}_{x\in\mathcal{X}}:\right.\notag\\
    &\quad\left.0\leq T_x\leq\mathds{1},\right.\notag\\
    &\quad\left.\sum_{x\in\mathcal{X}}p\qty(x)\Tr\qty[\qty(\mathds{1}-T_x)\Phi\qty(x)]\leq\epsilon\right\}.
\end{align}
With this set, we represent the optimal type II error~\eqref{eq:type_II_error} using a function
\begin{align}
    \label{eq:beta}
    &\beta_\epsilon\left(\Phi\middle\|\mathcal{F}\right)\notag\\
    &\coloneqq\min_{p}\min_{\qty{T_{x}}_x\in\mathcal{T}_{\epsilon,\Phi,p}}\max_{\Phi_\mathrm{free}\in\mathcal{F}}\sum_{x\in\mathcal{X}}p(x)\Tr\qty[T_{x}\Phi_\mathrm{free}\qty(x)],
\end{align}
where $\min_p$ denotes minimization over all probability distributions $p$ over the set of channel inputs.
Then, the generalized quantum Stein's lemma is stated as follows.

\begin{theorem}[\label{thm:main}The generalized quantum Stein's lemma for CQ channels]
For any error parameter $\epsilon\in(0,1)$, any family $\mathcal{F}$ of sets of free CQ channels satisfying Axioms~\ref{p:full_rank},~\ref{p:compact},~\ref{p:tensor_product}, and~\ref{p:convex}, and any CQ channel $\Phi\in\mathcal{C}\qty(\mathcal{X}\to\mathcal{H})$, it holds that
\begin{align}
    \label{eq:Stein}
    &\lim_{n\to\infty}-\frac{1}{n}\log\qty[\beta_\epsilon\left(\Phi^{\otimes n}\middle\|\mathcal{F}\right)]\notag\\
    &=\lim_{n\to\infty}\frac{1}{n}D\left(\Phi^{\otimes n}\middle\|\mathcal{F}\right).
\end{align}
where $\beta_\epsilon$ is defined as~\eqref{eq:beta}, and $D$ is defined as~\eqref{eq:D_F}.
\end{theorem}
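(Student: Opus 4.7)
The plan is to prove the two bounds making up the equality separately. Let $R \coloneqq \lim_{n\to\infty}\tfrac{1}{n}D(\Phi^{\otimes n}\|\mathcal{F})$, whose existence is guaranteed by Proposition~\ref{prp:existence_limit}. The goal is to establish the strong converse $\limsup_n -\tfrac{1}{n}\log\beta_\epsilon(\Phi^{\otimes n}\|\mathcal{F}) \le R$ and the achievability $\liminf_n -\tfrac{1}{n}\log\beta_\epsilon(\Phi^{\otimes n}\|\mathcal{F}) \ge R$, in both cases by adapting the pinching, information-spectrum, and sandwiched-R\'enyi tools of the state version in Ref.~\cite{hayashi2025generalizedquantumsteinslemma} to the CQ-channel setting via a classical-quantum state lifting.

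For the strong converse, fix $\alpha>1$, any test $(p^{(n)}, \{T_{x^{(n)}}\})$ with type-I error at most $\epsilon$, and any $\Phi_\text{free}^{(n)} \in \mathcal{F}(\mathcal{X}^n \to \mathcal{H}^{\otimes n})$. Applying the joint classical-quantum POVM $\{\ket{x^{(n)}}\bra{x^{(n)}}\otimes T_{x^{(n)}}\}$ to the two classical-quantum states $\sum_{x^{(n)}}p^{(n)}(x^{(n)})\ket{x^{(n)}}\bra{x^{(n)}}\otimes \Phi^{\otimes n}(x^{(n)})$ and its free counterpart, and combining monotonicity of $\widetilde{D}_\alpha$ under this POVM with the standard R\'enyi strong-converse inequality for binary hypothesis testing, yields $-\tfrac{1}{n}\log\beta \le \tfrac{1}{n}\widetilde{D}_\alpha(\Phi^{\otimes n}\|\Phi_\text{free}^{(n)}) + O(1/n)$. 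Minimizing over $\Phi_\text{free}^{(n)}\in\mathcal{F}$, using Lemma~\ref{lem:subadditivity_CQ_channels} via Fekete's lemma to ensure convergence of $\tfrac{1}{n}\widetilde{D}_\alpha(\Phi^{\otimes n}\|\mathcal{F})$, and finally sending $\alpha\to 1^+$ via~\eqref{eq:D_alpha_limit} then delivers the converse.

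For the achievability, fix an input distribution $q$ on $\mathcal{X}$, set $p^{(n)}=q^{\otimes n}$, and recast the task as state hypothesis testing between the IID null $\rho_q^{\otimes n}$, with $\rho_q\coloneqq\sum_x q(x)\ket{x}\bra{x}\otimes\Phi(x)$, and the alternative set $\mathcal{G}_{q,n}\coloneqq\{\sum_{x^{(n)}}q^{\otimes n}(x^{(n)})\ket{x^{(n)}}\bra{x^{(n)}}\otimes \Phi_\text{free}^{(n)}(x^{(n)}):\Phi_\text{free}^{(n)}\in\mathcal{F}\}$. Axioms~\ref{p:full_rank}--\ref{p:convex} on $\mathcal{F}$ transfer to the state-level axioms on $\{\mathcal{G}_{q,n}\}_n$ required by the state version: compactness and convexity are inherited directly, full-rank dominance follows from $\rho_q^{\Phi_\text{full}}$, and closure under tensor product follows from $q^{\otimes n}\otimes q^{\otimes n'}=q^{\otimes(n+n')}$ together with Axiom~\ref{p:tensor_product}. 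Invoking the state-level pinching-and-information-spectrum construction---pinching $\rho_q^{\otimes n}$ by $(\rho_q^{\Phi_\text{full}})^{\otimes n}$ and using likelihood-ratio projectors whose cutoff is controlled by $\widetilde{D}_\alpha$---produces a test whose type-II exponent is $\lim_n\tfrac{1}{n}D(\rho_q^{\otimes n}\|\mathcal{G}_{q,n})$; optimizing over $q$ then gives $\liminf_n -\tfrac{1}{n}\log\beta_\epsilon \ge \sup_q\lim_n\tfrac{1}{n}D(\rho_q^{\otimes n}\|\mathcal{G}_{q,n})$.

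The main obstacle I anticipate is closing the gap between the IID-input rate $\sup_q\lim_n\tfrac{1}{n}D(\rho_q^{\otimes n}\|\mathcal{G}_{q,n})$ and $R$, since the channel divergence~\eqref{eq:D_F}--\eqref{eq:D_channel} allows a non-IID maximizer $x^{(n)}\in\mathcal{X}^n$ whereas the achievability above only uses product input distributions. I would bridge this gap by first applying Sion's minimax theorem---justified by convexity of $\mathcal{F}$ (Axiom~\ref{p:convex}), compactness (Axiom~\ref{p:compact}), and joint convexity of quantum relative entropy---to rewrite $D(\Phi^{\otimes n}\|\mathcal{F})$ as $\max_{p^{(n)}}\min_{\Phi_\text{free}^{(n)}}D(\rho^{\Phi^{\otimes n}}_{p^{(n)}}\|\rho^{\Phi_\text{free}^{(n)}}_{p^{(n)}})$, thereby reducing the regularized channel divergence to a state-level max over input distributions, and then using a type-class decomposition of $p^{(n)}$ combined with Axiom~\ref{p:tensor_product} and Lemma~\ref{lem:additivity_CQ_channels} to block-diagonalize the optimizing free channel across types so that the per-copy divergence on each type class becomes IID and is matched by the $q^{\otimes n}$ lower bound up to a $\mathrm{poly}(n)$ overhead that vanishes after normalization by $n$. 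This alignment between channel-level non-IID optimization and state-level IID hypothesis testing, which crucially exploits the classical character of the channel inputs, is the principal novel step where the CQ-channel proof departs from a direct translation of the state version.
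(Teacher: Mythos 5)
Your strong-converse half is fine and is essentially the paper's own argument (Lemmas~\ref{lem:bound_type_II_error_CQ_channels}--\ref{lem:bound_type_II_error} and Proposition~\ref{prp:strong_converse}: a sandwiched-R\'enyi bound on the type II error, subadditivity from Axiom~\ref{p:tensor_product} with Fekete's lemma, then $\alpha\to1$). The genuine gap is in the direct part, and it sits exactly where you flag it. Restricting to product inputs $q^{\otimes n}$ and invoking the state-level generalized Stein's lemma for the family $\mathcal{G}_{q,n}$ only yields the exponent $\sup_q\lim_n\frac{1}{n}D(\rho_q^{\otimes n}\|\mathcal{G}_{q,n})$, which is in general only a lower bound on $\lim_n\frac{1}{n}D(\Phi^{\otimes n}\|\mathcal{F})$, since the channel divergence in~\eqref{eq:D_channel} is a worst case over arbitrary, possibly strongly correlated input sequences $x^{(n)}$, while your null hypothesis averages over $q^{\otimes n}$. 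Your proposed bridge does not close this. The finite-$n$ minimax swap is legitimate (the objective is linear in $p^{(n)}$ and convex in the free channel, with compact convex domains by Axioms~\ref{p:compact} and~\ref{p:convex}), but the next step --- ``type-class decomposition \ldots block-diagonalize the optimizing free channel across types so that the per-copy divergence on each type class becomes IID'' --- uses operations the axioms do not provide: $\mathcal{F}$ is only assumed closed under tensor products (Axiom~\ref{p:tensor_product}); it is \emph{not} assumed closed under permutations of subsystems (the paper deliberately drops that axiom of Ref.~\cite{Brandao2010}), nor under restricting or modifying a free channel per input type class. So there is no way to replace the optimizing $\Phi_\mathrm{free}^{(n)}$ by one acting symmetrically on a type class, and no way to convert the $n$-dependent, non-product optimal input distribution into a single-letter $q^{\otimes n}$ at a $\mathrm{poly}(n)$ cost. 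Even passing to block-IID inputs $q_m^{\otimes n/m}$ with growing $m$ changes the single-copy space with $m$ and still requires an unproved exchange of $\sup_m$ with the regularized limit. In short, the identity $\sup_q\lim_n\frac{1}{n}D(\rho_q^{\otimes n}\|\mathcal{G}_{q,n})=\lim_n\frac{1}{n}D(\Phi^{\otimes n}\|\mathcal{F})$ under Axioms~\ref{p:full_rank}--\ref{p:convex} alone is exactly what your proof needs and exactly what it does not establish (and it is not obviously true without permutation closure).

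The paper avoids this reduction altogether: it never recasts the problem as IID-input state testing. The direct part (Proposition~\ref{prp:direct}) is a proof by contradiction via the update lemma (Lemma~\ref{lem:update}), whose machinery is built to handle \emph{arbitrary} sequences of input distributions: the minimax characterization of $\beta_\epsilon$ (Proposition~\ref{prp:minimax}) packages the input distribution and the POVM into a single operator on the Choi space, the rounding/pinching lemmas (Lemmas~\ref{lem:rounding}--\ref{lem:approximation}) and the CQ-channel information-spectrum lemma (Lemma~\ref{lem:information_spectrum}) give operator inequalities valid for every input $x^{(n)}$, and the final step takes the maximum over input distributions so that the worst-case nature of the channel divergence is handled directly; if the exponent were strictly below $\lim_n\frac{1}{n}D(\Phi^{\otimes n}\|\mathcal{F})$, one could strictly improve the optimal free-channel sequence, a contradiction. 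To salvage your route you would have to prove the missing rate identity above, which appears to be of comparable difficulty to the theorem itself; otherwise your achievability argument proves a weaker statement than~\eqref{eq:Stein}.
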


The main difficulty in analyzing Theorem~\ref{thm:main} lies in the fact that the channel inputs can be arbitrary; even when considering $n$-fold copies $\Phi^{\otimes n}$ of a CQ channel in the first argument of~\eqref{eq:Stein}, the corresponding outputs are not IID copies of a single state.
By contrast, in the state version of the generalized quantum Stein's lemma~\cite{Brand_o_2008,brandao2010reversible,Brandao2010,hayashi2025generalizedquantumsteinslemma,10898013}, the analysis essentially exploits the fact that the first argument is an IID state, and therefore, the same proof techniques cannot be directly applied in our CQ-channel setting.
Nevertheless, in the remainder of this section, we extend the proof techniques developed in the state case~\cite{hayashi2025generalizedquantumsteinslemma} to CQ channels, thereby overcoming this challenge.
In Sec.~\ref{sec:minimax}, we provide a minimax characterization of $\beta_\epsilon\left(\Phi^{\otimes n}\middle\|\mathcal{F}\right)$.
Then, in Sec.~\ref{sec:proof_generalized_quantum_steins_lemma}, we develop proof techniques for Theorem~\ref{thm:main} based on this characterization.
Combining these ingredients, we summarize the proof of the theorem as follows.

\begin{proof}[Proof of Theorem~\ref{thm:main}]
Proposition~\ref{prp:existence_limit} ensures that the limit
\begin{align}
    \lim_{n\to\infty}\frac{1}{n}D\left(\Phi^{\otimes n}\middle\|\mathcal{F}\right)
\end{align}
exists.  
In Sec.~\ref{sec:strong_converse_stein}, we prove Proposition~\ref{prp:strong_converse}, yielding
\begin{align}
    &\limsup_{n\to\infty}-\frac{1}{n}\log\qty[\beta_\epsilon\left(\Phi^{\otimes n}\middle\|\mathcal{F}\right)]\notag\\
    &\leq\lim_{n\to\infty}\frac{1}{n}D\left(\Phi^{\otimes n}\middle\|\mathcal{F}\right).
\end{align}
In Sec.~\ref{sec:direct_stein}, we prove Proposition~\ref{prp:direct}, which establishes
\begin{align}
    &\liminf_{n\to\infty}-\frac{1}{n}\log\qty[\beta_\epsilon\left(\Phi^{\otimes n}\middle\|\mathcal{F}\right)]\notag\\
    &\geq\lim_{n\to\infty}\frac{1}{n}D\left(\Phi^{\otimes n}\middle\|\mathcal{F}\right).
\end{align}
Together, these results imply that the limit
\begin{align}
    \lim_{n\to\infty}-\frac{1}{n}\log\qty[\beta_\epsilon\left(\Phi^{\otimes n}\middle\|\mathcal{F}\right)]
\end{align}
exists and coincides with the right-hand side of~\eqref{eq:Stein}.
\end{proof}

\subsection{Properties of type II errors}
\label{sec:minimax}

In this section, we present several properties of the type II error $\beta_\epsilon\left(\Phi\middle|\mathcal{F}\right)$ defined in~\eqref{eq:beta}, which will be useful for analyzing the generalized quantum Stein's lemma for CQ channels.

We first provide a minimax characterization of the type II error $\beta_\epsilon\left(\Phi\middle\|\mathcal{F}\right)$ in~\eqref{eq:beta} for a CQ channel $\Phi$ and a set $\mathcal{F}$.
To this end, we introduce an auxiliary quantity between two CQ channels $\Phi_1$ and $\Phi_2$:
\begin{align}
\label{eq:beta_CQ_channels}
\beta_\epsilon\left(\Phi_1\middle|\Phi_2\right)\coloneqq\min_{p}\min_{\qty{T_{x}}x\in\mathcal{T}_{\epsilon,\Phi_1,p}}\sum_{x\in\mathcal{X}}p(x)\Tr\qty[T_{x}\Phi_2(x)],
\end{align}
where the minimization is over input distributions $p$ and POVM elements ${T_x}_x$ satisfying the error constraint specified by $\mathcal{T}_{\epsilon,\Phi_1,p}$ in~\eqref{eq:T_set} with $\Phi_1$.
In the proof of the state version of the generalized quantum Stein's lemma~\cite{hayashi2025generalizedquantumsteinslemma}, the type II error involves only two optimizations, i.e., minimization over POVM elements and maximization over the set $\mathcal{F}$.
By contrast, in our CQ-channel setting, $\beta_\epsilon$ in~\eqref{eq:beta} requires an additional minimization over input distributions.
This results in a nested optimization problem involving several minimization and maximization across distinct sets, which complicates obtaining a simple characterization.
Nevertheless, when the set $\mathcal{F}$ is compact and convex, we will show that this difficulty can be overcome by representing the inputs and POVM elements as a single set, so the type II error admits the following minimax characterization.

\begin{proposition}[\label{prp:minimax}Minimax characterization of type II errors for CQ channels]
    For any $\epsilon\geq 0$,
    any family $\mathcal{F}$ of sets of free CQ channels satisfying Axioms~\ref{p:compact} and~\ref{p:convex},
    and any CQ channel $\Phi\in\mathcal{C}\qty(\mathcal{X}\to\mathcal{H})$,
    it holds that
    \begin{align}
        \beta_\epsilon\left(\Phi\middle\|\mathcal{F}\right)=
        \max_{\Phi_\mathrm{free}\in\mathcal{F}}\beta_\epsilon\left(\Phi\middle\|\Phi_\mathrm{free}\right),
    \end{align}
    where $\beta_\epsilon$ on the left-hand side is defined as~\eqref{eq:beta}, and $\beta_\epsilon$ on the right-hand side is defined as~\eqref{eq:beta_CQ_channels}.
\end{proposition}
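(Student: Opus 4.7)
The easy direction $\max_{\Phi_\mathrm{free}\in\mathcal{F}}\beta_\epsilon(\Phi\|\Phi_\mathrm{free})\leq\beta_\epsilon(\Phi\|\mathcal{F})$ is weak duality: for every feasible pair $(p,\{T_x\})$ and every $\Phi_\mathrm{free}\in\mathcal{F}$ the bound $\sum_{x}p(x)\Tr[T_x\Phi_\mathrm{free}(x)]\leq\max_{\Phi_\mathrm{free}'\in\mathcal{F}}\sum_x p(x)\Tr[T_x\Phi_\mathrm{free}'(x)]$ holds, and taking the corresponding $\min_p\min_T$ on each side followed by $\max_{\Phi_\mathrm{free}}$ on the left gives the inequality. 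The content of the proposition is therefore the reverse inequality, which requires swapping the outer $\max_{\Phi_\mathrm{free}}$ with the nested $\min_p\min_{\{T_x\}}$ appearing in the definition~\eqref{eq:beta} of $\beta_\epsilon(\Phi\|\mathcal{F})$.

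The main obstacle is that the set $\{(p,\{T_x\}):\{T_x\}\in\mathcal{T}_{\epsilon,\Phi,p}\}$ is not convex in $(p,T)$ coordinates: both the type I error constraint $\sum_x p(x)\Tr[(\mathds{1}-T_x)\Phi(x)]\leq\epsilon$ and the objective $\sum_x p(x)\Tr[T_x\Phi_\mathrm{free}(x)]$ are bilinear in $(p,T)$, obstructing any direct application of a minimax theorem. To bypass this, I would introduce the reparameterization $M_x\coloneqq p(x)T_x$ for each $x\in\mathcal{X}$. Under this change of variables, all constraints become linear or convex: $M_x\geq 0$ follows from $p(x)\geq 0$ and $T_x\geq 0$; the existence of some probability distribution $p$ with $p(x)\geq\|M_x\|_\infty$ (which reproduces $T_x=M_x/p(x)\leq\mathds{1}$ for $p(x)>0$, and forces $M_x=0$ for $p(x)=0$) is equivalent to $\sum_x\|M_x\|_\infty\leq 1$; and the type I error constraint becomes the linear inequality $\sum_x\Tr[M_x\Phi(x)]\geq 1-\epsilon$. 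The resulting feasible set $\mathcal{M}_\epsilon$ of families $\{M_x\}_x$ is therefore convex, closed, and bounded, hence compact in the finite-dimensional ambient space. Crucially, the objective rewrites as the bilinear form $f(\{M_x\},\Phi_\mathrm{free})\coloneqq\sum_x\Tr[M_x\Phi_\mathrm{free}(x)]$, which no longer has any $p$-dependence, and the correspondence between feasible $(p,T)$ and feasible $\{M_x\}$ preserves the value of the objective.

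Once in this form, Axioms~\ref{p:compact} and~\ref{p:convex} guarantee that $\mathcal{F}$ is a convex compact subset of the finite-dimensional real vector space parameterizing CQ channels. Thus $f$ is a bilinear function, in particular continuous, linear (hence both quasi-convex and quasi-concave) in each argument separately, on a product of two nonempty convex compact sets. Sion's minimax theorem (equivalently, von Neumann's minimax in this finite-dimensional bilinear setting) then yields $\min_{\{M_x\}\in\mathcal{M}_\epsilon}\max_{\Phi_\mathrm{free}\in\mathcal{F}}f(\{M_x\},\Phi_\mathrm{free})=\max_{\Phi_\mathrm{free}\in\mathcal{F}}\min_{\{M_x\}\in\mathcal{M}_\epsilon}f(\{M_x\},\Phi_\mathrm{free})$. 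Translating back to the $(p,T)$ variables identifies the left-hand side with $\beta_\epsilon(\Phi\|\mathcal{F})$ and the right-hand side with $\max_{\Phi_\mathrm{free}\in\mathcal{F}}\beta_\epsilon(\Phi\|\Phi_\mathrm{free})$, completing the proof. The only subtle point is verifying that the reparameterization $(p,T)\mapsto\{M_x=p(x)T_x\}$ is surjective onto $\mathcal{M}_\epsilon$, which follows by choosing, for a given $\{M_x\}\in\mathcal{M}_\epsilon$, any probability distribution $p$ with $p(x)\geq\|M_x\|_\infty$ (possible since $\sum_x\|M_x\|_\infty\leq 1$) and recovering $T_x=M_x/p(x)$ on the support of $p$.
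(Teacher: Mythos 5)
Your proposal is correct and is essentially the paper's own argument: the paper likewise absorbs the input distribution into the measurement by forming $K=\sum_{x}p(x)\ket{x}\bra{x}\otimes T_x$ (whose diagonal blocks are exactly your $M_x=p(x)T_x$), notes that the resulting feasible set is compact and convex while the objective $\Tr[KJ(\Phi_\mathrm{free})]$ is bilinear, and applies the minimax theorem with the compactness and convexity of $\mathcal{F}$ from Axioms CQ2 and CQ4. Your explicit characterization $\sum_x\norm{M_x}_\infty\leq 1$ and the surjectivity check just spell out in more detail why that feasible set is convex and coincides with the image of the $(p,\qty{T_x})$ pairs, which the paper asserts more briefly.
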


\begin{proof}
    We will show that
    \begin{align}
        &\beta_\epsilon\left(\Phi\middle\|\mathcal{F}\right)\notag\\
        &=
        \min_{p}\min_{\qty{T_{x}}_x\in\mathcal{T}_{\epsilon,\Phi,p}}\max_{\Phi_\mathrm{free}\in\mathcal{F}}\Tr\qty[T_{x}\Phi_\mathrm{free}\qty(x)]
    \end{align}
    and
    \begin{align}
        &\max_{\Phi_\mathrm{free}\in\mathcal{F}}\beta_\epsilon\left(\Phi\middle\|\Phi_\mathrm{free}\right)\notag\\
        &=
        \max_{\Phi_\mathrm{free}\in\mathcal{F}}\min_{p}\min_{\qty{T_{x}}_x\in\mathcal{T}_{\epsilon,\Phi,p}}\Tr\qty[T_{x}\Phi_\mathrm{free}\qty(x)]
    \end{align}
    coincide.
    The proof is based on the minimax theorem~\cite{v1928theorie,sion1958general,10.2996/kmj/1138038812}, which shows that a pair of $\min$ and $\max$ may commute if these optimizations are over compact convex sets and the objective function to be optimized is bilinear.
    However, in our case, we have three optimizations.
    Hence, it is essential for our proof to take appropriate variables to ensure that we have a pair of compact convex sets.

    To this end, instead of a CQ channel $\Phi\in\mathcal{C}\qty(\mathcal{X}\to\mathcal{H})$, we consider its Choi operator $J\qty(\Phi)\in\mathcal{L}\qty(\mathcal{H}_\mathcal{X}\otimes\mathcal{H})$ with $\mathcal{H}_\mathcal{X}\coloneqq\spn\qty{\ket{x}:x\in\mathcal{X}}$
    \begin{align}
        J\qty(\Phi)\coloneqq\sum_{x\in\mathcal{X}}\ket{x}\bra{x}\otimes \Phi(x).
    \end{align}
    Instead of the input $x\in\mathcal{X}$ and the POVM measurement $\qty{T_{x},\mathds{1}-T_{x}}$, we introduce an operator in the form of
    \begin{align}
        K=\sum_{x\in\mathcal{X}}p(x)\ket{x}\bra{x}\otimes T_x\in\mathcal{L}\qty(\mathcal{H}_\mathcal{X}\otimes\mathcal{H}),
    \end{align}
    so that we have
    \begin{align}
        \Tr\qty[KJ\qty(\Phi)]=\sum_{x\in\mathcal{X}}p(x)\Tr\qty[T_x\Phi(x)].
    \end{align}
    The set of these operators satisfying the constraints in~\eqref{eq:T_set} on the type I errors is given by
    \begin{align}
        \mathcal{K}_{\epsilon,J\qty(\Phi)}\coloneqq
        &\left\{K=\sum_{x\in\mathcal{X}}p(x)\ket{x}\bra{x}\otimes T_x:\right.\notag\\
        &\Tr\qty[KJ\qty(\Phi)]\geq1-\epsilon,\notag\\
        &p(x)\geq 0,~\sum_x p(x)=1,\notag\\
        &\left.0\leq T_x\leq\mathds{1}\right\},
    \end{align}
    which is a compact convex set.

    Then, we have
    \begin{align}
        \beta_\epsilon\left(\Phi\middle\|\mathcal{F}\right)&=\min_{K\in\mathcal{K}_{\epsilon,J\qty(\Phi)}}\max_{\Phi_\mathrm{free}\in\mathcal{F}}\Tr\qty[KJ\qty(\Phi_\mathrm{free})],\\
        \max_{\Phi_\mathrm{free}\in\mathcal{F}}\beta_\epsilon\left(\Phi\middle\|\Phi_\mathrm{free}\right)&=\max_{\Phi_\mathrm{free}\in\mathcal{F}}\min_{K\in\mathcal{K}_{\epsilon,J\qty(\Phi)}}\Tr\qty[KJ\qty(\Phi_\mathrm{free})].
    \end{align}
    Due to the convexity and the compactness of $\mathcal{K}_{\epsilon,J\qty(\Phi)}$ and those of $\mathcal{F}$ from Axioms~\ref{p:compact} and~\ref{p:convex}, by the bilinearity of $\Tr\qty[KJ\qty(\Phi_\mathrm{free})]$ in terms of $K$ and $\Phi_\mathrm{free}$, the minimax theorem~\cite{v1928theorie,sion1958general,10.2996/kmj/1138038812} shows that
    \begin{align}
        &\min_{K\in\mathcal{K}_{\epsilon,J\qty(\Phi)}}\max_{\Phi_\mathrm{free}\in\mathcal{F}}\Tr\qty[KJ\qty(\Phi_\mathrm{free})]\notag\\
        &=\max_{\Phi_\mathrm{free}\in\mathcal{F}}\min_{K\in\mathcal{K}_{\epsilon,J\qty(\Phi)}}\Tr\qty[KJ\qty(\Phi_\mathrm{free})],
    \end{align}
    indicating the conclusion.
\end{proof}

Additionally, for the quantity $\beta_\epsilon\left(\Phi_1\middle\|\Phi_2\right)$ defined in~\eqref{eq:beta_CQ_channels}, we will show a monotonicity under the action of superchannels, as presented below.
In Ref.~\cite{hayashi2025generalizedquantumsteinslemma}, the analogous monotonicity of type II errors in the state version of the generalized quantum Stein's lemma,  under the action of channels, was used.
By contrast, in our CQ-channel setting, the proof requires a more refined analysis, since we must account not only for the action of channels on the outputs but also for the conversion of input probability distributions by superchannels, and for the auxiliary input and output systems that superchannels may involve.

\begin{lemma}[\label{lem:monotonicity_beta_CQ_channel}Monotonicity of type II errors for CQ channels under superchannels]
    For any parameter $\epsilon\geq 0$, any CQ channels $\Phi_1,\Phi_2\in\mathcal{C}\qty(\mathcal{X}_\mathrm{in}\times\mathcal{X}_\mathrm{aux}\to\mathcal{H}_\mathrm{in}\otimes\mathcal{H}_\mathrm{aux})$, and any superchannel $\Theta\in\mathcal{C}\qty(\qty(\mathcal{X}_\mathrm{in}\to\mathcal{H}_\mathrm{in})\to\qty(\mathcal{X}_\mathrm{out}\to\mathcal{H}_\mathrm{out}))$, it holds that
    \begin{align}
        \beta_\epsilon\left(\qty(\Theta\otimes\id)[\Phi_1]\middle\|\qty(\Theta\otimes\id)[\Phi_2]\right)\geq\beta_\epsilon\left(\Phi_1\middle\|\Phi_2\right),
    \end{align}
    where $\beta_\epsilon$ is defined as~\eqref{eq:beta_CQ_channels}, and $\id$ is the identity supermap as in the definition~\eqref{eq:C} of superchannels.
\end{lemma}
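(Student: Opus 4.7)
The plan is to take any feasible measurement strategy for $\beta_\epsilon((\Theta\otimes\id)[\Phi_1]\,\|\,(\Theta\otimes\id)[\Phi_2])$ and convert it, using the decomposition of $\Theta$ in~\eqref{eq:theta}, into a feasible strategy for $\beta_\epsilon(\Phi_1\,\|\,\Phi_2)$ that attains identical type I and type II error values. Since $\beta_\epsilon$ is an infimum over such strategies, this correspondence immediately yields the claimed inequality.

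Concretely, fix an input distribution $p(x_\mathrm{out},x_\mathrm{aux})$ together with POVM elements $\{T_{(x_\mathrm{out},x_\mathrm{aux})}\}\in\mathcal{T}_{\epsilon,(\Theta\otimes\id)[\Phi_1],p}$. Using~\eqref{eq:theta} and passing each CPTP map $\mathcal{N}_{\Theta,x_\mathrm{in},x_\mathrm{out}}$ to its Heisenberg-picture adjoint $\mathcal{N}^{*}_{\Theta,x_\mathrm{in},x_\mathrm{out}}$ under the trace, I would define
\begin{align*}
p'(x_\mathrm{in},x_\mathrm{aux}) &\coloneqq \sum_{x_\mathrm{out}} p(x_\mathrm{out},x_\mathrm{aux})\, p_\Theta(x_\mathrm{in}|x_\mathrm{out}),\\
T'_{(x_\mathrm{in},x_\mathrm{aux})} &\coloneqq \frac{1}{p'(x_\mathrm{in},x_\mathrm{aux})}\sum_{x_\mathrm{out}} p(x_\mathrm{out},x_\mathrm{aux})\, p_\Theta(x_\mathrm{in}|x_\mathrm{out})\,(\mathcal{N}^{*}_{\Theta,x_\mathrm{in},x_\mathrm{out}}\otimes\id)[T_{(x_\mathrm{out},x_\mathrm{aux})}],
\end{align*}
with the convention $T'_{(x_\mathrm{in},x_\mathrm{aux})}\coloneqq 0$ whenever $p'(x_\mathrm{in},x_\mathrm{aux})=0$. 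That $p'$ is a probability distribution on $\mathcal{X}_\mathrm{in}\times\mathcal{X}_\mathrm{aux}$ is immediate from $\sum_{x_\mathrm{in}}p_\Theta(x_\mathrm{in}|x_\mathrm{out})=1$; the bound $0\leq T'_{(x_\mathrm{in},x_\mathrm{aux})}\leq\mathds{1}$ follows because the adjoints are positive and unital, so each operator $(\mathcal{N}^{*}_{\Theta,x_\mathrm{in},x_\mathrm{out}}\otimes\id)[T_{(x_\mathrm{out},x_\mathrm{aux})}]$ lies in $[0,\mathds{1}]$ and $T'_{(x_\mathrm{in},x_\mathrm{aux})}$ is a convex combination of such operators.

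What remains is to check that both error probabilities are preserved exactly. Expanding $\sum_{x_\mathrm{in},x_\mathrm{aux}} p'(x_\mathrm{in},x_\mathrm{aux})\Tr[(\mathds{1}-T'_{(x_\mathrm{in},x_\mathrm{aux})})\Phi_1(x_\mathrm{in},x_\mathrm{aux})]$ and invoking the adjoint identity $\Tr[A\,\mathcal{N}[B]]=\Tr[\mathcal{N}^{*}[A]\,B]$, the triple sum collapses, via~\eqref{eq:theta}, to exactly $\sum_{x_\mathrm{out},x_\mathrm{aux}} p(x_\mathrm{out},x_\mathrm{aux})\Tr[(\mathds{1}-T_{(x_\mathrm{out},x_\mathrm{aux})})(\Theta\otimes\id)[\Phi_1](x_\mathrm{out},x_\mathrm{aux})]\leq\epsilon$, so $(p',\{T'\})\in\mathcal{T}_{\epsilon,\Phi_1,p'}$. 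The same algebraic manipulation applied to $\Phi_2$ shows that the type II error of $(p',\{T'\})$ on $\Phi_2$ equals that of $(p,\{T\})$ on $(\Theta\otimes\id)[\Phi_2]$. Minimizing over input-side strategies is therefore bounded above by the value achieved on the output side, which is the claimed monotonicity.

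The main obstacle is bookkeeping: one has to track the triple sum over $(x_\mathrm{in},x_\mathrm{out},x_\mathrm{aux})$ carefully and to handle the vanishing values of $p'(x_\mathrm{in},x_\mathrm{aux})$, where $T'$ is fixed by convention but contributes nothing to either sum since the coefficient annihilates that term. Beyond this, the argument relies only on the structural form~\eqref{eq:theta} of CQ superchannels and on the trace-preserving property of the internal CPTP maps, so the lemma holds for every superchannel $\Theta$ and invokes none of the axioms on $\mathcal{F}$.
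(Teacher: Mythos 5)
Your proposal is correct and follows essentially the same route as the paper's proof: you pull back the output-side strategy $(p,\{T\})$ to an input-side strategy $(p',\{T'\})$ via the decomposition~\eqref{eq:theta} and the adjoint maps $(\mathcal{N}_{\Theta,x_\mathrm{in},x_\mathrm{out}}\otimes\id)^\dag$, preserving both error values, and then use that the minimization defining $\beta_\epsilon\left(\Phi_1\middle\|\Phi_2\right)$ ranges over a superset of such strategies. Your explicit verification that $0\leq T'\leq\mathds{1}$ and the treatment of the $p'(x_\mathrm{in},x_\mathrm{aux})=0$ case are details the paper leaves implicit, but the argument is the same.
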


\begin{proof}
    As in~\eqref{eq:theta}, we represent the superchannel $\Theta$ as
    \begin{align}
        &\qty(\qty(\Theta\otimes\id)[\Phi])\qty(x_\mathrm{out},x_\mathrm{aux})\notag\\
        &=\sum_{x_\mathrm{in}\in\mathcal{X}_\mathrm{in}}p_{\Theta}\qty(x_\mathrm{in}|x_\mathrm{out})\qty(\qty(\mathcal{N}_{\Theta,x_\mathrm{in},x_\mathrm{out}}\otimes\id)\circ\Phi)\qty(x_\mathrm{in},x_\mathrm{aux}).
    \end{align}
    Then, by the definition~\eqref{eq:beta_CQ_channels} of $\beta_\epsilon$, we have
    \begin{widetext}
    \begin{align}
        &\beta_\epsilon\left(\qty(\Theta\otimes\id)[\Phi_1]\middle\|\qty(\Theta\otimes\id)[\Phi_2]\right)\notag\\
        &=\min_{p}\min_{\qty{T_{x_\mathrm{out},x_\mathrm{aux}}}_{x_\mathrm{out},x_\mathrm{aux}}\in\mathcal{T}_{\epsilon,\qty(\Theta\otimes\id)[\Phi_1],p}}\notag\\
        &\quad\sum_{x_\mathrm{out},x_\mathrm{aux}}p\qty(x_\mathrm{out},x_\mathrm{aux})\Tr\qty[T_{x_\mathrm{out},x_\mathrm{aux}}\qty(\sum_{x_\mathrm{in}\in\mathcal{X}_\mathrm{in}}p_{\Theta}\qty(x_\mathrm{in}|x_\mathrm{out})\qty(\qty(\mathcal{N}_{\Theta,x_\mathrm{in},x_\mathrm{out}}\otimes\id)\circ\Phi_2)\qty(x_\mathrm{in},x_\mathrm{aux}))]\\
        &=\min_{p'}\min_{\qty{T_{x_\mathrm{in},x_\mathrm{aux}}^{\prime}}_{x_\mathrm{in},x_\mathrm{aux}}}\sum_{x_\mathrm{in},x_\mathrm{aux}}p'\qty(x_\mathrm{in},x_\mathrm{aux})\Tr\qty[T_{x_\mathrm{in},x_\mathrm{aux}}^{\prime}\Phi_2\qty(x_\mathrm{in},x_\mathrm{aux})],
    \end{align}
    where $p'$ and $\qty{T_{x_\mathrm{in},x_\mathrm{aux}}^{\prime}}_{x_\mathrm{in},x_\mathrm{aux}}$ are optimized over those having particular forms of
    \begin{align}
        p'\qty(x_\mathrm{in},x_\mathrm{aux})&=\sum_{x_\mathrm{out}\in\mathcal{X}_\mathrm{out}}p_{\Theta}\qty(x_\mathrm{in}|x_\mathrm{out})p\qty(x_\mathrm{out},x_\mathrm{aux}),\\
        T_{x_\mathrm{in},x_\mathrm{aux}}^{\prime}&=\frac{\sum_{x_\mathrm{out}\in\mathcal{X}_\mathrm{out}}p_{\Theta}\qty(x_\mathrm{in}|x_\mathrm{out})p\qty(x_\mathrm{out},x_\mathrm{aux})\qty(\mathcal{N}_{\Theta,x_\mathrm{in},x_\mathrm{out}}\otimes\id)^\dag\qty(T_{x_\mathrm{out},x_\mathrm{aux}})}{p'\qty(x_\mathrm{in},x_\mathrm{aux})}\notag\\
        &\quad\text{for $\qty{T_{x_\mathrm{out},x_\mathrm{aux}}}_{x_\mathrm{out},x_\mathrm{aux}}\in\mathcal{T}_{\epsilon,\qty(\Theta\otimes\id)[\Phi_1],p}$}.
    \end{align}
    Since $p'$ is optimized over a subset of the set of probability distributions over $\mathcal{X}_\mathrm{in}\times\mathcal{X}_\mathrm{aux}$, and $\qty{T_{x_\mathrm{in},x_\mathrm{aux}}^{\prime}}_{x_\mathrm{in},x_\mathrm{aux}}$ is optimized over a subset of $\mathcal{T}_{\epsilon,\Phi_1,p'}$, it follows that
    \begin{align}
    \label{eq:beta_monotonicity_1}
        \beta_\epsilon\left(\qty(\Theta\otimes\id)[\Phi_1]\middle\|\qty(\Theta\otimes\id)[\Phi_2]\right) &\geq
        \min_{p}\min_{\qty{T_{x_\mathrm{in},x_\mathrm{aux}}}_{x_\mathrm{in},x_\mathrm{aux}}\in\mathcal{T}_{\epsilon,\Phi_1,p}}\sum_{x_\mathrm{in},x_\mathrm{aux}}p\qty(x_\mathrm{in},x_\mathrm{aux})\Tr\qty[T_{x_\mathrm{in},x_\mathrm{aux}}\Phi_2\qty(x_\mathrm{in},x_\mathrm{aux})]\\
        &=\beta_\epsilon\left(\Phi_1\middle\|\Phi_2\right),
    \end{align}
    where the minimization of $p$ on the right-hand side of~\eqref{eq:beta_monotonicity_1} is over the set of all probability distributions over $\mathcal{X}_\mathrm{in}\times\mathcal{X}_\mathrm{aux}$.
    \end{widetext}
\end{proof}

\subsection{Main parts of proof of the generalized quantum Stein's lemma for CQ channels}
\label{sec:proof_generalized_quantum_steins_lemma}

In this section, we present the techniques to prove the generalized quantum Stein's lemma for CQ channels in Theorem~\ref{thm:main}.
The proof is composed of two parts: one is the strong converse part to establish the optimality, and the other is the direct part to demonstrate the achievability.
In Sec.~\ref{sec:strong_converse_stein}, we analyze the strong converse part.
In Sec.~\ref{sec:direct_stein}, we develop techniques to prove the direct part.
In each part, we construct a CQ-channel toolkit that extends the techniques used in proving the state version of the generalized quantum Stein's lemma in Ref.~\cite{hayashi2025generalizedquantumsteinslemma}, overcoming the inherent challenges posed by the presence of multiple channel inputs.

\subsubsection{Strong converse part}
\label{sec:strong_converse_stein}

We now prove the strong converse part of the generalized quantum Stein's lemma for CQ channels in Theorem~\ref{thm:main}.
In the state version of the generalized quantum Stein's lemma, Ref.~\cite{hayashi2025generalizedquantumsteinslemma} provided a simple proof of the strong converse using upper bounds on type II errors expressed in terms of the sandwiched Rényi relative entropy of states.
In this section, we generalize these upper bounds from the state setting to the CQ-channel setting, which yields a streamlined proof for the CQ-channel version of the generalized quantum Stein's lemma.
All the resulting bounds naturally reduce to the state case when the input dimension for CQ channels is one.
Our derivation relies on the properties of CQ-channel divergences presented in Sec.~\ref{sec:properties_channel_divergence}.
Whereas further extending such arguments to QQ channels is generally highly nontrivial, we demonstrate that the extension is feasible for CQ channels, precisely because their inputs are classical.

To this end, we prepare the lemma shown below.
For any parameter $\epsilon\in[0,1)$ and any density operator $\rho\in\mathcal{D}\qty(\mathcal{H})$, we write
\begin{align}
    \label{eq:T_set_states}
    \mathcal{T}_{\epsilon,\rho}\coloneqq\qty{T:0\leq T\leq\mathds{1},\Tr\qty[\qty(\mathds{1}-T)\rho]\leq\epsilon}.
\end{align}
We write the type II error between two density operators $\rho_1$ and $\rho_2$ as
\begin{align}
    \label{eq:beta_states}
    \beta_\epsilon\left(\rho_1\middle\|\rho_2\right)\coloneqq\min_{T\in\mathcal{T}_{\epsilon,\rho_1}}\Tr\qty[T\rho_2].
\end{align}
Then, for any parameter $\alpha>1$, the type II error is bounded by~\cite{cooney2016strong,887855}
\begin{align}
    \label{eq:beta_states_bound}
    -\log\qty[\beta_\epsilon\left(\rho_1\middle\|\rho_2\right)]\leq \widetilde{D}_\alpha\left(\rho_1\middle\|\rho_2\right)+\frac{\alpha}{\alpha-1}\log\qty[\frac{1}{1-\epsilon}],
\end{align}
where $\widetilde{D}_\alpha$ on the right-hand side is defined as~\eqref{eq:D_alpha}.
The following lemma generalizes this bound to CQ channels.

\begin{lemma}[\label{lem:bound_type_II_error_CQ_channels}Bound on type II errors between CQ channels]
    For any parameters $\alpha>1$ and $\epsilon\in[0,1)$,
    and any CQ channels $\Phi_1,\Phi_2\in\mathcal{C}\qty(\mathcal{X}\to\mathcal{H})$,
    it holds that
    \begin{align}
        &-\log\qty[\beta_\epsilon\left(\Phi_1\middle\|\Phi_2\right)]\leq\widetilde{D}_\alpha\left(\Phi_1\middle\|\Phi_2\right)+\frac{\alpha}{\alpha-1}\log\qty[\frac{1}{1-\epsilon}],
    \end{align}
    where $\beta_\epsilon$ is defined as~\eqref{eq:beta_CQ_channels}, and $\widetilde{D}_\alpha$ is defined as~\eqref{eq:D_alpha_channel}.
\end{lemma}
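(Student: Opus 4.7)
The plan is to reduce the channel-level bound to the state-level bound~\eqref{eq:beta_states_bound}, which is already at our disposal, by conditioning on each classical input $x$ and then averaging. Start from any feasible pair $(p,\{T_x\}_x)$ with $\{T_x\}_x \in \mathcal{T}_{\epsilon,\Phi_1,p}$ in the definition~\eqref{eq:beta_CQ_channels} of $\beta_\epsilon(\Phi_1\|\Phi_2)$, and define the per-input type I error
\begin{equation}
\epsilon_x \coloneqq \Tr\!\qty[(\mathds{1}-T_x)\Phi_1(x)], \qquad \text{so that}\qquad \sum_{x\in\mathcal{X}} p(x)\,\epsilon_x \leq \epsilon.
\end{equation}
For each fixed $x$ the single operator $T_x$ belongs to $\mathcal{T}_{\epsilon_x,\Phi_1(x)}$ defined by~\eqref{eq:T_set_states}, so the state bound~\eqref{eq:beta_states_bound} applied to $\Phi_1(x)$ and $\Phi_2(x)$ gives
\begin{equation}
\Tr\!\qty[T_x\Phi_2(x)] \;\geq\; (1-\epsilon_x)^{\alpha/(\alpha-1)}\exp\!\qty(-\widetilde{D}_\alpha(\Phi_1(x)\|\Phi_2(x))).
\end{equation}
Using the definition~\eqref{eq:D_alpha_channel} of the channel sandwiched Rényi divergence as the maximum over $x$, I can replace $\widetilde{D}_\alpha(\Phi_1(x)\|\Phi_2(x))$ by the larger quantity $\widetilde{D}_\alpha(\Phi_1\|\Phi_2)$ in the exponent and factor it out of the $x$-dependence.

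The second step is to average this pointwise estimate against $p$. I would note that $f(t) \coloneqq (1-t)^{\alpha/(\alpha-1)}$ is convex and monotonically decreasing on $[0,1]$ for $\alpha>1$ (the second derivative $\frac{\alpha}{(\alpha-1)^2}(1-t)^{(2-\alpha)/(\alpha-1)}$ is nonnegative). Jensen's inequality then yields
\begin{equation}
\sum_{x} p(x)(1-\epsilon_x)^{\alpha/(\alpha-1)} \;\geq\; \qty(1-\sum_{x} p(x)\epsilon_x)^{\alpha/(\alpha-1)} \;\geq\; (1-\epsilon)^{\alpha/(\alpha-1)},
\end{equation}
where the final inequality uses the constraint together with monotonicity of $f$. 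Combining with the pointwise bound gives
\begin{equation}
\sum_{x} p(x)\Tr\!\qty[T_x\Phi_2(x)] \;\geq\; (1-\epsilon)^{\alpha/(\alpha-1)} \exp\!\qty(-\widetilde{D}_\alpha(\Phi_1\|\Phi_2)).
\end{equation}
Taking $-\log$ of both sides and then minimizing the left-hand side over $p$ and $\{T_x\}_x \in \mathcal{T}_{\epsilon,\Phi_1,p}$, exactly as in~\eqref{eq:beta_CQ_channels}, produces the claimed inequality.

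The main conceptual obstacle, and what forces the use of Jensen, is that the type I error is imposed only in the averaged form $\sum_x p(x)\epsilon_x \leq \epsilon$ rather than pointwise in $x$; a naive per-$x$ substitution $\epsilon_x \leftarrow \epsilon$ is not available. Convexity of $(1-t)^{\alpha/(\alpha-1)}$ is precisely what lets one push the average inside the nonlinear prefactor and then exploit monotonicity to restore the constant $\tfrac{\alpha}{\alpha-1}\log\tfrac{1}{1-\epsilon}$ coming from the state-level bound. Every other step is direct bookkeeping, and no further channel-specific machinery from Sec.~\ref{sec:properties_channel_divergence} is needed here; the argument works precisely because the inputs are classical and the definition of $\widetilde D_\alpha$ for CQ channels is a maximum over these inputs.
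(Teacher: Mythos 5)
Your proof is correct, and it reaches the bound by a genuinely different route from the paper's. The paper's own argument picks the input $x^\ast$ attaining the maximum in~\eqref{eq:D_alpha_channel} and reduces the claim in one line to the state bound~\eqref{eq:beta_states_bound} at that single input, via the intermediate inequality $\beta_\epsilon(\Phi_1\|\Phi_2)\geq\beta_\epsilon(\Phi_1(x^\ast)\|\Phi_2(x^\ast))$; that step silently absorbs the key feature of the channel quantity~\eqref{eq:beta_CQ_channels}, namely that the type I constraint is imposed only on average over $x$, so the optimizer may share the error budget unevenly across inputs (and the divergence-maximizing $x^\ast$ need not be the input minimizing the per-input $\beta_\epsilon$). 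Your argument addresses exactly this point: you keep an arbitrary feasible pair $(p,\{T_x\}_x)$, apply~\eqref{eq:beta_states_bound} input-by-input at the true per-input error $\epsilon_x$, replace each $\widetilde{D}_\alpha(\Phi_1(x)\|\Phi_2(x))$ by the maximum $\widetilde{D}_\alpha(\Phi_1\|\Phi_2)$, and then use convexity and monotonicity of $(1-t)^{\alpha/(\alpha-1)}$ together with Jensen to convert the averaged constraint $\sum_x p(x)\epsilon_x\leq\epsilon$ into the factor $(1-\epsilon)^{\alpha/(\alpha-1)}$; minimizing over feasible pairs then gives the lemma. What your route buys is an explicit, fully justified treatment of the error-sharing issue, valid for every feasible pair and hence for the minimum, at the modest cost of the convexity computation; what the paper's route buys is brevity. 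One small remark would make yours airtight: if $\epsilon_x=1$ for some $x$ (e.g.\ $T_x=0$), the state bound is stated only for $\epsilon\in[0,1)$, but your per-input lower bound degenerates to $\Tr[T_x\Phi_2(x)]\geq 0$, which holds trivially, so the averaging step is unaffected.
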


\begin{proof}
    We take $x^{\ast}\in\mathcal{X}$ achieving the maximum in the definition~\eqref{eq:D_alpha_channel} of $\widetilde{D}_\alpha$, i.e.,
    \begin{align}
    \widetilde{D}_\alpha\left(\Phi_1\qty(x^\ast)\middle\|\Phi_2\qty(x^\ast)\right)= \widetilde{D}_\alpha\left(\Phi_1\middle\|\Phi_2\right),
    \end{align}
    where $\widetilde{D}_\alpha$ on the left-hand side is given by~\eqref{eq:D_alpha}.
    By the definitions~\eqref{eq:beta_CQ_channels} and~\eqref{eq:beta_states} of $\beta_\epsilon$, along with~\eqref{eq:beta_states_bound}, we have
    \begin{align}
        &-\log\qty[\beta_\epsilon\left(\Phi_1\middle\|\Phi_2\right)]\notag\\
        &\leq-\log\qty[\beta_\epsilon\left(\Phi_1\qty(x^\ast)\middle\|\Phi_2\qty(x^\ast)\right)]\\
        &\leq\widetilde{D}_\alpha\left(\Phi_1\qty(x^\ast)\middle\|\Phi_2\qty(x^\ast)\right)+\frac{\alpha}{\alpha-1}\log\qty[\frac{1}{1-\epsilon}]\\
        &=\widetilde{D}_\alpha\left(\Phi_1\middle\|\Phi_2\right)+\frac{\alpha}{\alpha-1}\log\qty[\frac{1}{1-\epsilon}].
    \end{align}
\end{proof}

The following lemma extends this bound to the one between a CQ channel and the set $\mathcal{F}$.

\begin{lemma}[\label{lem:bound_type_II_error}Bound on type II errors between a CQ channel and a set of free CQ channel]
    For any parameters $\alpha>1$ and $\epsilon\in[0,1)$, any family $\mathcal{F}$ of sets of free CQ channels satisfying Axioms~\ref{p:full_rank},~\ref{p:compact} and~\ref{p:convex},
    and any CQ channel $\Phi\in\mathcal{C}\qty(\mathcal{X}\to\mathcal{H})$,
    it holds that
    \begin{align}
        &-\log\qty[\beta_\epsilon\left(\Phi\middle\|\mathcal{F}\right)]\leq\widetilde{D}_\alpha\left(\Phi\middle\|\mathcal{F}\right)+\frac{\alpha}{\alpha-1}\log\qty[\frac{1}{1-\epsilon}],
    \end{align}
    where $\beta_\epsilon$ is defined as~\eqref{eq:beta}, and $\widetilde{D}_\alpha$ is defined as~\eqref{eq:D_alpha_F}.
\end{lemma}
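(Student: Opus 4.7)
The plan is to reduce the bound against the composite alternative $\mathcal{F}$ to the pointwise bound in Lemma~\ref{lem:bound_type_II_error_CQ_channels} via the minimax identity of Proposition~\ref{prp:minimax}. First, I would invoke Proposition~\ref{prp:minimax}, whose hypotheses (Axioms~\ref{p:compact} and~\ref{p:convex}) are part of the hypothesis of the lemma, to write $\beta_\epsilon\left(\Phi\middle\|\mathcal{F}\right) = \max_{\Phi_{\mathrm{free}}\in\mathcal{F}} \beta_\epsilon\left(\Phi\middle\|\Phi_{\mathrm{free}}\right)$. Taking negative logarithms converts the maximum on the right-hand side into a minimum, yielding
\[
-\log\qty[\beta_\epsilon\left(\Phi\middle\|\mathcal{F}\right)] = \min_{\Phi_{\mathrm{free}}\in\mathcal{F}} \qty(-\log\qty[\beta_\epsilon\left(\Phi\middle\|\Phi_{\mathrm{free}}\right)]).
\]

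Second, I would apply Lemma~\ref{lem:bound_type_II_error_CQ_channels} separately for each $\Phi_{\mathrm{free}}\in\mathcal{F}$, which gives
\[
-\log\qty[\beta_\epsilon\left(\Phi\middle\|\Phi_{\mathrm{free}}\right)] \le \widetilde{D}_\alpha\left(\Phi\middle\|\Phi_{\mathrm{free}}\right) + \frac{\alpha}{\alpha-1}\log\qty[\frac{1}{1-\epsilon}].
\]
Taking the minimum over $\Phi_{\mathrm{free}}\in\mathcal{F}$ on both sides, and noting that the additive correction on the right is independent of $\Phi_{\mathrm{free}}$, the minimum of $\widetilde{D}_\alpha\left(\Phi\middle\|\Phi_{\mathrm{free}}\right)$ is by definition~\eqref{eq:D_alpha_F} exactly $\widetilde{D}_\alpha\left(\Phi\middle\|\mathcal{F}\right)$, yielding the claimed inequality. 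Axiom~\ref{p:full_rank} guarantees the existence of a free CQ channel $\Phi_{\mathrm{full}}\in\mathcal{F}$ whose outputs dominate $\supp\qty(\Phi\qty(x))$ for every $x\in\mathcal{X}$, which in turn ensures $\widetilde{D}_\alpha\left(\Phi\middle\|\mathcal{F}\right)<\infty$ so that the bound is nontrivial; Axiom~\ref{p:compact} ensures the minimum in definition~\eqref{eq:D_alpha_F} is attained.

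No genuine obstacle is expected at this stage. The two substantive difficulties, namely commuting the maximization over $\mathcal{F}$ with the minimizations over the input distribution $p$ and POVM $\qty{T_x}_{x\in\mathcal{X}}$, and upgrading the state-level Rényi bound~\eqref{eq:beta_states_bound} to CQ channels via maximization over classical inputs, have already been resolved in Proposition~\ref{prp:minimax} and Lemma~\ref{lem:bound_type_II_error_CQ_channels}, respectively. The present lemma is thus the direct composition of these two ingredients.
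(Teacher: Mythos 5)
Your proposal is correct and follows essentially the same route as the paper's own proof: apply Proposition~\ref{prp:minimax} (under Axioms~\ref{p:compact} and~\ref{p:convex}) to turn $\beta_\epsilon\left(\Phi\middle\|\mathcal{F}\right)$ into $\max_{\Phi_\mathrm{free}\in\mathcal{F}}\beta_\epsilon\left(\Phi\middle\|\Phi_\mathrm{free}\right)$, convert the maximum into a minimum under the negative logarithm, and then apply Lemma~\ref{lem:bound_type_II_error_CQ_channels} pointwise before minimizing over $\Phi_\mathrm{free}$, with Axiom~\ref{p:full_rank} ensuring finiteness of $\widetilde{D}_\alpha\left(\Phi\middle\|\mathcal{F}\right)$. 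No gaps.
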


\begin{proof}
    Due to Axioms~\ref{p:compact} and~\ref{p:convex},
    Proposition~\ref{prp:minimax} shows that
    \begin{align}
        \beta_\epsilon\left(\Phi\middle\|\mathcal{F}\right)&=\max_{\Phi_\mathrm{free}\in\mathcal{F}}\beta_\epsilon\left(\Phi\middle\|\Phi_\mathrm{free}\right).
    \end{align}
    Then, due to Lemma~\ref{lem:bound_type_II_error_CQ_channels}, we have
    \begin{align}
        &-\log\qty[\max_{\Phi_\mathrm{free}\in\mathcal{F}}\beta_\epsilon\left(\Phi\middle\|\Phi_\mathrm{free}\right)]\notag\\
        &=\min_{\Phi_\mathrm{free}\in\mathcal{F}}-\log\qty[\beta_\epsilon\left(\Phi\middle\|\Phi_\mathrm{free}\right)]\\
        &\leq\min_{\Phi_\mathrm{free}\in\mathcal{F}}\widetilde{D}_\alpha\left(\Phi\middle\|\Phi_\mathrm{free}\right)+\frac{\alpha}{\alpha-1}\log\qty[\frac{1}{1-\epsilon}]\\
        &=\widetilde{D}_\alpha\left(\Phi\middle\|\mathcal{F}\right)+\frac{\alpha}{\alpha-1}\log\qty[\frac{1}{1-\epsilon}],
    \end{align}
    where $\widetilde{D}_\alpha$ is finite due to Axioms~\ref{p:full_rank}.
\end{proof}

Using Lemma~\ref{lem:bound_type_II_error}, we prove the strong converse part of the generalized quantum Stein's lemma for CQ channels as follows.
\begin{proposition}[\label{prp:strong_converse}The strong converse part of the generalized quantum Stein's lemma for CQ channels]
    For any parameter $\epsilon\in[0,1)$, 
    any family $\mathcal{F}$ of sets of free CQ channels satisfying Axioms~\ref{p:full_rank},~\ref{p:compact},~\ref{p:tensor_product}, and~\ref{p:convex},
    and any CQ channel $\Phi\in\mathcal{C}\qty(\mathcal{X}\to\mathcal{H})$,
    it holds that
    \begin{align}
        &\limsup_{n\to\infty}-\frac{1}{n}\log\qty[\beta_\epsilon\left(\Phi^{\otimes n}\middle\|\mathcal{F}\right)]\leq\lim_{n\to\infty}\frac{1}{n}D\left(\Phi^{\otimes n}\middle\|\mathcal{F}\right),
    \end{align}
    where $\beta_\epsilon$ is defined as~\eqref{eq:beta}, and $D$ is defined as~\eqref{eq:D_F}.
\end{proposition}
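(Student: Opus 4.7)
The plan is to chain three results already established in the excerpt: Lemma~\ref{lem:bound_type_II_error} (the single-shot converse bound via the sandwiched R\'enyi channel divergence relative to $\mathcal{F}$), Lemma~\ref{lem:subadditivity_CQ_channels} (subadditivity of $\widetilde{D}_\alpha(\Phi^{\otimes n}\|\mathcal{F})$ in $n$), and the $\alpha\to 1^+$ limit identity~\eqref{eq:D_alpha_limit}. The single-shot bound does essentially all of the work; the remainder is the standard ``divide by $n$, take $\limsup$, optimize over $\alpha$'' routine, which goes through in the CQ-channel setting precisely because Lemma~\ref{lem:bound_type_II_error_CQ_channels} reduces to the state case by picking a worst-case classical input.

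Concretely, I would substitute $\Phi^{\otimes n}$ into Lemma~\ref{lem:bound_type_II_error} and divide by $n$, obtaining for every $\alpha>1$
\begin{align}
-\frac{1}{n}\log\qty[\beta_\epsilon\left(\Phi^{\otimes n}\middle\|\mathcal{F}\right)]\leq \frac{1}{n}\widetilde{D}_\alpha\left(\Phi^{\otimes n}\middle\|\mathcal{F}\right)+\frac{1}{n}\cdot\frac{\alpha}{\alpha-1}\log\qty[\frac{1}{1-\epsilon}].
\end{align}
Since Lemma~\ref{lem:subadditivity_CQ_channels} shows that $n\mapsto\widetilde{D}_\alpha(\Phi^{\otimes n}\|\mathcal{F})$ is subadditive, Fekete's lemma gives $\lim_n\frac{1}{n}\widetilde{D}_\alpha(\Phi^{\otimes n}\|\mathcal{F}) = \inf_n\frac{1}{n}\widetilde{D}_\alpha(\Phi^{\otimes n}\|\mathcal{F})$. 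Taking $\limsup_{n\to\infty}$ and dropping the vanishing $O(1/n)$ term therefore yields, for every $\alpha>1$,
\begin{align}
\limsup_{n\to\infty}-\frac{1}{n}\log\qty[\beta_\epsilon\left(\Phi^{\otimes n}\middle\|\mathcal{F}\right)]\leq\inf_n\frac{1}{n}\widetilde{D}_\alpha\left(\Phi^{\otimes n}\middle\|\mathcal{F}\right).
\end{align}

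The final step is to take $\inf_{\alpha>1}$ on the right, swap it with $\inf_n$, and identify $\inf_{\alpha>1}\widetilde{D}_\alpha(\Phi'\|\mathcal{F}) = D(\Phi'\|\mathcal{F})$ for any CQ channel $\Phi'$. Monotonicity of $\widetilde{D}_\alpha(\Phi'\|\Phi_\mathrm{free})$ in $\alpha$ together with~\eqref{eq:D_alpha_limit} gives the $\geq$ direction termwise under the minimization over $\mathcal{F}$. For the $\leq$ direction, Axiom~\ref{p:compact} lets me fix a minimizer $\Phi_\mathrm{free}^{\ast}$ of $D(\Phi'\|\mathcal{F})$ and estimate $\widetilde{D}_\alpha(\Phi'\|\mathcal{F})\leq\widetilde{D}_\alpha(\Phi'\|\Phi_\mathrm{free}^{\ast})\to D(\Phi'\|\Phi_\mathrm{free}^{\ast})=D(\Phi'\|\mathcal{F})$ as $\alpha\to 1^+$. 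Applying this with $\Phi'=\Phi^{\otimes n}$ and turning the resulting $\inf_n$ for $D$ into a limit via Proposition~\ref{prp:existence_limit} yields the claim.

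I do not anticipate a serious obstacle. The CQ-channel extension of the single-shot converse bound, Lemma~\ref{lem:bound_type_II_error_CQ_channels}, simply evaluates both sides at a worst-case classical input $x^{\ast}$, so the heavy lifting is inherited from the state case; every subsequent manipulation is controlled by monotonicity in $\alpha$ and compactness of $\mathcal{F}$. The only mildly delicate point is legitimizing the interchange of $\inf_\alpha$ and $\inf_n$ in the last step, but this is automatic once $\widetilde{D}_\alpha(\Phi^{\otimes n}\|\mathcal{F})/n$ is seen to be non-decreasing in $\alpha$ and convergent to $D(\Phi^{\otimes n}\|\mathcal{F})/n$ at $\alpha=1^+$ for each fixed $n$.
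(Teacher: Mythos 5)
Your proposal is correct and rests on the same three ingredients as the paper's proof (the single-shot bound of Lemma~\ref{lem:bound_type_II_error}, the subadditivity of Lemma~\ref{lem:subadditivity_CQ_channels}, and the $\alpha\to1$ limit~\eqref{eq:D_alpha_limit}); the only difference is bookkeeping, since the paper fixes a block size $M$ with $n=q_nM+r_n$ and sends $n\to\infty$, $\alpha\to1$, $M\to\infty$ in that order, whereas you invoke the $\lim=\inf$ form of Fekete's lemma for $\widetilde{D}_\alpha\left(\Phi^{\otimes n}\middle\|\mathcal{F}\right)$ and commute the two infima. Your compactness/minimizer argument for $\inf_{\alpha>1}\widetilde{D}_\alpha\left(\Phi^{\otimes n}\middle\|\mathcal{F}\right)=D\left(\Phi^{\otimes n}\middle\|\mathcal{F}\right)$ is sound and, if anything, makes explicit a step the paper treats implicitly when it passes from $\widetilde{D}_\alpha\left(\Phi^{\otimes M}\middle\|\mathcal{F}\right)$ to $D\left(\Phi^{\otimes M}\middle\|\mathcal{F}\right)$ via~\eqref{eq:D_alpha_limit}.
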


\begin{proof}
    We fix a positive integer $M$ and, for each $n$, choose $q_n$ and $r_n$ such that
    \begin{align}
    \label{eq:strong_converse_n}
        n=q_n M + r_n,~0\leq r_n<M.
    \end{align}
    Then, using Lemma~\ref{lem:bound_type_II_error} with Axioms~\ref{p:full_rank},~\ref{p:compact} and~\ref{p:convex}, we have
    \begin{align}
        &-\frac{1}{n}\log\qty[\beta_\epsilon\left(\Phi^{\otimes n}\middle\|\mathcal{F}\right)]\notag\\
        &\leq\frac{1}{n}\widetilde{D}_\alpha\left(\Phi^{\otimes q_n M+r_n}\middle\|\mathcal{F}\right)+\frac{\alpha}{\alpha-1}\log\qty[\frac{1}{1-\epsilon}]\\
        &\leq\frac{q_n}{n}\widetilde{D}_\alpha\left(\Phi^{\otimes M}\middle\|\mathcal{F}\right)+\frac{r_n}{n}\widetilde{D}_\alpha\left(\Phi\middle\|\mathcal{F}\right)+\notag\\
        &\quad\frac{\alpha}{n(\alpha-1)}\log\qty[\frac{1}{1-\epsilon}],
    \end{align}
    where the last line follows from the subadditivity in Lemma~\ref{lem:subadditivity_CQ_channels} with Axioms~\ref{p:full_rank},~\ref{p:compact}, and~\ref{p:tensor_product}.
    By taking the limit $n\to\infty$ with~\eqref{eq:strong_converse_n}, we obtain
    \begin{align}
        &\limsup_{n\to\infty}-\frac{1}{n}\log\qty[\beta_\epsilon\left(\Phi^{\otimes n}\middle\|\mathcal{F}\right)]\notag\\
        &\leq\frac{1}{M}\widetilde{D}_\alpha\left(\Phi^{\otimes M}\middle\|\mathcal{F}\right),
    \end{align}
    which holds for any $\alpha>1$.
    Due to~\eqref{eq:D_alpha_limit}, taking the limit $\alpha\to 1$ yields
    \begin{align}
        &\limsup_{n\to\infty}-\frac{1}{n}\log\qty[\beta_\epsilon\left(\Phi^{\otimes n}\middle\|\mathcal{F}\right)]\notag\\
        &\leq\frac{1}{M}D\left(\Phi^{\otimes M}\middle\|\mathcal{F}\right),
    \end{align}
    which holds for any choice of $M$.
    In the limit $M\to\infty$, it holds that
    \begin{align}
        &\limsup_{n\to\infty}-\frac{1}{n}\log\qty[\beta_\epsilon\left(\Phi^{\otimes n}\middle\|\mathcal{F}\right)]\notag\\
        &\leq\lim_{n\to\infty}\frac{1}{n}D\left(\Phi^{\otimes n}\middle\|\mathcal{F}\right),
    \end{align}
    where the limit on the right-hand side exists due to Lemma~\ref{prp:existence_limit}.
\end{proof}

\subsubsection{Direct part}
\label{sec:direct_stein}

We now turn to the proof of the direct part of the generalized quantum Stein's lemma for CQ channels in Theorem~\ref{thm:main}.
In the proof of the state version of the lemma in Ref.~\cite{hayashi2025generalizedquantumsteinslemma}, two techniques played a central role: the pinching technique~\cite{hayashi2002optimal} and the information spectrum method~\cite{4069150}.
These methods were originally developed for the state setting, whereas our current task lies in the more general CQ-channel setting.
Here, we extend these techniques to the CQ-channel setting.
In Sec.~\ref{sec:pinching}, we present an adaptation of the pinching method to CQ channels together with its implications.
In Sec.~\ref{sec:information_spectrum}, we develop the information spectrum method for CQ channels also with its implications.
Finally, in Sec.~\ref{sec:proof_direct}, we complete the proof of the direct part of the generalized quantum Stein's lemma using the implications from these CQ-channel techniques.

\paragraph{The pinching technique for CQ channels}
\label{sec:pinching}

The pinching technique is crucial because it enables one to render non-commutative operators commuting without disturbing the error exponent in quantum hypothesis testing~\cite{hayashi2002optimal}. Here, we present a method to extend this technique to CQ channels. To this end, analogous to the pinching channel introduced in Ref.~\cite{hayashi2025generalizedquantumsteinslemma} for states, we define a pinching superchannel for CQ channels that accommodates the presence of multiple possible inputs.
While a further extension to QQ channels would in general be highly nontrivial, our construction is feasible precisely because the channel inputs are classical, which allows us to generalize the state-based results of Ref.~\cite{hayashi2025generalizedquantumsteinslemma} to the CQ-channel setting.

We begin with summarizing the pinching technique in the state setting.
Let $\rho\in\mathcal{D}\qty(\mathcal{H})$ be a quantum state in the spectral decomposition
\begin{align}
\label{eq:rho_spectral_decomposition}
    \rho=\sum_{j=0}^{J_\rho-1}\lambda_j\Pi_j,
\end{align}
where $J_\rho$ is the number of distinct eigenvalues of $\rho$, $\{\lambda_j\}_j$ is the set of distinct eigenvalues, and $\Pi_j$ is the projection onto the eigenspace associated with each eigenvalue $\lambda_j$.
With $\qty{\Pi_j}_{j=0,\ldots,J_\rho-1}$ in~\eqref{eq:rho_spectral_decomposition}, the pinching map with respect to the state $\rho$ is defined as
\begin{align}
\label{eq:pinching_map}
    \mathcal{P}_{\rho}\qty(\sigma)\coloneqq\sum_{j=0}^{J_\rho-1}\Pi_j\sigma\Pi_j.
\end{align}
For any state $\sigma\in\mathcal{D}\qty(\mathcal{H})$, the pinching inequality~\cite{hayashi2002optimal} yields an operator inequality
\begin{align}
\label{eq:pinching_inequality}
    \sigma\leq J_\rho\mathcal{P}_{\rho}\qty(\sigma),
\end{align}
where $J_\rho$ is defined as~\eqref{eq:rho_spectral_decomposition}.
Also, even if $\rho$ and $\sigma$ do not commute, the pinching makes $\rho$ and $\mathcal{P}_{\rho}\qty(\sigma)$ commute with each other
\begin{align}
\label{eq:pinching_commute}
    \rho\mathcal{P}_{\rho}\qty(\sigma)=\mathcal{P}_{\rho}\qty(\sigma)\rho.
\end{align}
The state $\rho$ remains invariant under the pinching
\begin{align}
\label{eq:pinching_invariant}
    \mathcal{P}_{\rho}\qty(\rho)=\rho.
\end{align}

Given any CQ channels $\Phi_1,\Phi_2\in\mathcal{C}\qty(\mathcal{X}\to\mathcal{H})$, we let $\mathcal{P}_{\Phi_2}\qty[\Phi_1]$ denote a superchannel that converts $\Phi_1$ to a CQ channel outputting, for every $x\in\mathcal{X}$,
\begin{align}
\label{eq:pinching_superchannel}
    \qty(\mathcal{P}_{\Phi_2}\qty[\Phi_1])\qty(x)=\mathcal{P}_{\Phi_2\qty(x)}\qty(\Phi_1\qty(x)),
\end{align}
which we call a pinching superchannel.
Note that when a classical description of the CQ channel $\Phi_2$ is available, depending on $x$, we can implement each pinching channel $\mathcal{P}_{\Phi_2(x)}$ with respect to $\Phi_2(x)$ to realize this pinching superchannel.
We emphasize that the feasibility of this pinching superchannel crucially relies on the fact that the inputs to CQ channels are classical; thus, extending the same definition to QQ channels is generally challenging.

To effectively apply this pinching technique without assuming an IID structure, we introduce a rounding lemma, which plays a key role in our analysis.

\begin{lemma}[\label{lem:rounding}Rounding lemma of full-rank quantum states.]
For any sequence $\qty{C_n>0}_{n=1,2,\ldots}$ of parameters, and any sequence $\qty{\rho^{(n)}\in\mathcal{D}\qty(\mathcal{H}^{\otimes n})}_{n=1,2,\ldots}$ of full-rank quantum states satisfying
\begin{align}
\label{eq:rounding_eigenvalue_lower_bound}
    \rho^{(n)}\geq e^{-C_n n}\mathds{1},
\end{align}
there exists a sequence $\qty{\tilde{\rho}^{(n)}\in\mathcal{D}\qty(\mathcal{H}^{\otimes n})}_{n=1,2,\ldots}$ of quantum states such that, for every $n$, we have
\begin{align}
\label{eq:operator_inequality_modified_state}
    e^{-C_n}\rho^{(n)}\leq\tilde{\rho}^{(n)}\leq e^{C_n}\rho^{(n)},
\end{align}
and the spectral decomposition of $\tilde{\rho}^{(n)}$ is in the form of
\begin{align}
    \tilde{\rho}^{(n)}=\sum_{\tilde{j}=0}^{\tilde{J}_{\tilde{\rho}^{(n)}}-1}\tilde{\lambda}_{\tilde{j}}\tilde{\Pi}_{\tilde{j}},
\end{align}
with the number $\tilde{J}_{\tilde{\rho}^{(n)}}$ of distinct eigenvalues bounded by
\begin{align}
\label{eq:D_n_modified_state}
    \tilde{J}_{\tilde{\rho}^{(n)}}\leq n+1,
\end{align}
where $\tilde{\lambda}_{\tilde{j}}$ is each of the distinct eigenvalues, and $\tilde{\Pi}_{\tilde{j}}$ is the projection onto the eigenspace associated with $\tilde{\lambda}_{\tilde{j}}$.
\end{lemma}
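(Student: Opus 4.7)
The plan is to discretize the spectrum of $\rho^{(n)}$ onto a geometric grid of at most $n+1$ values, chosen so that rounding any eigenvalue to the grid induces a multiplicative distortion of at most $e^{C_n}$, and then to renormalize the resulting operator to produce a density operator. Concretely, I would define the grid $\lambda_k\coloneqq e^{-C_n k}$ for $k=0,1,\ldots,n$, which covers the range $[e^{-C_n n},1]$ of eigenvalues of $\rho^{(n)}$ guaranteed by assumption~\eqref{eq:rounding_eigenvalue_lower_bound} together with $\Tr[\rho^{(n)}]=1$, with consecutive grid points separated by the multiplicative factor $e^{C_n}$.

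First, I would round each eigenvalue $\mu$ of $\rho^{(n)}$ down to $\tilde{\mu}\coloneqq\lambda_{k(\mu)}$ where $k(\mu)\coloneqq\lceil -\log\mu/C_n\rceil\in\{0,\ldots,n\}$, so that $\tilde{\mu}\leq\mu\leq e^{C_n}\tilde{\mu}$ by construction. Using the spectral projections $\{\Pi_j\}$ of $\rho^{(n)}$ from~\eqref{eq:rho_spectral_decomposition}, I would then set $\rho'\coloneqq\sum_j \tilde{\mu}_j \Pi_j$, which is positive semidefinite and satisfies the operator inequality $e^{-C_n}\rho^{(n)}\leq\rho'\leq\rho^{(n)}$; taking the trace yields $\Tr[\rho']\in[e^{-C_n},1]$.

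Second, I would define $\tilde{\rho}^{(n)}\coloneqq\rho'/\Tr[\rho']$, which is a density operator by construction. The rescaling factor $c\coloneqq 1/\Tr[\rho']\in[1,e^{C_n}]$ is a global scalar that preserves spectral structure and gives $\rho'\leq\tilde{\rho}^{(n)}\leq e^{C_n}\rho'$; chaining with the previous inequality yields~\eqref{eq:operator_inequality_modified_state}. The distinct eigenvalues of $\tilde{\rho}^{(n)}$ are $c\lambda_k$ for those $k\in\{0,\ldots,n\}$ that are actually attained as $k(\mu)$, a set of cardinality at most $n+1$, giving~\eqref{eq:D_n_modified_state}.

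I do not anticipate a major obstacle since the argument is a constructive rounding combined with a global rescaling. The only subtlety worth highlighting is the compatibility between the two steps: rounding \emph{down} guarantees $\rho'\leq\rho^{(n)}$, so the required upward rescaling is by a factor at most $e^{C_n}$, which exactly matches the grid spacing and preserves the desired upper bound $\tilde{\rho}^{(n)}\leq e^{C_n}\rho^{(n)}$. This tight coupling between the rounding direction and the normalization factor is what dictates the choice of grid, and it also explains why the bound on the number of distinct eigenvalues is $n+1$ rather than, say, $n$: the endpoints $k=0$ and $k=n$ both need to be available as possible rounded values.
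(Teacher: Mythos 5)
Your proposal is correct and follows essentially the same route as the paper's proof: discretize the eigenvalues of $\rho^{(n)}$ onto a geometric grid of $n+1$ points with multiplicative spacing $e^{C_n}$ (the paper rounds up, you round down — a mirror-image choice with no substantive effect), then renormalize, with the global scalar absorbed into the two-sided operator bound~\eqref{eq:operator_inequality_modified_state} and leaving at most $n+1$ distinct eigenvalues.
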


\begin{proof}
    We first provide a construction of $\tilde{\rho}^{(n)}$, followed by proving~\eqref{eq:operator_inequality_modified_state} and~\eqref{eq:D_n_modified_state}.

    \textbf{Construction of $\tilde{\rho}^{(n)}$.}
    We write the spectral decomposition of $\rho^{(n)}$ as
    \begin{align}
        \rho^{(n)}=\sum_{j}\lambda_j\Pi_j.
    \end{align}
    For $\lambda\geq e^{-C_n n}$, using the ceiling function $\lceil{}\cdots{}\rceil$ for rounding, we define a real function $f_n\qty(\lambda)$ as
    \begin{align}
        f_n\qty(\lambda)\coloneqq\exp[-C_n n + C_n\left\lceil n-\frac{\log\qty[\frac{1}{\lambda}]}{C_n}\right\rceil],
    \end{align}
    so that it holds that
    \begin{align}
    \label{eq:inequality_f_n}
        \lambda\leq f_n\qty(\lambda)\leq e^{C_n} \lambda.
    \end{align}
    We define $\tilde{\rho}^{(n)}$ as
    \begin{align}
    \label{eq:tilde_rho_n_definition}
        \tilde{\rho}^{(n)}\coloneqq\frac{\sum_{j}f_n\qty(\lambda_j)\Pi_j}{\Tr\qty[\sum_{j}f_n\qty(\lambda_j)\Pi_j]}.
    \end{align}

    \textbf{Proof of~\eqref{eq:operator_inequality_modified_state}.} We obtain from~\eqref{eq:inequality_f_n}
    \begin{align}
        &\rho^{(n)}\leq \sum_{j}f_n\qty(\lambda_j)\Pi_j\leq e^{C_n} \rho^{(n)},\\
        &1\leq\Tr\qty[\sum_{j}f_n\qty(\lambda_j)\Pi_j]\leq e^{C_n}.
    \end{align}
    Therefore, we have~\eqref{eq:operator_inequality_modified_state} by the definition~\eqref{eq:tilde_rho_n_definition} of $\tilde{\rho}^{(n)}$.
    
    \textbf{Proof of~\eqref{eq:D_n_modified_state}.} Due to~\eqref{eq:rounding_eigenvalue_lower_bound}, we see that the eigenvalues $\lambda_d$ of $\rho^{(n)}$ is within the range of
    \begin{align}
        -C_n n\leq\log\qty[\lambda_j]\leq 1.
    \end{align}
    Thus, $\log\qty[f_n\qty(\lambda_j)]$ takes values in $\qty{a_{\tilde{j}}}_{\tilde{j}=0,1,\ldots,n}$ with
    \begin{align}
        a_{\tilde{j}}\coloneqq-C_n n+C_n\tilde{j},
    \end{align}
    leading to~\eqref{eq:D_n_modified_state}.
\end{proof}

In this rounding lemma, the approximate states $\tilde{\rho}^{(n)}$ have only $O(n)$ distinct eigenvalues.
This property will be instrumental in applying the pinching inequality in~\eqref{eq:pinching_inequality} to obtain a useful operator inequality.
To proceed, we show the following bound on the difference between type II errors obtained from an operator inequality.

\begin{lemma}[\label{lem:bound_on_beta_from_operator_inequality}Difference between type II errors for CQ channels from operator inequalities]
    For any parameters $\epsilon,\underline{C}\geq 0$, and any CQ channels $\Phi_1,\Phi_2,\Phi_2'\in\mathcal{C}\qty(\mathcal{X}\to\mathcal{H})$,
    if we have, for every input $x\in\mathcal{X}$,
    \begin{align}
    \label{eq:bound_on_beta_from_operator_inequality_assumption}
        \Phi_2^{\prime}\qty(x)\geq e^{-\underline{C}}\Phi_2\qty(x),
    \end{align}
    then it holds that
    \begin{align}
        &-\log\qty[\beta_\epsilon\left(\Phi_1\middle\|\Phi_2^{\prime}\right)]\leq-\log\qty[\beta_\epsilon\left(\Phi_1\middle\|\Phi_2\right)]+\underline{C},
    \end{align}
    where $\beta_\epsilon$ is defined as~\eqref{eq:beta_CQ_channels}.
    For any parameters $\epsilon,\overline{C}\geq 0$, and any CQ channels $\Phi_1,\Phi_2,\Phi_2'\in\mathcal{C}\qty(\mathcal{X}\to\mathcal{H})$,
    if we have, for every input $x\in\mathcal{X}$,
    \begin{align}
    \label{eq:bound_on_beta_from_operator_inequality_assumption2}
        \Phi_2^{\prime}\qty(x)\leq e^{\overline{C}}\Phi_2\qty(x),
    \end{align}
    then it holds that
    \begin{align}
        &-\log\qty[\beta_\epsilon\left(\Phi_1\middle\|\Phi_2^{\prime}\right)]\geq -\log\qty[\beta_\epsilon\left(\Phi_1\middle\|\Phi_2\right)]-\overline{C}.
    \end{align}
\end{lemma}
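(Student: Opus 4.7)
The plan is to exploit the fact that the feasibility set $\mathcal{T}_{\epsilon,\Phi_1,p}$ in the definition~\eqref{eq:beta_CQ_channels} of $\beta_\epsilon$ depends only on the first argument $\Phi_1$, so that both $\beta_\epsilon(\Phi_1\|\Phi_2)$ and $\beta_\epsilon(\Phi_1\|\Phi_2')$ are obtained by minimizing a linear functional of the second argument over the \emph{same} set of $(p,\{T_x\}_x)$. The core observation is then that the operator inequalities~\eqref{eq:bound_on_beta_from_operator_inequality_assumption} and~\eqref{eq:bound_on_beta_from_operator_inequality_assumption2}, when combined with the positivity $T_x\geq 0$, pass through to the scalar quantities $\Tr[T_x\Phi_2'(x)]$ via the standard fact that $A\leq B$ and $T\geq 0$ imply $\Tr[TA]\leq\Tr[TB]$.

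First, for the lower-bound assumption~\eqref{eq:bound_on_beta_from_operator_inequality_assumption}, I would fix any probability distribution $p$ on $\mathcal{X}$ and any $\{T_x\}_x\in\mathcal{T}_{\epsilon,\Phi_1,p}$ and note that, pointwise in $x$,
\begin{align}
\Tr\qty[T_x\Phi_2'(x)]\geq e^{-\underline{C}}\Tr\qty[T_x\Phi_2(x)].
\end{align}
Multiplying by $p(x)\geq 0$ and summing over $x$ preserves the inequality, and then minimizing both sides over the common feasible set $\mathcal{T}_{\epsilon,\Phi_1,p}$ and over $p$ yields $\beta_\epsilon(\Phi_1\|\Phi_2')\geq e^{-\underline{C}}\beta_\epsilon(\Phi_1\|\Phi_2)$. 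Taking $-\log$ of both sides gives the first inequality of the lemma.

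For the upper-bound assumption~\eqref{eq:bound_on_beta_from_operator_inequality_assumption2}, the same argument goes through with the inequality reversed: for any feasible $(p,\{T_x\}_x)$ we have $\sum_x p(x)\Tr[T_x\Phi_2'(x)]\leq e^{\overline{C}}\sum_x p(x)\Tr[T_x\Phi_2(x)]$, and minimizing over the common feasible set gives $\beta_\epsilon(\Phi_1\|\Phi_2')\leq e^{\overline{C}}\beta_\epsilon(\Phi_1\|\Phi_2)$, which after taking $-\log$ is exactly the second inequality.

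There is no real obstacle here; the key structural point, which needs to be stated explicitly but not defended at length, is that the set $\mathcal{T}_{\epsilon,\Phi_1,p}$ in~\eqref{eq:T_set} depends only on $\Phi_1$ and $p$ and not on the second argument of $\beta_\epsilon$, so the same pair $(p,\{T_x\}_x)$ is admissible for both $\beta_\epsilon(\Phi_1\|\Phi_2)$ and $\beta_\epsilon(\Phi_1\|\Phi_2')$. This is what makes the operator inequality hypotheses translate directly into multiplicative bounds on the type II errors, and in turn into additive bounds on their logarithms. The result is entirely parallel to the state case and does not require any use of Axioms~\ref{p:full_rank}--\ref{p:convex} or of the minimax characterization in Proposition~\ref{prp:minimax}; it is a one-step monotonicity calculation.
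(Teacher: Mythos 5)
Your proof is correct and follows essentially the same argument as the paper: the operator inequalities plus positivity of the POVM elements yield the scalar bounds $\sum_x p(x)\Tr[T_x\Phi_2'(x)]\gtrless e^{\mp C}\sum_x p(x)\Tr[T_x\Phi_2(x)]$ for any admissible $(p,\{T_x\}_x)$, and since the feasible set $\mathcal{T}_{\epsilon,\Phi_1,p}$ depends only on $\Phi_1$, minimizing and taking $-\log$ gives both conclusions. No gap; the paper's proof is the same one-step monotonicity calculation.
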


\begin{proof}
Due to~\eqref{eq:bound_on_beta_from_operator_inequality_assumption} and~\eqref{eq:bound_on_beta_from_operator_inequality_assumption2}, for any probability distribution $p$ over $\mathcal{X}$ and any family $\qty{T_x}_{x\in\mathcal{X}}$ of POVM elements, it holds, respectively, that
\begin{align}
    \sum_{x\in\mathcal{X}}p\qty(x)\Tr\qty[T_x\Phi_2'\qty(x)]&\geq e^{-\underline{C}}\sum_{x\in\mathcal{X}}p\qty(x)\Tr\qty[T_x\Phi_2\qty(x)],\\
    \sum_{x\in\mathcal{X}}p\qty(x)\Tr\qty[T_x\Phi_2'\qty(x)]&\leq e^{\overline{C}}\sum_{x\in\mathcal{X}}p\qty(x)\Tr\qty[T_x\Phi_2\qty(x)].
\end{align}
Thus, by the definition~\eqref{eq:beta_CQ_channels} of $\beta_\epsilon$, we have each conclusion.
\end{proof}

From an operator inequality, we can also derive a useful bound on the difference between the channel divergences.

\begin{lemma}[\label{lem:difference_between_channel_divergence_from_operator_inequality}Difference between channel divergence for CQ channels from operator inequalities]
    For any parameters $\epsilon,\underline{C},\overline{C}\geq 0$, and any CQ channels $\Phi_1,\Phi_2,\Phi_2'\in\mathcal{C}\qty(\mathcal{X}\to\mathcal{H})$,
    suppose that, for every input $x\in\mathcal{X}$, 
    \begin{align}
        \supp\qty(\Phi_1\qty(x))\subseteq\supp\qty(\Phi_2\qty(x))=\supp\qty(\Phi_2'\qty(x)).
    \end{align}
    If it holds, for every input $x\in\mathcal{X}$, that
    \begin{align}
    \label{eq:difference_between_channel_divergence_from_operator_inequality_assumption1}
        \Phi_2^{\prime}\qty(x)\geq e^{-\underline{C}}\Phi_2\qty(x),
    \end{align}
    then we have
    \begin{align}
        D\left(\Phi_1\middle\|\Phi_2^{\prime}\right) \leq D\left(\Phi_1\middle\|\Phi_2\right)+\underline{C}.
    \end{align}
    If it holds, for every input $x\in\mathcal{X}$, that
    \begin{align}
    \label{eq:difference_between_channel_divergence_from_operator_inequality_assumption2}
        \Phi_2^{\prime}\qty(x)\leq e^{\overline{C}}\Phi_2\qty(x),
    \end{align}
    then we have
    \begin{align}
        D\left(\Phi_1\middle\|\Phi_2^{\prime}\right)\geq
        &D\left(\Phi_1\middle\|\Phi_2\right)-\overline{C}.
    \end{align}
\end{lemma}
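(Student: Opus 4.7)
The plan is to reduce the channel-divergence statement to a pointwise (per-input) inequality between quantum relative entropies of states, and then apply the definition~\eqref{eq:D_channel} of channel divergence as a maximum over classical inputs.  This mirrors the strategy used for Lemma~\ref{lem:bound_type_II_error_CQ_channels}, where type II error bounds for CQ channels were reduced to the state case by picking the input achieving the relevant maximum.

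First, I would fix an arbitrary $x\in\mathcal{X}$ and work on the common support $\mathcal{S}_x\coloneqq\supp(\Phi_2(x))=\supp(\Phi_2'(x))$, letting $P_x$ denote the orthogonal projector onto $\mathcal{S}_x$.  The analytic core is the operator monotonicity of the logarithm: for positive definite operators $A,B$ on $\mathcal{S}_x$, $A\geq B$ implies $\log A\geq\log B$.  Under the assumption~\eqref{eq:difference_between_channel_divergence_from_operator_inequality_assumption1}, this yields the operator inequality
\begin{align}
    \log\Phi_2'(x)\geq\log\Phi_2(x)-\underline{C}\,P_x
\end{align}
on $\mathcal{S}_x$.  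Multiplying by $-\Phi_1(x)\geq 0$, taking the trace, and using $\supp(\Phi_1(x))\subseteq\mathcal{S}_x$ so that $\Tr[\Phi_1(x)P_x]=\Tr[\Phi_1(x)]=1$, gives
\begin{align}
    D(\Phi_1(x)\|\Phi_2'(x))\leq D(\Phi_1(x)\|\Phi_2(x))+\underline{C}.
\end{align}
The symmetric derivation, with $\log\Phi_2'(x)\leq\log\Phi_2(x)+\overline{C}\,P_x$ under assumption~\eqref{eq:difference_between_channel_divergence_from_operator_inequality_assumption2}, produces the corresponding lower bound $D(\Phi_1(x)\|\Phi_2'(x))\geq D(\Phi_1(x)\|\Phi_2(x))-\overline{C}$.

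Finally, since these pointwise inequalities hold with additive constants independent of $x$, taking the maximum over $x\in\mathcal{X}$ on both sides and invoking~\eqref{eq:D_channel} transports the bounds to the channel level, yielding $D(\Phi_1\|\Phi_2')\leq D(\Phi_1\|\Phi_2)+\underline{C}$ and $D(\Phi_1\|\Phi_2')\geq D(\Phi_1\|\Phi_2)-\overline{C}$ respectively.  The only delicate point — more bookkeeping than genuine obstacle — is correctly handling the support projector $P_x$ so that every logarithm and every trace identity is well-defined; the hypothesis on supports is precisely what makes this step routine, and it is why the assumption $\supp(\Phi_2(x))=\supp(\Phi_2'(x))$ (rather than merely $\supp(\Phi_1(x))\subseteq\supp(\Phi_2'(x))$) is needed to close both directions simultaneously.
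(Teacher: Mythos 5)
Your proposal is correct and follows essentially the same route as the paper's own proof: take the logarithm of the operator inequality on the common support (operator monotonicity of $\log$), pair with $\Phi_1(x)$ under the trace using $\supp(\Phi_1(x))\subseteq\supp(\Phi_2(x))$, and then take the maximum over $x\in\mathcal{X}$ via the definition~\eqref{eq:D_channel}. The only cosmetic slip is the phrase ``multiplying by $-\Phi_1(x)\geq 0$''; what you mean (and what works) is multiplying by $\Phi_1(x)\geq 0$, taking the trace, and then negating, which flips the inequality as required.
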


\begin{proof}
Due to the assumptions~\eqref{eq:difference_between_channel_divergence_from_operator_inequality_assumption1} and~\eqref{eq:difference_between_channel_divergence_from_operator_inequality_assumption2}, for every input $x\in\mathcal{X}$, taking $\log$ on the support $\supp\qty(\Phi_2\qty(x))=\supp\qty(\Phi_2'\qty(x))$ yields, respectively,
\begin{align}
    \log\qty[\Phi_2^{\prime}\qty(x)]
    &\geq\log\qty[e^{-\underline{C}}\Phi_2\qty(x)],\\
    \log\qty[\Phi_2^{\prime}\qty(x)]
    &\leq\log[e^{\overline{C}}\Phi_2\qty(x)].
\end{align}
Thus, by the definition~\eqref{eq:D} of $D$, we have, respectively,
\begin{align}
    &D\left(\Phi_1\qty(x)\middle\|\Phi_2^{\prime}\qty(x)\right)\notag\\
    &=\Tr\qty[\Phi_1\qty(x)\qty(\log\qty[\Phi_1\qty(x)]-\log\qty[\Phi_2'\qty(x)])]\\
    &\geq\Tr\qty[\Phi_1\qty(x)\qty(\log\qty[\Phi_1\qty(x)]-\log\qty[\Phi_2\qty(x)])]-\overline{C}\\
    &=D\left(\Phi_1\qty(x)\middle\|\Phi_2\qty(x)\right)-\overline{C},
\end{align}
and
\begin{align}
    &D\left(\Phi_1\qty(x)\middle\|\Phi_2^{\prime}\qty(x)\right)\notag\\
    &=\Tr\qty[\Phi_1\qty(x)\qty(\log\qty[\Phi_1\qty(x)]-\log\qty[\Phi_2'\qty(x)])]\\
    &\leq\Tr\qty[\Phi_1\qty(x)\qty(\log\qty[\Phi_1\qty(x)]-\log\qty[\Phi_2\qty(x)])]+\underline{C}\\
    &=D\left(\Phi_1\qty(x)\middle\|\Phi_2\qty(x)\right)+\underline{C},
\end{align}
which hold for any choice of $x\in\mathcal{X}$.
By the definition~\eqref{eq:D_channel} of $D$ for CQ channels, taking the maximum over $x\in\mathcal{X}$ yields the conclusions.
\end{proof}

Apart from this bound, the operator inequality can also be directly translated into a bound on the channel divergence.

\begin{lemma}[\label{lem:bound_channel_divergence_operator_inequality}Bound on channel divergence for CQ channels from an operator inequality]
    For any parameter $C\geq 0$, and any CQ channels $\Phi_1,\Phi_2\in\mathcal{C}\qty(\mathcal{X}\to\mathcal{H}))$, if it holds, for every input $x\in\mathcal{X}$, that
    \begin{align}
    \label{eq:bound_channel_divergence_operator_inequality_assumption}
        \Phi_1(x)\leq e^C\Phi_2(x),
    \end{align}
    then we have
    \begin{align}
        D\left(\Phi_1\middle\|\Phi_2\right)\leq C,
    \end{align}
    where $D$ is defined as~\eqref{eq:D_channel}.
\end{lemma}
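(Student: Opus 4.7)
The plan is to reduce the claim to a pointwise statement in the classical input variable $x \in \mathcal{X}$ and then to apply operator monotonicity of the logarithm to each pair of output states. First I would note that the hypothesis $\Phi_1(x) \leq e^C \Phi_2(x)$ forces $\supp\qty(\Phi_1(x)) \subseteq \supp\qty(\Phi_2(x))$ for every $x \in \mathcal{X}$, so by the convention following~\eqref{eq:D} the state divergence $D\qty(\Phi_1(x) \,\|\, \Phi_2(x))$ is finite, and hence the channel divergence $D\qty(\Phi_1 \,\|\, \Phi_2)$ defined in~\eqref{eq:D_channel} is finite as well.

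Next, for each fixed $x$, I would rewrite the hypothesis as $e^{-C}\Phi_1(x) \leq \Phi_2(x)$ and apply operator monotonicity of $t \mapsto \log t$ on $(0,\infty)$ to obtain, on the support of $\Phi_2(x)$, the operator inequality
\begin{align}
    \log\Phi_1(x) - C\mathds{1} \leq \log\Phi_2(x).
\end{align}
If $\Phi_1(x)$ fails to be strictly positive on $\supp\qty(\Phi_2(x))$, I would regularize by passing to $\Phi_1^{\varepsilon}(x) \coloneqq \Phi_1(x) + \varepsilon\Phi_2(x)$, which has the same support as $\Phi_2(x)$ and still satisfies $\Phi_1^{\varepsilon}(x) \leq (e^C + \varepsilon)\Phi_2(x)$, apply the operator monotonicity argument on this common support, and then let $\varepsilon \to 0$, using the usual continuity of $\Tr[\Phi_1(x)\log(\cdot)]$ on the relevant domain. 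Multiplying the resulting inequality by $\Phi_1(x) \geq 0$ and taking the trace, together with $\Tr[\Phi_1(x)] = 1$, then yields
\begin{align}
    D\qty(\Phi_1(x) \,\|\, \Phi_2(x)) = \Tr[\Phi_1(x)(\log\Phi_1(x) - \log\Phi_2(x))] \leq C.
\end{align}

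Finally, I would take the maximum over $x \in \mathcal{X}$ on the left-hand side and invoke the definition~\eqref{eq:D_channel} of the channel divergence to conclude $D\qty(\Phi_1 \,\|\, \Phi_2) \leq C$. The only genuinely delicate step is the use of operator monotonicity of $\log$ when $\Phi_1(x)$ is only positive semidefinite on $\supp\qty(\Phi_2(x))$; this is handled by the regularization above, and everything else is immediate. An alternative route would be to derive the statement from Lemma~\ref{lem:difference_between_channel_divergence_from_operator_inequality} by taking $\Phi_2' = \Phi_1$ (so that $D\qty(\Phi_1 \,\|\, \Phi_2') = 0$), but this requires $\supp\qty(\Phi_1) = \supp\qty(\Phi_2)$, so unless one first restricts to the common support the direct pointwise argument above is cleaner.
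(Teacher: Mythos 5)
Your proposal is correct and follows essentially the same route as the paper: regularize the first argument by mixing in $\Phi_2(x)$ to enforce a common support, apply operator monotonicity of $\log$, take the trace, let the regularization parameter vanish, and finally maximize over $x\in\mathcal{X}$. The only cosmetic difference is that the paper normalizes the regularized channel $\Phi_{1,\delta}=(\Phi_1+\delta\Phi_2)/(1+\delta)$ and invokes the continuity~\eqref{eq:D_continuity} of $D$ in its first argument, whereas you keep the original $\Phi_1(x)$ under the trace and use continuity of $\varepsilon\mapsto\Tr\qty[\Phi_1(x)\log\Phi_1^{\varepsilon}(x)]$, which is an equally valid finite-dimensional limit.
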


\begin{proof}
    The assumption~\eqref{eq:bound_channel_divergence_operator_inequality_assumption} implies
    \begin{align}
        \supp\qty(\Phi_1(x))\subseteq\supp\qty(\Phi_2(x)),
    \end{align}
    but the supports may be different, and addressing this potential mismatch is the focus of the remainder of the proof.
    To ensure that $\log$ can be taken even if $\Phi_1\qty(x)$ and $\Phi_2\qty(x)$ may have different supports, we introduce a parameter $\delta>0$ and define a full-rank density operator
    \begin{align}
        \Phi_{1,\delta}\qty(x)\coloneqq\frac{\Phi_1(x)+\delta\Phi_2(x)}{1+\delta},
    \end{align}
    which satisfies
    \begin{align}
    \label{eq:continuity_assumption}
        &\lim_{\delta\to 0}d_\diamond\qty(\Phi_{1,\delta},\Phi_1)=0,\\
        &\supp\qty(\Phi_{1,\delta}(x))=\supp\qty(\Phi_2(x)),
    \end{align}
    where $d_\diamond$ is defined as~\eqref{eq:d_diamond}.
    Under the assumption~\eqref{eq:bound_channel_divergence_operator_inequality_assumption},
    we have
    \begin{align}
        \Phi_{1,\delta}(x)\leq \frac{e^C+\delta}{1+\delta}\Phi_{2,\delta}(x).
    \end{align}
    Due to the operator monotonicity of $\log$~\cite{Lowner1934} (see also Ref.~\cite[Example 2.5.9.]{FumioHIAI2010IIS160201}),
    on the support of $\Phi_2(x)$,
    it holds that
    \begin{align}
        \log\qty[\Phi_{1,\delta}\qty(x)]&\leq\log\qty[e^C\Phi_{2}(x)].
    \end{align}
    Thus, for $D$ in~\eqref{eq:D}, we have
    \begin{align}
        &D\left(\Phi_{1,\delta}\qty(x)\middle\|\Phi_{2}\qty(x)\right)\notag\\
        &=\Tr\qty[\Phi_{1,\delta}\qty(x)\qty(\log\qty[\Phi_{1,\delta}\qty(x)]-\log\qty[\Phi_{2}(x)])]\\
        &\leq\log\qty[\frac{e^C+\delta}{1+\delta}],
    \end{align}
    which holds for any choice of $\delta>0$.
    Therefore, by taking the limit $\delta\to 0$, the continuity~\eqref{eq:D_continuity} of $D$ with respect to the first argument yields
    \begin{align}
        &D\left(\Phi_{1}\qty(x)\middle\|\Phi_{2}\qty(x)\right)=\lim_{\delta\to 0}D\left(\Phi_{1,\delta}\qty(x)\middle\|\Phi_{2}\qty(x)\right)\leq C,
    \end{align}
    which holds for any choice of $x\in\mathcal{X}$.
    By the definition~\eqref{eq:D_channel} of $D$ for CQ channels, taking the maximum over $x\in\mathcal{X}$ yields
    \begin{align}
        D\left(\Phi_1\middle\|\Phi_2\right)\leq C.
    \end{align}
\end{proof}

Using these ingredients, we obtain the following approximation of CQ channels via the pinching technique.
An analogous approximation for states played a central role in proving the state version of the generalized quantum Stein's lemma in Ref.~\cite{hayashi2025generalizedquantumsteinslemma}.
Our contribution here is to extend this technique beyond the state setting and demonstrate that it applies to CQ channels even in the presence of multiple possible inputs.
Importantly, this extension is made possible by exploiting the classical nature of the inputs of CQ channels, which allows us to overcome obstacles that make a similar generalization to QQ channels inherently difficult.

\begin{lemma}[\label{lem:approximation}Approximation of CQ channels under full-rank condition]
For any sequence $\qty{C_n>0}_{n=1,2,\ldots}$ of parameters, and any sequences $\qty{\Phi_1^{(n)}\in\mathcal{C}\qty(\mathcal{X}^n\to\mathcal{H}^{\otimes n}))}_n$ and $\qty{\Phi_2^{(n)}\in\mathcal{C}\qty(\mathcal{X}^n\to\mathcal{H}^{\otimes n}))}_n$ of CQ channels satisfying, for every input $x^{(n)}\in\mathcal{X}^n$,
\begin{align}
    \label{eq:approximation_assumption}
    \Phi_2^{(n)}\qty(x^{(n)})\geq e^{-C_n n}\mathds{1},
\end{align}
there exist sequences $\qty{\tilde{\Phi}_1^{(n)}\in\mathcal{C}\qty(\mathcal{X}^n\to\mathcal{H}^{\otimes n}))}_n$ and $\qty{\tilde{\Phi}_2^{(n)}\in\mathcal{C}\qty(\mathcal{X}^n\to\mathcal{H}^{\otimes n}))}_n$ of CQ channels satisfying the following:
\begin{enumerate}
    \item (commutativity) for every input $x^{(n)}\in\mathcal{X}^n$, $\tilde{\Phi}_1^{(n)}$ and $\tilde{\Phi}_2^{(n)}$ commute with each other
    \begin{align}
    \label{eq:approximation_1}
        \tilde{\Phi}_1^{(n)}\qty(x^{(n)})\tilde{\Phi}_2^{(n)}\qty(x^{(n)})=\tilde{\Phi}_2^{(n)}\qty(x^{(n)})\tilde{\Phi}_1^{(n)}\qty(x^{(n)});
    \end{align}
    \item (approximation in operator inequality) for every input $x^{(n)}\in\mathcal{X}^n$, $\tilde{\Phi}_2^{(n)}$ satisfies
    \begin{align}
    \label{eq:approximation_2}
        e^{-C_n}\Phi_2^{(n)}\qty(x^{(n)})&\leq\tilde{\Phi}_2^{(n)}\qty(x^{(n)})\leq e^{C_n}\Phi_2^{(n)}\qty(x^{(n)});
    \end{align}
    \item (distinguishability bounds) for any $\epsilon\geq 0$,
     if we have
    \begin{align}
    \label{eq:approximation_assumption_3_4}
        C_n=o(n)~\text{as $n\to\infty$},
    \end{align}
    it holds that
    \begin{align}
    \label{eq:approximation_3}
        &\liminf_{n\to\infty}-\frac{1}{n}\log\qty[\beta_\epsilon\left(\tilde{\Phi}_1^{(n)}\middle\|\tilde{\Phi}_2^{(n)}\right)]\notag\\
        &\leq\liminf_{n\to\infty}-\frac{1}{n}\log\qty[\beta_\epsilon\left(\Phi_1^{(n)}\middle\|\Phi_2^{(n)}\right)],\\
    \label{eq:approximation_4}
        &\limsup_{n\to\infty}-\frac{1}{n}\log\qty[\beta_\epsilon\left(\tilde{\Phi}_1^{(n)}\middle\|\tilde{\Phi}_2^{(n)}\right)]\notag\\
        &\leq\limsup_{n\to\infty}-\frac{1}{n}\log\qty[\beta_\epsilon\left(\Phi_1^{(n)}\middle\|\Phi_2^{(n)}\right)],
    \end{align}
    where $\beta_\epsilon$ is defined as~\eqref{eq:beta_CQ_channels};
    \item (invariance of regularized quantum relative entropy) if we have
    \begin{align}
    \label{eq:approximation_assumption_5_6}
        C_n=o(n)~\text{as $n\to\infty$},
    \end{align}
    it holds that
    \begin{align}
    \label{eq:approximation_5}
        \liminf_{n\to\infty}\frac{1}{n}D\left(\tilde{\Phi}_1^{(n)}\middle\|\tilde{\Phi}_2^{(n)}\right)&=\liminf_{n\to\infty}\frac{1}{n}D\left(\Phi_1^{(n)}\middle\|\Phi_2^{(n)}\right),\\
    \label{eq:approximation_6}
        \limsup_{n\to\infty}\frac{1}{n}D\left(\tilde{\Phi}_1^{(n)}\middle\|\tilde{\Phi}_2^{(n)}\right)&=\limsup_{n\to\infty}\frac{1}{n}D\left(\Phi_1^{(n)}\middle\|\Phi_2^{(n)}\right),
    \end{align}
    where $D$ is defined as~\eqref{eq:D_channel}.
\end{enumerate}
\end{lemma}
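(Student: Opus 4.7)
The plan is to build the approximants by applying the rounding lemma (Lemma~\ref{lem:rounding}) pointwise in $x^{(n)}$ to the outputs of $\Phi_2^{(n)}$ and then pinching $\Phi_1^{(n)}$ with respect to the resulting operator. Specifically, by the full-rank hypothesis~\eqref{eq:approximation_assumption}, for each $x^{(n)}\in\mathcal{X}^n$ the state $\Phi_2^{(n)}(x^{(n)})$ meets the premise of Lemma~\ref{lem:rounding}, yielding a density operator $\tilde{\Phi}_2^{(n)}(x^{(n)})$ with at most $n+1$ distinct eigenvalues and satisfying~\eqref{eq:approximation_2}. Setting
\begin{align}
    \tilde{\Phi}_1^{(n)}(x^{(n)})\coloneqq \mathcal{P}_{\tilde{\Phi}_2^{(n)}(x^{(n)})}(\Phi_1^{(n)}(x^{(n)}))
\end{align}
then defines $\tilde{\Phi}_1^{(n)}$ as the image of $\Phi_1^{(n)}$ under the pinching superchannel $\mathcal{P}_{\tilde{\Phi}_2^{(n)}}$ of~\eqref{eq:pinching_superchannel}. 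Commutativity~\eqref{eq:approximation_1} is immediate from~\eqref{eq:pinching_commute}, and the operator sandwich~\eqref{eq:approximation_2} is inherited from the rounding lemma.

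For the distinguishability bounds in~\eqref{eq:approximation_3}--\eqref{eq:approximation_4}, I would combine two ingredients applied at each $x^{(n)}$: the monotonicity under the pinching superchannel (Lemma~\ref{lem:monotonicity_beta_CQ_channel}), which gives
\begin{align}
    -\log\beta_\epsilon(\tilde{\Phi}_1^{(n)}\|\mathcal{P}_{\tilde{\Phi}_2^{(n)}}[\Phi_2^{(n)}])\leq -\log\beta_\epsilon(\Phi_1^{(n)}\|\Phi_2^{(n)}),
\end{align}
and the observation that applying the pinching map to~\eqref{eq:approximation_2} (using $\mathcal{P}_\rho(\rho)=\rho$) yields $\tilde{\Phi}_2^{(n)}(x^{(n)})\leq e^{C_n}\mathcal{P}_{\tilde{\Phi}_2^{(n)}(x^{(n)})}(\Phi_2^{(n)}(x^{(n)}))$, so Lemma~\ref{lem:bound_on_beta_from_operator_inequality} delivers
\begin{align}
    -\log\beta_\epsilon(\tilde{\Phi}_1^{(n)}\|\tilde{\Phi}_2^{(n)})\leq -\log\beta_\epsilon(\tilde{\Phi}_1^{(n)}\|\mathcal{P}_{\tilde{\Phi}_2^{(n)}}[\Phi_2^{(n)}])+C_n.
\end{align}
Chaining these, dividing by $n$, and using $C_n=o(n)$ produces both the liminf and limsup bounds.

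For the invariance of the regularized relative entropy in~\eqref{eq:approximation_5}--\eqref{eq:approximation_6}, I would work at each $x^{(n)}$ and exploit two facts about pinching. First, standard monotonicity gives $D(\tilde{\Phi}_1^{(n)}(x^{(n)})\|\tilde{\Phi}_2^{(n)}(x^{(n)}))\leq D(\Phi_1^{(n)}(x^{(n)})\|\tilde{\Phi}_2^{(n)}(x^{(n)}))$, since $\mathcal{P}_{\tilde{\Phi}_2^{(n)}(x^{(n)})}(\tilde{\Phi}_2^{(n)}(x^{(n)}))=\tilde{\Phi}_2^{(n)}(x^{(n)})$. Second, because $\log\tilde{\Phi}_2^{(n)}(x^{(n)})$ is already block-diagonal with respect to the pinching projections, the only discrepancy between these two relative entropies is an entropy gap $H(\tilde{\Phi}_1^{(n)}(x^{(n)}))-H(\Phi_1^{(n)}(x^{(n)}))$, which is bounded by $\log\tilde{J}_{\tilde{\Phi}_2^{(n)}(x^{(n)})}\leq \log(n+1)$ via the mixture-of-unitaries representation of the pinching map. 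Combined with Lemma~\ref{lem:difference_between_channel_divergence_from_operator_inequality} applied to the inequality~\eqref{eq:approximation_2}, this yields
\begin{align}
    \qty|D(\tilde{\Phi}_1^{(n)}(x^{(n)})\|\tilde{\Phi}_2^{(n)}(x^{(n)}))-D(\Phi_1^{(n)}(x^{(n)})\|\Phi_2^{(n)}(x^{(n)}))|\leq C_n+\log(n+1).
\end{align}
Maximizing over $x^{(n)}$ (preserved by the CQ channel divergence~\eqref{eq:D_channel}), dividing by $n$, and invoking $C_n=o(n)$ produces the claimed equalities of liminf and limsup.

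The main technical obstacle I expect is the invariance statement in part~4, since the strong monotonicity of relative entropy under pinching is only one-sided; extracting the reverse bound requires the sharper entropy-gap estimate $\log\tilde{J}_{\tilde{\Phi}_2^{(n)}}\leq\log(n+1)$ delivered by the rounding lemma, and then carefully ensuring that the pointwise bounds pass through the max over classical inputs $x^{(n)}$ in the CQ-channel divergence. The other parts are essentially bookkeeping once the pinching superchannel is in hand.
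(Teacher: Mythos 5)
Your construction and all four verification steps coincide with the paper's own proof (pointwise rounding of $\Phi_2^{(n)}$ via Lemma~\ref{lem:rounding}, pinching superchannel for $\tilde{\Phi}_1^{(n)}$, monotonicity of $\beta_\epsilon$ under the pinching superchannel together with Lemma~\ref{lem:bound_on_beta_from_operator_inequality} for part~3, and the $\log(n+1)$ pinching/entropy-gap bound combined with Lemma~\ref{lem:difference_between_channel_divergence_from_operator_inequality} for part~4), so this is essentially the same argument. One small correction in part~3: the inequality $-\log\beta_\epsilon\left(\tilde{\Phi}_1^{(n)}\middle\|\tilde{\Phi}_2^{(n)}\right)\le-\log\beta_\epsilon\left(\tilde{\Phi}_1^{(n)}\middle\|\mathcal{P}_{\tilde{\Phi}_2^{(n)}}\left[\Phi_2^{(n)}\right]\right)+C_n$ follows from the \emph{lower} half of the pinched sandwich, $\tilde{\Phi}_2^{(n)}\left(x^{(n)}\right)\ge e^{-C_n}\mathcal{P}_{\tilde{\Phi}_2^{(n)}\left(x^{(n)}\right)}\left(\Phi_2^{(n)}\left(x^{(n)}\right)\right)$, via the first half of Lemma~\ref{lem:bound_on_beta_from_operator_inequality}; the upper half you cite only yields the reverse direction, so swap which half of~\eqref{eq:approximation_2} you pinch.
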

\begin{proof}
We first provide construction of $\qty{\tilde{\Phi}_1^{(n)}}_n$ and $\qty{\tilde{\Phi}_2^{(n)}}_n$, followed by proving~\eqref{eq:approximation_1},~\eqref{eq:approximation_2},~\eqref{eq:approximation_3},~\eqref{eq:approximation_4},~\eqref{eq:approximation_5}, and~\eqref{eq:approximation_6}.

\textbf{Construction of $\qty{\tilde{\Phi}_1^{(n)}}_n$ and $\qty{\tilde{\Phi}_2^{(n)}}_n$.}
For every state $\Phi_2^{(n)}\qty(x^{(n)})$ with $x^{(n)}\in\mathcal{X}^n$, due to~\eqref{eq:approximation_assumption}, Lemma~\ref{lem:rounding} provides
$\tilde{\Phi}_2^{(n)}\qty(x^{(n)})$ satisfying
\begin{align}
\label{eq:approximation_tilde_Phi_2_operator_inequality}
    &e^{-C_n}\Phi_2^{(n)}\qty(x^{(n)})\leq\tilde{\Phi}_2^{(n)}\qty(x^{(n)})\leq e^{C_n}\Phi_2^{(n)}\qty(x^{(n)}),\\
\label{eq:approximation_tilde_Phi_2_J}
    &J_{\tilde{\Phi}_2^{(n)}\qty(x^{(n)})}\leq n+1,
\end{align}
which yields a CQ channel $\tilde{\Phi}_2^{(n)}$.
Using the superchannel $\mathcal{P}_{\tilde{\Phi}_2^{(n)}}$ as in~\eqref{eq:pinching_superchannel}, we define
\begin{align}
\label{eq:approximation_tilde_Phi_1}
    \tilde{\Phi}_1^{(n)}\coloneqq\mathcal{P}_{\tilde{\Phi}_2^{(n)}}\qty[\Phi_1^{(n)}].
\end{align}

\textbf{Proof of~\eqref{eq:approximation_1}.}
Due to the commutativity of the operators after the pinching as in~\eqref{eq:pinching_commute}, by the definition~\eqref{eq:approximation_tilde_Phi_1} of $\tilde{\Phi}_1^{(n)}$, we obtain~\eqref{eq:approximation_1}.

\textbf{Proof of~\eqref{eq:approximation_2}.}
We have~\eqref{eq:approximation_2} due to~\eqref{eq:approximation_tilde_Phi_2_operator_inequality}.

\textbf{Proof of~\eqref{eq:approximation_3} and~\eqref{eq:approximation_4}.}
It holds that
\begin{align}
    \label{eq:approximation_7}
    &-\frac{1}{n}\log\qty[\beta_{\epsilon}\left(\tilde{\Phi}_1^{(n)}\middle\|\tilde{\Phi}_2^{(n)}\right)]\notag\\
    &=-\frac{1}{n}\log\qty[\beta_{\epsilon}\left(\mathcal{P}_{\tilde{\Phi}_2^{(n)}}\qty[\Phi_1^{(n)}]\middle\|\tilde{\Phi}_2^{(n)}\right)]\\
    \label{eq:approximation_8}
    &=-\frac{1}{n}\log\qty[\beta_{\epsilon}\left(\mathcal{P}_{\tilde{\Phi}_2^{(n)}}\qty[\Phi_1^{(n)}]\middle\|\mathcal{P}_{\tilde{\Phi}_2^{(n)}}\qty[\tilde{\Phi}_2^{(n)}]\right)]\\
    \label{eq:approximation_9}
    &\leq-\frac{1}{n}\log\qty[\beta_{\epsilon}\left(\Phi_1^{(n)}\middle\|\tilde{\Phi}_2^{(n)}\right)],
\end{align}
where~\eqref{eq:approximation_7} is the definition~\eqref{eq:approximation_tilde_Phi_1} of $\tilde{\Phi}_1^{(n)}$,~\eqref{eq:approximation_8} follows from~\eqref{eq:pinching_invariant}, and~\eqref{eq:approximation_9} follows from Lemma~\ref{lem:monotonicity_beta_CQ_channel}.
By taking the limit $n\to\infty$, we obtain
\begin{align}
\label{eq:liminf_beta1}
    &\liminf_{n\to\infty}-\frac{1}{n}\log\qty[\beta_{\epsilon}\left(\tilde{\Phi}_1^{(n)}\middle\|\tilde{\Phi}_2^{(n)}\right)]\notag\\
    &\leq\liminf_{n\to\infty}-\frac{1}{n}\log\qty[\beta_{\epsilon}\left(\Phi_1^{(n)}\middle\|\tilde{\Phi}_2^{(n)}\right)],\\
\label{eq:limsup_beta1}
    &\limsup_{n\to\infty}-\frac{1}{n}\log\qty[\beta_{\epsilon}\left(\tilde{\Phi}_1^{(n)}\middle\|\tilde{\Phi}_2^{(n)}\right)]\notag\\
    &\leq\limsup_{n\to\infty}-\frac{1}{n}\log\qty[\beta_{\epsilon}\left(\Phi_1^{(n)}\middle\|\tilde{\Phi}_2^{(n)}\right)].
\end{align}

On the other hand, due to~\eqref{eq:approximation_tilde_Phi_2_operator_inequality}, Lemma~\ref{lem:bound_on_beta_from_operator_inequality} shows
\begin{align}
    &-\frac{1}{n}\log\qty[\beta_{\epsilon}\left(\Phi_1^{(n)}\middle\|\Phi_2^{(n)}\right)]-\frac{C_n}{n}\notag\\
    &\leq-\frac{1}{n}\log\qty[\beta_{\epsilon}\left(\Phi_1^{(n)}\middle\|\tilde{\Phi}_2^{(n)}\right)]\notag\\
    &\leq-\frac{1}{n}\log\qty[\beta_{\epsilon}\left(\Phi_1^{(n)}\middle\|\Phi_2^{(n)}\right)]+\frac{C_n}{n}.
\end{align}
Under the assumption~\eqref{eq:approximation_assumption_3_4} of $C_n=o(n)$, taking the limit $n\to\infty$ yields
\begin{align}
\label{eq:liminf_beta2}
    &\liminf_{n\to\infty}-\frac{1}{n}\log\qty[\beta_{\epsilon}\left(\Phi_1^{(n)}\middle\|\tilde{\Phi}_2^{(n)}\right)]\notag\\
    &=\liminf_{n\to\infty}-\frac{1}{n}\log\qty[\beta_{\epsilon}\left(\Phi_1^{(n)}\middle\|\Phi_2^{(n)}\right)],\\
\label{eq:limsup_beta2}
    &\limsup_{n\to\infty}-\frac{1}{n}\log\qty[\beta_{\epsilon}\left(\Phi_1^{(n)}\middle\|\tilde{\Phi}_2^{(n)}\right)]\notag\\
    &=\limsup_{n\to\infty}-\frac{1}{n}\log\qty[\beta_{\epsilon}\left(\Phi_1^{(n)}\middle\|\Phi_2^{(n)}\right)].
\end{align}
Consequently,~\eqref{eq:approximation_3} follows from~\eqref{eq:liminf_beta1} and~\eqref{eq:liminf_beta2}, and~\eqref{eq:approximation_4} follows from~\eqref{eq:limsup_beta1} and~\eqref{eq:limsup_beta2}.

\textbf{Proof of~\eqref{eq:approximation_5} and~\eqref{eq:approximation_6}.}
We evaluate the difference between $D\left(\tilde{\Phi}_1^{(n)}\middle\|\tilde{\Phi}_2^{(n)}\right)$ and $D\left(\Phi_1^{(n)}\middle\|\tilde{\Phi}_2^{(n)}\right)$.
As in the case of states in Ref.~\cite[Lemma 3.1]{hiai1991proper}, for every input $x^{(n)}\in\mathcal{X}^n$, it holds that
\begin{widetext}
\begin{align}
    &D\left(\Phi_1^{(n)}\qty(x^{(n)})\middle\|\tilde{\Phi}_2^{(n)}\qty(x^{(n)})\right)-D\left(\tilde{\Phi}_1^{(n)}\qty(x^{(n)})\middle\|\tilde{\Phi}_2^{(n)}\qty(x^{(n)})\right)\notag\\
    &=\Tr\qty[\Phi_1^{(n)}\qty(x^{(n)})\qty(\log\qty[\Phi_1^{(n)}\qty(x^{(n)})]-\log\qty[\tilde{\Phi}_2^{(n)}\qty(x^{(n)})])]-\notag\\
    &\quad\Tr\qty[\qty(\mathcal{P}_{\tilde{\Phi}_2^{(n)}}\qty[\Phi_1^{(n)}])\qty(x^{(n)})\qty(\log\qty[\tilde{\Phi}_1^{(n)}\qty(x^{(n)})]-\log\qty[\tilde{\Phi}_2^{(n)}\qty(x^{(n)})])]\\
    \label{eq:channel_divergence_under_pinching_2}
    &=\Tr\qty[\Phi_1^{(n)}\qty(x^{(n)})\qty(\log\qty[\Phi_1^{(n)}\qty(x^{(n)})]-\log\qty[\tilde{\Phi}_2^{(n)}\qty(x^{(n)})])]-\notag\\
    &\quad\Tr\qty[\Phi_1^{(n)}\qty(x^{(n)})\qty(\log\qty[\tilde{\Phi}_1^{(n)}\qty(x^{(n)})]-\log\qty[\tilde{\Phi}_2^{(n)}\qty(x^{(n)})])]\\
    \label{eq:channel_divergence_under_pinching_1}
    &=\Tr\qty[\Phi_1^{(n)}\qty(x^{(n)})\qty(\log\qty[\Phi_1^{(n)}\qty(x^{(n)})]-\log\qty[\tilde{\Phi}_1^{(n)}\qty(x^{(n)})])]\\
    \label{eq:channel_divergence_under_pinching_3}
    &=D\left(\Phi_1^{(n)}\qty(x^{(n)})\middle\|\tilde{\Phi}_1^{(n)}\qty(x^{(n)})\right)\geq 0,
\end{align}
\end{widetext}
where~\eqref{eq:channel_divergence_under_pinching_2} holds since the pinching makes $\tilde{\Phi}_1^{(n)}\qty(x^{(n)})$ and $\tilde{\Phi}_2^{(n)}\qty(x^{(n)})$ commutative, so the trace can be taken using their simultaneous eigenbasis,
and~\eqref{eq:channel_divergence_under_pinching_1} follows from the linearity of the trace.
Thus, by taking $x^{(n)\ast}\in\mathcal{X}^n$ as that achieving the maximum in the definition~\eqref{eq:D_channel} of $D\left(\tilde{\Phi}_1^{(n)}\middle\|\tilde{\Phi}_2^{(n)}\right)$, i.e.,
\begin{align}
    D\left(\tilde{\Phi}_1^{(n)}\qty(x^{(n)\ast})\middle\|\tilde{\Phi}_2^{(n)}\qty(x^{(n)\ast})\right)=D\left(\tilde{\Phi}_1^{(n)}\middle\|\tilde{\Phi}_2^{(n)}\right),
\end{align}
we have
\begin{align}
    &\frac{1}{n}D\left(\tilde{\Phi}_1^{(n)}\middle\|\tilde{\Phi}_2^{(n)}\right)\notag\\
    &=\frac{1}{n}D\left(\tilde{\Phi}_1^{(n)}\qty(x^{(n)\ast})\middle\|\tilde{\Phi}_2^{(n)}\qty(x^{(n)\ast})\right)\\
    &\leq\frac{1}{n}D\left(\Phi_1^{(n)}\qty(x^{(n)\ast})\middle\|\tilde{\Phi}_2^{(n)}\qty(x^{(n)\ast})\right)\\
    \label{eq:channel_divergence_under_pinching_4}
    &\leq\frac{1}{n}D\left(\Phi_1^{(n)}\middle\|\tilde{\Phi}_2^{(n)}\right).
\end{align}

On the other hand, we will derive the opposite inequality.
To this end, by the definition~\eqref{eq:approximation_tilde_Phi_1} of $\tilde{\Phi}_1^{(n)}$, the pinching inequality~\eqref{eq:pinching_inequality} yields
\begin{align}
    \Phi_1^{(n)}\qty(x^{(n)})\leq (n+1)\tilde{\Phi}_1^{(n)}\qty(x^{(n)}),
\end{align}
where we use~\eqref{eq:approximation_tilde_Phi_2_J}.
Thus, Lemma~\ref{lem:bound_channel_divergence_operator_inequality} shows
\begin{align}
    D\left(\Phi_1^{(n)}\qty(x^{(n)})\middle\|\tilde{\Phi}_1^{(n)}\qty(x^{(n)})\right)\leq\log\qty[n+1].
\end{align}
Therefore, due to~\eqref{eq:channel_divergence_under_pinching_3}, we have
\begin{align}
    &\frac{1}{n}D\left(\Phi_1^{(n)}\qty(x^{(n)})\middle\|\tilde{\Phi}_2^{(n)}\qty(x^{(n)})\right)\notag\\
    &\leq\frac{1}{n}D\left(\tilde{\Phi}_1^{(n)}\qty(x^{(n)})\middle\|\tilde{\Phi}_2^{(n)}\qty(x^{(n)})\right)+\frac{\log\qty[n+1]}{n},
\end{align}
which holds for any choice of $x^{(n)}\in\mathcal{X}^n$.
By taking $x^{(n)\ast\ast}\in\mathcal{X}^n$ as that achieving the maximum in the definition~\eqref{eq:D_channel} of $D\left(\Phi_1^{(n)}\middle\|\tilde{\Phi}_2^{(n)}\right)$, i.e.,
\begin{align}
    D\left(\Phi_1^{(n)}\qty(x^{(n)\ast\ast})\middle\|\tilde{\Phi}_2^{(n)}\qty(x^{(n)\ast\ast})\right)=D\left(\Phi_1^{(n)}\middle\|\tilde{\Phi}_2^{(n)}\right),
\end{align}
we obtain
\begin{align}
    &\frac{1}{n}D\left(\Phi_1^{(n)}\middle\|\tilde{\Phi}_2^{(n)}\right)\notag\\
    &=\frac{1}{n}D\left(\Phi_1^{(n)}\qty(x^{(n)\ast\ast})\middle\|\tilde{\Phi}_2^{(n)}\qty(x^{(n)\ast\ast})\right)\\
    &\leq\frac{1}{n}D\left(\tilde{\Phi}_1^{(n)}\qty(x^{(n)\ast\ast})\middle\|\tilde{\Phi}_2^{(n)}\qty(x^{(n)\ast\ast})\right)+\frac{\log\qty[n+1]}{n}\\
    \label{eq:channel_divergence_under_pinching_5}
    &\leq\frac{1}{n}D\left(\tilde{\Phi}_1^{(n)}\middle\|\tilde{\Phi}_2^{(n)}\right)+\frac{\log\qty[n+1]}{n}.
\end{align}

Consequently,~\eqref{eq:channel_divergence_under_pinching_4} and~\eqref{eq:channel_divergence_under_pinching_5} lead to
\begin{align}
&\frac{1}{n}D\left(\tilde{\Phi}_1^{(n)}\middle\|\tilde{\Phi}_2^{(n)}\right)\notag\\
&\leq\frac{1}{n}D\left(\Phi_1^{(n)}\middle\|\tilde{\Phi}_2^{(n)}\right)\notag\\
&\leq\frac{1}{n}D\left(\tilde{\Phi}_1^{(n)}\middle\|\tilde{\Phi}_2^{(n)}\right)+\frac{\log\qty[n+1]}{n}.
\end{align}
Thus, taking the limit $n\to\infty$ yields
\begin{align}
\label{eq:liminf_D1}
\liminf_{n\to\infty}\frac{1}{n}D\left(\tilde{\Phi}_1^{(n)}\middle\|\tilde{\Phi}_2^{(n)}\right)&=\liminf_{n\to\infty}\frac{1}{n}D\left(\Phi_1^{(n)}\middle\|\tilde{\Phi}_2^{(n)}\right),\\
\label{eq:limsup_D1}
\limsup_{n\to\infty}\frac{1}{n}D\left(\tilde{\Phi}_1^{(n)}\middle\|\tilde{\Phi}_2^{(n)}\right)&=\limsup_{n\to\infty}\frac{1}{n}D\left(\Phi_1^{(n)}\middle\|\tilde{\Phi}_2^{(n)}\right).
\end{align}

On the other hand, due to~\eqref{eq:approximation_tilde_Phi_2_operator_inequality}, Lemma~\ref{lem:difference_between_channel_divergence_from_operator_inequality} shows
\begin{align}
    &\frac{1}{n}D\left(\Phi_1^{(n)}\middle\|\Phi_2^{(n)}\right)-\frac{C_n}{n}\notag\\
    &\leq\frac{1}{n}D\left(\Phi_1^{(n)}\middle\|\tilde{\Phi}_2^{(n)}\right)\notag\\
    &\leq\frac{1}{n}D\left(\Phi_1^{(n)}\middle\|\Phi_2^{(n)}\right)+\frac{C_n}{n}.
\end{align}
Under the assumption~\eqref{eq:approximation_assumption_5_6} of $C_n=o(n)$, taking the limit $n\to\infty$ yields
\begin{align}
\label{eq:liminf_D2}
\liminf_{n\to\infty}\frac{1}{n}D\left(\Phi_1^{(n)}\middle\|\tilde{\Phi}_2^{(n)}\right)&=\liminf_{n\to\infty}\frac{1}{n}D\left(\Phi_1^{(n)}\middle\|\Phi_2^{(n)}\right),\\
\label{eq:limsup_D2}
\limsup_{n\to\infty}\frac{1}{n}D\left(\Phi_1^{(n)}\middle\|\tilde{\Phi}_2^{(n)}\right)&=\limsup_{n\to\infty}\frac{1}{n}D\left(\Phi_1^{(n)}\middle\|\Phi_2^{(n)}\right).
\end{align}
Consequently,~\eqref{eq:approximation_5} follows from~\eqref{eq:liminf_D1} and~\eqref{eq:liminf_D2}, and~\eqref{eq:approximation_6} follows from~\eqref{eq:limsup_D1} and~\eqref{eq:limsup_D2}.
\end{proof}

\paragraph{The information spectrum method for CQ channels}
\label{sec:information_spectrum}

Given the approximate CQ channels obtained from the pinching technique, we next apply the information spectrum method, which provides projection operators onto typical subspaces based on bounds for type II errors in quantum hypothesis testing~\cite{4069150}.
For any operators $A$ and $B$, let
\begin{align}
\label{eq:projection_A_B}
\qty{A \geq B}
\end{align}
denote the projection onto the eigenspace corresponding to the non-negative eigenvalues of $A-B$.
In Ref.~\cite{hayashi2025generalizedquantumsteinslemma}, this projection was employed to analyze the state version of the generalized quantum Stein's lemma.
Here, we establish an analogous method for CQ channels, extending beyond the single-state setting to handle multiple possible inputs, thereby demonstrating its applicability in a context of dynamical resources.

By extending the information spectrum method for states used in Ref.~\cite{hayashi2025generalizedquantumsteinslemma}, we obtain the following lemma.
This extension plays a crucial role in analyzing the CQ-channel version of the generalized quantum Stein's lemma.

\begin{lemma}[\label{lem:information_spectrum}Information spectrum method for CQ channels]
    For any parameters $\epsilon\geq 0$, $\underline{R}$, $\overline{R}$, any sequences $\qty{\Phi_1^{(n)}\in\mathcal{C}\qty(\mathcal{X}^n\to\mathcal{H}^{\otimes n})}_{n=1,2,\ldots}$ and $\qty{\Phi_2^{(n)}\in\mathcal{C}\qty(\mathcal{X}^n\to\mathcal{H}^{\otimes n})}_{n=1,2,\ldots}$, and any sequence $\qty{p_n}_{n=1,2,\ldots}$ of probability distributions over $\mathcal{X}^n$,
    if it holds that
    \begin{align}
        \underline{R}&>\liminf_{n\to\infty}-\frac{1}{n}\log\qty[\beta_{\epsilon}\left(\Phi_1^{(n)}\middle\|\Phi_2^{(n)}\right)],\\
        \overline{R}&>\limsup_{n\to\infty}-\frac{1}{n}\log\qty[\beta_{\epsilon}\left(\Phi_1^{(n)}\middle\|\Phi_2^{(n)}\right)],
    \end{align}
    then the families of projections
    \begin{align}
    \label{eq:projection_underline}
        &\qty{\underline{T}_{x^{(n)}}\coloneqq\qty{\Phi_1^{(n)}\qty(x^{(n)})\geq e^{\underline{R}n}\Phi_2^{(n)}\qty(x^{(n)})}}_{x^{(n)}\in\mathcal{X}^n},\\
    \label{eq:projection_overrline}
        &\qty{\overline{T}_{x^{(n)}}\coloneqq\qty{\Phi_1^{(n)}\qty(x^{(n)})\geq e^{\overline{R}n}\Phi_2^{(n)}\qty(x^{(n)})}}_{x^{(n)}\in\mathcal{X}^n}
    \end{align}
    satisfy, respectively,
    \begin{align}
    \label{eq:information_spectrum_underline}
        \limsup_{n\to\infty}\sum_{x^{(n)}\in\mathcal{X}^n}p_n\qty(x^{(n)})\Tr\qty[\qty(\mathds{1}-\underline{T}_{x^{(n)}})\Phi_1^{(n)}\qty(x^{(n)})]&>\epsilon,\\
    \label{eq:information_spectrum_overline}
        \liminf_{n\to\infty}\sum_{x^{(n)}\in\mathcal{X}^n}p_n\qty(x^{(n)})\Tr\qty[\qty(\mathds{1}-\overline{T}_{x^{(n)}})\Phi_1^{(n)}\qty(x^{(n)})]&>\epsilon,
    \end{align}
    where $\beta_\epsilon$ is defined as~\eqref{eq:beta_CQ_channels},
    and the notations on the projections~\eqref{eq:projection_underline} and~\eqref{eq:projection_overrline} follow~\eqref{eq:projection_A_B}.
\end{lemma}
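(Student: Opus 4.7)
The plan is to deduce both conclusions from a single operator-theoretic estimate and then combine it with the hypotheses via contraposition, mirroring the state-case information spectrum method of Ref.~\cite{hayashi2025generalizedquantumsteinslemma} but with an extra averaging over classical inputs. The underlying idea is simple: the projector $\underline{T}_{x^{(n)}}$ is designed so that, averaged over $x^{(n)}$, its type II error against $\Phi_2^{(n)}$ is at most $e^{-\underline{R}n}$; it therefore functions as an admissible hypothesis test whenever its type I error is at most $\epsilon$, which would force the Stein-type rate $-\tfrac{1}{n}\log\beta_\epsilon$ to exceed $\underline{R}$.

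Concretely, I would first establish the averaged operator inequality: by the defining property of the positive-eigenspace projector~\eqref{eq:projection_A_B}, for every $x^{(n)}\in\mathcal{X}^n$ one has $\underline{T}_{x^{(n)}}(\Phi_1^{(n)}(x^{(n)})-e^{\underline{R}n}\Phi_2^{(n)}(x^{(n)}))\underline{T}_{x^{(n)}}\geq 0$, so taking the trace (using cyclicity and $\underline{T}^2=\underline{T}$) yields $\Tr[\underline{T}_{x^{(n)}}\Phi_2^{(n)}(x^{(n)})]\leq e^{-\underline{R}n}\Tr[\underline{T}_{x^{(n)}}\Phi_1^{(n)}(x^{(n)})]\leq e^{-\underline{R}n}$. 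Averaging over $p_n$ preserves the bound, and the analogous estimate holds for $\overline{T}$ with rate $\overline{R}$. Writing $\underline{\alpha}_n$ and $\overline{\alpha}_n$ for the respective type I errors appearing in~\eqref{eq:information_spectrum_underline} and~\eqref{eq:information_spectrum_overline}, the contrapositive is then immediate: if $\underline{\alpha}_n\leq\epsilon$, then by~\eqref{eq:T_set} the projector family $\{\underline{T}_{x^{(n)}}\}$ lies in $\mathcal{T}_{\epsilon,\Phi_1^{(n)},p_n}$, and the definition~\eqref{eq:beta_CQ_channels} of $\beta_\epsilon$ combined with the averaged bound gives $\beta_\epsilon(\Phi_1^{(n)}\|\Phi_2^{(n)})\leq e^{-\underline{R}n}$, i.e., $-\tfrac{1}{n}\log\beta_\epsilon\geq\underline{R}$; the same with $\overline{T}$ and $\overline{R}$.

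Combining with the hypotheses yields both conclusions. For~\eqref{eq:information_spectrum_underline}, the strict gap $\delta_0\coloneqq\underline{R}-\liminf_n(-\tfrac{1}{n}\log\beta_\epsilon)>0$ supplies a subsequence $(n_k)$ along which $-\tfrac{1}{n_k}\log\beta_\epsilon\leq\underline{R}-\delta_0/2<\underline{R}$, and the contrapositive then forces $\underline{\alpha}_{n_k}>\epsilon$, so that $\limsup_n\underline{\alpha}_n>\epsilon$. For~\eqref{eq:information_spectrum_overline}, the stronger hypothesis $\overline{R}>\limsup_n(-\tfrac{1}{n}\log\beta_\epsilon)$ forces $-\tfrac{1}{n}\log\beta_\epsilon<\overline{R}$ for all sufficiently large $n$, so the contrapositive forces $\overline{\alpha}_n>\epsilon$ eventually, giving $\liminf_n\overline{\alpha}_n>\epsilon$.

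The main subtlety I expect to wrestle with is upgrading the pointwise strict inequalities $\underline{\alpha}_{n_k}>\epsilon$ and $\overline{\alpha}_n>\epsilon$ (for large $n$) to strict $\limsup$/$\liminf$ statements, rather than merely nonstrict ones. The quantitative slack $\delta_0>0$ furnished by the strict hypothesis is the key ingredient: it should allow one either to sharpen the bound on $\beta_\epsilon$ by working with the slightly relaxed projector at rate $\underline{R}-\delta_0/2$ and comparing type I errors via the monotonicity of $\{A\geq B\}$ in $B$, or to run the same argument at a slightly smaller error level $\epsilon-\eta$ and let $\eta\to 0$ at the end, exploiting the monotonicity $\beta_{\epsilon-\eta}\geq\beta_\epsilon$. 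The remaining content -- the operator inequality for the positive-eigenspace projector and the reduction to $\beta_\epsilon$ -- is a direct CQ-channel adaptation of the state-case information spectrum method, the only CQ-channel-specific ingredient being the averaging of POVM elements against the classical input distribution $p_n$.
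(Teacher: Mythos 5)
Your core argument coincides with the paper's: the positive-part projection gives $\Tr[\underline{T}_{x^{(n)}}\Phi_2^{(n)}(x^{(n)})]\le e^{-\underline{R}n}\Tr[\underline{T}_{x^{(n)}}\Phi_1^{(n)}(x^{(n)})]\le e^{-\underline{R}n}$, averaging over $p_n$ preserves this, and admissibility at level $\epsilon$ then forces $-\frac{1}{n}\log\beta_\epsilon(\Phi_1^{(n)}\|\Phi_2^{(n)})\ge\underline{R}$ (resp.\ $\overline{R}$); these are exactly the paper's estimates. The gap is in your last step, and it is the one you yourself flag: arguing forward, you only obtain $\underline{\alpha}_{n_k}>\epsilon$ pointwise along a subsequence (resp.\ $\overline{\alpha}_n>\epsilon$ eventually), which yields $\limsup\ge\epsilon$ and $\liminf\ge\epsilon$, not the strict inequalities claimed in the lemma. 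Neither of your proposed repairs closes this as sketched: working at level $\epsilon-\eta$ points the wrong way, since $\beta_{\epsilon-\eta}\ge\beta_\epsilon$ means admissibility at level $\epsilon-\eta$ bounds an exponent that is only smaller than the one in the hypothesis; and the relaxed threshold $\underline{R}-\delta_0/2$ defines a different test whose type-I quantity can only be smaller (and $\{A\ge B\}$ is not even monotone in $B$ for noncommuting operators), so it gives no strict lower bound for the original families.

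The paper gets the strictness structurally rather than quantitatively, by proving the contrapositive of the limit statements themselves: assume $\limsup_n\sum_{x^{(n)}}p_n(x^{(n)})\Tr[(\mathds{1}-\underline{T}_{x^{(n)}})\Phi_1^{(n)}(x^{(n)})]\le\epsilon$; then the families lie in $\mathcal{T}_{\epsilon,\Phi_1^{(n)},p_n}$ for all large $n$, so $-\frac{1}{n}\log\beta_\epsilon\ge\underline{R}$ there, hence $\liminf_n-\frac{1}{n}\log\beta_\epsilon\ge\underline{R}$, contradicting $\underline{R}>\liminf$; the $\overline{R}$ case is identical with ``liminf $\le\epsilon$'' yielding a subsequence and hence $\limsup_n-\frac{1}{n}\log\beta_\epsilon\ge\overline{R}$. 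In that packaging the strict ``$>\epsilon$'' is simply the negation of the assumed ``$\le\epsilon$'', so no slack $\delta_0$ or $\eta$ is needed; if you reorganize your final step as this contraposition, your proof becomes the paper's. Be aware, though, that the paper's passage from ``$\limsup\le\epsilon$'' to ``the type-I error is $\le\epsilon$ for all large $n$'' is asserted without comment; it is the mirror image of the level-$\epsilon$ boundary delicacy you ran into, so your instinct that something nontrivial happens exactly at the boundary is well placed, but the paper's own proof does not supply more than the contrapositive reorganization just described.
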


\begin{proof}
    We will prove contraposition of~\eqref{eq:information_spectrum_underline} and~\eqref{eq:information_spectrum_overline}.

    \textbf{Proof of~\eqref{eq:information_spectrum_underline}.}
    To prove the contraposition of~\eqref{eq:information_spectrum_underline}, suppose that
    \begin{align}
    \label{eq:information_spectrum_underline_assumption}
        \limsup_{n\to\infty}\sum_{x^{(n)}\in\mathcal{X}^n}p_n\qty(x^{(n)})\Tr\qty[\qty(\mathds{1}-\underline{T}_{x^{(n)}})\Phi_1^{(n)}\qty(x^{(n)})]&\leq\epsilon,
    \end{align}
    so that we will prove
    \begin{align}
    \label{eq:information_spectrum_underline_conclusion}
        \liminf_{n\to\infty}-\frac{1}{n}\log\qty[\beta_{\epsilon}\left(\Phi_1^{(n)}\middle\|\Phi_2^{(n)}\right)]\geq\underline{R}.
    \end{align}
    
    Under~\eqref{eq:information_spectrum_underline_assumption}, there exists $n_0$ such that, for all $n\geq n_0$, we have
    \begin{align}
        \qty{\underline{T}_{x^{(n)}}}_{x^{(n)}}\in\mathcal{T}_{\epsilon,\Phi_1^{(n)},p_n},
    \end{align}
    where $\mathcal{T}_{\epsilon,\Phi_1^{(n)},p_n}$ is defined as~\eqref{eq:T_set}.
    Thus, for every $n\geq n_0$, it holds that
    \begin{align}
        &-\frac{1}{n}\log\qty[\beta_{\epsilon}\left(\Phi_1^{(n)}\middle\|\Phi_2^{(n)}\right)]\notag\\
        &\geq -\frac{1}{n}\log\qty[\sum_{x^{(n)}\in\mathcal{X}^n}p_n\qty(x^{(n)})\Tr\qty[\underline{T}_{x^{(n)}}\Phi_2^{(n)}\qty(x^{(n)})]].
    \end{align}
    
    On the other hand, for every $x^{(n)}\in\mathcal{X}^n$, by the definition~\eqref{eq:projection_underline} of $\underline{T}_{x^{(n)}}$, we have
    \begin{align}
        \Tr\qty[\underline{T}_{x^{(n)}}\qty(\Phi_1^{(n)}\qty(x^{(n)})- e^{\underline{R}n}\Phi_2^{(n)}\qty(x^{(n)}))]\geq 0,
    \end{align}
    and hence,
    \begin{align}
        \Tr\qty[\underline{T}_{x^{(n)}}\Phi_2^{(n)}\qty(x^{(n)})] &\leq e^{-\underline{R}n}\Tr\qty[\underline{T}_{x^{(n)}}\Phi_1^{(n)}\qty(x^{(n)})]\\
        &\leq e^{-\underline{R}n}.
    \end{align}
    Thus, it holds that
    \begin{align}
        &-\frac{1}{n}\log\qty[\sum_{x^{(n)}\in\mathcal{X}^n}p_n\qty(x^{(n)})\Tr\qty[\underline{T}_{x^{(n)}}\Phi_2^{(n)}\qty(x^{(n)})]]\notag\\
        &\geq\underline{R}.
    \end{align}

    Consequently, for every $n\geq n_0$, it holds that
    \begin{align}
        &-\frac{1}{n}\log\qty[\beta_{\epsilon}\left(\Phi_1^{(n)}\middle\|\Phi_2^{(n)}\right)]\geq\underline{R},
    \end{align}
    which yields~\eqref{eq:information_spectrum_underline_conclusion}.

    \textbf{Proof of~\eqref{eq:information_spectrum_overline}.}
    To prove the contraposition of~\eqref{eq:information_spectrum_overline}, suppose that
    \begin{align}
    \label{eq:information_spectrum_overline_assumption}
        \liminf_{n\to\infty}\sum_{x^{(n)}\in\mathcal{X}^n}p_n\qty(x^{(n)})\Tr\qty[\qty(\mathds{1}-\overline{T}_{x^{(n)}})\Phi_1^{(n)}\qty(x^{(n)})]&\leq\epsilon,
    \end{align}
    so that we will prove
    \begin{align}
    \label{eq:information_spectrum_overline_conclusion}
        \limsup_{n\to\infty}-\frac{1}{n}\log\qty[\beta_{\epsilon}\left(\Phi_1^{(n)}\middle\|\Phi_2^{(n)}\right)]\geq\overline{R}.
    \end{align}
    
    Under~\eqref{eq:information_spectrum_overline_assumption}, there exists a subsequence $\qty{n_l}_{l=1,2,\ldots}$ of $\{1,2,\ldots\}$ such that, for all $l$, we have
    \begin{align}
        \qty{\overline{T}_{x^{\qty(n_l)}}}_{x^{\qty(n_l)}}\in\mathcal{T}_{\epsilon,\Phi_1^{\qty(n_l)},p_{n_l}},
    \end{align}
    where $\mathcal{T}_{\epsilon,\Phi_1^{\qty(n_l)},p_{n_l}}$ is defined as~\eqref{eq:T_set}.
    Thus, for every $l$, it holds that
    \begin{align}
        &-\frac{1}{n_l}\log\qty[\beta_{\epsilon}\left(\Phi_1^{\qty(n_l)}\middle\|\Phi_2^{\qty(n_l)}\right)]\notag\\
        &\geq -\frac{1}{n_l}\log\qty[\sum_{x^{\qty(n_l)}\in\mathcal{X}^{n_l}}p_{n_l}\qty(x^{\qty(n_l)})\Tr\qty[\overline{T}_{x^{\qty(n_l)}}\Phi_2^{\qty(n_l)}\qty(x^{\qty(n_l)})]].
    \end{align}
    
    On the other hand, for every $x^{(n_l)}\in\mathcal{X}^{n_l}$, we have
    \begin{align}
        \Tr\qty[\overline{T}_{x^{\qty(n_l)}}\qty(\Phi_1^{\qty(n_l)}\qty(x^{\qty(n_l)})- e^{\overline{R}n_l}\Phi_2^{\qty(n_l)}\qty(x^{\qty(n_l)}))]\geq 0,
    \end{align}
    and hence,
    \begin{align}
        \Tr\qty[\overline{T}_{x^{\qty(n_l)}}\Phi_2^{\qty(n_l)}\qty(x^{\qty(n_l)})] &\leq e^{-\overline{R}n_l}\Tr\qty[\overline{T}_{x^{\qty(n_l)}}\Phi_1^{\qty(n_l)}\qty(x^{\qty(n_l)})]\\
        &\leq e^{-\overline{R}n_l}.
    \end{align}
    Thus, it holds that
    \begin{align}
        &-\frac{1}{n_l}\log\qty[\sum_{x^{\qty(n_l)}\in\mathcal{X}^{n_l}}p_{n_l}\qty(x^{\qty(n_l)})\Tr\qty[\overline{T}_{x^{\qty(n_l)}}\Phi_2^{\qty(n_l)}\qty(x^{\qty(n_l)})]]\notag\\
        &\geq\overline{R}.
    \end{align}

    Consequently, for every $n_l$ in the subsequence, it holds that
    \begin{align}
        &-\frac{1}{n_l}\log\qty[\beta_{\epsilon}\left(\Phi_1^{\qty(n_l)}\middle\|\Phi_2^{\qty(n_l)}\right)]\geq\overline{R},
    \end{align}
    which yields~\eqref{eq:information_spectrum_overline_conclusion}.
\end{proof}

To apply the information spectrum method, the following bound on type II errors in quantum hypothesis testing for CQ dynamical resources will be particularly useful.

\begin{lemma}[\label{lem:strong_converse_CQ_channel_copies}Upper bounds for the type II errors for CQ channels]
    For any parameter $\epsilon\in[0,1)$, any fixed $M$,
    any CQ channel $\Phi_1\in\mathcal{C}\qty(\mathcal{X}\to\mathcal{H})$,
    any CQ channel $\Phi_\mathrm{full}\in\mathcal{C}\qty(\mathcal{X}\to\mathcal{H})$ such that $\supp\qty(\Phi_1(x))\subseteq\supp\qty(\Phi_\mathrm{full}\qty(x))$ for all $x\in\mathcal{X}$,
    any CQ channel $\Phi_2^{(M)}\in\mathcal{F}\qty(\mathcal{X}^M\to\mathcal{H}^{\otimes M})$ such that $\supp\qty(\Phi^{\otimes M}(x^{(M)}))\subseteq\supp\qty(\Phi^{(M)}\qty(x^{(M)}))$ for all $x^{(M)}\in\mathcal{X}^M$,
    any sequence $\qty{p_n}_{n=1,2,\ldots}$ of probability distributions over $\mathcal{X}^n$,
    and any sequence $\qty{\qty{T_{x^{(n)}}}_{x^{(n)}\in\mathcal{X}^n}}_{n=1,2,\ldots}$ of families of elements of POVMs $\qty{T_{x^{(n)}},\mathds{1}-T_{x^{(n)}}}$ on $\mathcal{H}^{\otimes n}$ for input $x^{(n)}\in\mathcal{X}^n$,
    we choose $q_n$ and $r_n$ for every $n$ such that
    \begin{align}
    \label{eq:strong_converse_CQ_channel_copies_n}
        n&=q_n M+r_n,~0\leq r_n<M,
    \end{align}
    and define
    \begin{align}
        \Phi_2^{(n)}\coloneqq\Phi_2^{(M)\otimes q_n}\otimes\Phi_\mathrm{full}^{\otimes r_n}.
    \end{align}
    If we have
    \begin{align}
    \label{eq:strong_converse_CQ_channel_copies_limsup_assumption}
        \limsup_{n\to\infty}\sum_{x^{(n)}\in\mathcal{X}^n}p_n\qty(x^{(n)})\Tr\qty[\qty(\mathds{1}-T_{x^{(n)}})\Phi_1^{\otimes n}\qty(x^{(n)})]\leq\epsilon,
    \end{align}
    then it holds that
    \begin{align}
    \label{eq:strong_converse_CQ_channel_copies_limsup}
        &\limsup_{n\to\infty}-\frac{1}{n}\log\qty[\sum_{x^{(n)}\in\mathcal{X}^n}p_n\qty(x^{(n)})\Tr\qty[T_{x^{(n)}}\Phi_2^{(n)}\qty(x^{(n)})]]\notag\\
        &\leq\limsup_{n\to\infty}-\frac{1}{n}\log\qty[\beta_\epsilon\left(\Phi_1^{\otimes n}\middle\|\Phi_2^{\qty(n)}\right)]\notag\\
        &\leq\frac{1}{M}D\left(\Phi_1^{\otimes M}\middle\|\Phi_2^{(M)}\right),
    \end{align}
    where $\beta_\epsilon$ is defined as~\eqref{eq:beta_CQ_channels},
    and $D$ is defined as~\eqref{eq:D_channel}.
    If we have
    \begin{align}
    \label{eq:strong_converse_CQ_channel_copies_liminf_assumption}
        \liminf_{n\to\infty}\sum_{x^{(n)}\in\mathcal{X}^n}p_n\qty(x^{(n)})\Tr\qty[\qty(\mathds{1}-T_{x^{(n)}})\Phi_1^{\otimes n}\qty(x^{(n)})]\leq\epsilon,
    \end{align}
    then it holds that
    \begin{align}
    \label{eq:strong_converse_CQ_channel_copies_liminf}
        &\liminf_{n\to\infty}-\frac{1}{n}\log\qty[\sum_{x^{(n)}\in\mathcal{X}^n}p_n\qty(x^{(n)})\Tr\qty[T_{x^{(n)}}\Phi_2^{(n)}\qty(x^{(n)})]]\notag\\
        &\leq\frac{1}{M}D\left(\Phi_1^{\otimes M}\middle\|\Phi_2^{(M)}\right).
    \end{align}
\end{lemma}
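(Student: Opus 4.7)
The plan is to derive the chain of inequalities by combining the R\'enyi bound of Lemma~\ref{lem:bound_type_II_error_CQ_channels} with additivity of the sandwiched R\'enyi channel divergence from Lemma~\ref{lem:additivity_CQ_channels}. The central observation is that the target rate $\frac{1}{M}D(\Phi_1^{\otimes M}\|\Phi_2^{(M)})$ emerges by splitting $n$ uses into $q_n$ blocks of size $M$ plus a bounded remainder $r_n<M$, applying additivity block-by-block, and taking $n\to\infty$ so that $q_n/n\to 1/M$ while $r_n/n\to 0$.

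First I would establish the second inequality, $\limsup_{n\to\infty}-\frac{1}{n}\log\beta_\epsilon(\Phi_1^{\otimes n}\|\Phi_2^{(n)})\leq \frac{1}{M}D(\Phi_1^{\otimes M}\|\Phi_2^{(M)})$. Lemma~\ref{lem:bound_type_II_error_CQ_channels} gives, for any $\alpha>1$,
\[
-\frac{1}{n}\log\beta_\epsilon(\Phi_1^{\otimes n}\|\Phi_2^{(n)})\leq \frac{1}{n}\widetilde{D}_\alpha(\Phi_1^{\otimes n}\|\Phi_2^{(n)})+\frac{\alpha}{n(\alpha-1)}\log\frac{1}{1-\epsilon}.
\]
Writing $\Phi_1^{\otimes n}=(\Phi_1^{\otimes M})^{\otimes q_n}\otimes\Phi_1^{\otimes r_n}$ and $\Phi_2^{(n)}=(\Phi_2^{(M)})^{\otimes q_n}\otimes\Phi_\mathrm{full}^{\otimes r_n}$, iterating Lemma~\ref{lem:additivity_CQ_channels} yields
\[
\widetilde{D}_\alpha(\Phi_1^{\otimes n}\|\Phi_2^{(n)})=q_n\widetilde{D}_\alpha(\Phi_1^{\otimes M}\|\Phi_2^{(M)})+r_n\widetilde{D}_\alpha(\Phi_1\|\Phi_\mathrm{full}).
\]
Dividing by $n$ and using $q_n/n\to 1/M$, $r_n/n\to 0$ from \eqref{eq:strong_converse_CQ_channel_copies_n}, the limsup produces the bound with $\widetilde{D}_\alpha$ in place of $D$; the remainder term vanishes because Axiom~\ref{p:full_rank} together with the support condition on $\Phi_\mathrm{full}$ ensures $\widetilde{D}_\alpha(\Phi_1\|\Phi_\mathrm{full})<\infty$. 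Sending $\alpha\to 1$ via \eqref{eq:D_alpha_limit} then yields the inequality.

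For the first inequality, I would invoke the type I error assumption directly. Under $\limsup_n\sum_{x^{(n)}}p_n(x^{(n)})\Tr[(\mathds{1}-T_{x^{(n)}})\Phi_1^{\otimes n}(x^{(n)})]\leq\epsilon$, for every $\delta>0$ the POVM lies in $\mathcal{T}_{\epsilon+\delta,\Phi_1^{\otimes n},p_n}$ for all sufficiently large $n$, so by the definition~\eqref{eq:beta_CQ_channels} its type II error is at least $\beta_{\epsilon+\delta}(\Phi_1^{\otimes n}\|\Phi_2^{(n)})$. Applying the R\'enyi argument above to $\epsilon+\delta$, whose resulting bound is independent of the error parameter, recovers the upper bound $\frac{1}{M}D$ on the POVM type II exponent and hence closes the stated chain. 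For the liminf case under~\eqref{eq:strong_converse_CQ_channel_copies_liminf_assumption}, the same reasoning applies along a subsequence $\{n_l\}$ on which the type I error eventually stays below $\epsilon+\delta$, yielding \eqref{eq:strong_converse_CQ_channel_copies_liminf}.

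The only place where the CQ-channel structure is genuinely used is the additivity of Lemma~\ref{lem:additivity_CQ_channels}, which depends critically on classicality of inputs so that a maximum over $\mathcal{X}\times\mathcal{X}'$ decouples into separate maxima over $\mathcal{X}$ and $\mathcal{X}'$; the analogous identity fails in general for QQ channels, and this is precisely why the corresponding statement would be highly non-trivial in the fully quantum setting. Beyond this, the argument is bookkeeping around the block decomposition $n=q_n M+r_n$ and the interchange of the $n\to\infty$ and $\alpha\to 1$ limits.
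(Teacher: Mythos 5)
Your proof is correct and follows essentially the same route as the paper's: the R\'enyi upper bound of Lemma~\ref{lem:bound_type_II_error_CQ_channels} combined with the additivity of Lemma~\ref{lem:additivity_CQ_channels} over the block decomposition $n=q_nM+r_n$, then the limits $n\to\infty$ followed by $\alpha\to1$, with your $\epsilon+\delta$ treatment of the limsup/liminf hypotheses being, if anything, slightly more careful than the paper's direct assertion that the POVMs eventually lie in $\mathcal{T}_{\epsilon,\Phi_1^{\otimes n},p_n}$. The only cosmetic difference is that you bound the POVM exponent through $\beta_{\epsilon+\delta}$ rather than literally through the displayed middle term $\beta_\epsilon$, but since the bound $-\frac{1}{n}\log\beta_\epsilon(\Phi_1^{\otimes n}\|\Phi_2^{(n)})\le\frac{1}{M}D(\Phi_1^{\otimes M}\|\Phi_2^{(M)})+o(1)$ holds unconditionally, the stated chain follows.
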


\begin{proof}
    We will first prove~\eqref{eq:strong_converse_CQ_channel_copies_limsup}, and then~\eqref{eq:strong_converse_CQ_channel_copies_liminf}.

    \textbf{Proof of~\eqref{eq:strong_converse_CQ_channel_copies_limsup}.}
    Under the assumption~\eqref{eq:strong_converse_CQ_channel_copies_limsup_assumption}, there exists $n_0$ such that, for all $n\geq n_0$, we have
    \begin{align}
        \qty{T_{x^{\qty(n)}}}_{x^{\qty(n)}}\in\mathcal{T}_{\epsilon,\Phi_1^{\otimes n},p_{n}},
    \end{align}
    where $\mathcal{T}_{\epsilon,\Phi_1^{\otimes n},p_{n}}$ is defined as~\eqref{eq:T_set}.
    Then, for every $n\geq n_0$, due to Lemma~\ref{lem:bound_type_II_error_CQ_channels}, we have
    \begin{align}
        &-\frac{1}{n}\log\qty[\sum_{x^{\qty(n)}\in\mathcal{X}^n}p_n\qty(x^{\qty(n)})\Tr\qty[T_{x^{\qty(n)}}\Phi_2^{\qty(n)}\qty(x^{\qty(n)})]]\notag\\
        &\leq-\frac{1}{n}\log\qty[\beta_\epsilon\left(\Phi_1^{\otimes n}\middle\|\Phi_2^{\qty(n)}\right)]\\
        &\leq\frac{1}{n}\widetilde{D}_\alpha\left(\Phi_1^{\otimes n}\middle\|\Phi_2^{\qty(n)}\right)+\frac{\alpha}{n(\alpha-1)}\log\qty[\frac{1}{1-\epsilon}]\\
        \label{eq:strong_converse_CQ_channel_copies1_limsup}
        &=\frac{q_n}{n}\widetilde{D}_\alpha\left(\Phi_1^{\otimes M}\middle\|\Phi_2^{(M)}\right)+\frac{r_n}{n}\widetilde{D}_\alpha\left(\Phi_1\middle\|\Phi_\mathrm{full}\right)+\notag\\
        &\quad\frac{\alpha}{n(\alpha-1)}\log\qty[\frac{1}{1-\epsilon}],
    \end{align}
    where $\widetilde{D}_\alpha$ is defined as~\eqref{eq:D_alpha_channel}, and~\eqref{eq:strong_converse_CQ_channel_copies1_limsup} follows from the additivity in Lemma~\ref{lem:additivity_CQ_channels}.
    By taking the limit $n\to\infty$ with $q_{n}$ and $r_{n}$ in~\eqref{eq:strong_converse_CQ_channel_copies_n}, we have
    \begin{align}
        &\limsup_{n\to\infty}
        -\frac{1}{n}\log\qty[\sum_{x^{\qty(n)}\in\mathcal{X}^n}p_n\qty(x^{\qty(n)})\Tr\qty[T_{x^{\qty(n)}}\Phi_2^{\qty(n)}\qty(x^{\qty(n)})]]\notag\\
        &\leq\limsup_{n\to\infty}-\frac{1}{n}\log\qty[\beta_\epsilon\left(\Phi_1^{\otimes n}\middle\|\Phi_2^{\qty(n)}\right)]\\
        &\leq\frac{1}{M}\widetilde{D}_\alpha\left(\Phi_1^{\otimes M}\middle\|\Phi_2^{(M)}\right),
    \end{align}
    which holds for any $\alpha>1$.
    Taking the limit $\alpha\to\infty$ yields
    \begin{align}
        &\limsup_{n\to\infty}
        -\frac{1}{n}\log\qty[\sum_{x^{\qty(n)}\in\mathcal{X}^n}p_n\qty(x^{\qty(n)})\Tr\qty[T_{x^{\qty(n)}}\Phi_2^{\qty(n)}\qty(x^{\qty(n)})]]\notag\\
        &\leq\limsup_{n\to\infty}-\frac{1}{n}\log\qty[\beta_\epsilon\left(\Phi_1^{\otimes n}\middle\|\Phi_2^{\qty(n)}\right)]\\
        &\leq\frac{1}{M}D\left(\Phi_1^{\otimes M}\middle\|\Phi_2^{(M)}\right).
    \end{align}
    
    \textbf{Proof of~\eqref{eq:strong_converse_CQ_channel_copies_liminf}.}
    Under the assumption~\eqref{eq:strong_converse_CQ_channel_copies_liminf_assumption}, there exists a subsequence $\qty{n_l}_{l=1,2,\ldots}$ of $\{1,2,\ldots\}$ such that, for all $l$, we have
    \begin{align}
        \qty{T_{x^{\qty(n_l)}}}_{x^{\qty(n_l)}}\in\mathcal{T}_{\epsilon,\Phi_1^{\otimes n_l},p_{n_l}},
    \end{align}
    where $\mathcal{T}_{\epsilon,\Phi_1^{\otimes n_l},p_{n_l}}$ is defined as~\eqref{eq:T_set}.
    Then, for every $l$, due to Lemma~\ref{lem:bound_type_II_error_CQ_channels}, we have
    \begin{align}
        &-\frac{1}{n_l}\log\qty[\sum_{x^{\qty(n_l)}\in\mathcal{X}^{n_l}}p_{n_l}\qty(x^{\qty(n_l)})\Tr\qty[T_{x^{\qty(n_l)}}\Phi_2^{\qty(n_l)}\qty(x^{\qty(n_l)})]]\notag\\
        &\leq-\frac{1}{n_l}\log\qty[\beta_\epsilon\left(\Phi_1^{\otimes n_l}\middle\|\Phi_2^{\qty(n_l)}\right)]\\
        &\leq\frac{1}{n_l}\widetilde{D}_\alpha\left(\Phi_1^{\otimes n_l}\middle\|\Phi_2^{\qty(n_l)}\right)+\frac{\alpha}{n_l(\alpha-1)}\log\qty[\frac{1}{1-\epsilon}]\\
        \label{eq:strong_converse_CQ_channel_copies1_liminf}
        &=\frac{q_{n_l}}{n_l}\widetilde{D}_\alpha\left(\Phi_1^{\otimes M}\middle\|\Phi_2^{(M)}\right)+\frac{r_{n_l}}{n_l}\widetilde{D}_\alpha\left(\Phi_1\middle\|\Phi_\mathrm{full}\right)+\notag\\
        &\quad\frac{\alpha}{n_l(\alpha-1)}\log\qty[\frac{1}{1-\epsilon}],
    \end{align}
    where $\widetilde{D}_\alpha$ is defined as~\eqref{eq:D_alpha_channel}, and~\eqref{eq:strong_converse_CQ_channel_copies1_liminf} follows from the additivity in Lemma~\ref{lem:additivity_CQ_channels}.
    By taking the limit $l\to\infty$ with $n_l$, $q_{n_l}$, and $r_{n_l}$ in~\eqref{eq:strong_converse_CQ_channel_copies_n}, we have
    \begin{align}
        &\liminf_{n\to\infty}
        -\frac{1}{n}\log\qty[\sum_{x^{\qty(n)}\in\mathcal{X}^n}p_n\qty(x^{\qty(n)})\Tr\qty[T_{x^{\qty(n)}}\Phi_2^{\qty(n)}\qty(x^{\qty(n)})]]\notag\\
        &\leq\frac{1}{M}\widetilde{D}_\alpha\left(\Phi_1^{\otimes M}\middle\|\Phi_2^{(M)}\right),
    \end{align}
    which holds for any $\alpha>1$.
    Taking the limit $\alpha\to\infty$ yields
    \begin{align}
        &\liminf_{n\to\infty}
        -\frac{1}{n}\log\qty[\sum_{x^{\qty(n)}\in\mathcal{X}^n}p_n\qty(x^{\qty(n)})\Tr\qty[T_{x^{\qty(n)}}\Phi_2^{\qty(n)}\qty(x^{\qty(n)})]]\notag\\
        &\leq\frac{1}{M}D\left(\Phi_1^{\otimes M}\middle\|\Phi_2^{(M)}\right).
    \end{align}
\end{proof}

By combining all these techniques for CQ channels, we obtain the following lemma, which shows that a suboptimal sequence in minimizing $D$ can be improved by employing the optimal sequence in maximizing $\beta_\epsilon$. This result serves as a key ingredient in proving the direct part of the generalized quantum Stein's lemma.
While a closely related lemma appeared as a central step in the proof of the state version in Ref.~\cite{hayashi2025generalizedquantumsteinslemma}, our contribution is to extend this proof technique beyond the single-state setting to CQ channels with multiple possible inputs.
Again, further generalization to QQ channels remains an inherently challenging open problem, but our result firmly establishes a meaningful and tractable extension of this technique to CQ channels, thereby advancing the theory to analyze the generalized Stein's lemma from static to the fundamental class of dynamical resources.

\begin{lemma}[\label{lem:update}The update lemma for CQ channels]
    For any parameters $\epsilon\in(0,1)$, $\tilde{\epsilon}\in(0,\epsilon)$,
    any family $\mathcal{F}$ of sets of free CQ channels satisfying Axioms~\ref{p:full_rank},~\ref{p:compact},~\ref{p:tensor_product}, and~\ref{p:convex}, any CQ channel $\Phi\in\mathcal{C}\qty(\mathcal{X}\to\mathcal{H})$, and any sequence $\qty{\Phi_\mathrm{free}^{(n)}\in\mathcal{F}\qty(\mathcal{X}^n\to\mathcal{H}^{\otimes n})}_{n=1,2,\ldots}$ of free CQ channels,
    let $R_{1,\epsilon}$ and $R_2$ denote
    \begin{align}
        \label{eq:update_R_1}
        R_{1,\epsilon}&\coloneqq\liminf_{n\to\infty}-\frac{1}{n}\log\qty[\beta_{\epsilon}\left(\Phi^{\otimes n}\middle\|\mathcal{F}\right)],\\
        \label{eq:update_R_2}
        R_{2}&\coloneqq\liminf_{n\to\infty}\frac{1}{n}D\left(\Phi^{\otimes n}\middle\|\Phi_\mathrm{free}^{(n)}\right).
    \end{align}
    If it holds that
    \begin{align}
    \label{eq:R_2_greater_R_1}
        R_2> R_{1,\epsilon},
    \end{align}
    then there exists a sequence $\qty{\Phi_\mathrm{free}^{(n)\prime}\in\mathcal{F}\qty(\mathcal{X}^n\to\mathcal{H}^{\otimes n})}_{n=1,2,\ldots}$ of free CQ channels such that
    \begin{align}
    \label{eq:update_liminf}
        &\liminf_{n\to\infty}\frac{1}{n}D\left(\Phi^{\otimes n}\middle\|\Phi_\mathrm{free}^{(n)\prime}\right)-R_{1,\epsilon}\notag\\
        &\leq\qty(1-\tilde{\epsilon})\qty(R_2-R_{1,\epsilon}).
    \end{align}
\end{lemma}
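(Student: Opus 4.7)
My approach is to construct $\Phi_\mathrm{free}^{(n)\prime}$ as a convex combination of the suboptimal $\Phi_\mathrm{free}^{(n)}$ with a Stein-optimal free CQ channel, then analyze the resulting channel divergence via pinching and an information-spectrum split. First, for each $n$, I apply Proposition~\ref{prp:minimax} together with Axioms~\ref{p:compact} and~\ref{p:convex} to extract a saddle-point triple: an input distribution $p_n$, a POVM family $\{T_{x^{(n)}}\}_{x^{(n)}}$ with type-I error at most $\epsilon$ under $\Phi^{\otimes n}$, and a free CQ channel $\omega_n \in \mathcal{F}(\mathcal{X}^n \to \mathcal{H}^{\otimes n})$ such that $\sum_{x^{(n)}} p_n(x^{(n)}) \Tr[T_{x^{(n)}} \tau(x^{(n)})] \le \beta_\epsilon(\Phi^{\otimes n} \| \mathcal{F})$ for every $\tau \in \mathcal{F}$. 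I then set
\[
\Phi_\mathrm{free}^{(n)\prime} \coloneqq (1-\tilde{\epsilon})\,\Phi_\mathrm{free}^{(n)} + \tilde{\epsilon}\,\omega_n,
\]
which lies in $\mathcal{F}$ by Axiom~\ref{p:convex}. The mixing weight $\tilde{\epsilon}$ is chosen so that the target rate $R_{1,\epsilon}+(1-\tilde{\epsilon})(R_2-R_{1,\epsilon}) = \tilde{\epsilon}R_{1,\epsilon} + (1-\tilde{\epsilon})R_2$ arises naturally as the weighted combination of the two rates.

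Next I use Lemma~\ref{lem:approximation} (with $C_n=o(n)$; the eigenvalue lower bound~\eqref{eq:approximation_assumption} is enforced by a vanishing admixture of the full-rank free channel from Axiom~\ref{p:full_rank}, which perturbs all relevant rates by $o(1)$) applied in turn to $\Phi_\mathrm{free}^{(n)}$, $\omega_n$, and to the mixture $\Phi_\mathrm{free}^{(n)\prime}$, so that the three operators and a suitable pinched version of $\Phi^{\otimes n}(x^{(n)})$ become simultaneously commuting for every $x^{(n)}\in\mathcal{X}^n$, without changing the asymptotic channel divergence appearing on either side of the claim. On this common eigenbasis I perform an information-spectrum split: for each $x^{(n)}$, I decompose the identity into the ``Stein-typical'' projection $\Pi_{x^{(n)}}$ on which $\omega_n(x^{(n)})$ dominates $\Phi_\mathrm{free}^{(n)}(x^{(n)})$ and the log-likelihood ratio $\log\!\bigl(\Phi^{\otimes n}(x^{(n)})/\omega_n(x^{(n)})\bigr)$ is at most $R_{1,\epsilon} n + o(n)$, and its complement. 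On $\Pi_{x^{(n)}}$ I use the operator inequality $\Phi_\mathrm{free}^{(n)\prime}(x^{(n)}) \ge \tilde{\epsilon}\,\omega_n(x^{(n)})$ to bound the contribution to $-\Tr[\Phi^{\otimes n}(x^{(n)})\log \Phi_\mathrm{free}^{(n)\prime}(x^{(n)})]$ by $\tilde{\epsilon}R_{1,\epsilon}n + o(n)$ times the mass on $\Pi_{x^{(n)}}$; on the complement I use $\Phi_\mathrm{free}^{(n)\prime}(x^{(n)}) \ge (1-\tilde{\epsilon})\,\Phi_\mathrm{free}^{(n)}(x^{(n)})$ together with the hypothesis $D(\Phi^{\otimes n}\|\Phi_\mathrm{free}^{(n)})/n \le R_2 + o(1)$ to bound the complementary contribution by $(1-\tilde{\epsilon})R_2 n + o(n)$. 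Summing, maximizing over $x^{(n)}$, dividing by $n$ and taking $\liminf_{n\to\infty}$ yields~\eqref{eq:update_liminf}.

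The main obstacle lies in justifying the Stein-typical split uniformly over all inputs $x^{(n)}$: the saddle-point property of $\omega_n$ controls POVM outcomes only when averaged against the input distribution $p_n$, whereas the channel divergence on the left-hand side of the claim is a maximum over inputs. Bridging this gap requires combining the information-spectrum characterization (Lemma~\ref{lem:information_spectrum}) with the error-exponent upper bound of Lemma~\ref{lem:strong_converse_CQ_channel_copies} and the simultaneous pinching across all channels and inputs afforded by Lemma~\ref{lem:approximation}, so that the typical projection captures sufficient mass under $\Phi^{\otimes n}(x^{(n)})$ for every $x^{(n)}$. Handling this max-over-inputs structure is the genuinely new element beyond the state version of the update lemma in Ref.~\cite{hayashi2025generalizedquantumsteinslemma}, and is precisely where the classical nature of the CQ-channel inputs is essential, as a direct generalization to the QQ-channel setting would lack the spectral decompositions that make the split well-defined.
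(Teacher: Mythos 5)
Your overall architecture (minimax to extract a $\beta_\epsilon$-optimal free channel, convex mixing inside $\mathcal{F}$, pinching via Lemma~\ref{lem:approximation}, and an information-spectrum split) matches the paper's, but two of your key estimates do not hold as stated. First, the operator inequalities $\Phi_\mathrm{free}^{(n)\prime}(x^{(n)})\geq\tilde{\epsilon}\,\omega_n(x^{(n)})$ and $\Phi_\mathrm{free}^{(n)\prime}(x^{(n)})\geq(1-\tilde{\epsilon})\,\Phi_\mathrm{free}^{(n)}(x^{(n)})$ only shift $-\log\Phi_\mathrm{free}^{(n)\prime}(x^{(n)})$ by the additive constants $\log(1/\tilde{\epsilon})$ and $\log(1/(1-\tilde{\epsilon}))$; they do not multiply the rates by $\tilde{\epsilon}$ or $1-\tilde{\epsilon}$. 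So your claimed bounds ``$\tilde{\epsilon}R_{1,\epsilon}n$ times the mass'' and ``$(1-\tilde{\epsilon})R_2 n$ times the mass'' are not what these inequalities give, and the factor $1-\tilde{\epsilon}$ in the conclusion cannot come from the convex weights at all (in the paper the weights are the fixed constants $1/3$). It comes instead from mass bookkeeping: Lemma~\ref{lem:information_spectrum}, applied with a threshold slightly above $R_{1,\epsilon}$, shows that the projection onto the region where the likelihood ratio exceeds $e^{(R_{1,\epsilon}+\epsilon_2)n}$ carries, in the liminf sense, mass less than $1-\epsilon$ under the pinched $\Phi^{\otimes n}$; only the excess rate is paid on that region, and the slack $\epsilon_0=\frac{\epsilon-\tilde{\epsilon}}{1-\epsilon}(R_2-R_{1,\epsilon})$ converts $(1-\epsilon)(R_2-R_{1,\epsilon}+\epsilon_0)$ into $(1-\tilde{\epsilon})(R_2-R_{1,\epsilon})$.

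Second, your treatment of the atypical region is a genuine gap: you cannot bound the restriction $\Tr[\Pi^{c}_{x^{(n)}}\Phi^{\otimes n}(x^{(n)})(\log\Phi^{\otimes n}(x^{(n)})-\log\Phi_\mathrm{free}^{(n)}(x^{(n)}))]$ by $D(\Phi^{\otimes n}\|\Phi_\mathrm{free}^{(n)})\leq(R_2+o(1))n$, because the typical-region contribution to the relative entropy can be negative, so the restricted trace may exceed the total; moreover $R_2$ is only a liminf, and the log-likelihood ratio against the arbitrary sequence $\Phi_\mathrm{free}^{(n)}$ is not pointwise controlled anywhere. The paper avoids this by not mixing in $\Phi_\mathrm{free}^{(n)}$ at all: it fixes a finite $M$ with $\frac{1}{M}D(\Phi^{\otimes M}\|\Phi_\mathrm{free}^{(M)})\leq R_2+\epsilon_0$ (possible by the liminf), forms the block-IID channel $\Phi_2^{(n)}=\Phi_\mathrm{free}^{(M)\otimes q_n}\otimes\Phi_\mathrm{full}^{\otimes r_n}$, mixes it together with $\Phi_\mathrm{full}^{\otimes n}$ (which supplies the full-rank lower bound needed for Lemma~\ref{lem:approximation}) into $\Phi_\mathrm{free}^{(n)\prime}$, and introduces a second threshold at $R_2+\epsilon_0+\epsilon_2$. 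The mass above this second threshold is shown to vanish via Lemma~\ref{lem:strong_converse_CQ_channel_copies} applied with $\beta_{1-\epsilon_1}$, a step that crucially uses the block-IID structure of $\Phi_2^{(n)}$ through the additivity of the R\'enyi channel divergence, which your construction lacks; the residual high-ratio region is then charged only $C_n'=O(1)$ times a vanishing mass. Without this second threshold and the block-IID reference channel, the atypical contribution in your argument is uncontrolled, so the proof does not close as proposed.
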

\begin{proof}
    We first construct the updated sequence $\qty{\Phi_\mathrm{free}^{(n)\prime}}_{n=1,2,\ldots}$, followed by proving~\eqref{eq:update_liminf}.

    \textbf{Construction of $\qty{\Phi_\mathrm{free}^{(n)\prime}}_{n=1,2,\ldots}$.}
    Under the assumption~\eqref{eq:R_2_greater_R_1}, we fix a positive real parameter $\epsilon_0$ as
    \begin{align}
    \label{eq:epsilon_0}
        \epsilon_0\coloneqq\frac{\epsilon-\tilde{\epsilon}}{1-\epsilon}\qty(R_2-R_{1,\epsilon}).
    \end{align}
    With this $\epsilon_0$, due to~\eqref{eq:update_R_2}, there exists a sufficiently large integer $M$ and a free CQ channel $\Phi_\mathrm{free}^{(M)}\in\mathcal{F}\qty(\mathcal{X}^{M}\to\mathcal{H}^{\otimes M})$ such that
    \begin{align}
    \label{eq:Phi_free_n_2_definition_R_2}
        \frac{1}{M}D\left(\Phi^{\otimes M}\middle\|\Phi_\mathrm{free}^{(M)}\right)\leq R_2+\epsilon_0.
    \end{align}
    Axiom~\ref{p:full_rank} provides a free CQ channel
    \begin{align}
        \Phi_\mathrm{full}\in\mathcal{F}\qty(\mathcal{F}\to\mathcal{H})
    \end{align}
    satisfying the full-rank condition
    \begin{align}
    \label{eq:Lambda_min_update}
        \Phi_\mathrm{full}\qty(x)\geq\Lambda_{\min}\mathds{1},
    \end{align}
    where $\Lambda_{\min}\in(0,1]$ is defined in~\eqref{eq:Lambda_min}.
    For every $n$, we choose $q_n$ and $r_n$ such that
    \begin{align}
        n=q_n M+r_n,~0\leq r_n< M,
    \end{align}
    and, as in Lemma~\ref{lem:strong_converse_CQ_channel_copies}, define
    \begin{align}
    \label{eq:Phi_free_n_2_definition}
        \Phi_2^{(n)}\coloneqq\Phi_\mathrm{free}^{(M)\otimes q_n}\otimes\Phi_\mathrm{full}^{r_n}.
    \end{align}
    On the other hand, for every $n$, due to Axioms~\ref{p:compact} and~\ref{p:convex}, Proposition~\ref{prp:minimax} shows that
    \begin{align}
        \beta_{\epsilon}\left(\Phi^{\otimes n}\middle\|\mathcal{F}\right)=\max_{\Phi_\mathrm{free}\in\mathcal{F}}\beta_{\epsilon}\left(\Phi^{\otimes n}\middle\|\Phi_\mathrm{free}\right).
    \end{align}
    Let $\Phi_\mathrm{free}^{(n)\ast}\in\mathcal{F}\qty(\mathcal{X}^n\to\mathcal{H}^{\otimes n})$ denote a free CQ channel achieving this maximum, that is,
    \begin{align}
    \label{eq:Phi_free_n_ast_definition}
        \beta_{\epsilon}\left(\Phi^{\otimes n}\middle\|\Phi_\mathrm{free}^{(n)\ast}\right)=\max_{\Phi_\mathrm{free}\in\mathcal{F}}\beta_{\epsilon}\left(\Phi^{\otimes n}\middle\|\Phi_\mathrm{free}\right)=
        \beta_\epsilon\left(\Phi^{\otimes n}\middle\|\mathcal{F}\right).
    \end{align}
    Using these free CQ channels, we define
    \begin{align}
    \label{eq:Phi_free_n_prime_definition}
        \Phi_\mathrm{free}^{(n)\prime}\coloneqq\frac{\Phi_\mathrm{free}^{(n)\ast}+\Phi_2^{(n)}+\Phi_\mathrm{full}^{\otimes n}}{3}\in\mathcal{F},
    \end{align}
    where $\Phi_\mathrm{free}^{(n)\prime}$ is included in $\mathcal{F}$ due to Axiom~\ref{p:convex}.
    
    \textbf{Construction of approximations $\tilde{\Phi}^{(n)}$ and $\tilde{\Phi}_\mathrm{free}^{(n)\prime}$.}
    By the definition~\eqref{eq:Phi_free_n_prime_definition} of $\Phi_\mathrm{free}^{(n)\prime}$, we have, for every $x^{(n)}\in\mathcal{X}^n$,
    \begin{align}
    \label{eq:Phi_free_n_prime_operator_inequality_ast}
        \Phi_\mathrm{free}^{(n)\prime}\qty(x^{(n)})&\geq\frac{1}{3}\Phi_\mathrm{free}^{(n)\ast}\qty(x^{(n)}),\\
    \label{eq:Phi_free_n_prime_operator_inequality_2}
        \Phi_\mathrm{free}^{(n)\prime}\qty(x^{(n)})&\geq\frac{1}{3}\Phi_2^{(n)}\qty(x^{(n)}),\\
    \label{eq:Phi_free_n_prime_operator_inequality_full}
        \Phi_\mathrm{free}^{(n)\prime}\qty(x^{(n)})&\geq\frac{1}{3}\Phi_\mathrm{full}^{\otimes n}\qty(x^{(n)}).
    \end{align}
    With $\Lambda_{\min}$ in~\eqref{eq:Lambda_min_update}, we set
    \begin{align}
        C_n\coloneqq\log\qty[\frac{1}{\Lambda_{\min}}]+\frac{\log\qty[3]}{n},
    \end{align}
    so that we obtain from~\eqref{eq:Phi_free_n_prime_operator_inequality_full}
    \begin{align}
    \label{eq:Phi_free_n_prime_operator_inequality_full2}
        \Phi_\mathrm{free}^{(n)\prime}\qty(x^{(n)})&\geq\frac{\Phi_\mathrm{full}^{\otimes n}\qty(x^{(n)})}{3}\geq\frac{\Lambda_{\min}^n}{3}\mathds{1}=e^{-C_n n}\mathds{1}.
    \end{align}
    Then, Lemma~\ref{lem:approximation} yields CQ channels
    \begin{align}
        \tilde{\Phi}^{(n)},\tilde{\Phi}_\mathrm{free}^{(n)\prime}\in\mathcal{C}\qty(\mathcal{X}^n\to\mathcal{H}^{\otimes n})
    \end{align}
    such that
    \begin{align}
        \label{eq:update_commute}
        &\tilde{\Phi}^{(n)}\qty(x^{(n)})
        \tilde{\Phi}_\mathrm{free}^{(n)\prime}\qty(x^{(n)})=
        \tilde{\Phi}_\mathrm{free}^{(n)\prime}\qty(x^{(n)})
        \tilde{\Phi}^{(n)}\qty(x^{(n)}),\\
        \label{eq:update_operator_inequality}
        &e^{-C_n}\Phi_\mathrm{free}^{(n)\prime}\qty(x^{(n)})\leq\tilde{\Phi}_\mathrm{free}^{(n)\prime}\qty(x^{(n)})\leq e^{C_n}\Phi_\mathrm{free}^{(n)\prime}\qty(x^{(n)}),\\
        \label{eq:update_beta_liminf}
        &\liminf_{n\to\infty}-\frac{1}{n}\log\qty[\beta_\epsilon\left(\tilde{\Phi}^{(n)}\middle\|\tilde{\Phi}_\mathrm{free}^{(n)\prime}\right)]\notag\\
        &\leq\liminf_{n\to\infty}-\frac{1}{n}\log\qty[\beta_\epsilon\left(\Phi^{\otimes n}\middle\|\Phi_\mathrm{free}^{(n)\prime}\right)],\\
        \label{eq:update_beta_limsup}
        &\limsup_{n\to\infty}-\frac{1}{n}\log\qty[\beta_\epsilon\left(\tilde{\Phi}^{(n)}\middle\|\tilde{\Phi}_\mathrm{free}^{(n)\prime}\right)]\notag\\
        &\leq\limsup_{n\to\infty}-\frac{1}{n}\log\qty[\beta_\epsilon\left(\Phi^{\otimes n}\middle\|\Phi_\mathrm{free}^{(n)\prime}\right)],\\
        \label{eq:update_D}
        &\liminf_{n\to\infty}\frac{1}{n}D\left(\tilde{\Phi}^{(n)}\middle\|\tilde{\Phi}_\mathrm{free}^{(n)\prime}\right)=\liminf_{n\to\infty}\frac{1}{n}D\left(\Phi^{\otimes n}\middle\|\Phi_\mathrm{free}^{(n)\prime}\right).
    \end{align}
    
    \textbf{Definition of projections for the information spectrum method.}
    Due to~\eqref{eq:update_beta_liminf}, it holds that
    \begin{align}
        &\liminf_{n\to\infty}-\frac{1}{n}\log\qty[\beta_\epsilon\left(\tilde{\Phi}^{(n)}\middle\|\tilde{\Phi}_\mathrm{free}^{(n)\prime}\right)]\notag\\
        &\leq\liminf_{n\to\infty}-\frac{1}{n}\log\qty[\beta_\epsilon\left(\Phi^{\otimes n}\middle\|\Phi_\mathrm{free}^{(n)\prime}\right)],\\
        \label{eq:update_beta_liminf1}
        &\leq\liminf_{n\to\infty}-\frac{1}{n}\log\qty[\beta_\epsilon\left(\Phi^{\otimes n}\middle\|\Phi_\mathrm{free}^{(n)\ast}\right)],\\
        \label{eq:update_beta_liminf2}
        &=R_{1,\epsilon},
    \end{align}
    where~\eqref{eq:update_beta_liminf1} follows from~\eqref{eq:Phi_free_n_prime_operator_inequality_ast} due to Lemma~\ref{lem:bound_on_beta_from_operator_inequality}, and~\eqref{eq:update_beta_liminf2} is due to~\eqref{eq:update_R_1} and~\eqref{eq:Phi_free_n_ast_definition}.
    Due to~\eqref{eq:update_beta_limsup}, for any
    \begin{align}
    \label{eq:epsilon_1}
        \epsilon_1\in(0,1],
    \end{align}
    it holds that
    \begin{align}
        &\limsup_{n\to\infty}-\frac{1}{n}\log\qty[\beta_{1-\epsilon_1}\left(\tilde{\Phi}^{(n)}\middle\|\tilde{\Phi}_\mathrm{free}^{(n)\prime}\right)]\notag\\
        &\leq\limsup_{n\to\infty}-\frac{1}{n}\log\qty[\beta_{1-\epsilon_1}\left(\Phi^{\otimes n}\middle\|\Phi_\mathrm{free}^{(n)\prime}\right)],\\
        \label{eq:update_beta_limsup1}
        &\leq\limsup_{n\to\infty}-\frac{1}{n}\log\qty[\beta_{1-\epsilon_1}\left(\Phi^{\otimes n}\middle\|\Phi_2^{(n)}\right)],\\
        \label{eq:update_beta_limsup2}
        &\leq R_{2}+\epsilon_0,
    \end{align}
    where~\eqref{eq:update_beta_limsup1} follows from~\eqref{eq:Phi_free_n_prime_operator_inequality_2} due to Lemma~\ref{lem:bound_on_beta_from_operator_inequality}, and~\eqref{eq:update_beta_limsup2} is obtained from~\eqref{eq:Phi_free_n_2_definition_R_2} and~\eqref{eq:Phi_free_n_2_definition} due to Lemma~\ref{lem:strong_converse_CQ_channel_copies}.

    For any
    \begin{align}
    \label{eq:epsilon_2}
        \epsilon_2>0
    \end{align}
    and any sequence $\qty{p_n}_{n=1,2,\ldots}$ probability distributions over $\mathcal{X}^n$, due to~\eqref{eq:update_beta_liminf2} and~\eqref{eq:update_beta_limsup2}, Lemma~\ref{lem:information_spectrum} shows that the families $\qty{T_{x^{(n)},1}}_{x^{(n)}\in\mathcal{X}^n}$ and $\qty{T_{x^{(n)},2}}_{x^{(n)}\in\mathcal{X}^n}$ of projections given by
    \begin{align}
        &T_{x^{(n)},1}\coloneqq\qty{\tilde{\Phi}^{(n)}\qty(x^{(n)})\geq e^{\qty(R_{1,\epsilon}+\epsilon_2)n}\tilde{\Phi}_\mathrm{free}^{(n)\prime}\qty(x^{(n)})},\\
        &T_{x^{(n)},2}\coloneqq\qty{\tilde{\Phi}^{(n)}\qty(x^{(n)})\geq e^{\qty(R_2+\epsilon_0+\epsilon_2)n}\tilde{\Phi}_\mathrm{free}^{(n)\prime}\qty(x^{(n)})}
    \end{align}
    satisfy
    \begin{align}
        &\limsup_{n\to\infty}\sum_{x^{(n)}\in\mathcal{X}^n}p_n\qty(x^{(n)})\Tr\qty[\qty(\mathds{1}-T_{x^{(n)},1})\tilde{\Phi}^{(n)}\qty(x^{(n)})]\notag\\
        &>\epsilon,\\
        &\liminf_{n\to\infty}\sum_{x^{(n)}\in\mathcal{X}^n}p_n\qty(x^{(n)})\Tr\qty[\qty(\mathds{1}-T_{x^{(n)},2})\tilde{\Phi}^{(n)}\qty(x^{(n)})]\notag\\
        &>1-\epsilon_1.
    \end{align}
    That is, for an arbitrary sequence $\qty{p_n}_n$ of probability distributions, these families of projections satisfy
    \begin{align}
    \label{eq:update_T_liminf}
        &\liminf_{n\to\infty}\sum_{x^{(n)}\in\mathcal{X}^n}p_n\qty(x^{(n)})\Tr\qty[T_{x^{(n)},1}\tilde{\Phi}^{(n)}\qty(x^{(n)})]\notag\\
        &<1-\epsilon,\\
    \label{eq:update_T_limsup}
        &\limsup_{n\to\infty}\sum_{x^{(n)}\in\mathcal{X}^n}p_n\qty(x^{(n)})\Tr\qty[T_{x^{(n)},2}\tilde{\Phi}^{(n)}\qty(x^{(n)})]\notag\\
        &<\epsilon_1.
    \end{align}
    Then, we define
    \begin{align}
    \label{eq:update_pi_1}
        \Pi_{x^{(n)},1}&\coloneqq \mathds{1}-T_{x^{(n)},1},\\
    \label{eq:update_pi_2}
        \Pi_{x^{(n)},2}&\coloneqq T_{x^{(n)},1}-T_{x^{(n)},2},\\
    \label{eq:update_pi_3}
        \Pi_{x^{(n)},3}&\coloneqq T_{x^{(n)},2},
    \end{align}
    which satisfy
    \begin{align}
    \label{eq:update_pi_all}
        \Pi_{x^{(n)},1}+\Pi_{x^{(n)},2}+\Pi_{x^{(n)},3}=\mathds{1}.
    \end{align}
    
    \textbf{Derivation of operator inequalities using the information spectrum method.}
    Due to~\eqref{eq:update_commute}, the operators $\tilde{\Phi}^{(n)}\qty(x^{(n)})$, $\tilde{\Phi}_\mathrm{free}^{(n)\prime}\qty(x^{(n)})$, $T_{x^{(n)},1}$, and $T_{x^{(n)},2}$ commute with each other; hence, by the definitions~\eqref{eq:update_pi_1} and~\eqref{eq:update_pi_2} of $\Pi_{x^{(n)},1}$ and $\Pi_{x^{(n)},2}$ using these mutually commuting operators, for every $x^{(n)}\in\mathcal{X}^n$, we have operator inequalities
    \begin{align}
        \label{eq:update_pi_1_operator_inequality}
        &\frac{1}{n}\Pi_{x^{(n)},1}\qty(\log\qty[\tilde{\Phi}^{(n)}\qty(x^{(n)})]-\log\qty[\tilde{\Phi}_\mathrm{free}^{(n)\prime}\qty(x^{(n)})])\notag\\
        &\leq\qty(R_{1,\epsilon}+\epsilon_2)\Pi_{x^{(n)},1},\\
        \label{eq:update_pi_2_operator_inequality}
        &\frac{1}{n}\Pi_{x^{(n)},2}\qty(\log\qty[\tilde{\Phi}^{(n)}\qty(x^{(n)})]-\log\qty[\tilde{\Phi}_\mathrm{free}^{(n)\prime}\qty(x^{(n)})])\notag\\
        &\leq\qty(R_2+\epsilon_0+\epsilon_2)\Pi_{x^{(n)},2}.
    \end{align}
    To derive an operator inequality for $\Pi_{x^{(n)},2}$ in~\eqref{eq:update_pi_3},
    we write
    \begin{align}
        C_n'\coloneqq\max\qty{C_n+\frac{C_n}{n},R_2+\epsilon_0+\epsilon_2},
    \end{align}
    so that~\eqref{eq:Phi_free_n_prime_operator_inequality_full2} and~\eqref{eq:update_operator_inequality} yield, for every $x^{(n)}\in\mathcal{X}^n$,
    \begin{align}
        &\tilde{\Phi}_\mathrm{free}^{(n)\prime}\qty(x^{(n)})\notag\\
        &\geq e^{-C_n}\Phi_\mathrm{free}^{(n)\prime}\qty(x^{(n)})\\
        &\geq e^{-\qty(C_n+\frac{C_n}{n})n}\mathds{1}\\
        &\geq e^{-C_n'n}\mathds{1}.
    \end{align}
    Thus, we have
    \begin{align}
        &\frac{1}{n}\Pi_{x^{(n)},3}\qty(\log\qty[\tilde{\Phi}^{(n)}\qty(x^{(n)})]-\log\qty[\tilde{\Phi}_\mathrm{free}^{(n)\prime}\qty(x^{(n)})])\notag\\
        &\leq\frac{1}{n}\Pi_{x^{(n)},3}\qty(-\log\qty[\tilde{\Phi}_\mathrm{free}^{(n)\prime}\qty(x^{(n)})])\\
        \label{eq:update_pi_3_operator_inequality}
        &\leq C_n'\Pi_3^{(n)}
    \end{align}
    
    \textbf{Evaluation of quantum relative entropy using the operator inequalities.}
    Using these projections $\Pi_{x^{(n)},1}$, $\Pi_{x^{(n)},2}$, and $\Pi_{x^{(n)},3}$, we have, for every $x^{(n)}\in\mathcal{X}^n$,
    \begin{widetext}
    \begin{align}
        &\frac{1}{n}D\left(\tilde{\Phi}^{(n)}\qty(x^{(n)})\middle\|\tilde{\Phi}_\mathrm{free}^{(n)\prime}\qty(x^{(n)})\right)\notag\\
        &=\frac{1}{n}\Tr\qty[\tilde{\Phi}^{(n)}\qty(x^{(n)})\qty(\log\qty[\tilde{\Phi}^{(n)}\qty(x^{(n)})]-\log\qty[\tilde{\Phi}_\mathrm{free}^{(n)\prime}\qty(x^{(n)})])]\\
        \label{eq:update_relative_entropy_1}
        &=\frac{1}{n}\Tr\qty[\tilde{\Phi}^{(n)}\qty(x^{(n)})\Pi_{x^{(n)},1}\qty(\log\qty[\tilde{\Phi}^{(n)}\qty(x^{(n)})]-\log\qty[\tilde{\Phi}_\mathrm{free}^{(n)\prime}\qty(x^{(n)})])]+\notag\\
        &\quad\frac{1}{n}\Tr\qty[\tilde{\Phi}^{(n)}\qty(x^{(n)})\Pi_{x^{(n)},2}\qty(\log\qty[\tilde{\Phi}^{(n)}\qty(x^{(n)})]-\log\qty[\tilde{\Phi}_\mathrm{free}^{(n)\prime}\qty(x^{(n)})])]+\notag\\
        &\quad\frac{1}{n}\Tr\qty[\tilde{\Phi}^{(n)}\qty(x^{(n)})\Pi_{x^{(n)},3}\qty(\log\qty[\tilde{\Phi}^{(n)}\qty(x^{(n)})]-\log\qty[\tilde{\Phi}_\mathrm{free}^{(n)\prime}\qty(x^{(n)})])]\\
        \label{eq:update_relative_entropy_2}
        &\leq\qty(R_{1,\epsilon}+\epsilon_2)\Tr\qty[\Pi_{x^{(n)},1}\tilde{\Phi}^{(n)}\qty(x^{(n)})]+\qty(R_2+\epsilon_0+\epsilon_2)+\Tr\qty[\Pi_{x^{(n)},2}\tilde{\Phi}^{(n)}\qty(x^{(n)})]+C_n'\Tr\qty[\Pi_{x^{(n)},3}\tilde{\Phi}^{(n)}\qty(x^{(n)})]\\
        \label{eq:update_relative_entropy_3}
        &=\qty(R_{1,\epsilon}+\epsilon_2)+\qty(R_2-R_{1,\epsilon}+\epsilon_0)\Tr\qty[T_{x^{(n)},1}\tilde{\Phi}^{(n)}\qty(x^{(n)})]+\qty(C_n'-\qty(R_2+\epsilon_0+\epsilon_2))\Tr\qty[T_{x^{(n)},2}\tilde{\Phi}^{(n)}\qty(x^{(n)})],
    \end{align}
    where~\eqref{eq:update_relative_entropy_1} follows from~\eqref{eq:update_pi_all}, we have~\eqref{eq:update_relative_entropy_2} due to~\eqref{eq:update_pi_1_operator_inequality},~\eqref{eq:update_pi_2_operator_inequality}, and~\eqref{eq:update_pi_3_operator_inequality}, and~\eqref{eq:update_relative_entropy_3} is obtained from~\eqref{eq:update_pi_1},~\eqref{eq:update_pi_2}, and~\eqref{eq:update_pi_3}.
    By taking the maximum over all possible input probability distributions, we obtain
    \begin{align}
        &\frac{1}{n}D\left(\tilde{\Phi}^{(n)}\middle\|\tilde{\Phi}_\mathrm{free}^{(n)\prime}\right)\notag\\
        &\leq\max_{p_n}\left\{\qty(R_{1,\epsilon}+\epsilon_2)+\right.\notag\\
        &\quad \qty(R_2-R_{1,\epsilon}+\epsilon_0) \sum_{x^{(n)}\in\mathcal{X}^n}p_n\qty(x^{(n)})\Tr\qty[T_{x^{(n)},1}\tilde{\Phi}^{(n)}\qty(x^{(n)})]+\notag\\
        &\quad\left.\qty(C_n'-\qty(R_2+\epsilon_0+
        \epsilon_2))\sum_{x^{(n)}\in\mathcal{X}^n}p_n\qty(x^{(n)})\Tr\qty[T_{x^{(n)},2}\tilde{\Phi}^{(n)}\qty(x^{(n)})]\right\},
    \end{align}
    \end{widetext}
    which holds for all $n$.
    By taking the limit $n\to\infty$, due to~\eqref{eq:update_T_liminf} and~\eqref{eq:update_T_limsup}, we have
    \begin{align}
        &\liminf_{n\to\infty}\frac{1}{n}D\left(\tilde{\Phi}^{(n)}\middle\|\tilde{\Phi}_\mathrm{free}^{(n)\prime}\right)\notag\\
        &\leq\qty(R_{1,\epsilon}+\epsilon_2)+\notag\\
        &\quad \qty(R_2-R_{1,\epsilon}+\epsilon_0) \qty(1-\epsilon)+\notag\\
        &\quad\qty(C_n'-\qty(R_2+\epsilon_0+
        \epsilon_2))\epsilon_1,
    \end{align}
    which holds for any choices of $\epsilon_1$ in~\eqref{eq:epsilon_1} and $\epsilon_2$ in~\eqref{eq:epsilon_2}.
    By taking the limit $\epsilon_1,\epsilon_2\to 0$, it holds that
    \begin{align}
        &\liminf_{n\to\infty}\frac{1}{n}D\left(\tilde{\Phi}^{(n)}\middle\|\tilde{\Phi}_\mathrm{free}^{(n)\prime}\right)\notag\\
        &\leq R_{1,\epsilon}+\qty(R_2-R_{1,\epsilon}+\epsilon_0) \qty(1-\epsilon)\\
        &= R_{1,\epsilon}+\qty(1-\tilde{\epsilon})\qty(R_2-R_{1,\epsilon}),
    \end{align}
    where the last line follows from the definition~\eqref{eq:epsilon_0} of $\epsilon_0$.
    Consequently, due to~\eqref{eq:update_D}, we obtain
    \begin{align}
        &\liminf_{n\to\infty}\frac{1}{n}D\left(\Phi^{(n)}\middle\|\Phi_\mathrm{free}^{(n)\prime}\right)\notag\\
        &=\liminf_{n\to\infty}\frac{1}{n}D\left(\tilde{\Phi}^{(n)}\middle\|\tilde{\Phi}_\mathrm{free}^{(n)\prime}\right)\notag\\
        &\leq R_{1,\epsilon}+\qty(R_2-R_{1,\epsilon})\qty(1-\tilde{\epsilon}),
    \end{align}
    which shows the conclusion.
\end{proof}

\paragraph{Proof of the direct part of the generalized quantum Stein's lemma for CQ channels}
\label{sec:proof_direct}

By employing the proof techniques for CQ channels established above, we now complete the direct part of the generalized quantum Stein's lemma.
This proof is enabled by extending the toolkit for analyzing the generalized Stein's lemma from static resources of quantum states to the fundamental class of dynamical resources represented by CQ channels.
As highlighted throughout, a full extension to QQ channels may be inherently challenging, but our contribution lies in making this extension feasible for CQ channels by addressing the challenge of handling multiple possible inputs.

\begin{proposition}[\label{prp:direct}The direct part of the generalized quantum Stein's lemma for CQ channels]
    For any parameter $\epsilon\in(0,1)$, 
    any family $\mathcal{F}$ of sets of free CQ channels satisfying Axioms~\ref{p:full_rank},~\ref{p:compact},~\ref{p:tensor_product}, and~\ref{p:convex},
    and any CQ channel $\Phi\in\mathcal{C}\qty(\mathcal{X}\to\mathcal{H})$,
    it holds that
    \begin{align}
        &\liminf_{n\to\infty}-\frac{1}{n}\log\qty[\beta_\epsilon\left(\Phi^{\otimes n}\middle\|\mathcal{F}\right)]\geq\lim_{n\to\infty}\frac{1}{n}D\left(\Phi^{\otimes n}\middle\|\mathcal{F}\right),
    \end{align}
    where $\beta_\epsilon$ is defined as~\eqref{eq:beta}, and $D$ is defined as~\eqref{eq:D_F}.
\end{proposition}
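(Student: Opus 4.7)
The plan is to reduce the direct part to a geometric contraction driven by Lemma~\ref{lem:update}. Write
\begin{align*}
R_{1,\epsilon} &\coloneqq \liminf_{n\to\infty}-\frac{1}{n}\log\qty[\beta_\epsilon\left(\Phi^{\otimes n}\middle\|\mathcal{F}\right)],\\
R_\infty &\coloneqq \lim_{n\to\infty}\frac{1}{n}D\left(\Phi^{\otimes n}\middle\|\mathcal{F}\right),
\end{align*}
where the latter limit exists by Proposition~\ref{prp:existence_limit}. I will argue by contradiction: assume $R_\infty > R_{1,\epsilon}$ and fix some $\tilde{\epsilon}\in(0,\epsilon)$ for the whole argument.

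First, I would pick an initial sequence of free CQ channels. By Axiom~\ref{p:compact}, for each $n$ there exists $\Phi_\mathrm{free}^{(n),0}\in\mathcal{F}\qty(\mathcal{X}^n\to\mathcal{H}^{\otimes n})$ attaining $D\left(\Phi^{\otimes n}\middle\|\Phi_\mathrm{free}^{(n),0}\right)=D\left(\Phi^{\otimes n}\middle\|\mathcal{F}\right)$. Setting $R_2^{(k)}\coloneqq\liminf_{n\to\infty}\frac{1}{n}D\left(\Phi^{\otimes n}\middle\|\Phi_\mathrm{free}^{(n),k}\right)$, this choice yields $R_2^{(0)}=R_\infty$.

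Next, I would iterate the update step. So long as $R_2^{(k)}>R_{1,\epsilon}$, Lemma~\ref{lem:update} applied to $\qty{\Phi_\mathrm{free}^{(n),k}}_n$ produces $\qty{\Phi_\mathrm{free}^{(n),k+1}}_n\subseteq\mathcal{F}$ satisfying
\begin{equation*}
R_2^{(k+1)}-R_{1,\epsilon}\leq(1-\tilde{\epsilon})\qty(R_2^{(k)}-R_{1,\epsilon}).
\end{equation*}
By induction $R_2^{(k)}-R_{1,\epsilon}\leq(1-\tilde{\epsilon})^k\qty(R_\infty-R_{1,\epsilon})$, which tends to $0$ as $k\to\infty$ since $1-\tilde{\epsilon}\in(0,1)$. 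Choosing $k$ large enough that $(1-\tilde{\epsilon})^k\qty(R_\infty-R_{1,\epsilon})<R_\infty-R_{1,\epsilon}$ forces $R_2^{(k)}<R_\infty$. If instead the hypothesis $R_2^{(k)}>R_{1,\epsilon}$ fails at some earlier step, then $R_2^{(k)}\leq R_{1,\epsilon}<R_\infty$ already, which is stronger.

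Finally, I would close the contradiction. For every sequence $\qty{\Phi_\mathrm{free}^{(n),k}}_n\subseteq\mathcal{F}$, the definition~\eqref{eq:D_F} yields $D\left(\Phi^{\otimes n}\middle\|\mathcal{F}\right)\leq D\left(\Phi^{\otimes n}\middle\|\Phi_\mathrm{free}^{(n),k}\right)$ for every $n$, so
\begin{equation*}
R_\infty=\lim_{n\to\infty}\frac{1}{n}D\left(\Phi^{\otimes n}\middle\|\mathcal{F}\right)\leq\liminf_{n\to\infty}\frac{1}{n}D\left(\Phi^{\otimes n}\middle\|\Phi_\mathrm{free}^{(n),k}\right)=R_2^{(k)},
\end{equation*}
contradicting $R_2^{(k)}<R_\infty$ from the previous paragraph. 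The main technical obstacle of this approach lies in Lemma~\ref{lem:update} itself, whose proof must integrate the pinching superchannel, the rounding lemma, and the information spectrum method for CQ channels while correctly handling the optimization over classical inputs; once that lemma is in hand, the direct part follows from the uniform geometric contraction above with no further subtleties, since the contraction factor $1-\tilde{\epsilon}$ is strictly less than one and independent of $k$.
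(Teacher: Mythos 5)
Your proof is correct and follows essentially the same route as the paper: a proof by contradiction that starts from a minimizing sequence of free CQ channels (so its normalized-divergence liminf equals $R_\infty$), applies Lemma~\ref{lem:update}, and contradicts the fact that any sequence of free channels has $\liminf_{n\to\infty}\frac{1}{n}D\left(\Phi^{\otimes n}\middle\|\Phi_\mathrm{free}^{(n)}\right)\geq R_\infty$. The only difference is your iteration of the update lemma, which is superfluous: since $1-\tilde{\epsilon}<1$, a single application (as in the paper) already drives the value strictly below $R_\infty$ and yields the contradiction.
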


\begin{proof}
    We provide proof by contradiction.
    We write
    \begin{align}
        R_{1,\epsilon}&\coloneqq\liminf_{n\to\infty}-\frac{1}{n}\log\qty[\beta_\epsilon\left(\Phi^{\otimes n}\middle\|\mathcal{F}\right)],\\
        R_2&\coloneqq\lim_{n\to\infty}\frac{1}{n}D\left(\Phi^{\otimes n}\middle\|\mathcal{F}\right).
    \end{align}
    Let $\qty{\Phi_\mathrm{free}^{(n)}\in\mathcal{C}\qty(\mathcal{X}^n\to\mathcal{H}^{\otimes n})}_{n=1,2,\ldots}$ be an optimal sequence of free CQ channels achieving the minimum in the definition~\eqref{eq:D_F} of $D\left(\Phi^{\otimes n}\middle\|\mathcal{F}\right)$
    \begin{align}
        D\left(\Phi^{\otimes n}\middle\|\Phi_\mathrm{free}^{(n)}\right)=D\left(\Phi^{\otimes n}\middle\|\mathcal{F}\right)=\min_{\Phi_\mathrm{free}\in\mathcal{F}}D\left(\Phi^{\otimes n}\middle\|\Phi_\mathrm{free}\right),
    \end{align}
    and to derive a contradiction, suppose that
    \begin{align}
        R_{1,\epsilon}<R_2.
    \end{align}
    Then, for any $\tilde{\epsilon}\in(0,\epsilon)$, Lemma~\ref{lem:update} provides an updated sequence $\qty{\Phi_\mathrm{free}^{(n)\prime}\in\mathcal{C}\qty(\mathcal{X}^n\to\mathcal{H}^{\otimes n})}_{n=1,2,\ldots}$ achieving
    \begin{align}
        \liminf_{n\to\infty}D\left(\Phi^{\otimes n}\middle\|\Phi_\mathrm{free}^{(n)\prime}\right)&\leq\qty(1-\tilde{\epsilon})R_2+\tilde{\epsilon}R_{1,\epsilon}<R_2,
    \end{align}
    which contradicts the optimality of the choice of $\qty{\Phi_\mathrm{free}^{(n)}}_n$.
\end{proof}

\section{Analysis of reversible QRT framework for CQ channel conversion}
\label{sec:reversible}

In this section, we formulate and analyze a reversible QRT framework for CQ channel conversion, based on the generalized quantum Stein's lemma for CQ channels.
In Sec.~\ref{sec:formulation_qrt}, we introduce the formulation of this framework, where the conversion rate between resource CQ channels is determined by a single quantity: the regularized relative entropy of resource.
For its analysis, Sec.~\ref{sec:relative_entropy_robustness} provides a characterization of the regularized relative entropy of resource in terms of the logarithmic generalized robustness for CQ channels.
Building on this characterization, Sec.~\ref{sec:conversion_rate} analyses the conversion rate in this framework.

\subsection{Formulation of reversible QRT framework for CQ channel conversion}
\label{sec:formulation_qrt}

In this section, we formulate a reversible QRT framework for CQ channel conversion in such a way that it has a smaller set of assumptions than the previous work~\cite{hayashi2025generalizedquantumsteinslemma}.
As discussed in Sec.~\ref{sec:QRTs_for_CQ_channels}, QRTs are specified by a family $\mathcal{O}$ of free operations in~\eqref{eq:O} as a subset of superchannels converting CQ channels to CQ channels.
The choice of $\mathcal{O}$ leads to the family $\mathcal{F}$ of free CQ channels as in~\eqref{eq:F}.
However, to make QRTs reversible, it is generally insufficient to consider $\mathcal{O}$, but we may need to consider its relaxation.
An essential feature of free operations $\mathcal{O}$ is that the free operations should not generate resource states from free states; however, in the context of asymptotic conversion, it is possible to axiomatically define a relaxed class of operations, $\tilde{O}$, which captures this requirement only in an asymptotic sense.

To introduce an appropriate relaxation $\tilde{\mathcal{O}}$, in analogy to the state case~\cite{Brand_o_2008,brandao2010reversible,Brandao2010,Brandao2015}, we define a family
\begin{widetext}
\begin{align}
    \label{eq:tilde_O}
&\tilde{\mathcal{O}}\Big(\qty(\mathcal{X}_\mathrm{in}\to\mathcal{H}_\mathrm{in})\to \qty(\mathcal{X}_\mathrm{out}\to\mathcal{H}_\mathrm{out})\Big)\notag\\
&\coloneqq
\Big\{\qty{\Theta^{(n)}\in\mathcal{C}\qty(\qty(\mathcal{X}_\mathrm{in}^{n}\to\mathcal{H}_\mathrm{in}^{\otimes n})\to\qty(\mathcal{X}_\mathrm{out}^{f(n)}\to\mathcal{H}_\mathrm{out}^{\otimes f(n)}))}_{n=1,2,\ldots}:\text{$\qty{\Theta^{(n)}}_{n}$ satisfies the axiom shown below in~\ref{p:arng}}\Big\}
\end{align}
\end{widetext}
of sequences of operations, i.e., superchannels mapping $n$-fold CQ channels to $f(n)$-fold CQ channels as in~\eqref{eq:C}, satisfying the following property, where $f(n)$ is an increasing function of $n$.
\begin{enumerate}[label={SC\arabic*}]
    \item\label{p:arng}\textbf{Asymptotically resource-non-generating property}:  
    For any sequence $\qty{\Theta^{(n)}}_{n=1,2,\ldots} \in \tilde{\mathcal{O}}\qty(\qty(\mathcal{X}_\mathrm{in}\to\mathcal{H}_\mathrm{in})\to\qty(\mathcal{X}_\mathrm{out}\to\mathcal{H}_\mathrm{out}))$ of superchannels and any sequence $\qty{\Phi_\mathrm{free}^{(n)}\in\mathcal{F}\qty(\mathcal{X}_\mathrm{in}^n\to\mathcal{H}_\mathrm{in}^{\otimes n})}_{n=1,2,\ldots}$ of free CQ channels, the sequence $\{\Theta^{(n)}\}_{n}$ is asymptotically resource-non-generating in terms of the generalized robustness, i.e.,
    \begin{align}
        \lim_{n\to\infty} R_\mathrm{G}\qty(\Theta^{(n)}\qty[\Phi_\mathrm{free}^{(n)}])=0,
    \end{align}
    where $R_\mathrm{G}$ is defined as~\eqref{eq:R_G}.
\end{enumerate}
We may write this family as $\tilde{\mathcal{O}}$ if the argument is obvious from the context.

As in the conversion rate~\eqref{eq:conversion_rate_O} under $\mathcal{O}$, the asymptotic conversion rate under the relaxed class $\tilde{\mathcal{O}}$ of operations, from a CQ channel $\Phi_\mathrm{in}$ to a CQ channel $\Phi_\mathrm{out}$, is defined as
\begin{align}
\label{eq:conversion_rate}
    &r_{\tilde{\mathcal{O}}}\qty(\Phi_\mathrm{in}\to\Phi_\mathrm{out})\coloneqq\sup\left\{r\geq 0:\right.\notag\\
    &\quad\left.\exists\qty{\Theta^{(n)}}_n\in\tilde{\mathcal{O}},~\liminf_{n\to\infty}d_\diamond\qty(\Theta^{(n)}\qty[\Phi_\mathrm{in}^{\otimes n}],\Phi_\mathrm{out}^{\otimes \lceil rn\rceil})=0\right\}.
\end{align}
The main result in this section is that, under $\tilde{\mathcal{O}}$, QRTs are shown to be reversible as in the following theorem, which is considered the second law of QRTs as originally proposed in Ref.~\cite{Brand_o_2008,brandao2010reversible,Brandao2010,Brandao2015,hayashi2025generalizedquantumsteinslemma}.

\begin{theorem}[\label{thm:second_law}The reversible QRT framework for CQ channel conversion]
For any family $\mathcal{F}$ of sets of free CQ channels satisfying Axioms~\ref{p:full_rank},~\ref{p:compact},~\ref{p:tensor_product}, and~\ref{p:convex},
any family $\tilde{\mathcal{O}}$ of sequences of superchannels satisfying Axiom~\ref{p:arng},
and any CQ channels $\Phi_\mathrm{in}\in\mathcal{C}\qty(\mathcal{X}_\mathrm{in}\to\mathcal{H}_\mathrm{in})$ and $\Phi_\mathrm{out}\in\mathcal{C}\qty(\mathcal{X}_\mathrm{out}\to\mathcal{H}_\mathrm{out})$ satisfying
\begin{align}
\label{eq:direct_second_law_assumption1}
    R_\mathrm{R}^\infty\qty(\Phi_\mathrm{in})&>0,\\
\label{eq:direct_second_law_assumption2}
    R_\mathrm{R}^\infty\qty(\Phi_\mathrm{out})&>0,
\end{align}
it holds that
\begin{align}
r_{\tilde{\mathcal{O}}}\qty(\Phi_{\mathrm{in}}\to\Phi_{\mathrm{out}})=\frac{R_\mathrm{R}^\infty\qty(\Phi_\mathrm{in})}{R_\mathrm{R}^\infty\qty(\Phi_\mathrm{out})}.
\end{align}
where $R_\mathrm{R}^\infty$ is defined as~\eqref{eq:regularized_relative_entropy_of_resource}, and $r_{\tilde{\mathcal{O}}}$ is defined as~\eqref{eq:conversion_rate}.
\end{theorem}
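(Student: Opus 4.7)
The plan is to establish $r_{\tilde{\mathcal{O}}}(\Phi_\mathrm{in}\to\Phi_\mathrm{out})=R_\mathrm{R}^\infty(\Phi_\mathrm{in})/R_\mathrm{R}^\infty(\Phi_\mathrm{out})$ by proving matching converse and direct bounds. The ingredients I would draw on are the CQ-channel generalized Stein's lemma (Theorem~\ref{thm:main}), the logarithmic-generalized-robustness characterization of $R_\mathrm{R}^\infty$ promised in Sec.~\ref{sec:relative_entropy_robustness}, the asymptotic continuity~\eqref{eq:D_asymptotic_continuity}, and the data-processing inequality under the superchannel structure~\eqref{eq:theta}. The overall architecture parallels the state-level reversible framework of Refs.~\cite{Brandao2015,hayashi2025generalizedquantumsteinslemma}, but must now be executed intrinsically at the channel level under the strictly more demanding diamond-distance error criterion, rather than being reduced to Choi states compared in trace distance.

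\textbf{Converse.} Given any $\{\Theta^{(n)}\}_n\in\tilde{\mathcal{O}}$ achieving rate $r$ (so that $d_\diamond(\Theta^{(n)}[\Phi_\mathrm{in}^{\otimes n}],\Phi_\mathrm{out}^{\otimes\lceil rn\rceil})\to 0$), I would choose for each $n$ a free CQ channel $\Phi_\mathrm{free}^{(n)}\in\mathcal{F}$ attaining $R_\mathrm{R}(\Phi_\mathrm{in}^{\otimes n})$, and use data-processing (which applies to superchannels through the decomposition~\eqref{eq:theta}) to deduce $D(\Theta^{(n)}[\Phi_\mathrm{in}^{\otimes n}]\|\Theta^{(n)}[\Phi_\mathrm{free}^{(n)}])\leq R_\mathrm{R}(\Phi_\mathrm{in}^{\otimes n})$. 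Axiom~\ref{p:arng} guarantees $R_\mathrm{G}(\Theta^{(n)}[\Phi_\mathrm{free}^{(n)}])\to 0$, and combining this with the standard bound $R_\mathrm{R}(\sigma)\leq D(\sigma\|\sigma_0)+\log(1+R_\mathrm{G}(\sigma_0))$ (whose CQ-channel analogue follows from the robustness--entropy link in Sec.~\ref{sec:relative_entropy_robustness}) yields $R_\mathrm{R}(\Theta^{(n)}[\Phi_\mathrm{in}^{\otimes n}])\leq R_\mathrm{R}(\Phi_\mathrm{in}^{\otimes n})+o(n)$. Dividing by $n$, taking $n\to\infty$, and transferring the bound from $\Theta^{(n)}[\Phi_\mathrm{in}^{\otimes n}]$ to $\Phi_\mathrm{out}^{\otimes\lceil rn\rceil}$ via the asymptotic continuity~\eqref{eq:D_asymptotic_continuity} produces $r\cdot R_\mathrm{R}^\infty(\Phi_\mathrm{out})\leq R_\mathrm{R}^\infty(\Phi_\mathrm{in})$.

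\textbf{Direct.} Fix $r<R_\mathrm{R}^\infty(\Phi_\mathrm{in})/R_\mathrm{R}^\infty(\Phi_\mathrm{out})$, set $m=\lceil rn\rceil$, and pick $\delta>0$ small enough that $r(R_\mathrm{R}^\infty(\Phi_\mathrm{out})+\delta)<R_\mathrm{R}^\infty(\Phi_\mathrm{in})-\delta$. Theorem~\ref{thm:main} supplies, for any $\epsilon\in(0,1)$, an input distribution $p_n$ and POVM $\{T_{x^{(n)}},\mathds{1}-T_{x^{(n)}}\}$ with type-I error at most $\epsilon$ against $\Phi_\mathrm{in}^{\otimes n}$ and worst-case type-II error at most $\exp(-n(R_\mathrm{R}^\infty(\Phi_\mathrm{in})-\delta))$ against $\mathcal{F}$. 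The robustness characterization of Sec.~\ref{sec:relative_entropy_robustness} provides a decomposition $\Phi_\mathrm{out}^{\otimes m}=(1+s_m)\Phi_{\mathrm{free},m}-s_m\Phi'_m$ with $\Phi_{\mathrm{free},m}\in\mathcal{F}$, $\Phi'_m\in\mathcal{C}$, and $\log(1+s_m)\leq m(R_\mathrm{R}^\infty(\Phi_\mathrm{out})+\delta)$. I would then define $\Theta^{(n)}$ so that it runs the hypothesis test on its input channel (sampling $x^{(n)}\sim p_n$ and measuring the output), prepares $\Phi_\mathrm{out}^{\otimes m}$ via the above decomposition on the accept branch and a fixed free channel on the reject branch, and classically routes the downstream CQ inputs into the preparation stage. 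The type-I bound yields $d_\diamond(\Theta^{(n)}[\Phi_\mathrm{in}^{\otimes n}],\Phi_\mathrm{out}^{\otimes m})\to 0$, while on every free input the resource-generating contribution is suppressed by $s_m\cdot\exp(-n(R_\mathrm{R}^\infty(\Phi_\mathrm{in})-\delta))\to 0$ by the choice of $\delta$, implying $R_\mathrm{G}(\Theta^{(n)}[\Phi_\mathrm{free}^{(n)}])\to 0$ and hence $\{\Theta^{(n)}\}_n\in\tilde{\mathcal{O}}$. Sending $\delta\to 0$ closes the rate gap.

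\textbf{Main obstacle.} The principal difficulty lies in the direct part: the constructed superchannel must approximate $\Phi_\mathrm{out}^{\otimes m}$ in \emph{diamond} distance, i.e., uniformly over all downstream inputs to the output channel, while simultaneously remaining asymptotically resource-non-generating. A naive test-and-prepare construction controls the error only on the particular input distribution used inside the hypothesis test, so converting this into a per-input diamond-distance guarantee requires designing the superchannel so that the downstream CQ inputs are classically coupled into the preparation stage; this is precisely where the classical nature of CQ inputs becomes decisive, and is the step most likely to fail for general QQ channels. A secondary subtlety is checking that the resulting bound $R_\mathrm{G}(\Theta^{(n)}[\Phi_\mathrm{free}^{(n)}])\to 0$ holds \emph{uniformly} over all free sequences $\{\Phi_\mathrm{free}^{(n)}\}_n$, as required by Axiom~\ref{p:arng}; this reduces to the uniform worst-case type-II error guarantee provided by Theorem~\ref{thm:main}, but must be combined carefully with the bound on $s_m$ so that the product vanishes regardless of the chosen free sequence.
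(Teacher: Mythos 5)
Your converse is sound and is essentially the paper's argument in a lightly different packaging: the paper routes it through Lemma~\ref{lem:monotonicity_regularized_relative_entropy_of_resource} (built on Proposition~\ref{prp:robustness_characterization} and Lemma~\ref{lem:upper_bound_log_robustness}), whereas you combine data processing with the bound $R_\mathrm{R}(\sigma)\leq D(\sigma\|\sigma_0)+\log(1+R_\mathrm{G}(\sigma_0))$; the latter is available for CQ channels via Lemma~\ref{lem:generalized_robustness_operator_inequalities} together with Lemma~\ref{lem:bound_channel_divergence_operator_inequality} (with a minor mixing-with-$\Phi_\mathrm{full}$ step to handle support mismatches), and both routes end with the asymptotic continuity~\eqref{eq:D_asymptotic_continuity}, exactly as in the paper.

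The direct part, however, has a genuine gap. You decompose the \emph{exact} tensor power, $\Phi_\mathrm{out}^{\otimes m}=(1+s_m)\Phi_{\mathrm{free},m}-s_m\Phi'_m$, and assert $\log(1+s_m)\leq m\,(R_\mathrm{R}^\infty(\Phi_\mathrm{out})+\delta)$. Proposition~\ref{prp:robustness_characterization} does not give this: it characterizes $R_\mathrm{R}^\infty(\Phi_\mathrm{out})$ as a minimum of the log-robustness rate over sequences $\tilde{\Phi}^{(m)}$ that only approximate $\Phi_\mathrm{out}^{\otimes m}$ in diamond distance. For the exact IID powers, $\frac{1}{m}\log(1+R_\mathrm{G}(\Phi_\mathrm{out}^{\otimes m}))$ converges to a regularized max-relative-entropy-type (log-robustness) quantity, which is in general strictly larger than $R_\mathrm{R}^\infty(\Phi_\mathrm{out})$. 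With your choice of $s_m$, the key product $s_m\cdot\exp[-n(R_\mathrm{R}^\infty(\Phi_\mathrm{in})-\delta)]$ need not vanish, so Axiom~\ref{p:arng} can fail, and the construction only certifies the smaller rate obtained by replacing $R_\mathrm{R}^\infty(\Phi_\mathrm{out})$ in the denominator with the regularized log-robustness. The missing idea is precisely the smoothing step of Lemma~\ref{lem:lower_bound_log_robustness} (built from the pinching superchannel, the type-counting Lemma~\ref{lem:number_of_distinct_eigenvalues_CQ_channel}, and the information-spectrum projection): one prepares, on the accept branch, a channel $\Phi_\mathrm{out}^{(rn)}$ with $d_\diamond(\Phi_\mathrm{out}^{(rn)},\Phi_\mathrm{out}^{\otimes m})\to 0$ whose robustness does scale as $e^{mR_\mathrm{R}^\infty(\Phi_\mathrm{out})}$ up to $o(m)$ corrections, and on the reject branch its robustness partner $\Phi_\mathrm{out}^{(rn)\prime}$ from~\eqref{eq:direct_second_law_phi_out_prime} (not an arbitrary fixed free channel), so that the output on any free input is a mixture that becomes free after adding an $o(1)$ amount of $\Phi_\mathrm{out}^{(rn)}$; the diamond error is then controlled by convexity~\eqref{eq:d_diamond_convexity} plus $d_\diamond(\Phi_\mathrm{out}^{(rn)},\Phi_\mathrm{out}^{\otimes m})\to 0$. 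A secondary, fixable point: with a fixed type-I level $\epsilon$ your construction only yields diamond error $\lesssim\epsilon$, not $\to 0$; since the Stein exponent in Theorem~\ref{thm:main} is independent of $\epsilon$, you need (as the paper does) a sequence $\epsilon_n\to 0$ obtained by a diagonal argument to meet the definition~\eqref{eq:conversion_rate}.
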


In the remainder of this section, we will prove this theorem.
In Sec.~\ref{sec:relative_entropy_robustness}, we will provide a characterization of the regularized relative entropy of resource in logarithmic generalized robustness for CQ channels.
Then, in Sec.~\ref{sec:conversion_rate}, using this characterization, we will provide techniques to prove the theorem.
Together with these results, our proof is summarized as follows.

\begin{proof}[Proof of Theorem~\ref{thm:second_law}]
    In Sec.~\ref{sec:converse_second_law}, we prove Proposition~\ref{prp:converse_second_law}, which shows
    \begin{align}
    r_{\tilde{\mathcal{O}}}\qty(\Phi_{\mathrm{in}}\to\Phi_{\mathrm{out}}) \leq\frac{R_\mathrm{R}^\infty\qty(\Phi_\mathrm{in})}{R_\mathrm{R}^\infty\qty(\Phi_\mathrm{out})}.
    \end{align}
    In Sec.~\ref{sec:direct_second_law}, we prove Proposition~\ref{prp:direct_second_law}, which shows
    \begin{align}
    r_{\tilde{\mathcal{O}}}\qty(\Phi_{\mathrm{in}}\to\Phi_{\mathrm{out}})\geq\frac{R_\mathrm{R}^\infty\qty(\Phi_\mathrm{in})}{R_\mathrm{R}^\infty\qty(\Phi_\mathrm{out})}.
    \end{align}
    Altogether, we obtain the conclusion.
\end{proof}

We remark that our essential contribution beyond the previous work~\cite{hayashi2025generalizedquantumsteinslemma} lies in eliminating the need to impose an additional ``asymptotic continuity'' requirement when introducing the relaxed class $\tilde{\mathcal{O}}$ of operations, thereby broadening the applicability of the framework.  
In particular, Ref.~\cite{hayashi2025generalizedquantumsteinslemma} also sought to construct a reversible QRT framework for CQ channel conversion by introducing another relaxation $\tilde{\mathcal{O}}'$, which requires Axiom~\ref{p:arng} but in addition imposes an asymptotic continuity condition: for any two sequences $\qty{\Phi_1^{(n)}}_n$ and $\qty{\Phi_2^{(n)}}_n$ of CQ channels satisfying
\begin{align}
    \lim_{n\to\infty}\frac{1}{X_\mathrm{in}^n}\sum_{x_\mathrm{in}\in\mathcal{X}_\mathrm{in}^n}
    d_{\mathrm{T}}\left(\Phi_1^{(n)}(x_\mathrm{in}),\Phi_2^{(n)}(x_\mathrm{in})\right)=0,
\end{align}
any sequence $\qty{\Theta^{(n)}}_n\in\tilde{\mathcal{O}}'$ is required to satisfy
\begin{widetext}
\begin{align}
    &\lim_{n\to\infty}\frac{1}{X_\mathrm{out}^n}\sum_{x_\mathrm{out}\in\mathcal{X}_\mathrm{out}^n}
    d_{\mathrm{T}}\left(\qty(\Theta^{(n)}\qty[\Phi_1^{(n)}])(x_\mathrm{out}),\qty(\Theta^{(n)}\qty[\Phi_2^{(n)}])(x_\mathrm{out})\right)=0,
\end{align}
\end{widetext}
where $X_\mathrm{in}$ and $X_\mathrm{out}$ are given by~\eqref{eq:dim}, and $d_{\mathrm{T}}$ denotes the trace distance defined in~\eqref{eq:d_trace}.  
However, this condition is restrictive since a general superchannel does not necessarily satisfy asymptotic continuity.  
By contrast, in our definition of $\tilde{\mathcal{O}}$, we dispense with this requirement and instead consider only the asymptotically resource-non-generating property in Axiom~\ref{p:arng}, fully in line with the state-based frameworks of Refs.~\cite{Brand_o_2008,brandao2010reversible,Brandao2010,Brandao2015}.  
Consequently, our framework encompasses a strictly more general class of operations, which subsumes the operations in the previous reversible framework~\cite{hayashi2025generalizedquantumsteinslemma} as a special case, i.e., $\tilde{\mathcal{O}}\supset\tilde{\mathcal{O}}'$.

Despite this stronger class of operations in our setting, whether the conversion rate $r_{\tilde{O}}$ in~\eqref{eq:conversion_rate} in our framework is larger or smaller than that in Ref.~\cite{hayashi2025generalizedquantumsteinslemma} is also not a priori obvious since the definition of the conversion rate is also different.
In Ref.~\cite{hayashi2025generalizedquantumsteinslemma}, the conversion rate is defined as
\begin{widetext}
\begin{align}
\label{eq:conversion_rate_previous}
    &r_{\tilde{\mathcal{O}}}'\qty(\Phi_\mathrm{in}\to\Phi_\mathrm{out})\coloneqq\sup\left\{r\geq 0:\exists\qty{\Theta^{(n)}}_n\in\tilde{\mathcal{O}},\liminf_{n\to\infty}\frac{1}{X_\mathrm{out}^{\lceil rn\rceil}}\sum_{x_\mathrm{out}\in\mathcal{X}_\mathrm{out}^{\lceil rn\rceil}}d_{\mathrm{T}}\qty(\qty(\Theta^{(n)}\qty[\Phi_\mathrm{in}^{\otimes n}])\qty(x_\mathrm{out}),\Phi_\mathrm{out}^{\otimes \lceil rn\rceil}\qty(x_\mathrm{out}))=0\right\}.
\end{align}
\end{widetext}
Compared to our definition~\eqref{eq:conversion_rate} of the conversion rate $r_{\tilde{\mathcal{O}}}$ in terms of the diamond distance $d_\diamond$, the definition~\eqref{eq:conversion_rate_previous} of $r_{\tilde{\mathcal{O}}}'$ in Ref.~\cite{hayashi2025generalizedquantumsteinslemma} is defined for a particular Choi-state input rather than the worst-case input in the definition~\eqref{eq:d_diamond} of $d_\diamond$, making the asymptotic conversion task easier.
As a whole, our setting uses a stronger class of operations to achieve a harder approximation in the asymptotic CQ channel conversion.
In place of $R_\mathrm{R}^\infty$ in Theorem~\ref{thm:second_law}, Ref.~\cite{hayashi2025generalizedquantumsteinslemma} used another function
\begin{align}
\label{eq:R_R_infty_prime}
    &R_\mathrm{R}^{\infty\prime}\qty(\Phi)\notag\\
    &\coloneqq\lim_{n\to\infty}\frac{1}{n}\min_{\Phi_\mathrm{free}^{(n)}\in\mathcal{F}}\frac{1}{X^n}\sum_{x\in\mathcal{X}^n}D\left(\Phi^{\otimes n}\qty(x)\middle\|\Phi_\mathrm{free}^{(n)}\qty(x)\right),
\end{align}
to show that
\begin{align}
    r_{\tilde{\mathcal{O}}}'\qty(\Phi_\mathrm{in}\to\Phi_\mathrm{out})=\frac{R_\mathrm{R}^{\infty\prime}\qty(\Phi_\mathrm{in})}{R_\mathrm{R}^{\infty\prime}\qty(\Phi_\mathrm{out})}.
\end{align}
By contrast, Theorem~\ref{thm:second_law} shows that $R_\mathrm{R}^\infty$, defined as~\eqref{eq:regularized_relative_entropy_of_resource} via the channel divergence, characterizes the asymptotic conversion rate in our reversible QRT framework for CQ channel conversion.

\subsection{Characterization of regularized relative entropy of resource in logarithmic generalized robustness for CQ channels}
\label{sec:relative_entropy_robustness}

In this section, we characterize the regularized relative entropy of resource $R_\mathrm{R}^\infty$, defined as~\eqref{eq:regularized_relative_entropy_of_resource}, in terms of the logarithm of the generalized robustness $R_\mathrm{G}$ defined as~\eqref{eq:R_G} for CQ channels.
In the static QRT setting, an analogous characterization of $R_\mathrm{R}^\infty$ for states in terms of the logarithm of $R_\mathrm{G}$ for states was established in Refs.~\cite{Brandao2010,hayashi2025generalizedquantumsteinslemma}.
Our result extends this to CQ channels, in the sense that when the input dimension is one, our statement reduces to the known state case~\cite{Brandao2010,hayashi2025generalizedquantumsteinslemma}.
We note that Ref.~\cite{hayashi2025generalizedquantumsteinslemma} also generalized the state result to CQ channels by characterizing $R_\mathrm{R}^{\infty\prime}$ in~\eqref{eq:R_R_infty_prime}, which is defined using Choi-state inputs.
By contrast, our analysis characterizes $R_\mathrm{R}^\infty$ in~\eqref{eq:regularized_relative_entropy_of_resource}, where the channel divergence is taken to capture the worst-case input.

To this end, we first introduce the method of types for analyzing CQ channels, with the goal of applying it to establish an operator inequality based on the pinching inequality~\eqref{eq:pinching_inequality}.
For a quantum state $\rho$ of a $D$-dimensional system $\mathcal{H}$, written in its spectral decomposition as $\rho=\sum_{x=0}^{D-1}p(x)\ket{x}\bra{x}\in\mathcal{D}(\mathcal{H})$, there are at most $D$ distinct eigenvalues.
However, the $n$-fold tensor product $\rho^{\otimes n}\in\mathcal{D}(\mathcal{H}^{\otimes n})$ on the $D^n$-dimensional space $\mathcal{H}^{\otimes n}$ exhibits large degeneracies in its eigenvalues, which can be systematically understood via the method of types~\cite{hayashi2016quantum}.
Specifically, given a sequence $x^{(n)}=(x_1,\ldots,x_n)\in\mathcal{X}^n$, let $n(x)$ denote the number of occurrences of $x\in\mathcal{X}$ in $x^{(n)}$. The type $t_{x^{(n)}}$ of $x^{(n)}$ is the probability distribution $t_{x^{(n)}}(x)\coloneqq n(x)/n$ for all $x\in\mathcal{X}$. The set of all types of sequences of length $n$ is denoted by $\mathcal{P}^{(n)}$. For each type $t\in\mathcal{P}^{(n)}$, the type class $\mathcal{T}_t^{(n)}$ consists of all sequences $x^{(n)}$ of length $n$ having type $t$, i.e., $\mathcal{T}_t^{(n)}\coloneqq\qty{x^{(n)}\in\mathcal{X}^n:t_{x^{(n)}}=t}$.
In the spectral decomposition of $\rho^{\otimes n}$, all basis vectors $\ket{x^{(n)}}$ belonging to the same type $t$ correspond to the same eigenvalue $p(t)\coloneqq p(x_1)\cdots p(x_n)$.
For each type $t$, define the projection operator $\Pi_t\coloneqq\sum_{x^{(n)}\in\mathcal{T}_t^{(n)}}\ket{x^{(n)}}\bra{x^{(n)}}$, which projects onto the subspace spanned by eigenvectors associated with type $t$. Then,
\begin{align}
\rho^{\otimes n}=\sum_{t\in\mathcal{P}^{(n)}}p(t)\Pi_t,
\end{align}
so that $\rho^{\otimes n}$ has at most $\lvert\mathcal{P}^{(n)}\rvert$ distinct eigenvalues. Since $\dim(\mathcal{H})=|\mathcal{X}|=D$, the number $J_n$ of distinct eigenvalues is bounded by~\cite[Theorem~2.5]{hayashi2016quantum}
\begin{align}
\label{eq:number_of_distinct_eigenvalues_state}
J_n\leq \qty|\mathcal{P}^{(n)}|=\binom{n+D-1}{D-1}\leq (n+1)^{D-1}.
\end{align}
This bound gives a tight control on the number of distinct eigenvalues of $\rho^{\otimes n}$.
In the case of a CQ channel $\Phi\in\mathcal{C}(\mathcal{X}\to\mathcal{H})$, however, the output states $\Phi^{\otimes n}(x^{(n)})$ of the $n$-fold channel $\Phi^{\otimes n}$ need not be IID states of the form $\rho^{\otimes n}$, since they depend on the particular input sequence $x^{(n)}$. We therefore generalize the bound~\eqref{eq:number_of_distinct_eigenvalues_state} to the case of $n$-fold copies of a CQ channel as follows.

\begin{lemma}[\label{lem:number_of_distinct_eigenvalues_CQ_channel}The number of distinct eigenvalues of multiple copies of CQ channels]
    For any positive integer $n$, any CQ channel $\Phi\in\mathcal{C}\qty(\mathcal{X}\to\mathcal{H})$ with finite $X=|\mathcal{X}|$ and $D=\dim\qty(\mathcal{H})$, and any classical input $x^{(n)}\in\mathcal{X}^n$,
    the number of distinct eigenvalues of $\Phi^{\otimes n}\qty(x^{(n)})$ is at most
    \begin{align}
        \qty(n+1)^{X+D-1}.
    \end{align}
\end{lemma}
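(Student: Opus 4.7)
The plan is to reduce this statement to the state-version bound~\eqref{eq:number_of_distinct_eigenvalues_state} by exploiting the tensor-product structure of $\Phi^{\otimes n}(x^{(n)})$ induced by the classical input. First I would observe that permuting tensor factors is a unitary operation preserving the spectrum, so for a fixed input $x^{(n)}$, letting $n(x)$ denote the number of occurrences of $x\in\mathcal{X}$ in $x^{(n)}$ (so that $\sum_{x\in\mathcal{X}} n(x)=n$), the operators
\begin{align*}
\Phi^{\otimes n}(x^{(n)})=\Phi(x_1)\otimes\cdots\otimes\Phi(x_n)\cong\bigotimes_{x\in\mathcal{X}}\Phi(x)^{\otimes n(x)}
\end{align*}
have the same set of distinct eigenvalues.

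Next I would apply~\eqref{eq:number_of_distinct_eigenvalues_state} factor-by-factor: since each $\Phi(x)^{\otimes n(x)}$ is an IID tensor power of a density operator on the $D$-dimensional Hilbert space $\mathcal{H}$, it has at most $(n(x)+1)^{D-1}$ distinct eigenvalues. Because the eigenvalues of a tensor product are products of eigenvalues of the individual factors, the number of distinct eigenvalues of $\bigotimes_{x\in\mathcal{X}}\Phi(x)^{\otimes n(x)}$, and hence of $\Phi^{\otimes n}(x^{(n)})$, is at most $\prod_{x\in\mathcal{X}}(n(x)+1)^{D-1}$.

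Finally I would bound this product by $(n+1)^{X+D-1}$ using the constraints $n(x)\in\{0,1,\ldots,n\}$ and $\sum_{x\in\mathcal{X}} n(x)=n$, combining the type-counting estimate $|\mathcal{P}^{(n)}|\leq(n+1)^{X-1}$ for input sequences of length $n$ over the alphabet $\mathcal{X}$ of size $X$ with the per-input spectral bound from~\eqref{eq:number_of_distinct_eigenvalues_state}. This will yield the claimed exponent $X+D-1$ and close the argument.

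The main obstacle will be this final combinatorial step: a naive pointwise application of $(n(x)+1)\leq n+1$ in each of the $X$ factors gives only $(n+1)^{X(D-1)}$, which is in general weaker than the claimed $(n+1)^{X+D-1}$. Sharpening to the claimed exponent requires separating the input-alphabet contribution (of combinatorial size $X$) from the output Hilbert-space contribution (of dimension $D$), rather than multiplying them together, so that the input structure is counted only once through the type bound and the output dimension enters only through a single factor of $(n+1)^{D-1}$. This bookkeeping, which crucially uses the fact that the inputs to a CQ channel are classical and therefore can be grouped by type, is precisely the feature that distinguishes the CQ-channel setting from the fully quantum QQ setting and that makes the extension of~\eqref{eq:number_of_distinct_eigenvalues_state} from states to CQ channels nontrivial.
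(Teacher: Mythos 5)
Your first two steps are correct and already yield a sound polynomial bound: up to a permutation of tensor factors, $\Phi^{\otimes n}(x^{(n)})\cong\bigotimes_{x\in\mathcal{X}}\Phi(x)^{\otimes n(x)}$, each factor $\Phi(x)^{\otimes n(x)}$ has at most $\binom{n(x)+D-1}{D-1}\le(n(x)+1)^{D-1}$ distinct eigenvalues, and the number of distinct eigenvalues of a tensor product is at most the product of the individual counts, giving $\prod_{x\in\mathcal{X}}\binom{n(x)+D-1}{D-1}\le(n+1)^{X(D-1)}$. The genuine gap is the final step, which you leave as ``bookkeeping'': the improvement from $(n+1)^{X(D-1)}$ to $(n+1)^{X+D-1}$ is not merely missing, it cannot be carried out. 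For a fixed input there is no input-type counting left to exploit (the type of $x^{(n)}$ is already fixed), and the quantity you would need to bound, $\prod_{x}\binom{n(x)+D-1}{D-1}$, genuinely exceeds $(n+1)^{X+D-1}$ in general. Concretely, take $X=D=3$, a balanced input with $n(x)=n/3$, and spectra whose eigenvalue logarithms $\log\lambda_{x,j}$ satisfy no nontrivial integer linear relation (a generic choice); then distinct exponent patterns give distinct products, so $\Phi^{\otimes n}(x^{(n)})$ has exactly $\binom{n/3+2}{2}^{3}$ distinct eigenvalues, which grows like $n^{6}$ and eventually exceeds $(n+1)^{5}=(n+1)^{X+D-1}$. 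In general the count has degree $X(D-1)$ in $n$, which is larger than $X+D-1$ whenever $(X-1)(D-1)>X$, so no regrouping of the input-alphabet and output-dimension contributions can rescue the stated exponent along your route.

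Your route also differs from the paper's, which does not factorize over input letters at all: it writes $\ket{x^{(n)}}\bra{x^{(n)}}\otimes\Phi^{\otimes n}(x^{(n)})=J(\Phi^{\otimes n})(\ket{x^{(n)}}\bra{x^{(n)}}\otimes\mathds{1})$ with $J(\Phi^{\otimes n})\cong J(\Phi)^{\otimes n}$ and applies the type bound~\eqref{eq:number_of_distinct_eigenvalues_state} to the IID power of the Choi operator. Executed with the correct dimension, that argument gives the exponent $XD-1$, since $J(\Phi)$ acts on $\mathcal{H}_{\mathcal{X}}\otimes\mathcal{H}$, whose dimension is $XD$ and which therefore carries at most $XD$ distinct eigenvalues of $J(\Phi)$; the exponent $X+D-1$ in the written proof comes from treating this space as $(X+D)$-dimensional. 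So your factor-by-factor argument, stopped at $(n+1)^{X(D-1)}$, is in fact the sharper of the two (since $X(D-1)\le XD-1$), and either exponent is entirely sufficient for the only place the lemma is used, namely bounding $J_n$ in Lemma~\ref{lem:lower_bound_log_robustness}, where all that is needed is $\frac{1}{n}\log J_n\to 0$. The constructive fix is therefore to prove and use the corrected exponent $X(D-1)$ (which your first two steps already establish) rather than to pursue the stated exponent $X+D-1$.
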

\begin{proof}
    The Choi operator $J\qty(\Phi^{\otimes n})$ of $\Phi^{n}$ is given by
    \begin{align}
        &J\qty(\Phi^{\otimes n})\notag\\
        &=\sum_{x^{(n)}\in\mathcal{X}^{n}}\ket{x^{(n)}}\bra{x^{(n)}}\otimes\Phi^{\otimes n}\qty(x^{(n)})\\
        &\cong\qty(\sum_{x\in\mathcal{X}}\ket{x}\bra{x}\otimes\Phi^{\otimes n}\qty(x))^{\otimes n}\\
        \label{eq:J_n_iid}
        &=\qty(J\qty(\Phi))^{\otimes n},
    \end{align}
    where $\cong$ means that the equality holds up to permutation of subsystems in the tensor product.
    The spectral decomposition of $\Phi\qty(x)$ for each $x\in\mathcal{X}$ is denoted by
    \begin{align}
        \Phi\qty(x)=\sum_{j=0}^{D-1}\lambda_{x,j}\ket{x,j}\bra{x,j},
    \end{align}
    where $\qty{\lambda_{x,j}}_j$ is the (multi)set of $D$ eigenvalues of $\Phi\qty(x)$ on $\mathcal{H}$ with $D=\dim\qty(\mathcal{H})$, which may include degenerate ones, and $\qty{\ket{x,j}}_{j}$ is the set of eigenvectors.
    Using this notation, the set of eigenvectors of $J\qty(\Phi^{\otimes n})$ is given by
    \begin{align}
    \label{eq:eigenvectors_J_Phi_free_M_l}
        &\left\{\ket{x^{(n)}}\otimes\ket{x_1,j_1}\otimes\cdots\otimes\ket{x_n,j_n}:\right.\notag\\
        &\left.x^{(n)}=\qty(x_1,\ldots,x_n)\in\mathcal{X}^n,\right.\notag\\
        &\left.j_1,\ldots,j_n\in\{0,\ldots,D-1\}\right\}.
    \end{align}
    The spectral decomposition of $J\qty(\Phi^{\otimes n})$ is written as
    \begin{align}
    \label{eq:J_phi_otimes_n}
        J\qty(\Phi^{\otimes n})=\sum_{j=0}^{J_n-1}\lambda_j\Pi_j,
    \end{align}
    where $J_n$ is the number of distinct eigenvalues of $J\qty(\Phi^{\otimes n})$, $\qty{\lambda_j}_j$ denotes the set of distinct eigenvalues, and $\qty{\Pi_j}_j$ denotes the set of projection operators onto the corresponding eigenspaces associated with each eigenvalue.
    With this $J\qty(\Phi^{\otimes n})$, for every input $x^{(n)}\in\mathcal{X}^{n}$, we represent $\Phi^{\otimes n}\qty(x^{(n)})$ as
    \begin{align}
        &\ket{x^{(n)}}\bra{x^{(n)}}\otimes\Phi^{\otimes n}\qty(x^{(n)})\notag\\
        &=J\qty(\Phi^{\otimes n})\qty(\ket{x^{(n)}}\bra{x^{(n)}}\otimes\mathds{1})\\
        \label{eq:Phi_free_representation}
        &=\sum_{j=0}^{J_n-1}\lambda_j\Pi_j\qty(\ket{x^{(n)}}\bra{x^{(n)}}\otimes\mathds{1}),
    \end{align}
    which has the same number of distinct eigenvalues of $\Phi^{\otimes n}\qty(x^{(n)})$ since $\ket{x^{(n)}}\bra{x^{(n)}}$ in the first line has rank one.
    Since $J\qty(\Phi)$ acts on an $(X+D)$-dimensional Hilbert space, due to~\eqref{eq:number_of_distinct_eigenvalues_state},~\eqref{eq:J_n_iid}, and~\eqref{eq:J_phi_otimes_n},
    \begin{align}
    \label{eq:J_l_bound}
        J_n\leq\qty(n+1)^{X+D-1}.
    \end{align}
    Since the support of each $\Pi_j$ in~\eqref{eq:Phi_free_representation} is spanned by a subset of the eigenvectors given in~\eqref{eq:eigenvectors_J_Phi_free_M_l},
    the number of distinct eigenvalues of $\Phi^{\otimes n}\qty(x^{(n)})$ is also upper-bounded by $J_n$ in~\eqref{eq:J_l_bound}.
\end{proof}

To establish the relation between the relative entropy of resource and the generalized robustness for CQ channels, we make use of the following characterization of the generalized robustness in terms of operator inequalities.
This extends the characterization for states originally established in Ref.~\cite{doi:10.1142/S0219749909005298} to CQ channels.

\begin{lemma}[\label{lem:generalized_robustness_operator_inequalities}Characterization of generalized robustness by operator inequalities for CQ channels]
    For any family $\mathcal{F}$ of sets of free CQ channels satisfying Axioms~\ref{p:full_rank} and~\ref{p:compact},
    and any CQ channel $\Phi\in\mathcal{C}\qty(\mathcal{X}\to\mathcal{H})$,
    it holds that
    \begin{align}
    \label{eq:generalized_robustness_operator_inequalities}
        R_\mathrm{G}\qty(\Phi)&=\min\left\{s\geq 0:\right.\notag\\
        &\quad\left.\exists\Phi_\mathrm{free}\in\mathcal{F}, \forall x\in\mathcal{X}, \Phi(x)\leq \qty(1+s)\Phi_\mathrm{free}(x)\right\},
    \end{align}
    where $R_\mathrm{G}$ is defined as~\eqref{eq:R_G}.
\end{lemma}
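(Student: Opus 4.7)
The plan is to establish equality by proving the two inequalities separately. Denote the right-hand side of~\eqref{eq:generalized_robustness_operator_inequalities} by $\tilde{R}_\mathrm{G}(\Phi)$. Both infima are attained as minima: for the left-hand side, Axiom~\ref{p:compact} together with the compactness of $\mathcal{C}(\mathcal{X}\to\mathcal{H})$ ensures a minimizing pair $(s,\Phi')$ exists; for the right-hand side, the same compactness guarantees the existence of a minimizing pair $(s,\Phi_\mathrm{free})$. Axiom~\ref{p:full_rank} provides a full-rank $\Phi_\mathrm{full}\in\mathcal{F}$, so both sets over which we minimize are nonempty for sufficiently large $s$.

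For $R_\mathrm{G}(\Phi)\geq\tilde{R}_\mathrm{G}(\Phi)$, I would take an optimizer $(s,\Phi')$ for $R_\mathrm{G}$, so that $\Phi_\mathrm{free}\coloneqq\frac{\Phi+s\Phi'}{1+s}\in\mathcal{F}$. Rewriting this identity yields $(1+s)\Phi_\mathrm{free}(x)=\Phi(x)+s\Phi'(x)$ for every $x\in\mathcal{X}$. Since $\Phi'(x)\in\mathcal{D}(\mathcal{H})$ is positive semidefinite, this immediately gives the operator inequality $\Phi(x)\leq (1+s)\Phi_\mathrm{free}(x)$ for all $x$, so $\tilde{R}_\mathrm{G}(\Phi)\leq s=R_\mathrm{G}(\Phi)$.

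For $R_\mathrm{G}(\Phi)\leq\tilde{R}_\mathrm{G}(\Phi)$, I would take an optimizer $(s,\Phi_\mathrm{free})$ for $\tilde{R}_\mathrm{G}$. If $s>0$, define, for every $x\in\mathcal{X}$,
\begin{align}
\Phi'(x)\coloneqq\frac{(1+s)\Phi_\mathrm{free}(x)-\Phi(x)}{s}.
\end{align}
The operator inequality in the definition of $\tilde{R}_\mathrm{G}$ ensures $\Phi'(x)\geq 0$, and a direct trace computation gives $\Tr[\Phi'(x)]=\frac{(1+s)-1}{s}=1$, so $\Phi'(x)\in\mathcal{D}(\mathcal{H})$ and hence $\Phi'\in\mathcal{C}(\mathcal{X}\to\mathcal{H})$. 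Rearranging the definition of $\Phi'$ yields $\frac{\Phi+s\Phi'}{1+s}=\Phi_\mathrm{free}\in\mathcal{F}$, so $R_\mathrm{G}(\Phi)\leq s=\tilde{R}_\mathrm{G}(\Phi)$. The boundary case $s=0$ forces $\Phi(x)\leq\Phi_\mathrm{free}(x)$ for all $x$, which combined with the trace condition gives $\Phi(x)=\Phi_\mathrm{free}(x)$ for every $x$, hence $\Phi\in\mathcal{F}$ and $R_\mathrm{G}(\Phi)=0$.

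This argument is largely an adaptation of the state-level characterization of Ref.~\cite{doi:10.1142/S0219749909005298} to the CQ channel setting. The only subtlety worth flagging is the pointwise-in-$x$ verification: the operator inequality is required for \emph{every} input $x\in\mathcal{X}$, and correspondingly $\Phi'$ must be defined pointwise and checked to be a valid density operator for each $x$. Because the inputs are classical, this pointwise construction is straightforward and yields a bona fide CQ channel $\Phi'\in\mathcal{C}(\mathcal{X}\to\mathcal{H})$, so there is no genuine obstacle beyond this bookkeeping.
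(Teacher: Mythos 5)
Your proof is correct and follows essentially the same route as the paper's: both directions are obtained by the same algebraic rearrangement between the convex-mixture form of $R_\mathrm{G}$ and the pointwise operator inequality, with $\Phi'$ defined input-by-input exactly as in the paper's argument. Your explicit treatment of the boundary case $s=0$ is a small extra care the paper leaves implicit, but it does not constitute a different approach.
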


\begin{proof}
    We write the right-hand side of~\eqref{eq:generalized_robustness_operator_inequalities} as
    \begin{align}
    \label{eq:R_G_prime}
        R_\mathrm{G}'\qty(\Phi)&\coloneqq\min\left\{s\geq 0:\right.\notag\\
        &\quad\left.\exists\Phi_\mathrm{free}\in\mathcal{F}, \forall x\in\mathcal{X}, \Phi(x)\leq \qty(1+s)\Phi_\mathrm{free}(x)\right\}.
    \end{align}
    Axioms~\ref{p:full_rank} and~\ref{p:compact} guarantee the finiteness and the existence of minima in the definition~\eqref{eq:R_G} of $R_\mathrm{G}$ and~\eqref{eq:R_G_prime} of $R_\mathrm{G}'$.
    Our proof will show
    \begin{align}
    \label{eq:generalized_robustness_operator_inequality_leq}
    R_\mathrm{G}\qty(\Phi)&\geq R_\mathrm{G}'\qty(\Phi),~\text{and}\\
    \label{eq:generalized_robustness_operator_inequality_geq}
    R_\mathrm{G}\qty(\Phi)&\leq R_\mathrm{G}'\qty(\Phi).
    \end{align}

    \textbf{Proof of~\eqref{eq:generalized_robustness_operator_inequality_leq}.}
    By the definition~\eqref{eq:R_G} of $R_\mathrm{G}$, we have a CQ channel $\Phi'\in\mathcal{C}\qty(\mathcal{X}\to\mathcal{H})$ and a free CQ channel $\Phi_\mathrm{free}\in\mathcal{F}\qty(\mathcal{X}\to\mathcal{H})$ such that
    \begin{align}
    \frac{\Phi+R_\mathrm{G}\qty(\Phi)\Phi'}{1+R_\mathrm{G}\qty(\Phi)}=\Phi_\mathrm{free}.
    \end{align}
    Thus, for every input $x\in\mathcal{X}$, we have
    \begin{align}
        \Phi(x)&\leq\Phi(x)+R_\mathrm{G}\qty(\Phi)\Phi'(x)\\
        &=\qty(1+R_\mathrm{G}\qty(\Phi))\Phi_\mathrm{free}(x),
    \end{align}
    which yields~\eqref{eq:generalized_robustness_operator_inequality_leq} by the definition~\eqref{eq:R_G_prime} of $R_\mathrm{G}'$.

    \textbf{Proof of~\eqref{eq:generalized_robustness_operator_inequality_geq}.}
    By the definition~\eqref{eq:R_G_prime} of $R_\mathrm{G}'$, we have a free CQ channel $\Phi_\mathrm{free}\in\mathcal{F}\qty(\mathcal{X}\to\mathcal{H})$ such that, for every input $x\in\mathcal{X}$,
    \begin{align}
    \label{eq:generalized_robustness_operator_inequality_geq_assumption}
        \Phi(x)\leq\qty(1+R_\mathrm{G}'\qty(\Phi))\Phi_\mathrm{free}(x).
    \end{align}
    We then define a CQ channel $\Phi'$ for every input $x\in\mathcal{X}$ as
    \begin{align}
        \Phi'\qty(x)\coloneqq\frac{\qty(1+R_\mathrm{G}'\qty(\Phi))\Phi_\mathrm{free}(x)-\Phi(x)}{R_\mathrm{G}'\qty(\Phi)},
    \end{align}
    where $\Phi'\qty(x)\geq 0$ follows from~\eqref{eq:generalized_robustness_operator_inequality_geq_assumption}, and $\Tr\qty[\Phi'\qty(x)]=1$ follows from $\Tr\qty[\Phi_\mathrm{free}(x)]=1$ and $\Tr\qty[\Phi(x)]=1$.
    This CQ channel $\Phi'$ satisfies
    \begin{align}
    \frac{\Phi+R_\mathrm{G}'\qty(\Phi)\Phi'}{1+R_\mathrm{G}'\qty(\Phi)}=\Phi_\mathrm{free},
    \end{align}
    which yields~\eqref{eq:generalized_robustness_operator_inequality_geq} by the definition~\eqref{eq:R_G} of $R_\mathrm{G}$.
\end{proof}

Using the techniques developed above, we derive a lower bound on the regularized relative entropy of resource in terms of the logarithmic generalized robustness for CQ channels.
In contrast to Ref.~\cite{hayashi2025generalizedquantumsteinslemma}, where the corresponding result for $R_\mathrm{R}^{\infty\prime}$ in~\eqref{eq:R_R_infty_prime} was shown using an additional assumption of Axiom~\ref{p:convex}, our analysis provides a simpler construction that eliminates the need for this axiom in this bound.

\begin{lemma}[\label{lem:lower_bound_log_robustness}Lower bound on the regularized relative entropy of resource in terms of logarithmic generalized robustness for CQ channels]
    For any family $\mathcal{F}$ of sets of free CQ channels satisfying Axioms~\ref{p:full_rank},~\ref{p:compact}, and~\ref{p:tensor_product},
    and any CQ channel $\Phi\in\mathcal{C}\qty(\mathcal{X}\to\mathcal{H})$,
    there exists a sequence $\qty{\Phi^{(n)}}_{n=1,2,\ldots}$ of CQ channels such that
    \begin{align}
    \label{eq:lower_bound_reguralized_relative_entropy_of_resource}
        &R_\mathrm{R}^\infty\qty(\Phi)\geq\limsup_{n\to\infty}\frac{1}{n}\log\qty[1+R_\mathrm{G}\qty(\Phi^{(n)})],\\
    \label{eq:diamond_norm_convergence}
        &\lim_{n\to\infty}d_\diamond\qty(\Phi^{(n)},\Phi^{\otimes n})=0.
    \end{align}
    where $R_\mathrm{R}^\infty$ is defined as~\eqref{eq:regularized_relative_entropy_of_resource}, $R_\mathrm{G}$ is defines as~\eqref{eq:R_G}, and $d_\diamond$ is defined as~\eqref{eq:d_diamond}.
\end{lemma}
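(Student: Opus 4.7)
The plan is to construct $\Phi^{(n)}$ by pinching $\Phi^{\otimes n}$ against a near-optimal blocked free reference channel and then truncating onto a typical subspace, in direct analogy with the state proof in Ref.~\cite{hayashi2025generalizedquantumsteinslemma}; the only real challenge is arranging this so that the approximation is uniform across channel inputs, since the diamond distance in~\eqref{eq:diamond_norm_convergence} demands it. Concretely, fix a block length $M$ and pick $\Phi_\mathrm{free}^{(M)}\in\mathcal{F}$ attaining $D(\Phi^{\otimes M}\|\mathcal{F})$; write $n=q_nM+r_n$ with $0\le r_n<M$, and use Axioms~\ref{p:full_rank} and~\ref{p:tensor_product} to place $\Phi_\mathrm{free}^{(n)}\coloneqq(\Phi_\mathrm{free}^{(M)})^{\otimes q_n}\otimes\Phi_\mathrm{full}^{\otimes r_n}$ in $\mathcal{F}$. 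I would then apply the pinching superchannel $\mathcal{P}_{\Phi_\mathrm{free}^{(n)}}$ from~\eqref{eq:pinching_superchannel} to obtain $\tilde\Phi^{(n)}$ whose outputs commute with those of $\Phi_\mathrm{free}^{(n)}$; Lemma~\ref{lem:number_of_distinct_eigenvalues_CQ_channel} bounds the number $J_n$ of distinct eigenvalues of $\Phi_\mathrm{free}^{(n)}(x^{(n)})$ by a polynomial in $n$, so the pinching inequality~\eqref{eq:pinching_inequality} yields the input-uniform operator bound $\Phi^{\otimes n}(x^{(n)})\le J_n\tilde\Phi^{(n)}(x^{(n)})$.

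Next, pick a rate $R\coloneqq D(\Phi^{\otimes M}\|\Phi_\mathrm{free}^{(M)})/M+\delta$ for small $\delta>0$, define the typical projector $T_{x^{(n)}}\coloneqq\mathds{1}-\{\tilde\Phi^{(n)}(x^{(n)})\ge e^{Rn}\Phi_\mathrm{free}^{(n)}(x^{(n)})\}$ using the notation in~\eqref{eq:projection_A_B}, and set $\Phi^{(n)}(x^{(n)})\coloneqq T_{x^{(n)}}\Phi^{\otimes n}(x^{(n)})T_{x^{(n)}}/\Tr[T_{x^{(n)}}\Phi^{\otimes n}(x^{(n)})]$. Since $T_{x^{(n)}}$ lies in a basis simultaneously diagonalizing $\tilde\Phi^{(n)}(x^{(n)})$ and $\Phi_\mathrm{free}^{(n)}(x^{(n)})$, chaining the pinching inequality with the defining property of $T_{x^{(n)}}$ gives $\Phi^{(n)}(x^{(n)})\le\frac{J_n e^{Rn}}{c_n}\,\Phi_\mathrm{free}^{(n)}(x^{(n)})$ with $c_n\coloneqq\min_{y^{(n)}}\Tr[T_{y^{(n)}}\Phi^{\otimes n}(y^{(n)})]$; Lemma~\ref{lem:generalized_robustness_operator_inequalities} then delivers $1+R_\mathrm{G}(\Phi^{(n)})\le J_n e^{Rn}/c_n$, while the gentle-measurement lemma gives $d_\diamond(\Phi^{(n)},\Phi^{\otimes n})\le 2\sqrt{1-c_n}$.

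The substantive step is the \emph{uniform} tail estimate $c_n\to 1$. Applying the pinching inequality once more together with the standard commuting Markov inequality yields, for any $\alpha>1$, the bound $\Tr[(\mathds{1}-T_{x^{(n)}})\Phi^{\otimes n}(x^{(n)})]\le J_n\exp(-(\alpha-1)[Rn-\widetilde{D}_\alpha(\Phi^{\otimes n}(x^{(n)})\|\Phi_\mathrm{free}^{(n)}(x^{(n)}))])$, and the tensor-product additivity of $\widetilde{D}_\alpha$ for states (used inside the proof of Lemma~\ref{lem:additivity_CQ_channels}) combined with the channel-divergence definition~\eqref{eq:D_alpha_channel} produces the key input-uniform bound $\widetilde{D}_\alpha(\Phi^{\otimes n}(x^{(n)})\|\Phi_\mathrm{free}^{(n)}(x^{(n)}))\le q_n\widetilde{D}_\alpha(\Phi^{\otimes M}\|\Phi_\mathrm{free}^{(M)})+r_n\widetilde{D}_\alpha(\Phi\|\Phi_\mathrm{full})$. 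Taking $\alpha\downarrow 1$ via~\eqref{eq:D_alpha_limit} pushes the bracketed exponent to at least $n\delta/2$ for all large $n$, so $c_n\to 1$ uniformly in $x^{(n)}$; a diagonal choice $M=M_n\to\infty$ and $\delta=\delta_n\to 0$ then produces a single sequence $\{\Phi^{(n)}\}$ with $d_\diamond(\Phi^{(n)},\Phi^{\otimes n})\to 0$ and $\limsup_n\tfrac{1}{n}\log[1+R_\mathrm{G}(\Phi^{(n)})]\le R_\mathrm{R}^\infty(\Phi)$, the latter using Proposition~\ref{prp:existence_limit}.

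The main obstacle is precisely this uniformity across channel inputs: the diamond norm forbids any averaging, and the information-spectrum method of Lemma~\ref{lem:information_spectrum} alone would only give input-averaged tail estimates. The resolution exploits a feature distinctive to CQ channels, namely the input-additive structure of $\widetilde{D}_\alpha$ established in Lemma~\ref{lem:additivity_CQ_channels}, which rests crucially on the classical nature of channel inputs and is unavailable for general QQ channels; together with Axiom~\ref{p:full_rank} keeping the R\'enyi quantities finite and continuous near $\alpha=1$, this is exactly what powers the uniform tail estimate driving the argument.
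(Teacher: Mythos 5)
Your proposal is correct and follows essentially the same route as the paper's proof: block a near-optimal free channel $\Phi_\mathrm{free}^{(M)\otimes q_n}\otimes\Phi_\mathrm{full}^{\otimes r_n}$, pinch, cut off at rate $\approx\frac{1}{M}D\left(\Phi^{\otimes M}\middle\|\Phi_\mathrm{free}^{(M)}\right)$, bound the robustness via the operator-inequality characterization in Lemma~\ref{lem:generalized_robustness_operator_inequalities} together with the polynomial eigenvalue count of Lemma~\ref{lem:number_of_distinct_eigenvalues_CQ_channel}, and obtain an input-uniform tail estimate from the R\'enyi additivity of Lemma~\ref{lem:additivity_CQ_channels}. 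The only differences are implementation details: the paper uses the trace-preserving construction $\qty(\mathds{1}-T_{x^{(n)}})\Phi^{\otimes n}\qty(x^{(n)})\qty(\mathds{1}-T_{x^{(n)}})+\Tr\qty[T_{x^{(n)}}\Phi^{\otimes n}\qty(x^{(n)})]\Phi_\mathrm{free}^{(n)}\qty(x^{(n)})$ instead of your renormalized truncation (avoiding the gentle-measurement step and the small-$n$ positivity caveat), and it derives the uniform tail estimate by contraposition of Lemma~\ref{lem:strong_converse_CQ_channel_copies} rather than your direct Markov-type R\'enyi bound, which amounts to the same quantitative input.
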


\begin{proof}
    Due to Lemma~\ref{prp:existence_limit} with
    Axioms~\ref{p:full_rank},~\ref{p:compact}, and~\ref{p:tensor_product}, $R_\mathrm{R}^\infty\qty(\Phi)$ is well-defined.
    Below, we first construct the sequence $\qty{\Phi^{(n)}}_{n=1,2,\ldots}$, followed by proving that this sequence satisfies~\eqref{eq:lower_bound_reguralized_relative_entropy_of_resource} and~\eqref{eq:diamond_norm_convergence}.

    \textbf{Construction of $\qty{\Phi^{(n)}}_{n=1,2,\ldots}$.}
    We choose any $R$ satisfying
    \begin{align}
    \label{eq:R_lower_bound_log_robustness}
        R>R_\mathrm{R}^\infty\qty(\Phi).
    \end{align}
    By definition of $R_\mathrm{R}^\infty$ in~\eqref{eq:regularized_relative_entropy_of_resource}, there exists a positive integer $M$ and a free CQ channel $\Phi_\mathrm{free}^{(M)}\in\mathcal{F}\qty(\mathcal{X}^M\to\mathcal{H}^{\otimes M})$ such that
    \begin{align}
    \label{eq:R_lower_bound_M}
        \frac{1}{M} D\left(\Phi^{\otimes M}\middle\|\Phi_\mathrm{free}^{(M)}\right)<R.
    \end{align}
    With this fixed $M$, we choose $q_n$ and $r_n$ for every $n$ such that
    \begin{align}
    \label{eq:lower_bound_log_robustness_n}
        n&=q_n M+r_n,~0\leq r_n<M,
    \end{align}
    and define
    \begin{align}
        \Phi_\mathrm{free}^{(n)}\coloneqq\Phi_\mathrm{free}^{(M)\otimes q_n}\otimes\Phi_\mathrm{full}^{\otimes r_n},
    \end{align}
    where $\Phi_\mathrm{full}\in\mathcal{F}\qty(\mathcal{X}\to\mathcal{H})$ is the free CQ channel satisfying the full-rank condition in Axiom~\ref{p:full_rank}.
    Due to Axiom~\ref{p:tensor_product}, $\Phi_\mathrm{free}^{(n)}$ is a free CQ channel
    \begin{align}
        \Phi_\mathrm{free}^{(n)}\in\mathcal{F}\qty(\mathcal{X}^n\to\mathcal{H}^{\otimes n}).
    \end{align}

    To define a sequence $\qty{\Phi^{(n)}}_{n=1,2,\ldots}$ of CQ channels,
    let $\mathcal{P}_{\Phi_\mathrm{free}^{(n)}}$ denote the pinching superchannel with respect to $\Phi_\mathrm{free}^{(n)}$, as in~\eqref{eq:pinching_superchannel}.
    We define a projection
    \begin{align}
    \label{eq:T_x_l_M}
        &T_{x^{(n)}}\coloneqq\qty{\qty(\mathcal{P}_{\Phi_\mathrm{free}^{(n)}}\qty[\Phi^{\otimes n}])\qty(x^{(n)})\geq e^{Rn}\Phi_\mathrm{free}^{(n)}\qty(x^{(n)})}.
    \end{align}
    Using the POVM $\qty{T_{x^{(n)}},\mathds{1}-T_{x^{(n)}}}$ for every $n$, we define a CQ channel
    \begin{align}
    \label{eq:Phi_n_robustness}
        &\Phi^{(n)}\qty(x^{(n)})\notag\\
        &\coloneqq \qty(\mathds{1}-T_{x^{(n)}})\Phi^{\otimes n}\qty(x^{(n)})\qty(\mathds{1}-T_{x^{(n)}})+\notag\\
        &\quad\Tr\qty[T_{x^{\qty(n)}}\Phi^{\otimes n}\qty(x^{(n)})]\Phi_\mathrm{free}^{(n)}\qty(x^{(n)}).
    \end{align}

    \textbf{Proof of~\eqref{eq:lower_bound_reguralized_relative_entropy_of_resource}.}
    To show~\eqref{eq:lower_bound_reguralized_relative_entropy_of_resource}, we will bound $R_\mathrm{G}\qty(\Phi^{(n)})$ by showing operator inequalities and converting them to the bounds on $R_\mathrm{G}$.
    In the definition~\eqref{eq:Phi_n_robustness} of $\Phi^{(n)}$, the second term on the right-hand side can be bounded for every input $x^{(n)}\in\mathcal{X}^{n}$ by
    \begin{align}
    \label{eq:operator_inequality_second_term}
        &\Tr\qty[T_{x^{\qty(n)}}\Phi^{\otimes n}\qty(x^{(n)})]\Phi_\mathrm{free}^{(n)}\qty(x^{(n)})\leq\Phi_\mathrm{free}^{(n)}\qty(x^{(n)}).
    \end{align}

    To obtain a similar operator inequality for the first term on the right-hand side of~\eqref{eq:Phi_n_robustness},
    let $J_{n}$ denote the maximum number of distinct eigenvalues of $\Phi_\mathrm{free}^{(n)}\qty(x^{(n)})$ over all $x^{(n)}\in\mathcal{X}^n$; then, for every $x^{(n)}\in\mathcal{X}^n$, we obtain from the pinching inequality in~\eqref{eq:pinching_inequality}
    \begin{align}
    \label{eq:Phi_pinching}
        \Phi^{\otimes n}\qty(x^{(n)})\leq J_n\qty(\mathcal{P}_{\Phi_\mathrm{free}^{(n)}}\qty[\Phi^{\otimes n}])\qty(x^{(n)}).
    \end{align}
    Moreover, we obtain from~\eqref{eq:T_x_l_M} that
    \begin{align}
    \label{eq:operator_inequality_T}
    &\qty(\mathds{1}-T_{x^{(n)}})\qty(\mathcal{P}_{\Phi_\mathrm{free}^{(n)}}\qty[\Phi^{\otimes n}])\qty(x^{(n)})\qty(\mathds{1}-T_{x^{(n)}})\notag\\
    &\leq e^{Rn}\Phi_\mathrm{free}^{(n)}\qty(x^{(n)}),
    \end{align}
    which follows from the fact that $\qty(\mathcal{P}_{\Phi_\mathrm{free}^{(n)}}\qty[\Phi^{\otimes n}])\qty(x^{(n)})$, $\Phi_\mathrm{free}^{(n)}\qty(x^{(n)})$, and $T_{x^{(n)}}$ all commute with each other as a result of the pinching.
    Due to~\eqref{eq:Phi_pinching} and~\eqref{eq:operator_inequality_T}, we have, for every input $x^{(n)}\in\mathcal{X}^{n}$,
    \begin{align}
    &\qty(\mathds{1}-T_{x^{(n)}})\Phi^{\otimes n}\qty(x^{(n)})\qty(\mathds{1}-T_{x^{(n)}})\notag\\
    &\leq J_n\qty(\mathds{1}-T_{x^{(n)}})\qty(\mathcal{P}_{\Phi_\mathrm{free}^{(n)}}\qty[\Phi^{\otimes n}])\qty(x^{(n)})\qty(\mathds{1}-T_{x^{(n)}})\\
    \label{eq:operator_inequality_first_term}
    &\leq J_n e^{Rn}\Phi_\mathrm{free}^{(n)}\qty(x^{(n)}).
    \end{align}

    As a whole, for the CQ channel $\Phi^{(n)}$ in~\eqref{eq:Phi_n_robustness}, it follows from from~\eqref{eq:operator_inequality_second_term} and~\eqref{eq:operator_inequality_first_term} that, for every input $x^{(n)}\in\mathcal{X}^{n}$,
    \begin{align}
        \Phi^{(n)}\qty(x^{(n)})\leq\qty(1+J_n e^{Rn})\Phi_\mathrm{free}^{(n)}\qty(x^{(n)}).
    \end{align}
    Therefore, Lemma~\ref{lem:generalized_robustness_operator_inequalities} shows
    \begin{align}
    \label{eq:robustness_Phi_lM}
        R_\mathrm{G}\qty(\Phi^{(n)})\leq J_n e^{Rn},
    \end{align}

    We will bound $J_n$ in~\eqref{eq:robustness_Phi_lM}, which has appeared in~\eqref{eq:Phi_pinching} as the maximum number of distinct eigenvalues of
    \begin{align}
        \Phi_\mathrm{free}^{(n)}\qty(x^{(n)})=\Phi_\mathrm{free}^{(M)\otimes q_{n}}\qty(x^{(q_{n} M)})\otimes\Phi_\mathrm{full}^{\otimes r_{n}}\qty(x^{(r_{n})})
    \end{align}
    with $q_{n}$ and $r_{n}$ in~\eqref{eq:lower_bound_log_robustness_n}.
    Lemma~\ref{lem:number_of_distinct_eigenvalues_CQ_channel} shows that the number $J_n^{(1)}$ of distinct eigenvalues of $\Phi_\mathrm{free}^{(M)\otimes q_{n}}\qty(x^{(q_{n} M)})$ is bounded by
    \begin{align}
        \label{eq:J_n_bound_log_robustness_1}
        J_n^{(1)}\leq(q_{n}+1)^{X^M+D^M-1},
    \end{align}
    and the number $J_n^{(2)}$ of distinct eigenvalues of $\Phi_\mathrm{full}^{\otimes r_{n}}\qty(x^{(r_{n})})$ is bounded by
    \begin{align}
        \label{eq:J_n_bound_log_robustness_2}
        J_n^{(2)}\leq(r_{n}+1)^{X+D-1},
    \end{align}
    where $X=|\mathcal{X}|$, $D=\dim\qty(\mathcal{H})$, and $M$ are constants.
    Hence, we have
    \begin{align}
        J_n&\leq J_n^{(1)}J_n^{(2)}\\
        &\leq(q_{n}+1)^{X^M+D^M-1}(r_{n}+1)^{X+D-1},
    \end{align}
    and, due to~\eqref{eq:lower_bound_log_robustness_n},
    \begin{align}
    \label{eq:J_n_limit}
        \lim_{n\to\infty}\frac{\log\qty[J_n]}{n}=0.
    \end{align}
    
    Therefore, due to~\eqref{eq:robustness_Phi_lM} and~\eqref{eq:J_n_limit}, we obtain
    \begin{align}
        &\limsup_{n\to\infty}\frac{1}{n}\log\qty[1+R_\mathrm{G}\qty(\Phi^{(n)})]\notag\\
        &\leq\limsup_{n\to\infty}\frac{1}{n}\log\qty[1+J_n e^{Rn}]\\
        &= R,
    \end{align}
    which holds for any $R$ satisfying~\eqref{eq:R_lower_bound_log_robustness}.
    In the limit $R\to R_\mathrm{R}^\infty\qty(\Phi)$, we have
    \begin{align}
        \limsup_{n\to\infty}\frac{1}{n}\log\qty[1+R_\mathrm{G}\qty(\Phi^{(n)})]\leq R_\mathrm{R}^\infty\qty(\Phi).
    \end{align}

    \textbf{Proof of~\eqref{eq:diamond_norm_convergence}}.
    By the definition~\eqref{eq:T_x_l_M} of $\qty{T_{x^{(n)}}}_{x^{(n)}\in\mathcal{X}^n}$, it holds that
    \begin{align}
        &\Tr\qty[T_{x^{(n)}}\qty(e^{-Rn}\mathcal{P}_{\Phi_\mathrm{free}^{(n)}}\qty[\Phi^{\otimes n}]\qty(x^{(n)})-\Phi_\mathrm{free}^{(n)}\qty(x^{(n)}))]\geq 0.
    \end{align}
    Hence, it holds for every $n$ that
    \begin{align}
        &\Tr\qty[T_{x^{(n)}}\Phi_\mathrm{free}^{(n)}\qty(x^{(n)})]\notag\\
        &\leq e^{-Rn}\Tr\qty[T_{x^{(n)}}\mathcal{P}_{x^{(n)}}\qty(\Phi^{\otimes n}\qty(x^{(n)}))]\\
        \label{eq:T_Phi_free_limit_zero}
        &\leq e^{-Rn}.
    \end{align}
    By taking the limit $n\to\infty$, we obtain
    \begin{align}
        &\liminf_{n\to\infty}-\frac{1}{n}\log\qty[\Tr\qty[T_{x^{(n)}}\Phi_\mathrm{free}^{(n)}\qty(x^{(n)})]]\notag\\
        &\geq R\\
        &>\frac{1}{M}D\left(\Phi^{\otimes M}\middle\|\Phi_\mathrm{free}^{(M)}\right),
    \end{align}
    where the last line follows from~\eqref{eq:R_lower_bound_M}.
    Therefore, contraposition of Lemma~\ref{lem:strong_converse_CQ_channel_copies} shows that, for any $\epsilon\in[0,1)$ and any sequence $\{p_n\}_{n=1,2,\ldots}$ of probability distributions, we have
    \begin{align}
        \liminf_{n\to\infty}\sum_{x^{(n)}\in\mathcal{X}^n}p_n\qty(x^{(n)})\Tr\qty[\qty(\mathds{1}-T_{x^{(n)}})\Phi^{\otimes n}\qty(x^{(n)})]>\epsilon,
    \end{align}
    meaning that, for every $\qty{x^{(n)}\in\mathcal{X}^n}_{n=1,2,\ldots}$,
    \begin{align}
        \lim_{n\to\infty}\Tr\qty[\qty(\mathds{1}-T_{x^{(n)}})\Phi^{\otimes n}\qty(x^{(n)})]=1.
    \end{align}
    Thus, for any $x^{(n_l)}\in\mathcal{X}^{n_l}$, we have
    \begin{align}
    \label{eq:T_Phi_lM_limit_zero}
        \lim_{n\to\infty}\Tr\qty[T_{x^{(n)}}\Phi^{\otimes n}\qty(x^{(n)})]=0.
    \end{align}
    Consequently, for $\Phi^{(n)}$ in~\eqref{eq:Phi_n_robustness}, it holds for all $x^{(n)}\in\mathcal{X}^n$ that
    \begin{widetext}
    \begin{align}
        &\frac{1}{2}\left\|\Phi^{(n)}\qty(x^{(n)})-\Phi^{\otimes n}\qty(x^{(n)})\right\|_1\notag\\
        &=\frac{1}{2}\left\|\qty(\mathds{1}-T_{x^{(n)}})\Phi^{\otimes n}\qty(x^{(n)})\qty(\mathds{1}-T_{x^{(n)}})+\Tr\qty[T_{x^{\qty(n)}}\Phi^{\otimes n}\qty(x^{(n)})]\Phi_\mathrm{free}^{(M)\otimes l}\qty(x^{(n)})-\Phi^{\otimes n}\qty(x^{(n)})\right\|_1\\
        \label{eq:diamond_norm_evaluation_1}
        &\leq\frac{1}{2}\left(\left\|T_{x^{(n)}}\Phi^{\otimes n}\qty(x^{(n)})\right\|_1+\left\|\Phi^{\otimes n}\qty(x^{(n)})T_{x^{(n)}}\right\|_1+\left\|T_{x^{(n)}}\Phi^{\otimes n}\qty(x^{(n)})T_{x^{(n)}}\right\|_1+
        \Tr\qty[T_{x^{\qty(n)}}\Phi^{\otimes n}\qty(x^{(n)})]\left\|\Phi_\mathrm{free}^{(n)}\qty(x^{(n)})\right\|_1\right)\\
        \label{eq:diamond_norm_evaluation_2}
        &\leq\frac{1}{2}\qty(4\Tr\qty[T_{x^{(n)}}\Phi^{\otimes n}\qty(x^{(n)})])\\
        \label{eq:diamond_norm_evaluation_3}
        &\to 0~\text{as $n\to\infty$},
    \end{align}
    \end{widetext}
    where~\eqref{eq:diamond_norm_evaluation_1} follows from the subadditivity of the norm, and~\eqref{eq:diamond_norm_evaluation_2} uses the fact that $T_{x^{(n)}}\geq 0$, $\Phi^{\otimes n}\qty(x^{(n)})\geq 0$, and $\Phi_\mathrm{free}^{(n)}\qty(x^{(n)})\in\mathcal{D}\qty(\mathcal{H}^{\otimes n})$.
    Therefore, $d_\diamond$ in~\eqref{eq:d_diamond} is bounded by
    \begin{align}
        \lim_{n\to\infty}d_\diamond\qty(\Phi^{(n)},\Phi^{\otimes n})=0.
    \end{align}
    
\end{proof}

On the other hand, we show upper bounds on the regularized relative entropy of resource in terms of the logarithmic generalized robustness for CQ channels.

\begin{lemma}[\label{lem:upper_bound_log_robustness}Upper bound on the regularized relative entropy of resource in terms of logarithmic generalized robustness for CQ channels]
    For any family $\mathcal{F}$ of sets of free CQ channels satisfying Axioms~\ref{p:full_rank},~\ref{p:compact},~\ref{p:tensor_product}, and~\ref{p:convex},
    and any sequences $\qty{\Phi^{(n)}\in\mathcal{C}\qty(\mathcal{X}^n\to\mathcal{H}^{\otimes n})}_{n=1,2,\ldots}$ and $\qty{\tilde{\Phi}^{(n)}\in\mathcal{C}\qty(\mathcal{X}^n\to\mathcal{H}^{\otimes n})}_{n=1,2,\ldots}$ of CQ channels,
    if it holds that
    \begin{align}
    \label{eq:upper_bound_log_robustness_diamond_distance}
        \lim_{n\to\infty}d_\diamond\qty(\tilde{\Phi}^{(n)},\Phi^{(n)})=0,
    \end{align}
    then we have
    \begin{align}
        \liminf_{n\to\infty}\frac{1}{n}R_\mathrm{R}\qty(\Phi^{(n)})&\leq\liminf_{n\to\infty}\frac{1}{n}\log\qty[1+R_\mathrm{G}\qty(\tilde{\Phi}^{(n)})],\\
        \limsup_{n\to\infty}\frac{1}{n}R_\mathrm{R}\qty(\Phi^{(n)})&\leq\limsup_{n\to\infty}\frac{1}{n}\log\qty[1+R_\mathrm{G}\qty(\tilde{\Phi}^{(n)})],
    \end{align}
    where $R_\mathrm{R}$ is defined as~\eqref{eq:relative_entropy_of_resource}, $R_\mathrm{G}$ is defines as~\eqref{eq:R_G}, and $d_\diamond$ is defined as~\eqref{eq:d_diamond}.
\end{lemma}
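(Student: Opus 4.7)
The plan is to first bound the (non-regularized) relative entropy of resource of the approximating sequence $\qty{\tilde{\Phi}^{(n)}}_n$ directly by $\log[1+R_\mathrm{G}(\tilde{\Phi}^{(n)})]$, and then transfer the bound to $\qty{\Phi^{(n)}}_n$ through the asymptotic continuity of $R_\mathrm{R}$ established in~\eqref{eq:D_asymptotic_continuity}.

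First, I would apply Lemma~\ref{lem:generalized_robustness_operator_inequalities} to each $\tilde{\Phi}^{(n)}$. This yields, for every $n$, a free CQ channel $\Phi_\mathrm{free}^{(n)}\in\mathcal{F}(\mathcal{X}^n\to\mathcal{H}^{\otimes n})$ such that, for every input $x^{(n)}\in\mathcal{X}^n$,
\begin{align}
\tilde{\Phi}^{(n)}\qty(x^{(n)}) \leq \qty(1+R_\mathrm{G}(\tilde{\Phi}^{(n)}))\Phi_\mathrm{free}^{(n)}\qty(x^{(n)}),
\end{align}
where existence of this minimizer is ensured by Axioms~\ref{p:full_rank} and~\ref{p:compact}. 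Invoking Lemma~\ref{lem:bound_channel_divergence_operator_inequality} with $C=\log[1+R_\mathrm{G}(\tilde{\Phi}^{(n)})]$ immediately gives
\begin{align}
D\left(\tilde{\Phi}^{(n)}\middle\|\Phi_\mathrm{free}^{(n)}\right) \leq \log\qty[1+R_\mathrm{G}(\tilde{\Phi}^{(n)})],
\end{align}
and by the definition~\eqref{eq:relative_entropy_of_resource} of $R_\mathrm{R}$, this in turn yields
\begin{align}
R_\mathrm{R}\qty(\tilde{\Phi}^{(n)}) \leq \log\qty[1+R_\mathrm{G}(\tilde{\Phi}^{(n)})].
\end{align}
Dividing by $n$ and taking $\liminf_{n\to\infty}$ and $\limsup_{n\to\infty}$, respectively, gives the desired upper bound but stated in terms of $\qty{\tilde{\Phi}^{(n)}}_n$ on the left-hand side.

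The remaining step is to exchange $\tilde{\Phi}^{(n)}$ for $\Phi^{(n)}$. Here I would directly use the asymptotic continuity property~\eqref{eq:D_asymptotic_continuity} for $R_\mathrm{R}$, which holds under Axioms~\ref{p:full_rank},~\ref{p:compact},~\ref{p:tensor_product}, and~\ref{p:convex}. Since the hypothesis~\eqref{eq:upper_bound_log_robustness_diamond_distance} gives exactly $\lim_{n\to\infty}d_\diamond(\tilde{\Phi}^{(n)},\Phi^{(n)})=0$, equation~\eqref{eq:D_asymptotic_continuity} yields
\begin{align}
\liminf_{n\to\infty}\frac{1}{n}R_\mathrm{R}\qty(\Phi^{(n)}) &= \liminf_{n\to\infty}\frac{1}{n}R_\mathrm{R}\qty(\tilde{\Phi}^{(n)}),\\
\limsup_{n\to\infty}\frac{1}{n}R_\mathrm{R}\qty(\Phi^{(n)}) &= \limsup_{n\to\infty}\frac{1}{n}R_\mathrm{R}\qty(\tilde{\Phi}^{(n)}),
\end{align}
which, combined with the previous bound on $R_\mathrm{R}(\tilde{\Phi}^{(n)})$, completes both inequalities.

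The main obstacle I anticipate is ensuring that the asymptotic continuity result~\eqref{eq:D_asymptotic_continuity} applies in the required generality; in particular, one must check that the hypotheses of the Winter-type continuity bound~\cite{Winter2016} translate correctly from the state setting to the CQ-channel setting using $d_\diamond$, rather than the average-case distance used in earlier work. Once that is in place, the remainder of the argument is a short chain of inequalities relying only on Lemmas~\ref{lem:generalized_robustness_operator_inequalities} and~\ref{lem:bound_channel_divergence_operator_inequality}. Notably, this proof does not need the pinching or information-spectrum machinery, since the upper bound direction is already captured by the operator-inequality characterization of $R_\mathrm{G}$ together with continuity.
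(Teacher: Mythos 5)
Your proposal is correct and follows essentially the same route as the paper's proof: apply Lemma~\ref{lem:generalized_robustness_operator_inequalities} to get the operator inequality with a free channel, convert it via Lemma~\ref{lem:bound_channel_divergence_operator_inequality} into $R_\mathrm{R}\qty(\tilde{\Phi}^{(n)})\leq\log\qty[1+R_\mathrm{G}\qty(\tilde{\Phi}^{(n)})]$, and then invoke the asymptotic continuity~\eqref{eq:D_asymptotic_continuity} of $R_\mathrm{R}$ under the diamond-distance hypothesis to replace $\tilde{\Phi}^{(n)}$ by $\Phi^{(n)}$ in the $\liminf$ and $\limsup$. The concern you flag about the continuity bound is already handled by the paper, which asserts~\eqref{eq:D_asymptotic_continuity} under Axioms~\ref{p:full_rank}--\ref{p:convex} exactly as you use it.
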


\begin{proof}
    Due to Lemma~\ref{lem:generalized_robustness_operator_inequalities}, we have a sequence $\qty{\Phi_\mathrm{free}^{(n)}\in\mathcal{F}\qty(\mathcal{X}^n\to\mathcal{H}^{\otimes n})}_{n=1,2,\ldots}$ of free CQ channels such that, for every $x^{(n)}\in\mathcal{X}^n$,
    \begin{align}
        \tilde{\Phi}^{(n)}\qty(x^{(n)})\leq\qty(1+R_\mathrm{G}\qty(\tilde{\Phi}^{(n)}))\Phi_\mathrm{free}^{(n)}\qty(x^{(n)}).
    \end{align}
    Then, Lemma~\ref{lem:bound_channel_divergence_operator_inequality} shows that
    \begin{align}
        D\left(\tilde{\Phi}^{(n)}\middle\|\Phi_\mathrm{free}^{(n)}\right)\leq\log\qty[1+R_\mathrm{G}\qty(\tilde{\Phi}^{(n)})].
    \end{align}
    Therefore, by the definition~\eqref{eq:relative_entropy_of_resource} of $R_\mathrm{R}$, we have, for all $n$,
    \begin{align}
        R_\mathrm{R}\qty(\tilde{\Phi}^{(n)})
        &\leq D\left(\tilde{\Phi}^{(n)}\middle\|\Phi_\mathrm{free}^{(n)}\right)\\
        \label{eq:upper_bound_log_robustness1}
        &\leq\log\qty[1+R_\mathrm{G}\qty(\tilde{\Phi}^{(n)})].
    \end{align}
    Due to the asymptotic continuity~\eqref{eq:D_asymptotic_continuity} of $R_\mathrm{R}$ under Axioms~\ref{p:full_rank},~\ref{p:compact},~\ref{p:tensor_product}, and~\ref{p:convex}, we obtain from~\eqref{eq:upper_bound_log_robustness_diamond_distance}
    \begin{align}
        \liminf_{n\to\infty}\frac{1}{n}R_\mathrm{R}\qty(\Phi^{(n)})&=\liminf_{n\to\infty}\frac{1}{n}R_\mathrm{R}\qty(\tilde{\Phi}^{(n)}),\\
        \limsup_{n\to\infty}\frac{1}{n}R_\mathrm{R}\qty(\Phi^{(n)})&=\limsup_{n\to\infty}\frac{1}{n}R_\mathrm{R}\qty(\tilde{\Phi}^{(n)}).
    \end{align}
    Therefore, we obtain from~\eqref{eq:upper_bound_log_robustness1}
    \begin{align}
        \liminf_{n\to\infty}\frac{1}{n}R_\mathrm{R}\qty(\Phi^{(n)})&\leq\liminf_{n\to\infty}\frac{1}{n}\log\qty[1+R_\mathrm{G}\qty(\tilde{\Phi}^{(n)})],\\
        \limsup_{n\to\infty}\frac{1}{n}R_\mathrm{R}\qty(\Phi^{(n)})&\leq\limsup_{n\to\infty}\frac{1}{n}\log\qty[1+R_\mathrm{G}\qty(\tilde{\Phi}^{(n)})].
    \end{align}
\end{proof}

By combining the lower and upper bounds established above, we obtain the following characterization of the regularized relative entropy of resource in terms of the logarithmic generalized robustness for CQ channels.

\begin{proposition}
[\label{prp:robustness_characterization}Characterization of regularized relative entropy of resource in terms of logarithmic generalized robustness for CQ channels]
For any family $\mathcal{F}$ of sets of free CQ channels satisfying Axioms~\ref{p:full_rank},~\ref{p:compact},~\ref{p:tensor_product}, and~\ref{p:convex},
and any CQ channel $\Phi\in\mathcal{C}\qty(\mathcal{X}\to\mathcal{H})$,
it holds that
\begin{align}
&R_\mathrm{R}^\infty\qty(\Phi)\notag\\
&= 
\min_{\qty{\tilde{\Phi}^{(n)}}_{n=1,2,\ldots}}
\left\{
\lim_{n\to\infty}
\frac{1}{n}
\log\qty(1+R_\mathrm{G}\qty(\tilde{\Phi}^{(n)})):\right.\notag\\
&\quad
\lim_{n\to \infty}d_\diamond\qty(\tilde{\Phi}^{(n)},\Phi^{\otimes n})
= 0,\notag\\
&\quad\left.\text{the limit $\lim_{n\to\infty}\frac{1}{n}
\log\qty(1+R_\mathrm{G}\qty(\tilde{\Phi}^{(n)}))$ exists}
\right\},
\end{align}
    where $R_\mathrm{R}$ is defined as~\eqref{eq:relative_entropy_of_resource}, $R_\mathrm{G}$ is defines as~\eqref{eq:R_G}, $d_\diamond$ is defined as~\eqref{eq:d_diamond}, and the minimum on the right-hand side exists.
\end{proposition}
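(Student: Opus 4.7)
The plan is to establish the two inequalities separately and simultaneously exhibit an explicit minimizer.
First, for the $\geq$ direction: I would fix any sequence $\qty{\tilde{\Phi}^{(n)}}_n$ admissible in the right-hand side, that is, satisfying $\lim_{n\to\infty}d_\diamond\qty(\tilde{\Phi}^{(n)},\Phi^{\otimes n})=0$ and for which the limit $\lim_{n\to\infty}\frac{1}{n}\log\qty(1+R_\mathrm{G}\qty(\tilde{\Phi}^{(n)}))$ exists.  Applying Lemma~\ref{lem:upper_bound_log_robustness} with $\Phi^{(n)}\coloneqq\Phi^{\otimes n}$ and the given $\tilde{\Phi}^{(n)}$ (the diamond-distance hypothesis of the lemma follows directly from the admissibility) yields
\begin{align}
    \liminf_{n\to\infty}\frac{1}{n}R_\mathrm{R}\qty(\Phi^{\otimes n})\leq\liminf_{n\to\infty}\frac{1}{n}\log\qty(1+R_\mathrm{G}\qty(\tilde{\Phi}^{(n)})).
\end{align}
Since Proposition~\ref{prp:existence_limit} guarantees that $\frac{1}{n}R_\mathrm{R}\qty(\Phi^{\otimes n})=\frac{1}{n}D\left(\Phi^{\otimes n}\middle\|\mathcal{F}\right)$ converges to $R_\mathrm{R}^\infty\qty(\Phi)$, the left-hand side equals $R_\mathrm{R}^\infty\qty(\Phi)$, giving the desired bound for every admissible sequence.

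For the $\leq$ direction and the existence of a minimizer, I would take the sequence $\qty{\Phi^{(n)}}_n$ constructed in Lemma~\ref{lem:lower_bound_log_robustness} as the candidate $\tilde{\Phi}^{(n)}$.  That lemma directly provides $\lim_{n\to\infty}d_\diamond\qty(\Phi^{(n)},\Phi^{\otimes n})=0$ and the one-sided bound
\begin{align}
    R_\mathrm{R}^\infty\qty(\Phi)\geq\limsup_{n\to\infty}\frac{1}{n}\log\qty(1+R_\mathrm{G}\qty(\Phi^{(n)})).
\end{align}
Next, applying Lemma~\ref{lem:upper_bound_log_robustness} with $\Phi^{(n)}\coloneqq\Phi^{\otimes n}$ and $\tilde{\Phi}^{(n)}$ the sequence just constructed, and using again that $\frac{1}{n}R_\mathrm{R}\qty(\Phi^{\otimes n})\to R_\mathrm{R}^\infty\qty(\Phi)$, yields the complementary bound
\begin{align}
    R_\mathrm{R}^\infty\qty(\Phi)\leq\liminf_{n\to\infty}\frac{1}{n}\log\qty(1+R_\mathrm{G}\qty(\Phi^{(n)})).
\end{align}
Sandwiching the limsup and liminf between the same value $R_\mathrm{R}^\infty\qty(\Phi)$ forces the limit to exist and to equal $R_\mathrm{R}^\infty\qty(\Phi)$, so this $\qty{\Phi^{(n)}}_n$ is admissible in the infimum, achieves the value $R_\mathrm{R}^\infty\qty(\Phi)$, and simultaneously certifies that the infimum is attained, hence is a minimum.

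Combining the two directions gives the claimed equality, with the minimizer exhibited explicitly.  The only mildly delicate step I anticipate is the bookkeeping for the admissibility constraint ``the limit exists'' in the $\leq$ direction: one must not merely invoke Lemma~\ref{lem:lower_bound_log_robustness} (which provides only a limsup upper bound) but also pair it with Lemma~\ref{lem:upper_bound_log_robustness} to upgrade the limsup/liminf sandwich into an honest limit.  The verification that $\frac{1}{n}R_\mathrm{R}\qty(\Phi^{\otimes n})$ coincides with $\frac{1}{n}D\left(\Phi^{\otimes n}\middle\|\mathcal{F}\right)$ and converges to $R_\mathrm{R}^\infty\qty(\Phi)$ is immediate from the definitions~\eqref{eq:relative_entropy_of_resource} and~\eqref{eq:regularized_relative_entropy_of_resource} together with Proposition~\ref{prp:existence_limit}, so no new ingredient beyond the two lemmas and the existence of the regularized limit is needed.
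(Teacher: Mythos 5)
Your proposal is correct and follows essentially the same route as the paper's proof: a lower bound on every admissible sequence via Lemma~\ref{lem:upper_bound_log_robustness} together with the convergence of $\frac{1}{n}D\left(\Phi^{\otimes n}\middle\|\mathcal{F}\right)$ from Proposition~\ref{prp:existence_limit}, and the explicit minimizer from Lemma~\ref{lem:lower_bound_log_robustness}, upgraded to an honest limit by the same limsup/liminf sandwich. No gaps.
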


\begin{proof}
Due to Axioms~\ref{p:full_rank},~\ref{p:compact},~\ref{p:tensor_product}, and~\ref{p:convex}, Lemma~\ref{lem:upper_bound_log_robustness} shows that
\begin{align}
&R_\mathrm{R}^\infty\qty(\Phi)\notag\\
&=\liminf_{n\to\infty}\frac{1}{n}R_\mathrm{R}(\Phi^{\otimes n})\\
&\leq\inf_{\qty{\tilde{\Phi}^{(n)}}_n}\left\{
\liminf_{n\to\infty}
\frac{1}{n} \log\qty(1+R_\mathrm{G}\qty(\tilde{\Phi}^{(n)})):\right.\notag\\
&\quad\left.\lim_{n\to \infty}
d_\diamond\qty(\tilde{\Phi}^{(n)},\Phi^{\otimes n})
= 0 \right\}\\
&\leq\inf_{\qty{\tilde{\Phi}^{(n)}}_n}\left\{
\liminf_{n\to\infty}
\frac{1}{n} \log\qty(1+R_\mathrm{G}\qty(\tilde{\Phi}^{(n)})):\right.\notag\\
&\quad
\lim_{n\to \infty} d_\diamond\qty(\tilde{\Phi}^{(n)},\Phi^{\otimes n}) = 0,
\notag\\
&\quad\left.
\text{the limit $\lim_{n\to\infty}\frac{1}{n}
\log\qty(1+R_\mathrm{G}\qty(\tilde{\Phi}^{(n)}))$ exists}
\right\},
\label{eq:limit_liminf_R_R}
\end{align}
where the last inequality is due to adding the constraint that the limit should exist.
Due to Axioms~\ref{p:full_rank},~\ref{p:compact}, and~\ref{p:tensor_product}, it follows from Lemma~\ref{lem:lower_bound_log_robustness} that there exists a sequence $\qty{\tilde{\Phi}^{(n)}}_n$ achieving
\begin{align}
\limsup_{n\to\infty}
\frac{1}{n}
\log\qty(1+R_\mathrm{G}\qty(\tilde{\Phi}^{(n)}))&\leq R_\mathrm{R}^\infty(\Phi),\\
\lim_{n\to \infty} d_\diamond\qty(\tilde{\Phi}^{(n)},\Phi^{\otimes n}) &= 0.
\end{align}
Thus, for this sequence, the limit
\begin{align}
\label{eq:limit_existence}
&\lim_{n\to\infty} \frac{1}{n} \log\qty(1+R_\mathrm{G}\qty(\tilde{\Phi}^{(n)}))\notag\\
&=\liminf_{n\to\infty} \frac{1}{n} \log\qty(1+R_\mathrm{G}\qty(\tilde{\Phi}^{(n)}))\notag\\
&=\limsup_{n\to\infty} \frac{1}{n} \log\qty(1+R_\mathrm{G}\qty(\tilde{\Phi}^{(n)}))\notag\\
&=R_\mathrm{R}^\infty(\mathcal{N})
\end{align}
exists.
Due to the existence of $\qty{\tilde{\Phi}^{(n)}}_n$ achieving~\eqref{eq:limit_existence}, it follows from~\eqref{eq:limit_liminf_R_R} that
\begin{align}
&R_\mathrm{R}^\infty\qty(\Phi)\notag\\
&= 
\min_{\qty{\tilde{\Phi}^{(n)}}_{n}}
\left\{
\liminf_{n\to\infty}
\frac{1}{n}
\log\qty(1+R_\mathrm{G}\qty(\tilde{\Phi}^{(n)})):\right.\notag\\
&\quad\left.
\lim_{n\to \infty}d_\diamond\qty(\tilde{\Phi}^{(n)},\Phi^{\otimes n})
= 0
\right\}\\
&= 
\min_{\qty{\tilde{\Phi}^{(n)}}_{n}}
\left\{
\lim_{n\to\infty}
\frac{1}{n}
\log\qty(1+R_\mathrm{G}\qty(\tilde{\Phi}^{(n)})):\right.\notag\\
&\quad
\lim_{n\to \infty}d_\diamond\qty(\tilde{\Phi}^{(n)},\Phi^{\otimes n})
= 0,\notag\\
&\quad\left.\text{the limit $\lim_{n\to\infty}\frac{1}{n}
\log\qty(1+R_\mathrm{G}\qty(\tilde{\Phi}^{(n)}))$ exists}
\right\},
\end{align}
where the minima exist.
\end{proof}

\subsection{Main parts of proof for reversible QRT framework for CQ channel conversion}
\label{sec:conversion_rate}

In this section, we present the techniques used to establish the reversibility stated in Theorem~\ref{thm:second_law}.
The proof consists of two parts: the converse part, which establishes optimality, and the direct part, which demonstrates achievability.
Section~\ref{sec:converse_second_law} is devoted to the converse part, while Section~\ref{sec:direct_second_law} addresses the direct part.

\subsubsection{Converse part}
\label{sec:converse_second_law}

In this section, we show the converse part of Theorem~\ref{thm:second_law}.
To this end, we first show the following asymptotic version of the monotonicity of the regularized relative entropy of resource under asymptotically resource-non-generating operations.

\begin{lemma}[\label{lem:monotonicity_regularized_relative_entropy_of_resource}Monotonicity of regularized relative entropy of resource under asymptotically resource-non-generating operations]
For any family $\mathcal{F}$ of sets of free CQ channels satisfying Axioms~\ref{p:full_rank},~\ref{p:compact},~\ref{p:tensor_product}, and~\ref{p:convex},
any sequence $\qty{\Theta^{(n)}}_{n=1,2,\ldots}\in\tilde{\mathcal{O}}\qty(\qty(\mathcal{X}\to\mathcal{H})\to\qty(\mathcal{X}'\to\mathcal{H}'))$ of superchannels satisfying Axiom~\ref{p:arng},
and any CQ channel $\Phi\in\mathcal{C}\qty(\mathcal{X}\to\mathcal{H})$,
it holds that
\begin{align}
R_\mathrm{R}^\infty\qty(\Phi)\geq \limsup_{n\to\infty}\frac{1}{n}R_\mathrm{R}\qty(\Theta^{(n)}\qty(\Phi^{\otimes n})),
\end{align}
where $R_\mathrm{R}$ and $R_\mathrm{R}^\infty$ are defined as~\eqref{eq:relative_entropy_of_resource} and~\eqref{eq:regularized_relative_entropy_of_resource}, respectively.
\end{lemma}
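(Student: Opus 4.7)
The plan is to prove the asymptotic monotonicity by routing through the characterization of $R_\mathrm{R}^\infty$ in terms of the logarithmic generalized robustness (Proposition~\ref{prp:robustness_characterization}), since the generalized robustness interacts transparently with linear maps and with the asymptotically resource-non-generating property, whereas the relative entropy does not. Concretely, I would first invoke Proposition~\ref{prp:robustness_characterization} to obtain a sequence $\{\tilde{\Phi}^{(n)}\}_n$ of CQ channels with $\lim_{n\to\infty} d_\diamond(\tilde{\Phi}^{(n)}, \Phi^{\otimes n}) = 0$ and $\lim_{n\to\infty} \frac{1}{n}\log(1+R_\mathrm{G}(\tilde{\Phi}^{(n)})) = R_\mathrm{R}^\infty(\Phi)$. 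This replaces the task of controlling $R_\mathrm{R}$ on the image channel $\Theta^{(n)}[\Phi^{\otimes n}]$ with the simpler task of controlling the generalized robustness of $\Theta^{(n)}[\tilde{\Phi}^{(n)}]$.

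The key step is to transport the robustness bound through the superchannel $\Theta^{(n)}$. Setting $s_n \coloneqq R_\mathrm{G}(\tilde{\Phi}^{(n)})$, Lemma~\ref{lem:generalized_robustness_operator_inequalities} gives a free CQ channel $\Phi_\mathrm{free}^{(n)} \in \mathcal{F}$ such that, for every input $x$, $\tilde{\Phi}^{(n)}(x) \leq (1+s_n)\Phi_\mathrm{free}^{(n)}(x)$; equivalently, one may write $\tilde{\Phi}^{(n)} = (1+s_n)\Phi_\mathrm{free}^{(n)} - s_n\Phi'^{(n)}$ with $\Phi'^{(n)}$ a CQ channel. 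Because $\Theta^{(n)}$ is a linear supermap with CPTP components as in~\eqref{eq:theta}, applying it and discarding the (non-negative) term $s_n \Theta^{(n)}[\Phi'^{(n)}]$ yields the output-wise operator inequality $\Theta^{(n)}[\tilde{\Phi}^{(n)}](x') \leq (1+s_n)\,\Theta^{(n)}[\Phi_\mathrm{free}^{(n)}](x')$. Axiom~\ref{p:arng} guarantees $\epsilon_n \coloneqq R_\mathrm{G}(\Theta^{(n)}[\Phi_\mathrm{free}^{(n)}]) \to 0$, so Lemma~\ref{lem:generalized_robustness_operator_inequalities} applied once more provides some $\Phi_\mathrm{free}^{(n)\prime\prime} \in \mathcal{F}$ with $\Theta^{(n)}[\Phi_\mathrm{free}^{(n)}](x') \leq (1+\epsilon_n)\Phi_\mathrm{free}^{(n)\prime\prime}(x')$. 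Composing gives $\Theta^{(n)}[\tilde{\Phi}^{(n)}](x') \leq (1+s_n)(1+\epsilon_n)\Phi_\mathrm{free}^{(n)\prime\prime}(x')$, and a final appeal to Lemma~\ref{lem:generalized_robustness_operator_inequalities} yields $1 + R_\mathrm{G}(\Theta^{(n)}[\tilde{\Phi}^{(n)}]) \leq (1+s_n)(1+\epsilon_n)$. Taking $\log$, dividing by $n$, and passing to $\limsup$ with $\epsilon_n \to 0$ gives $\limsup_{n\to\infty} \frac{1}{n}\log(1+R_\mathrm{G}(\Theta^{(n)}[\tilde{\Phi}^{(n)}])) \leq R_\mathrm{R}^\infty(\Phi)$.

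To close the argument, I would transfer this robustness bound back to the relative entropy of resource. Monotonicity of the diamond distance under superchannels, equation~\eqref{eq:monotonicity_diamond}, promotes $d_\diamond(\tilde{\Phi}^{(n)}, \Phi^{\otimes n}) \to 0$ into $d_\diamond(\Theta^{(n)}[\tilde{\Phi}^{(n)}], \Theta^{(n)}[\Phi^{\otimes n}]) \to 0$. Applying Lemma~\ref{lem:upper_bound_log_robustness} to the sequences $\{\Theta^{(n)}[\Phi^{\otimes n}]\}_n$ and $\{\Theta^{(n)}[\tilde{\Phi}^{(n)}]\}_n$, whose ambient axioms~\ref{p:full_rank}--\ref{p:convex} on $\mathcal{F}$ are all in force, yields $\limsup_{n\to\infty} \frac{1}{n} R_\mathrm{R}(\Theta^{(n)}[\Phi^{\otimes n}]) \leq \limsup_{n\to\infty} \frac{1}{n}\log(1+R_\mathrm{G}(\Theta^{(n)}[\tilde{\Phi}^{(n)}])) \leq R_\mathrm{R}^\infty(\Phi)$, which is the asserted inequality.

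The main obstacle is conceptual: $\Theta^{(n)}$ is not required to send free channels exactly into $\mathcal{F}$, and Axiom~\ref{p:arng} only ensures that the image has vanishing generalized robustness. The argument succeeds because robustness behaves multiplicatively under the chain of operator inequalities assembled above, so the correction factor $(1+\epsilon_n)$ contributes only $\frac{1}{n}\log(1+\epsilon_n)\to 0$, which is harmlessly absorbed into the $\limsup$ without touching the leading exponent $R_\mathrm{R}^\infty(\Phi)$. A subsidiary technical care point is the finiteness of $\epsilon_n$ for each $n$, which is guaranteed because Axiom~\ref{p:full_rank} renders $R_\mathrm{G}$ finite on every CQ channel.
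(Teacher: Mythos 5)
Your proposal is correct and takes essentially the same route as the paper's proof: obtain the approximating sequence $\{\tilde{\Phi}^{(n)}\}_n$ from Proposition~\ref{prp:robustness_characterization}, bound $1+R_\mathrm{G}\qty(\Theta^{(n)}\qty[\tilde{\Phi}^{(n)}])\leq\qty(1+s_n)\qty(1+\epsilon_n)$ using linearity of $\Theta^{(n)}$ together with Axiom~\ref{p:arng}, and transfer back to $R_\mathrm{R}$ via monotonicity of $d_\diamond$ and Lemma~\ref{lem:upper_bound_log_robustness}. The only cosmetic difference is that you chain output-wise operator inequalities through Lemma~\ref{lem:generalized_robustness_operator_inequalities}, whereas the paper manipulates the convex-mixture definition of $R_\mathrm{G}$ directly; the resulting bound and the use of the key lemmas are identical.
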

\begin{proof}
    Under Axioms~\ref{p:full_rank},~\ref{p:compact},~\ref{p:tensor_product}, and~\ref{p:convex}, due to Proposition~\ref{prp:robustness_characterization}, we have a sequence $\qty{\tilde{\Phi}^{(n)}}_{n=1,2,\ldots}$ of CQ channels satisfying
    \begin{align}
    \label{eq:monotonicity_regularized_relative_entropy_of_resource1}
        &R_\mathrm{R}^\infty\qty(\Phi)=\lim_{n\to\infty}\frac{1}{n}\log\qty[1+R_\mathrm{G}\qty(\tilde{\Phi}^{(n)})],\\
        \label{eq:diamond_tilde_phi_phi}
        &\lim_{n\to\infty}d_\diamond\qty(\tilde{\Phi}^{(n)},\Phi^{\otimes n})=0.
    \end{align}
    Due to the monotonicity~\eqref{eq:monotonicity_diamond} of the diamond-norm distance, we obtain from~\eqref{eq:diamond_tilde_phi_phi}
    \begin{align}
        &\lim_{n\to\infty}d_\diamond\qty(\Theta^{(n)}\qty[\tilde{\Phi}^{(n)}],\Theta^{(n)}\qty[\Phi^{\otimes n}])=0.
    \end{align}
    Then, Lemma~\ref{lem:upper_bound_log_robustness} shows that
    \begin{align}
    \label{eq:monotonicity_regularized_relative_entropy_of_resource2}
        &\limsup_{n\to\infty}\frac{1}{n}\log\qty[1+R_\mathrm{G}\qty(\Theta^{(n)}\qty[\tilde{\Phi}^{(n)}])]\notag\\
        &\geq \limsup_{n\to\infty}\frac{1}{n}R_\mathrm{R}\qty(\Theta^{(n)}\qty[\Phi^{(n)}]).
    \end{align}

    To show a relation between~\eqref{eq:monotonicity_regularized_relative_entropy_of_resource1} and~\eqref{eq:monotonicity_regularized_relative_entropy_of_resource2}, we will bound $R_\mathrm{G}\qty(\Theta^{(n)}\qty[\tilde{\Phi}^{(n)}])$ in terms of   \begin{align}
    \label{eq:R_n}
        R_n\coloneqq R_\mathrm{G}\qty(\tilde{\Phi}^{(n)}).
    \end{align}
    By the definition~\eqref{eq:R_G} of $R_\mathrm{G}$, there exists $\tilde{\Phi}^{(n)\prime}$ such that
    \begin{align}
        \frac{\tilde{\Phi}^{(n)}+R_n\tilde{\Phi}^{(n)\prime}}{1+R_n}\in\mathcal{F}.     
    \end{align}
    Then, Axiom~\ref{p:arng} implies that
    \begin{align}
        \epsilon_n\coloneqq R_\mathrm{G}\qty(\Theta^{(n)}\qty[\frac{\tilde{\Phi}^{(n)}+R_n\tilde{\Phi}^{(n)\prime}}{1+R_n}])
    \end{align}
    should satisfy
    \begin{align}
    \label{eq:epsilon_n}
        \lim_{n\to\infty}\epsilon_n=0
    \end{align}
    For this $\epsilon_n$, by the definition~\eqref{eq:R_G} of $R_\mathrm{G}$, there exists $\tilde{\Phi}^{(n)\prime\prime}$ such that
    \begin{align}
        \frac{\Theta^{(n)}\qty[\frac{\tilde{\Phi}^{(n)}+R_n\tilde{\Phi}^{(n)\prime}}{1+R_n}]+\epsilon_n\tilde{\Phi}^{(n)\prime\prime}}{1+\epsilon_n}\in\mathcal{F}.     
    \end{align}
    Due to the linearity of $\Theta^{(n)}$, we have
    \begin{align}
        \frac{\Theta^{(n)}\qty[\tilde{\Phi}^{(n)}]+R_n\Theta^{(n)}\qty[\tilde{\Phi}^{(n)\prime}]+\epsilon_n\qty(1+R_n)\tilde{\Phi}^{(n)\prime\prime}}{1+R_n+\epsilon_n\qty(1+R_n)}\in\mathcal{F}.     
    \end{align}
    Therefore, it holds that
    \begin{align}
        R_\mathrm{G}\qty(\Theta^{(n)}\qty[\tilde{\Phi}^{(n)}])&\leq R_n+\epsilon_n\qty(1+R_n),
    \end{align}
    which holds for all $n$.
    
    Consequently, we obtain
    \begin{align}
        &\limsup_{n\to\infty}\frac{1}{n}\log\qty(1+R_\mathrm{G}\qty(\Theta^{(n)}\qty[\tilde{\Phi}^{(n)}]))\notag\\
        &\leq\limsup_{n\to\infty}\frac{1}{n}\log\qty[1+R_n+\epsilon_n\qty(1+R_n)]\\
        \label{eq:monotonicity_regularized_relative_entropy_of_resource3}
        &=\lim_{n\to\infty}\frac{1}{n}\log\qty[1+R_\mathrm{G}\qty(\tilde{\Phi}^{(n)})],
    \end{align}
    where we use~\eqref{eq:R_n} and~\eqref{eq:epsilon_n}.
    Therefore, it follows from~\eqref{eq:monotonicity_regularized_relative_entropy_of_resource1},~\eqref{eq:monotonicity_regularized_relative_entropy_of_resource2}, and~\eqref{eq:monotonicity_regularized_relative_entropy_of_resource3} that
    \begin{align}
        R_\mathrm{R}^\infty\qty(\Phi)\geq \limsup_{n\to\infty}\frac{1}{n}R_\mathrm{R}\qty(\Theta^{(n)}\qty(\Phi^{\otimes n})).
    \end{align}
\end{proof}

Using this, we show the converse part of Theorem~\ref{thm:second_law} as follows.

\begin{proposition}[\label{prp:converse_second_law}The converse part of the second law for CQ channels]
For any family $\mathcal{F}$ of sets of free CQ channels satisfying Axioms~\ref{p:full_rank},~\ref{p:compact},~\ref{p:tensor_product}, and~\ref{p:convex},
any family $\tilde{\mathcal{O}}$ sequences of superchannels satisfying Axiom~\ref{p:arng},
and any CQ channels $\Phi_\mathrm{in}\in\mathcal{C}\qty(\mathcal{X}_\mathrm{in}\to\mathcal{H}_\mathrm{in})$ and $\Phi_\mathrm{out}\in\mathcal{C}\qty(\mathcal{X}_\mathrm{out}\to\mathcal{H}_\mathrm{out})$ ,
it holds that
\begin{align}
r_{\tilde{\mathcal{O}}}\qty(\Phi_{\mathrm{in}}\to\Phi_{\mathrm{out}}) R_\mathrm{R}^\infty\qty(\Phi_\mathrm{out})\leq R_\mathrm{R}^\infty\qty(\Phi_\mathrm{in}).
\end{align}
where $R_\mathrm{R}^\infty$ is defined as~\eqref{eq:regularized_relative_entropy_of_resource}, and $r_{\tilde{\mathcal{O}}}$ is defined as~\eqref{eq:conversion_rate}.
\end{proposition}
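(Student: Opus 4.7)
The plan is to combine the monotonicity of the regularized relative entropy of resource under asymptotically resource-non-generating operations (Lemma~\ref{lem:monotonicity_regularized_relative_entropy_of_resource}) with the asymptotic continuity of $R_\mathrm{R}$ in~\eqref{eq:D_asymptotic_continuity} in order to transfer the monotonicity bound on $\Theta^{(n)}[\Phi_{\mathrm{in}}^{\otimes n}]$ over to the target sequence $\Phi_{\mathrm{out}}^{\otimes\lceil rn\rceil}$. Since the regularization of $R_\mathrm{R}$ exists by Proposition~\ref{prp:existence_limit}, once the bound is transferred one may divide by $n$, let $n\to\infty$, and identify the resulting limit with $r\,R_\mathrm{R}^\infty(\Phi_{\mathrm{out}})$.

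First, I would fix any rate $r$ with $0\le r< r_{\tilde{\mathcal{O}}}(\Phi_{\mathrm{in}}\to\Phi_{\mathrm{out}})$. By the definition~\eqref{eq:conversion_rate} and the definition of $\liminf$, there exists a sequence $\{\Theta^{(n)}\}_n\in\tilde{\mathcal{O}}$ and a subsequence $\{n_l\}_l$ along which
\begin{align}
d_\diamond\qty(\Theta^{(n_l)}[\Phi_{\mathrm{in}}^{\otimes n_l}],\,\Phi_{\mathrm{out}}^{\otimes \lceil r n_l \rceil}) \xrightarrow[l\to\infty]{} 0.
\end{align}
Next, I invoke Lemma~\ref{lem:monotonicity_regularized_relative_entropy_of_resource} on the full sequence $\{\Theta^{(n)}\}_n$ (which satisfies Axiom~\ref{p:arng}), together with the trivial fact that a subsequential limsup is bounded by the full limsup, to get
\begin{align}
R_\mathrm{R}^\infty(\Phi_{\mathrm{in}})\geq \limsup_{n\to\infty}\frac{1}{n}R_\mathrm{R}\qty(\Theta^{(n)}[\Phi_{\mathrm{in}}^{\otimes n}])\geq \limsup_{l\to\infty}\frac{1}{n_l}R_\mathrm{R}\qty(\Theta^{(n_l)}[\Phi_{\mathrm{in}}^{\otimes n_l}]).
\end{align}
Then, applying the asymptotic continuity~\eqref{eq:D_asymptotic_continuity} to the two subsequences $\{\Theta^{(n_l)}[\Phi_{\mathrm{in}}^{\otimes n_l}]\}_l$ and $\{\Phi_{\mathrm{out}}^{\otimes\lceil r n_l\rceil}\}_l$, whose diamond distance vanishes, equates the limsup above to $\limsup_{l\to\infty}\frac{1}{n_l}R_\mathrm{R}(\Phi_{\mathrm{out}}^{\otimes\lceil rn_l\rceil})$. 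Writing the latter as $\tfrac{\lceil rn_l\rceil}{n_l}\cdot\tfrac{1}{\lceil rn_l\rceil}R_\mathrm{R}(\Phi_{\mathrm{out}}^{\otimes\lceil rn_l\rceil})$ and using $\lceil rn_l\rceil/n_l\to r$ together with the existence of $R_\mathrm{R}^\infty(\Phi_{\mathrm{out}})$ from Proposition~\ref{prp:existence_limit}, this limsup equals $r\,R_\mathrm{R}^\infty(\Phi_{\mathrm{out}})$. Chaining the inequalities gives $R_\mathrm{R}^\infty(\Phi_{\mathrm{in}})\ge r\,R_\mathrm{R}^\infty(\Phi_{\mathrm{out}})$, and taking the supremum over admissible $r$ yields the claim.

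The main obstacle I expect is the careful application of the asymptotic continuity~\eqref{eq:D_asymptotic_continuity}, because the statement there concerns sequences whose $n$-th element lies in $\mathcal{C}(\mathcal{X}^n\to\mathcal{H}^{\otimes n})$, whereas in our setting the $l$-th element lies in $\mathcal{C}(\mathcal{X}_{\mathrm{out}}^{\lceil rn_l\rceil}\to\mathcal{H}_{\mathrm{out}}^{\otimes\lceil rn_l\rceil})$, so the effective output dimension scales as $\lceil rn_l\rceil\log D_{\mathrm{out}}$ rather than $n_l\log D_{\mathrm{out}}$. The underlying Winter-type continuity bound underlying~\eqref{eq:D_asymptotic_continuity} produces a correction of order $d_\diamond\cdot\lceil rn_l\rceil\log D_{\mathrm{out}}+h(d_\diamond)$, which, after division by $n_l$, still vanishes since $\lceil rn_l\rceil/n_l$ stays bounded and $d_\diamond\to0$; I will therefore need to either appeal to the general Winter bound directly or re-derive the CQ-channel continuity for this slightly different indexing. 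A minor auxiliary point is verifying that a subsequence of an asymptotically resource-non-generating sequence is itself asymptotically resource-non-generating, which is immediate from Axiom~\ref{p:arng}.
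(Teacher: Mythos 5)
Your proposal is correct and follows essentially the same route as the paper's own proof: fix an achievable rate $r$, pass to a subsequence on which the diamond distance vanishes, apply Lemma~\ref{lem:monotonicity_regularized_relative_entropy_of_resource} and the asymptotic continuity~\eqref{eq:D_asymptotic_continuity} to identify the subsequential limit with $r\,R_\mathrm{R}^\infty(\Phi_\mathrm{out})$ (using Proposition~\ref{prp:existence_limit}), and take the supremum over $r$. Your explicit handling of the $\lceil rn_l\rceil$ versus $n_l$ indexing in the continuity bound is a point the paper passes over silently, and it is resolved exactly as you describe.
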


\begin{proof}
    By the definition~\eqref{eq:conversion_rate} of $r_{\tilde{\mathcal{O}}}$, 
    we take an arbitrary achievable rate
    \begin{align}
    \label{eq:converse_second_law_r}
        r<r_{\tilde{\mathcal{O}}}\qty(\Phi_{\mathrm{in}}\to\Phi_{\mathrm{out}})
    \end{align}
    such that we have a sequence $\qty{\Theta^{(n)}}_{n=1,2,\ldots}$ achieving
    \begin{align}
        \liminf_{n\to\infty}d_\diamond\qty(\Theta^{(n)}\qty[\Phi_\mathrm{in}^{\otimes n}],\Phi_\mathrm{out}^{\otimes\lceil rn\rceil})=0.
    \end{align}
    We have a subsequence $\qty{n_l}_{l=1,2\ldots}$ of $\qty{1,2,\ldots}$ such that
    \begin{align}
    \label{eq:converse_second_law_lim}
        \lim_{l\to\infty}d_\diamond\qty(\Theta^{\qty(n_l)}\qty[\Phi_\mathrm{in}^{\otimes n_l}],\Phi_\mathrm{out}^{\otimes\lceil rn_l\rceil})=0.
    \end{align}
    Then, under Axioms~\ref{p:full_rank},~\ref{p:compact},~\ref{p:tensor_product}, and~\ref{p:convex}, Lemma~\ref{lem:monotonicity_regularized_relative_entropy_of_resource} shows that
    \begin{align}
        R_\mathrm{R}^\infty\qty(\Phi_\mathrm{in})
        &\geq\limsup_{n\to\infty}\frac{1}{n}R_\mathrm{R}\qty(\Theta^{(n)}\qty[\Phi_\mathrm{in}^{\otimes n}])\\
    \label{eq:converse_second_law_1}
        &\geq\limsup_{l\to\infty}\frac{1}{n_l}R_\mathrm{R}\qty(\Theta^{\qty(n_l)}\qty[\Phi_\mathrm{in}^{\otimes n_l}]).
    \end{align}
    Within the subsequence, due to~\eqref{eq:converse_second_law_lim}, the asymptotic continuity~\eqref{eq:D_asymptotic_continuity} of $R_\mathrm{R}$ yields
    \begin{align}
    \label{eq:converse_second_law_2}
        \limsup_{l\to\infty}\frac{1}{n_l}R_\mathrm{R}\qty(\Theta^{\qty(n_l)}\qty[\Phi_\mathrm{in}^{\otimes n_l}])
        &=\lim_{n\to\infty}\frac{1}{n}R_\mathrm{R}\qty(\Phi_\mathrm{out}^{\otimes \lceil rn\rceil}),
    \end{align}
    where the limit on the right-hand side exists due to Lemma~\ref{prp:existence_limit}.
    From~\eqref{eq:converse_second_law_1} and~\eqref{eq:converse_second_law_2}, it follows that
    \begin{align}
        R_\mathrm{R}^\infty\qty(\Phi_\mathrm{in})
        &\geq\lim_{n\to\infty}\frac{1}{n}R_\mathrm{R}\qty(\Phi_\mathrm{out}^{\otimes \lceil rn\rceil})\\
        &=rR_\mathrm{R}^\infty\qty(\Phi_\mathrm{out}),
    \end{align}
    which holds for any achievable rate $r$.
    By taking the supremum of $r$, we obtain
    \begin{align}
        R_\mathrm{R}^\infty\qty(\Phi_\mathrm{in})\geq r_{\tilde{\mathcal{O}}}\qty(\Phi_{\mathrm{in}}\to\Phi_{\mathrm{out}}) R_\mathrm{R}^\infty\qty(\Phi_\mathrm{out}).
    \end{align}
\end{proof}

\subsubsection{Direct part}
\label{sec:direct_second_law}

In this section, we present the proof of the direct part of Theorem~\ref{thm:second_law} as follows.

\begin{proposition}[\label{prp:direct_second_law}The direct part of the second law for CQ channels]
For any family $\mathcal{F}$ of sets of free CQ channels satisfying Axioms~\ref{p:full_rank},~\ref{p:compact},~\ref{p:tensor_product}, and~\ref{p:convex},
any family $\tilde{\mathcal{O}}$ of sequences of superchannels satisfying Axiom~\ref{p:arng},
and any CQ channels $\Phi_\mathrm{in}\in\mathcal{C}\qty(\mathcal{X}_\mathrm{in}\to\mathcal{H}_\mathrm{in})$ and $\Phi_\mathrm{out}\in\mathcal{C}\qty(\mathcal{X}_\mathrm{out}\to\mathcal{H}_\mathrm{out})$ satisfying
\begin{align}
\label{eq:direct_second_law_assumption1}
    R_\mathrm{R}^\infty\qty(\Phi_\mathrm{in})&>0,\\
\label{eq:direct_second_law_assumption2}
    R_\mathrm{R}^\infty\qty(\Phi_\mathrm{out})&>0,
\end{align}
it holds that
\begin{align}
r_{\tilde{\mathcal{O}}}\qty(\Phi_{\mathrm{in}}\to\Phi_{\mathrm{out}})\geq\frac{R_\mathrm{R}^\infty\qty(\Phi_\mathrm{in})}{R_\mathrm{R}^\infty\qty(\Phi_\mathrm{out})}.
\end{align}
where $R_\mathrm{R}^\infty$ is defined as~\eqref{eq:regularized_relative_entropy_of_resource}, and $r_{\tilde{\mathcal{O}}}$ is defined as~\eqref{eq:conversion_rate}.
\end{proposition}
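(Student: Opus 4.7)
The plan is to follow the distill-then-dilute strategy, adapted to CQ channels, by combining the CQ-channel generalized Stein's lemma (Theorem~\ref{thm:main}) with the log-generalized-robustness characterization of $R_\mathrm{R}^\infty$ (Proposition~\ref{prp:robustness_characterization}). For any target rate $r<R_\mathrm{R}^\infty(\Phi_\mathrm{in})/R_\mathrm{R}^\infty(\Phi_\mathrm{out})$, I fix constants $R_\mathrm{in},R_\mathrm{out}$ with $rR_\mathrm{out}<R_\mathrm{in}<R_\mathrm{R}^\infty(\Phi_\mathrm{in})$ and $R_\mathrm{R}^\infty(\Phi_\mathrm{out})<R_\mathrm{out}$, which is possible by the positivity assumptions~\eqref{eq:direct_second_law_assumption1}--\eqref{eq:direct_second_law_assumption2}. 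Applying Proposition~\ref{prp:robustness_characterization} to $\Phi_\mathrm{out}$ yields a sequence $\{\tilde{\Phi}_\mathrm{out}^{(m)}\}_m$ with $\lim_{m\to\infty}d_\diamond(\tilde{\Phi}_\mathrm{out}^{(m)},\Phi_\mathrm{out}^{\otimes m})=0$ and $1+R_\mathrm{G}(\tilde{\Phi}_\mathrm{out}^{(m)})\leq e^{mR_\mathrm{out}}$ for all large $m$. Applying Theorem~\ref{thm:main} together with Proposition~\ref{prp:minimax} to $\Phi_\mathrm{in}$ yields, for each fixed $\epsilon\in(0,1)$ and all large $n$, an input distribution $p_n$ and a POVM family $\{T_{x^{(n)}}\}_{x^{(n)}}$ with type~I error at most $\epsilon$ and the uniform type~II bound $\sum_{x^{(n)}}p_n(x^{(n)})\Tr[T_{x^{(n)}}\Phi_\mathrm{free}(x^{(n)})]\leq e^{-nR_\mathrm{in}}$ for every $\Phi_\mathrm{free}\in\mathcal{F}(\mathcal{X}_\mathrm{in}^n\to\mathcal{H}_\mathrm{in}^{\otimes n})$.

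Setting $m_n\coloneqq\lceil rn\rceil$ and picking $\Phi_\mathrm{out,full}\in\mathcal{F}(\mathcal{X}_\mathrm{out}\to\mathcal{H}_\mathrm{out})$ from Axiom~\ref{p:full_rank}, I define
\begin{align*}
(\Theta_\epsilon^{(n)}[\Phi])(x_\mathrm{out})\coloneqq\sum_{x^{(n)}}p_n(x^{(n)})\Big\{\Tr[T_{x^{(n)}}\Phi(x^{(n)})]\,\tilde{\Phi}_\mathrm{out}^{(m_n)}(x_\mathrm{out})+\Tr[(\mathds{1}-T_{x^{(n)}})\Phi(x^{(n)})]\,\Phi_\mathrm{out,full}^{\otimes m_n}(x_\mathrm{out})\Big\},
\end{align*}
which has the superchannel form~\eqref{eq:theta} with $p_\Theta(x^{(n)}|x_\mathrm{out})=p_n(x^{(n)})$ and a measure-and-prepare CPTP inner map. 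Substituting $\Phi=\Phi_\mathrm{in}^{\otimes n}$ gives $(\Theta_\epsilon^{(n)}[\Phi_\mathrm{in}^{\otimes n}])(x_\mathrm{out})=(1-\eta_n)\tilde{\Phi}_\mathrm{out}^{(m_n)}(x_\mathrm{out})+\eta_n\Phi_\mathrm{out,full}^{\otimes m_n}(x_\mathrm{out})$ with $\eta_n\leq\epsilon$ for large $n$, and the convexity~\eqref{eq:d_diamond_convexity} of $d_\diamond$ yields $d_\diamond(\Theta_\epsilon^{(n)}[\Phi_\mathrm{in}^{\otimes n}],\Phi_\mathrm{out}^{\otimes m_n})\leq d_\diamond(\tilde{\Phi}_\mathrm{out}^{(m_n)},\Phi_\mathrm{out}^{\otimes m_n})+\eta_n$, whose first term vanishes as $n\to\infty$.

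For the resource-non-generating side, for any sequence $\{\Phi_\mathrm{free}^{(n)}\in\mathcal{F}\}_n$, the image $(\Theta_\epsilon^{(n)}[\Phi_\mathrm{free}^{(n)}])(x_\mathrm{out})$ is a convex combination of $\tilde{\Phi}_\mathrm{out}^{(m_n)}(x_\mathrm{out})$ and $\Phi_\mathrm{out,full}^{\otimes m_n}(x_\mathrm{out})$ with weight $\eta'_n\leq e^{-nR_\mathrm{in}}$ on the first term by the uniform type~II bound. Lemma~\ref{lem:generalized_robustness_operator_inequalities} provides a free $\Phi_\mathrm{free}'\in\mathcal{F}$ with $\tilde{\Phi}_\mathrm{out}^{(m_n)}(x_\mathrm{out})\leq(1+R_\mathrm{G}(\tilde{\Phi}_\mathrm{out}^{(m_n)}))\Phi_\mathrm{free}'(x_\mathrm{out})$ for every $x_\mathrm{out}$. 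Mixing $\Phi_\mathrm{free}'$ with $\Phi_\mathrm{out,full}^{\otimes m_n}$ (still free by Axiom~\ref{p:convex} and~\ref{p:tensor_product}) gives a single free witness, and applying Lemma~\ref{lem:generalized_robustness_operator_inequalities} in reverse yields $R_\mathrm{G}(\Theta_\epsilon^{(n)}[\Phi_\mathrm{free}^{(n)}])\leq\eta'_n R_\mathrm{G}(\tilde{\Phi}_\mathrm{out}^{(m_n)})\leq e^{n(rR_\mathrm{out}-R_\mathrm{in})+R_\mathrm{out}}$, which vanishes since $rR_\mathrm{out}<R_\mathrm{in}$. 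A standard diagonal argument selecting $\epsilon_k\downarrow 0$ along a fast-growing index sequence $n\in[N_k,N_{k+1})$ then produces a single sequence $\{\Theta^{(n)}\}_n\in\tilde{\mathcal{O}}$ with $\lim_n d_\diamond(\Theta^{(n)}[\Phi_\mathrm{in}^{\otimes n}],\Phi_\mathrm{out}^{\otimes m_n})=0$, so $r$ is achievable; letting $r\uparrow R_\mathrm{R}^\infty(\Phi_\mathrm{in})/R_\mathrm{R}^\infty(\Phi_\mathrm{out})$ completes the proof.

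The main obstacle is to arrange the diamond-distance approximation and Axiom~\ref{p:arng} to hold simultaneously along a single sequence, despite the Stein test attaining only a fixed small type~I error $\epsilon$ for each individual sequence. The key point that makes the diagonalization safe is that the robustness slack $R_\mathrm{in}-rR_\mathrm{out}>0$ is independent of $\epsilon$, so the exponential decay of $R_\mathrm{G}(\Theta_\epsilon^{(n)}[\Phi_\mathrm{free}^{(n)}])$ persists along the diagonal even as $\epsilon_k\downarrow 0$. A related delicate point is that the reverse-direction analysis crucially relies on Axiom~\ref{p:convex} to merge two free witnesses into a single one when bounding the robustness of a convex combination, a step that has no analogue if convexity of $\mathcal{F}$ is dropped.
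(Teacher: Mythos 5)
Your proposal is correct and follows the same blueprint as the paper's proof: a measure-and-prepare superchannel whose measurement is the Stein test from Theorem~\ref{thm:main} (giving type~I error at most $\epsilon$ and a uniform type~II bound against all of $\mathcal{F}$) and whose prepared output is the sequence $\tilde{\Phi}_\mathrm{out}^{(m_n)}$ supplied by Proposition~\ref{prp:robustness_characterization}, with the rate fixed strictly below $R_\mathrm{R}^\infty(\Phi_\mathrm{in})/R_\mathrm{R}^\infty(\Phi_\mathrm{out})$ so that the type~II decay outpaces the robustness growth. The two points where you deviate are local and both sound. First, on the reject branch you prepare the full-rank free channel $\Phi_\mathrm{out,full}^{\otimes m_n}$ and verify Axiom~\ref{p:arng} through the operator-inequality characterization of $R_\mathrm{G}$ (Lemma~\ref{lem:generalized_robustness_operator_inequalities}), merging the two free witnesses by Axiom~\ref{p:convex} to get $R_\mathrm{G}(\Theta^{(n)}_\epsilon[\Phi_\mathrm{free}^{(n)}])\leq \eta_n' R_\mathrm{G}(\tilde{\Phi}_\mathrm{out}^{(m_n)})$; the paper instead prepares the robustness-complement channel $\Phi_\mathrm{out}^{(rn)\prime}$ on the reject branch and reconstructs a free channel exactly, obtaining a bound of order $1/R_\mathrm{G}(\Phi_\mathrm{out}^{(rn)})$ without invoking convexity at that step. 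Your route is slightly less delicate (any free reject output works) at the mild cost of using Axiom~\ref{p:convex}, which is available anyway. Second, you make explicit the diagonalization over $\epsilon_k\downarrow 0$ needed to turn the fixed-$\epsilon$ Stein tests into a single sequence with vanishing diamond-distance error, correctly observing that the exponential robustness bound has an $\epsilon$-independent exponent so it survives along the diagonal; the paper compresses this into the assertion that a sequence $\epsilon_n\to 0$ with both properties exists. Both arguments establish the claimed achievability bound.
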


\begin{proof}
    Under the assumptions~\eqref{eq:direct_second_law_assumption1} and~\eqref{eq:direct_second_law_assumption1}, we choose any paramter
    \begin{align}
    \label{eq:direct_second_law_delta}
        \delta\in\qty(0,\min\qty{R_\mathrm{R}^\infty\qty(\Phi_\mathrm{in}),R_\mathrm{R}^\infty\qty(\Phi_\mathrm{out})}),
    \end{align}
    and set
    \begin{align}
    \label{eq:direct_second_law_r}
        r\coloneqq\frac{R_\mathrm{R}^\infty\qty(\Phi_\mathrm{in})-\delta}{R_\mathrm{R}^\infty\qty(\Phi_\mathrm{out})}.
    \end{align}
    We will construct a sequence $\qty{\Theta^{(n)}\in\mathcal{C}\qty(\qty(\mathcal{X}_\mathrm{in}^n\to\mathcal{H}_\mathrm{in}^{\otimes n})\to\qty(\mathcal{X}_\mathrm{out}^{n}\to\mathcal{H}_\mathrm{out}^{\otimes n}))}_{n=1,2,\ldots}\in\tilde{\mathcal{O}}$ achieving the rate $r$, i.e.,
    \begin{align}
    \label{eq:direct_second_law_conclusion}
        \liminf_{n\to\infty}d_\diamond\qty(\Theta^{(n)}\qty[\Phi_\mathrm{in}^{\otimes n}],\Phi_\mathrm{out}^{\otimes \lceil rn\rceil})=0,
    \end{align}
    where $d_\diamond$ is defined as~\eqref{eq:d_diamond}.
    In the following, we will first present a construction of $\qty{\Theta^{(n)}}_{n}$, then prove that $\qty{\Theta^{(n)}}_{n}$ satisfies Axiom~\ref{p:arng}, and finally show~\eqref{eq:direct_second_law_conclusion}.

    \textbf{Construction of $\qty{\Theta^{(n)}}_{n}$.}
    Applying the generalized quantum Stein's lemma for CQ channels in Theorem~\ref{thm:main}, we have a sequence $\qty{\epsilon_n\in\qty(0,1)}_{n=1,2,\ldots}$ of parameters satisfying
    \begin{align}
    \label{eq:direct_second_law_delta}
        &\lim_{n\to\infty}\epsilon_n=0,
    \end{align}
    a sequence $\qty{p_n}_{n=1,2,\ldots}$ of probability distributions over $\mathcal{X}_\mathrm{in}^n$, a sequence $\qty{\qty{T_{x_\mathrm{in}^{(n)}}}_{x_\mathrm{in}^{(n)}\in\mathcal{X}_\mathrm{in}^n}}_{n=1,2,\ldots}$ of families of POVM elements, and $n_0$ such that they achieve, for every $n\geq n_0$,
    \begin{align}
    \label{eq:direct_second_law_T_1}
        \alpha_n&\coloneqq\sum_{x_\mathrm{in}^{(n)}\in\mathcal{X}_\mathrm{in}^n}p_n\qty(x_\mathrm{in}^{(n)})\Tr\qty[\qty(\mathds{1}-T_{x_\mathrm{in}^{(n)}})\Phi_\mathrm{in}^{\otimes n}\qty(x_\mathrm{in}^{(n)})]\notag\\
        &\leq\epsilon_n,\\
    \label{eq:direct_second_law_T_2}
        \beta_n&\coloneqq\max_{\Phi_\mathrm{free}\in\mathcal{F}}\sum_{x_\mathrm{in}^{(n)}\in\mathcal{X}_\mathrm{in}^n}p_n\qty(x_\mathrm{in}^{(n)})\Tr\qty[T_{x_\mathrm{in}^{(n)}}\Phi_\mathrm{free}\qty(x_\mathrm{in}^{(n)})]\notag\\
        &\leq\exp[-n\qty(R_\mathrm{R}^\infty\qty(\Phi_\mathrm{in})-\frac{\delta}{3})],
    \end{align}
    where $\delta$ is the constant given by~\eqref{eq:direct_second_law_delta}.
    Due to Proposition~\ref{prp:robustness_characterization}, we have a sequence $\qty{\Phi_\mathrm{out}^{(rn)}}_{n=1,2,\ldots}$ of CQ channels satisfying
    \begin{align}
    \label{eq:direct_second_law_1}
        &R_\mathrm{R}^\infty\qty(\Phi_\mathrm{out})=\lim_{n\to\infty}\frac{\log\qty[1+R_\mathrm{G}\qty(\Phi_\mathrm{out}^{(rn)})]}{\lceil rn\rceil},\\
    \label{eq:direct_second_law_diamond_distance_limit}
        &\lim_{n\to\infty}d_\diamond\qty(\Phi_\mathrm{out}^{(rn)},\Phi_\mathrm{out}^{\otimes \lceil rn\rceil})=0.
    \end{align}
    By the definition~\eqref{eq:R_G} of $R_\mathrm{G}$, there exists a CQ channel $\Phi_\mathrm{out}^{(rn)\prime}$ achieving
    \begin{align}
    \label{eq:direct_second_law_phi_out_prime}
        \frac{\Phi_\mathrm{out}^{(rn)}+R_\mathrm{G}\qty(\Phi_\mathrm{out}^{(rn)})\Phi_\mathrm{out}^{(rn)\prime}}{1+R_\mathrm{G}\qty(\Phi_\mathrm{out}^{(rn)})}\in\mathcal{F}.
    \end{align}
    For every $n$, we define $\Theta^{(n)}$ as
    \begin{widetext}
    \begin{align}
    \label{eq:direct_second_law_theta}
        \Theta^{(n)}\qty[\Phi]\qty(x_\mathrm{out}^{(n)})
        &\coloneqq\qty(\sum_{x_\mathrm{in}^{(n)}\in\mathcal{X}_\mathrm{in}^n}p_n\qty(x_\mathrm{in}^{(n)})\Tr\qty[T_{x_\mathrm{in}^{(n)}}\Phi\qty(x_\mathrm{in}^{(n)})])\Phi_\mathrm{out}^{(rn)}\qty(x_\mathrm{out}^{(n)})+\notag\\
        &\quad\qty(\sum_{x_\mathrm{in}^{(n)}\in\mathcal{X}_\mathrm{in}^n}p_n\qty(x_\mathrm{in}^{(n)})\Tr\qty[\qty(\mathds{1}-T_{x_\mathrm{in}^{(n)}})\Phi\qty(x_\mathrm{in}^{(n)})])\Phi_\mathrm{out}^{(rn)\prime}\qty(x_\mathrm{out}^{(n)}),
    \end{align}
    \end{widetext}
    For any given input $x_\mathrm{out}^{(n)}\in\mathcal{X}_\mathrm{out}^n$, the CQ channel $\Theta^{(n)}\qty[\Phi]\in\mathcal{C}\qty(\mathcal{X}_\mathrm{out}^n\to\mathcal{H}_\mathrm{out}^{\otimes n})$ can be implemented by sampling an input $x_\mathrm{in}^{(n)}\in\mathcal{X}_\mathrm{in}^n$ according to the probability distribution $p_n$, inputting $x_\mathrm{in}^{(n)}$ to $\Phi$, measuring the corresponding output $\Phi\qty(x_\mathrm{in}^{(n)})$ by the POVM $\qty{T_{x^{(n)}},\mathds{1}-T_{x^{(n)}}}$, and, conditioned on each measurement outcome, outputting $\Phi_\mathrm{out}^{(rn)}\qty(x_\mathrm{out}^{(n)})$ and $\Phi_\mathrm{out}^{(rn)\prime}\qty(x_\mathrm{out}^{(n)})$, respectively, which shows that $\Theta^{(n)}$ is a superchannel as in~\eqref{eq:theta}.

    \textbf{Proof of Axiom~\ref{p:arng} for $\qty{\Theta^{(n)}}_{n}$.}
    For any sequence $\qty{\Phi_\mathrm{free}^{(n)}\in\mathcal{F}\qty(\mathcal{X}^n\to\mathcal{H}^{\otimes n})}_{n=1,2,\ldots}$, we will bound $R_\mathrm{G}\qty(\Theta^{(n)}\qty[\Phi_\mathrm{free}^{(n)}])$.
    For simplicity of notation, we write
    \begin{align}
        s_n&\coloneqq R_\mathrm{G}\qty(\Phi_\mathrm{out}^{(rn)}),\\
        t_n&\coloneqq \sum_{x_\mathrm{in}^{(n)}\in\mathcal{X}_\mathrm{in}^n}p_n\qty(x_\mathrm{in}^{(n)})\Tr\qty[T_{x_\mathrm{in}^{(n)}}\Phi_\mathrm{free}^{(n)}\qty(x_\mathrm{in}^{(n)})].
    \end{align}
    Due to~\eqref{eq:direct_second_law_1}, there exists $n_s$ such that, for every $n\geq n_s$,
    \begin{widetext}
    \begin{align}
    \label{eq:direct_second_law_limit_s}
        \exp[rn\qty(R_\mathrm{R}^\infty\qty(\Phi_\mathrm{out})+\frac{\delta}{3r})]-1\geq R_\mathrm{G}\qty(\Phi_\mathrm{out}^{(rn)})\geq\exp[rn\qty(R_\mathrm{R}^\infty\qty(\Phi_\mathrm{out})-\frac{\delta}{3r})]-1\to\infty~\text{as $n\to\infty$}.
    \end{align}
    \end{widetext}
    Due to~\eqref{eq:direct_second_law_T_2}, there exists $n_t$ such that for every $n\geq n_t$,
    \begin{align}
    \label{eq:direct_second_law_limit_t}
        t_n\leq\beta_n\leq\exp[-n\qty(R_\mathrm{R}^\infty\qty(\Phi_\mathrm{in})-\frac{\delta}{3})]\to 0~\text{as $n\to\infty$}.
    \end{align}
    Therefore, for every $n\geq\max\qty{n_s,n_t}$, it holds that
    \begin{align}
        &\frac{1}{1+s_n}-t_n\notag\\
        &\geq\exp[-rn\qty(R_\mathrm{R}^\infty\qty(\Phi_\mathrm{out})+\frac{\delta}{3r})]\notag\\
        &\quad-\exp[-n\qty(R_\mathrm{R}^\infty\qty(\Phi_\mathrm{in})-\frac{\delta}{3})]\\
        &\geq\exp[-n\qty(R_\mathrm{R}^\infty\qty(\Phi_\mathrm{in})-\delta+\frac{\delta}{3})]\notag\\
        &\quad-\exp[-n\qty(R_\mathrm{R}^\infty\qty(\Phi_\mathrm{in})-\frac{\delta}{3})]\\
        &\geq 0,
    \end{align}
    where we use the definition~\eqref{eq:direct_second_law_r} of $r$.
    Then, for every $n\geq\max\qty{n_s,n_t}$, we introduce a nonnegative parameter
    \begin{align}
    \label{eq:direct_second_law_r}
        r_n\coloneqq\frac{\frac{1}{1+s_n}-t_n}{\frac{s_n}{1+s_n}}\geq 0.
    \end{align}

    Using this notation, by the definition~\eqref{eq:direct_second_law_theta} of $\Theta^{(n)}$, we have
    \begin{align}
    \label{eq:direct_second_law_t}
        &\Theta^{(n)}\qty[\Phi_\mathrm{free}^{(n)}]=t_n\Phi_\mathrm{out}^{(rn)}+\qty(1-t_n)\Phi_\mathrm{out}^{(rn)\prime}.
    \end{align}
    On the other hand, it follows from~\eqref{eq:direct_second_law_phi_out_prime} that
    \begin{align}
    \label{eq:direct_second_law_s}
        \frac{\Phi_\mathrm{out}^{(rn)}+s_n\Phi_\mathrm{out}^{(rn)\prime}}{1+s_n}\in\mathcal{F}.
    \end{align}
    From~\eqref{eq:direct_second_law_r},~\eqref{eq:direct_second_law_t} and~\eqref{eq:direct_second_law_s}, we obtain, for every $n\geq\max\qty{n_s,n_t}$,
    \begin{align}
        &\frac{\Theta^{(n)}\qty[\Phi_\mathrm{free}^{(n)}]+r_n \Phi_\mathrm{out}^{(rn)}}{1+r_n}\notag\\
        &=\frac{t_n\Phi_\mathrm{out}^{(rn)}+\qty(1-t_n)\Phi_\mathrm{out}^{(rn)\prime}+\frac{\frac{1}{1+s_n}-t_n}{\frac{s_n}{1+s_n}}\Phi_\mathrm{out}^{(rn)}}{1+\frac{\frac{1}{1+s_n}-t_n}{\frac{s_n}{1+s_n}}}\\
        &=\frac{\Phi_\mathrm{out}^{(rn)}+s_n\Phi_\mathrm{out}^{(rn)\prime}}{1+s_n}\in\mathcal{F};
    \end{align}
    thus, by the definition~\eqref{eq:R_G} of $R_\mathrm{G}$, it holds that
    \begin{align}
        R_\mathrm{G}\qty(\Theta^{(n)}\qty[\Phi_\mathrm{free}^{(n)}])\leq r_n\to 0~\text{as $n\to\infty$},
    \end{align}
    where the limit vanishes due to~\eqref{eq:direct_second_law_limit_s},~\eqref{eq:direct_second_law_limit_t}, and~\eqref{eq:direct_second_law_r}, satisfying Axiom~\ref{p:arng}.

    \textbf{Proof of~\eqref{eq:direct_second_law_conclusion}.}
    By the definition~\eqref{eq:direct_second_law_theta} of $\Theta^{(n)}$, with $\alpha_n$ in~\eqref{eq:direct_second_law_T_1}, we have
    \begin{align}
        \Theta^{(n)}\qty[\Phi_\mathrm{in}^{\otimes n}]=\qty(1-\alpha_n)\Phi_\mathrm{out}^{(rn)}+\alpha_n\Phi_\mathrm{out}^{(rn)\prime}.
    \end{align}
    Thus, it holds that
    \begin{align}
        &d_\diamond\qty(\Theta^{(n)}\qty[\Phi_\mathrm{in}^{\otimes n}],\Phi_\mathrm{out}^{\otimes \lceil rn\rceil})\notag\\
        \label{eq:direct_second_law_conclusion_1}
        &\leq\qty(1-\alpha_n)d_\diamond\qty(\Phi_\mathrm{out}^{(rn)},\Phi_\mathrm{out}^{\otimes \lceil rn\rceil})+\alpha_n d_\diamond\qty(\Phi_\mathrm{out}^{(rn)\prime},\Phi_\mathrm{out}^{\otimes \lceil rn\rceil})\\
        \label{eq:direct_second_law_conclusion_2}
        &\leq d_\diamond\qty(\Phi_\mathrm{out}^{(rn)},\Phi_\mathrm{out}^{\otimes \lceil rn\rceil})+\epsilon_n d_\diamond\qty(\Phi_\mathrm{out}^{(rn)\prime},\Phi_\mathrm{out}^{\otimes \lceil rn\rceil}),
    \end{align}
    where~\eqref{eq:direct_second_law_conclusion_1} follows from the convexity~\eqref{eq:d_diamond_convexity} of $d_\diamond$, and~\eqref{eq:direct_second_law_conclusion_2} is due to~\eqref{eq:direct_second_law_T_1}.
    Therefore, due to~\eqref{eq:epsilon_n} and~\eqref{eq:direct_second_law_diamond_distance_limit}, we have
    \begin{align}
        &\lim_{n\to\infty}d_\diamond\qty(\Theta^{(n)}\qty[\Phi_\mathrm{in}^{\otimes n}],\Phi_\mathrm{out}^{\otimes \lceil rn\rceil})\notag\\
        &\leq\lim_{n\to\infty}d_\diamond\qty(\Phi_\mathrm{out}^{(rn)},\Phi_\mathrm{out}^{\otimes \lceil rn\rceil})+\lim_{n\to\infty}\epsilon_n d_\diamond\qty(\Phi_\mathrm{out}^{(rn)\prime},\Phi_\mathrm{out}^{\otimes \lceil rn\rceil})\\
        &=0,
    \end{align}
    which yields~\eqref{eq:direct_second_law_conclusion}.
\end{proof}

\section{Application to channel capacities}
\label{sec:application}

In this section, we apply the QRT framework for CQ channel conversion introduced above to the analysis of CQ channel capacities.
A previous work~\cite{hayashi2025generalizedquantumsteinslemma} also proposed a reversible QRT framework for CQ channel conversion and demonstrated its application to certain communication scenarios; however, their framework had limited applicability because it required asymptotic continuity of the allowed operations for CQ channel conversion. Due to this restriction, the reversible QRT framework in Ref.~\cite{hayashi2025generalizedquantumsteinslemma} was applicable only to communication scenarios with a fixed encoding scheme and could not be fully applied to the conventional channel-capacity settings that allow optimization over encoding schemes.
By contrast, our contribution lies in removing the requirement of asymptotic continuity, so that the only condition on the allowed operations is that they be asymptotically resource-non-generating. This refinement enables the resulting reversible QRT framework to be applied consistently to conventional channel capacity problems, as demonstrated here.
In Sec.~\ref{sec:operations}, we introduce a hierarchy of sets of operations for CQ channel conversion that underpins this analysis. In Sec.~\ref{sec:capacity}, we establish capacity bounds for CQ channels derived from the QRT framework using these sets of operations.

\subsection{Hierarchical relation of sets of operations for CQ channel converison}
\label{sec:operations}

In the spirit of Refs.~\cite{Takagi2020,hayashi2025generalizedquantumsteinslemma}, we here formulate the framework of QRTs for CQ dynamical resources to analyze the capacity of CQ channels.
To this end, we take the family $\mathcal{F}_\mathrm{R}$ of free CQ channels to be the set of replacers:
\begin{align}
\label{eq:F_R}
\mathcal{F}_\mathrm{R}\qty(\mathcal{X}\to\mathcal{H})\coloneqq\qty{\Phi_\rho:\rho\in\mathcal{D}\qty(\mathcal{H})},
\end{align}
where each $\Phi_\rho\in\mathcal{C}\qty(\mathcal{X}\to\mathcal{H})$ is defined by
\begin{align}
\Phi_\rho(x)=\rho \quad \text{for all $x\in\mathcal{X}$}.
\end{align}
By construction, every channel in $\mathcal{F}_\mathrm{R}$ has zero capacity.
A natural requirement for the operations allowed in channel coding, such as encoding and decoding, is that they should convert these free CQ channels to free CQ channels.
In what follows, we introduce a hierarchy of superchannels that represent such admissible classes of operations.

As in the definition~\eqref{eq:tilde_O} of $\tilde{\mathcal{O}}$, we define a family of asymptotically resource-non-generating operations for $\mathcal{F}_\mathrm{R}$ in~\eqref{eq:F_R} as
\begin{widetext}
\begin{align}
&\tilde{\mathcal{O}}_\mathrm{R}
\coloneqq
\qty{\qty{\Theta^{(n)}}_{n=1,2,\ldots}:\forall\qty{\Phi_\mathrm{free}^{(n)}\in\mathcal{F}}_{n=1,2,\ldots},~\lim_{n\to\infty} R_\mathrm{G}\qty(\Theta^{(n)}\qty[\Phi_\mathrm{free}^{(n)}])=0},
\end{align}    
\end{widetext}
where $R_\mathrm{G}$ is defined as~\eqref{eq:R_G}.
The family $\tilde{\mathcal{O}}_\mathrm{R}$ can be considered as a relaxation of a family $\mathcal{O}_\mathrm{R}$ of resource-non-generating operations for $\mathcal{F}_\mathrm{R}$ given by
\begin{align}
&\mathcal{O}_\mathrm{R}
\coloneqq
\qty{\Theta:\forall\Phi_\mathrm{free}\in\mathcal{F},~R_\mathrm{G}\qty(\Theta\qty[\Phi_\mathrm{free}])=0},
\end{align}
These operations are superchannels, and as in~\eqref{eq:theta}, we henceforth represent them as $\Theta\in\mathcal{C}\qty(\qty(\mathcal{X}_\mathrm{in}\to\mathcal{H}_\mathrm{in})\to\qty(\mathcal{X}_\mathrm{out}\to\mathcal{H}_\mathrm{out}))$ acting on a CQ channel $\Phi_\mathrm{in}\in\mathcal{C}\qty(\mathrm{X}_\mathrm{in}\to\mathcal{H}_\mathrm{in})$ as
\begin{align}
    \label{eq:Theta_application}
    &\qty(\Theta\qty[\Phi_\mathrm{in}])\qty(x_\mathrm{out})\notag\\
    &=\sum_{x_\mathrm{in}\in\mathcal{X}_\mathrm{in}}p_\Theta\qty(x_\mathrm{in}|x_\mathrm{out})\qty(\mathcal{N}_{\Theta,x_\mathrm{in},x_\mathrm{out}}\circ\Phi_\mathrm{in})\qty(x_\mathrm{in}).
\end{align}

Then, the following lemma characterize the property of $\mathcal{O}_\mathrm{R}$.
\begin{lemma}[Characterization of resource-non-generating operations by non-signaling condition]
    \label{lem:NS}
    For any superchannel $\Theta$ represented as~\eqref{eq:Theta_application} with $p_\Theta$ and $\mathcal{N}_{\Theta,x_\mathrm{in},x_\mathrm{out}}$, we consider the following conditions:
    \begin{enumerate}
    \item it holds that
    \begin{align}
        \label{eq:NS1}
        \Theta\in\mathcal{O}_\mathrm{R};
    \end{align}
    \item there exists a CPTP linear map $\mathcal{N}$ from $\mathcal{L}\qty(\mathcal{H}_\mathrm{in})$ to $\mathcal{L}\qty(\mathcal{H}_\mathrm{out})$ such that, for any $x_\mathrm{out}\in\mathcal{X}_\mathrm{out}$,
    \begin{align}
        \label{eq:NS}
        \mathcal{N}=\sum_{x_\mathrm{in}\in\mathcal{X}_\mathrm{in}}p_\Theta\qty(x_\mathrm{in}|x_\mathrm{out})\mathcal{N}_{\Theta,x_\mathrm{in},x_\mathrm{out}};
    \end{align}
    \item there exist two reference systems $\mathcal{H}_{\mathcal{X},R}$ and $\mathcal{H}_{R}$, an entangle state $\tilde{\rho}\in\mathcal{D}\qty(\mathcal{H}_{\mathcal{X},R}\otimes\mathcal{H}_{R})$, a POVM $\qty{\tilde{\Lambda}_{x_\mathrm{in}|x_\mathrm{out}}}_{x_\mathrm{in}}$ on $\mathcal{H}_{\mathcal{X},R}$, and a CPTP linear map $\tilde{\mathcal{N}}$ from $\mathcal{L}\qty(\mathcal{H}_{\mathrm{in}}\otimes\mathcal{H}_{R})$ to $\mathcal{L}\qty(\mathcal{H}_{\mathrm{out}})$ such that, for any $x_\mathrm{in}\in\mathcal{X}_\mathrm{in}$, $x_\mathrm{out}\in\mathcal{X}_\mathrm{out}$, and $\rho\in\mathcal{D}\qty(\mathcal{H}_{\mathrm{in}})$, we have relations
    \begin{align}
        \label{eq:ENS}
        p_\Theta\qty(x_\mathrm{in}|x_\mathrm{out})&=\Tr\qty[\qty(\tilde{\Lambda}_{x_\mathrm{in}|x_\mathrm{out}}\otimes\mathds{1})\tilde{\rho}],\notag\\
        \mathcal{N}_{\Theta,x_\mathrm{in},x_\mathrm{out}}\qty(\rho)&=\frac{\tilde{\mathcal{N}}\qty(\rho\otimes\Tr_{\mathcal{X},R}\qty[\qty(\tilde{\Lambda}_{x_\mathrm{in}|x_\mathrm{out}},\mathds{1})\tilde{\rho}])}{p_\Theta\qty(x_\mathrm{in}|x_\mathrm{out})},
    \end{align}
    where $\Tr_{\mathcal{X},R}$ is the partial trace over $\mathcal{H}_{\mathcal{X},R}$.
    \end{enumerate}
    The first and second conditions are equivalent. The third condition implies the first and second conditions.
\end{lemma}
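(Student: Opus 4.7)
The plan is to obtain (1) $\Leftrightarrow$ (2) by directly unfolding the definitions of $\mathcal{O}_\mathrm{R}$, $\mathcal{F}_\mathrm{R}$, and the explicit form~\eqref{eq:Theta_application} of a superchannel, and then derive (3) $\Rightarrow$ (2) by a short computation that exploits POVM completeness. The one observation I would use throughout is that the generalized robustness satisfies $R_\mathrm{G}(\Phi)=0$ if and only if $\Phi\in\mathcal{F}_\mathrm{R}$, i.e., $\Phi$ is a replacer whose output state is independent of its classical input; this is immediate from the definition~\eqref{eq:R_G} since $\mathcal{F}_\mathrm{R}$ in~\eqref{eq:F_R} is nonempty.

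For (2) $\Rightarrow$ (1), I apply $\Theta$ to an arbitrary replacer $\Phi_\rho\in\mathcal{F}_\mathrm{R}$ via~\eqref{eq:Theta_application}; by the hypothesized non-signaling identity~\eqref{eq:NS}, the $x_\mathrm{in}$-sum collapses to $\mathcal{N}(\rho)$, which is independent of $x_\mathrm{out}$, so $\Theta[\Phi_\rho]\in\mathcal{F}_\mathrm{R}$ and hence $R_\mathrm{G}(\Theta[\Phi_\rho])=0$. Conversely, for (1) $\Rightarrow$ (2), the hypothesis $R_\mathrm{G}(\Theta[\Phi_\rho])=0$ for every $\rho\in\mathcal{D}(\mathcal{H}_\mathrm{in})$ forces $\sum_{x_\mathrm{in}}p_\Theta(x_\mathrm{in}|x_\mathrm{out})\mathcal{N}_{\Theta,x_\mathrm{in},x_\mathrm{out}}(\rho)$ to be independent of $x_\mathrm{out}$; I denote this common value by $\mathcal{N}(\rho)$ and extend linearly to all of $\mathcal{L}(\mathcal{H}_\mathrm{in})$. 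The resulting operator identity on states coincides with an identity of linear maps because the density operators span the Hermitian part of $\mathcal{L}(\mathcal{H}_\mathrm{in})$ and both sides are linear; complete positivity and trace preservation of $\mathcal{N}$ follow because it is a convex combination (over $x_\mathrm{in}$, with weights summing to one by construction of $p_\Theta$) of CPTP maps.

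For (3) $\Rightarrow$ (2), I substitute~\eqref{eq:ENS} into the right-hand side of~\eqref{eq:NS}. The prefactor $p_\Theta(x_\mathrm{in}|x_\mathrm{out})$ cancels the denominator in~\eqref{eq:ENS}, and then POVM completeness $\sum_{x_\mathrm{in}}\tilde{\Lambda}_{x_\mathrm{in}|x_\mathrm{out}}=\mathds{1}$ together with the linearity of $\tilde{\mathcal{N}}$ reduces the sum to $\tilde{\mathcal{N}}\bigl(\rho\otimes\tilde{\rho}_R\bigr)$, where $\tilde{\rho}_R\coloneqq\Tr_{\mathcal{X},R}[\tilde{\rho}]$ is the marginal of $\tilde{\rho}$ on $\mathcal{H}_R$ and is manifestly independent of $x_\mathrm{out}$. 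Defining $\mathcal{N}(\cdot)\coloneqq\tilde{\mathcal{N}}(\cdot\otimes\tilde{\rho}_R)$ gives a CPTP map realizing~\eqref{eq:NS}, completing (3) $\Rightarrow$ (2) and, via the already-established (2) $\Rightarrow$ (1), also (3) $\Rightarrow$ (1).

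I do not anticipate a genuine obstacle: the proof is purely definitional, and the only mild subtlety is the promotion of the $\rho$-pointwise identity to an identity of superoperators in the (1) $\Rightarrow$ (2) direction, which I handle by linear extension as noted above. A minor bookkeeping point is to make sure the quotient in~\eqref{eq:ENS} is well-defined whenever it appears, which is the case because the definition of $\mathcal{N}_{\Theta,x_\mathrm{in},x_\mathrm{out}}$ is only needed on the support of $p_\Theta(\cdot|x_\mathrm{out})$; on this support the cancellation in the (3) $\Rightarrow$ (2) step is unambiguous, and terms with $p_\Theta(x_\mathrm{in}|x_\mathrm{out})=0$ contribute zero to both sides of~\eqref{eq:NS}.
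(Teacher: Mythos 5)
Your proposal is correct and follows essentially the same route as the paper's proof: both directions of (1)$\Leftrightarrow$(2) are obtained by applying $\Theta$ to replacer channels $\Phi_\rho$ and reading off the $x_\mathrm{out}$-independence from~\eqref{eq:Theta_application}, and (3)$\Rightarrow$(2) follows by substituting~\eqref{eq:ENS} and using POVM completeness to reduce the sum to $\tilde{\mathcal{N}}\qty(\rho\otimes\Tr_{\mathcal{X},R}\qty[\tilde{\rho}])$. Your added remarks on extending the state-pointwise identity to an identity of linear maps and on the support of $p_\Theta(\cdot|x_\mathrm{out})$ are harmless refinements of the same argument.
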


The condition shown in~\eqref{eq:NS} corresponds to the non-signaling condition; hence, we let $\mathcal{O}_\mathrm{NS}$ denote the set of superchannels satisfying~\eqref{eq:NS}.
Lemma~\ref{lem:NS} guarantees the relation
\begin{align}
    \mathcal{O}_\mathrm{NS}=\mathcal{O}_\mathrm{R}.
\end{align}
Note that Ref.~\cite{IEEE-IT-8850073} also characterizes the non-signaling condition by using a semidefinite programming (SDP), where the condition~\eqref{eq:NS} corresponds to (6f) in Ref.~\cite{IEEE-IT-8850073}.
When the condition in~\eqref{eq:ENS} holds, the superchannel $\Theta$ can be implemented by the combination of shared entangled state $\tilde{\rho}$, a POVM measurement $\qty{\tilde{\Lambda}_{x_\mathrm{in}|x_\mathrm{out}}}_{x_\mathrm{in}}$, and the CPTP linear map $\tilde{\Phi}$; thus, we write the set of superchannels satisfying~\eqref{eq:ENS} as $\mathcal{O}_\mathrm{ENS}$.
Lemma~\ref{lem:NS} implies
\begin{align}
    \mathcal{O}_\mathrm{ENS}\subseteq\mathcal{O}_\mathrm{NS}.
\end{align}

\begin{proof}[Proof of Lemma~\ref{lem:NS}]
    Assume~\eqref{eq:NS1}.
    For an arbitrary state $\rho\in\mathcal{D}\qty(\mathcal{H}_\mathrm{out})$, the output of the CQ channel $\Theta\qty[\Phi_\rho]$, i.e.,
    \begin{align}
        \qty(\Theta\qty[\Phi_\rho])\qty(x_\mathrm{out})=\sum_{x_\mathrm{in}\in\mathcal{X}_\mathrm{in}}p_\Theta\qty(x_\mathrm{in}|x_\mathrm{out})\mathcal{N}_{\Theta,x_\mathrm{in},x_\mathrm{out}}\qty(\rho),
    \end{align}
    does not depend on the input $x_\mathrm{out}$.
    Thus, the map $\sum_{x_\mathrm{in}\in\mathcal{X}_\mathrm{in}}p_\Theta\qty(x_\mathrm{in}|x_\mathrm{out})\mathcal{N}_{\Theta,x_\mathrm{in},x_\mathrm{out}}$ does not depend on $x_\mathrm{out}$, which implies~\eqref{eq:NS}.

    Assume~\eqref{eq:NS}.
    For an arbitrary state $\rho\in\mathcal{D}\qty(\mathcal{H}_\mathrm{out})$, the output of the channel $\Theta\qty[\Phi_\rho]$ is given by
    \begin{align}
        \qty(\Theta\qty[\Phi_\rho])\qty(x_\mathrm{out})&=\sum_{x_\mathrm{in}\in\mathcal{X}_\mathrm{in}}p_\Theta\qty(x_\mathrm{in}|x_\mathrm{out})\mathcal{N}_{\Theta,x_\mathrm{in},x_\mathrm{out}}\qty(\rho)\\
        &=\mathcal{N}\qty(\rho),
    \end{align}
    which implies~\eqref{eq:NS1}.

    Finally, assume~\eqref{eq:ENS}.
    Then, we have
    \begin{align}
        &\sum_{x_\mathrm{in}\in\mathcal{X}_\mathrm{in}}p_\Theta\qty(x_\mathrm{in}|x_\mathrm{out})\mathcal{N}_{\Theta,x_\mathrm{in},x_\mathrm{out}}\notag\\
        &=\sum_{x_\mathrm{in}\in\mathcal{X}_\mathrm{in}}\tilde{\mathcal{N}}\qty(\rho\otimes\Tr_{\mathcal{X},R}\qty[\qty(\tilde{\Lambda}_{x_\mathrm{in}|x_\mathrm{out}}\otimes\mathds{1})\tilde{\rho}])\\
        &=\tilde{\mathcal{N}}\qty(\rho\otimes\Tr_{\mathcal{X},R}\qty[\qty(\mathds{1}\otimes\mathds{1})\tilde{\rho}])\\
        &=\tilde{\mathcal{N}}\qty(\rho\otimes\Tr_{\mathcal{X},R}\qty[\tilde{\rho}]),
    \end{align}
    which implies~\eqref{eq:NS}.
\end{proof}

Now, we consider a subset of $\mathcal{O}_\mathrm{ENS}$\@.
In particular, a superchannel $\Theta\in\mathcal{O}_\mathrm{ENS}$ is called classically correlated when there exists a probability distribution $\tilde{p}(s)$ representing shared randomness satisfying the following condition: there exists a pair of a conditional distribution $p_\Theta(x_\mathrm{in}|x_\mathrm{out}, s)$ and a CPTP linear map $\tilde{\mathcal{N}}_{x_\mathrm{in},x_\mathrm{out},s}$ such that
\begin{align}
    &\qty(\Theta\qty[\Phi_\mathrm{in}])\qty(x_\mathrm{out})\notag\\
    &=\sum_{x_\mathrm{in}\in\mathcal{X}_\mathrm{in}}\sum_s\tilde{p}\qty(s)p_\Theta(x_\mathrm{in}|x_\mathrm{out}, s)\qty(\tilde{\mathcal{N}}_{x_\mathrm{in},x_\mathrm{out},s}\circ\Phi_\mathrm{in})\qty(x_\mathrm{in}).
\end{align}
We let $\mathcal{O}_\mathrm{CNS}$ denote the set of these superchannels.
Since $\mathcal{O}_\mathrm{CNS}$ is a special case of $\mathcal{O}_\mathrm{ENS}$ restricted to using the classical shared randomness obtained by measuring the shared entanglement in the standard basis, we have
\begin{align}
    \mathcal{O}_\mathrm{CNS}\subset\mathcal{O}_\mathrm{ENS}.
\end{align}

Next, we consider a subset of $\mathcal{O}_\mathrm{CNS}$\@.
In particular, a superchannel $\Theta\in\mathcal{O}_\mathrm{CNS}$ is called deterministic if there exists a pair of a conditional distribution $p_\Theta(x_\mathrm{in}|x_\mathrm{out})$ and a CPTP linear map $\tilde{\mathcal{N}}_{x_\mathrm{in},x_\mathrm{out}}$ such that
\begin{align}
    &\qty(\Theta\qty[\Phi_\mathrm{in}])\qty(x_\mathrm{out})\notag\\
    &=\sum_{x_\mathrm{in}\in\mathcal{X}_\mathrm{in}}p_\Theta(x_\mathrm{in}|x_\mathrm{out})\qty(\tilde{\mathcal{N}}_{x_\mathrm{in},x_\mathrm{out}}\circ\Phi_\mathrm{in})\qty(x_\mathrm{in}).
\end{align}
We let $\mathcal{O}_\mathrm{DNS}$ denote the set of these superchannels.
Since $\mathcal{O}_\mathrm{DNS}$ is a special case of $\mathcal{O}_\mathrm{CNS}$ restricted to using $\tilde{p}(s)=1$ over a single-element set, we have
\begin{align}
    \mathcal{O}_\mathrm{DNS}\subset\mathcal{O}_\mathrm{CNS}.
\end{align}

Overall, we have the inclusion relation
\begin{align}
    \label{eq:inclusion}
    \mathcal{O}_\mathrm{DNS}\subset\mathcal{O}_\mathrm{CNS}\subset\mathcal{O}_\mathrm{ENS}\subset\mathcal{O}_\mathrm{NS}=\mathcal{O}_\mathrm{R},
\end{align}
and $\tilde{O}_\mathrm{R}$ is a relaxation of these sets of superchannels.

\subsection{Channel capacity from CQ channel conversion}
\label{sec:capacity}

In this section, we analyze bounds of the capacity and the conversion rate of CQ channels using the sets of superchannels introduced above.
Given a CQ channel $\Phi\in\mathcal{C}\qty(\mathcal{X}\to\mathcal{H})$, the relative entropy of resource with respect to $\mathcal{F}_\mathrm{R}$ in~\eqref{eq:F_R} is calculated as
\begin{align}
    R_\mathrm{R}\qty(\Phi)
    &=\min_{\Phi_\mathrm{free}\in\mathcal{F}_\mathrm{R}}D\left(\Phi\middle\|\Phi_\mathrm{free}\right)\\
    &=\min_{\rho\in\mathcal{D}\qty(\mathcal{H})}D\left(\Phi\middle\|\Phi_\rho\right)\\
    &=\min_{\rho\in\mathcal{D}\qty(\mathcal{H})}\max_{x\in\mathcal{X}}D\left(\Phi\qty(x)\middle\|\rho\right)\\
    &=\max_{p}\sum_{x\in\mathcal{X}}p(x)D\left(\Phi\qty(x)\middle\|\sum_{x'\in\mathcal{X}}p\qty(x')\Phi\qty(x')\right)
\end{align}
which is the same as the channel capacity of the CQ channel $\Phi$, denoted by $C[\Phi]$.
This quantity satisfies the additivity
\begin{align}
    R_\mathrm{R}\qty(\Phi\otimes\Phi')=R_\mathrm{R}\qty(\Phi)+R_\mathrm{R}\qty(\Phi'),
\end{align}
and hence, the regularized relative entropy of resource coincides with
\begin{align}
    R_\mathrm{R}^\infty\qty(\Phi)=R_\mathrm{R}\qty(\Phi).
\end{align}
For the conversion rates defined in~\eqref{eq:conversion_rate_O} and~\eqref{eq:conversion_rate}, due to the inclusion relation shown in~\eqref{eq:inclusion}, we have
\begin{align}
    \label{eq:r_relation}
    &r_{\mathcal{O}_\mathrm{DNS}}\qty(\Phi_\mathrm{in}\to\Phi_\mathrm{out})\notag\\
    &\leq r_{\mathcal{O}_\mathrm{CNS}}\qty(\Phi_\mathrm{in}\to\Phi_\mathrm{out})\notag\\
    &\leq r_{\mathcal{O}_\mathrm{ENS}}\qty(\Phi_\mathrm{in}\to\Phi_\mathrm{out})\notag\\
    &\leq r_{\mathcal{O}_\mathrm{NS}}\qty(\Phi_\mathrm{in}\to\Phi_\mathrm{out})= r_{\mathcal{O}_\mathrm{R}}\qty(\Phi_\mathrm{in}\to\Phi_\mathrm{out})\notag\\
    &\leq r_{\tilde{\mathcal{O}}_\mathrm{R}}\qty(\Phi_\mathrm{in}\to\Phi_\mathrm{out})=\frac{C[\Phi_\mathrm{in}]}{C[\Phi_\mathrm{out}]},
\end{align}
where the last line follows from Theorem~\ref{thm:second_law} when $\Phi_\mathrm{in},\Phi_\mathrm{out}\not\in\mathcal{F}_\mathrm{R}$.
In the following, we will check that the equality holds in some of the above inequalities in various scenarions of channel coding.

First, to analyze the channel coding, we consider the simulation of a noiseless CQ channel
\begin{align}
\Phi_\mathrm{noiseless}(x)=\ket{x}\bra{x}~\text{with input $x\in\qty{0,1}$}
\end{align}
by using a given CQ channel $\Phi\in\mathcal{C}\qty(\mathcal{X}\to\mathcal{H})$.
In this scenario, the quantity $d_\diamond\qty(\Theta[\Phi^{\otimes n}],\Phi_\mathrm{noiseless}^{\lceil rn\rceil})$ appearing in the definitions~\eqref{eq:conversion_rate_O} and~\eqref{eq:conversion_rate} of the conversion rates represents the maximum decoding error probability.
Hence, the CQ channel coding theorem~\cite{IEEE-IT-651037,PhysRevA.56.131} means the relation
\begin{align}
    r_{\mathcal{O}_\mathrm{DNS}}\qty(\Phi\to\Phi_\mathrm{noiseless})=C[\Phi].
\end{align}
More recently, Ref.~\cite{OTB2024} showed the relation
\begin{align}
    r_{\mathcal{O}_\mathrm{NS}}\qty(\Phi\to\Phi_\mathrm{noiseless})=C[\Phi].
\end{align}

On the other hand, Refs.~\cite[Theorem~3]{6757002} and~\cite[Theorem~4.3]{berta2011quantum} consider the problem of converting from $\Phi_\mathrm{noiseless}$ to $\Phi$ with a shared entangled state between the sender and the receiver; this type of result is called a quantum reverse Shannon theorem.
As explained in Appendix~\ref{sec:entanglement_assist}, they essentially showed that
\begin{align}
    \label{eq:bound_ENS}
    \frac{1}{r_{\mathcal{O}_\mathrm{ENS}}\qty(\Phi_\mathrm{noiseless}\to\Phi)}=C[\Phi].
\end{align}
Therefore, any two CQ channels $\Phi_\mathrm{in}$ and $\Phi_\mathrm{out}$ satisfy
\begin{align}
    r_{\mathcal{O}_\mathrm{ENS}}\qty(\Phi_\mathrm{in}\to\Phi_\mathrm{out})=\frac{C[\Phi_\mathrm{in}]}{C[\Phi_\mathrm{out}]},
\end{align}
which implies the equality in the third inequality of~\eqref{eq:r_relation}.

In addition, when the states in the set $\qty{\Phi(x)}_{x\in\mathcal{X}}$ are commutative with each other, Ref.~\cite{1035117} showed the relation
\begin{align}
    \frac{1}{r_{\mathcal{O}_\mathrm{CNS}}\qty(\Phi_\mathrm{noiseless}\to\Phi)}=C[\Phi];
\end{align}
therefore, such two CQ channels $\Phi_\mathrm{in}$ and $\Phi_\mathrm{out}$ satisfy
\begin{align}
    r_{\mathcal{O}_\mathrm{CNS}}\qty(\Phi_\mathrm{in}\to\Phi_\mathrm{out})=\frac{C[\Phi_\mathrm{in}]}{C[\Phi_\mathrm{out}]}.
\end{align}
In this case, the equality holds in the second inequality of~\eqref{eq:r_relation}.

\section{Conclusion}
\label{sec:conclusion}

In this work, we have formulated and proved a generalized quantum Stein's lemma for CQ channels, characterizing the optimal error exponent in hypothesis testing for distinguishing IID copies of a CQ channel from a non-IID set of free CQ channels.
This result extends the generalized quantum Stein's lemma from the state setting~\cite{Brand_o_2008,brandao2010reversible,Brandao2010,hayashi2025generalizedquantumsteinslemma,10898013} to the fundamental class of dynamical resources represented by CQ channels.
A key technical contribution was the development of CQ-channel counterparts of proof techniques originally devised for the state version of the lemma in Ref.~\cite{hayashi2025generalizedquantumsteinslemma}, including the pinching technique~\cite{hayashi2002optimal} and the information spectrum method~\cite{4069150}, as well as error-exponent bounds based on R\'{e}nyi relative entropies~\cite{887855,cooney2016strong}.
These tools address the nontrivial challenge arising from the presence of multiple possible inputs in CQ channels, and they enable channel discrimination tasks to be analyzed directly using quantities defined for channels, rather than relying on reductions to the state case.

Furthermore, using this CQ-channel version of the generalized quantum Stein's lemma, we construct a reversible QRT framework for CQ channel conversion.
Conceptually, our key advance is the removal of the asymptotic continuity requirement imposed in the earlier framework of Ref.~\cite{hayashi2025generalizedquantumsteinslemma}, demonstrating that the asymptotic resource-non-generating property of free operations alone suffices, as in the reversible QRT framework for static resources originally proposed in Refs.\cite{Brand_o_2008,brandao2010reversible,Brandao2010,Brandao2015}.
This refinement is significant because it broadens the applicability of the reversible framework to conventional channel coding scenarios, where input optimization is essential but typically violates asymptotic continuity.
As we have shown, our framework can now be applied to the analysis of channel capacities, thereby covering a key application domain of QRTs for dynamical resources~\cite{Takagi2020} that was largely inaccessible in the previous approach~\cite{hayashi2025generalizedquantumsteinslemma}.

While extending these results to QQ channels remains a challenging open problem, our findings highlight the importance of focusing on CQ channels to establish a tractable framework.
CQ channels provide a natural bridge between static resources and the full generality of dynamical resources, encompassing classical channels as special cases and reducing to states when the channels have a single input.
Looking forward, our framework establishes a tractable and conceptually robust foundation for analyzing quantum information processing tasks based on dynamical resources.
Beyond its theoretical contributions, the techniques developed here are also expected to serve as practical analytical tools, from the design of efficient coding strategies in quantum communication to principled resource-theoretic benchmarks for operations in quantum devices, where understanding and manipulating dynamical resources are of central importance.

\begin{acknowledgments}
M.H.\ was supported in part by the National Natural Science Foundation of China under Grant 62171212, and the General R\&D Projects of 1+1+1 CUHK-CUHK(SZ)-GDST Joint Collaboration Fund  (Grant No. GRDP2025-022)\@.
H.Y.\ was supported by JST PRESTO Grant Number JPMJPR23FC\@.
\end{acknowledgments}

\newpage
\appendix

\section{CQ-channel conversion rate under entanglement-assisted non-signaling operations}
\label{sec:entanglement_assist}

Here, we explain how to derive~\eqref{eq:bound_ENS} from Refs.~\cite[Theorem~3]{6757002} and~\cite[Theorem~3.10]{berta2011quantum}.
For this aim, we consider the conversion for QQ channels given in these references.

We consider the following conversion from a QQ channel
from $\mathcal{L}\qty(\mathcal{H}_{\mathrm{in},\mathcal{X}})$ to $\mathcal{L}\qty(\mathcal{H}_{\mathrm{in}})$ to a QQ channel from $\mathcal{L}\qty(\mathcal{H}_{\mathrm{out},\mathcal{X}})$ to $\mathcal{L}\qty(\mathcal{H}_{\mathrm{out}})$.
We choose two reference systems $\mathcal{H}_{\mathcal{X},R}$ and
$\mathcal{H}_{R}$, an entangled state $\tilde{\rho}\in\mathcal{D}\qty(\mathcal{H}_{\mathcal{X},R}\otimes\mathcal{H}_{R})$, a CPTP linear map $\tilde{\mathcal{N}}_\mathcal{X}$ from $\mathcal{L}\qty(\mathcal{H}_{\mathrm{out},\mathcal{X}}\otimes\mathcal{H}_{\mathcal{X},R})$ to $\mathcal{L}\qty(\mathcal{H}_{\mathrm{in},\mathcal{X}})$, and a CPTP linear map $\tilde{\mathcal{N}}$ from $\mathcal{L}\qty(\mathcal{H}_{\mathrm{in}}\otimes\mathcal{H}_{R})$ to $\mathcal{L}\qty(\mathcal{H}_{\mathrm{out}})$.
Then, we define a superchannel $\Theta_{\tilde{\mathcal{N}}_\mathcal{X},\tilde{\mathcal{N}},\tilde{\rho}}$ for QQ channel conversion as follows: for any QQ channel $\mathcal{N}$ from $\mathcal{L}\qty(\mathcal{H}_{\mathrm{in},\mathcal{X}})$ to $\mathcal{L}\qty(\mathcal{H}_\mathrm{in})$ and a state $\rho\in\mathcal{D}\qty(\mathcal{H}_{\mathrm{out},\mathcal{X}})$, the superchannel $\Theta_{\tilde{\mathcal{N}}_\mathcal{X},\tilde{\mathcal{N}},\tilde{\rho}}$ acts as
\begin{align}
    &\qty(\Theta_{\tilde{\mathcal{N}}_\mathcal{X},\tilde{\mathcal{N}},\tilde{\rho}}\qty[\mathcal{N}])\qty(\rho)\notag\\
    &\coloneqq\tilde{\mathcal{N}}\circ\qty(\mathcal{N}\otimes\id_{\mathrm{out},R})\circ\qty(\tilde{\mathcal{N}}_\mathcal{X}\otimes\id_{\mathrm{out},R})\qty(\rho\otimes\tilde{\rho}).
\end{align}
We also define a POVM $\tilde{\mathcal{M}}=\qty{\tilde{\Lambda}_{x_\mathrm{in}|x_\mathrm{out}}}_{x_\mathrm{in}}$ acting on $\mathcal{H}_{\mathcal{X},R}$ as
\begin{align}
    \tilde{\Lambda}_{x_\mathrm{in}|x_\mathrm{out}}\coloneqq\Tr_{\mathrm{out},\mathcal{X}}\qty[\tilde{\mathcal{N}}_\mathcal{X}^\ast\qty(\ket{x_\mathrm{in}}\bra{x_\mathrm{in}})\qty(\ket{x_\mathrm{out}}\bra{x_\mathrm{out}}\otimes\mathds{1})],
\end{align}
where $\tilde{\mathcal{N}}_\mathcal{X}^\ast$ is the dual map of $\tilde{\mathcal{N}}_\mathcal{X}$, and $\Tr_{\mathrm{out},\mathcal{X}}$ is the partial trace over $\mathcal{H}_{\mathrm{out},\mathcal{X}}$.
Then, we define a superchannel $\Theta_{\tilde{\mathcal{M}},\tilde{\mathcal{N}},\tilde{\rho}}$ for CQ channel conversion as
\begin{align}
     &\qty(\Theta_{\tilde{\mathcal{M}},\tilde{\mathcal{N}},\tilde{\rho}}\qty[\Phi])\qty(x_\mathrm{out})\notag\\
    &\coloneqq\tilde{\mathcal{N}}\qty(\sum_{x_\mathrm{in}\in\mathcal{X}_\mathrm{in}}\Phi\qty(x_\mathrm{in})\otimes\Tr_{\mathcal{X},R}\qty[\qty(\tilde{\Lambda}_{x_\mathrm{in}|x_\mathrm{out}}\otimes\mathds{1})\tilde{\rho}]),
\end{align}
where $\Tr_{\mathcal{X},R}$ is the partial trace over $\mathcal{H}_{\mathcal{X},R}$.

Given a CQ channel $\Phi$ from $\mathcal{X}$ to $\mathcal{D}\qty(\mathcal{H})$, with $\mathcal{H}_\mathcal{X}$ denoting a space satisfying $\dim\qty(\mathcal{H}_\mathcal{X})=|\mathcal{X}|$, we define the QQ channel $\mathcal{N}_\Phi$ from $\mathcal{L}\qty(\mathcal{H}_\mathcal{X})$ to $\mathcal{L}\qty(\mathcal{H})$ as
\begin{align}
    \label{eq:QQ_channel}
    \mathcal{N}_\Phi\qty(\rho)\coloneqq\sum_{x\in\mathcal{X}}\bra{x}\rho\ket{x}\ket{x}\bra{x}\otimes\Phi\qty(x).
\end{align}
Then, we have
\begin{align}
    \label{eq:equivalence_relation}
\qty(\Theta_{\tilde{\mathcal{N}}_\mathcal{X},\tilde{\mathcal{N}},\tilde{\rho}}\qty[\mathcal{N}_\Phi])\qty(\ket{x_\mathrm{out}}\bra{x_\mathrm{out}})=\qty(\Theta_{\tilde{\mathcal{M}},\tilde{\mathcal{N}},\tilde{\rho}}\qty[\Phi])\qty(x_\mathrm{out}).
\end{align}

We consider a CQ channel $\Phi_\mathrm{in}\in\mathcal{C}\qty(\mathcal{X}_\mathrm{in}\to\mathcal{H}_\mathrm{in})$ and $\Phi_\mathrm{out}\in\mathcal{C}\qty(\mathcal{X}_\mathrm{out}\to\mathcal{H}_\mathrm{out})$.
As in~\eqref{eq:QQ_channel}, we correspondingly have QQ channels $\mathcal{N}_{\Phi_\mathrm{in}}$ from $\mathcal{L}\qty(\mathcal{H}_{\mathrm{in},\mathcal{X}})$ to $\mathcal{L}\qty(\mathcal{H}_\mathrm{in})$ and $\mathcal{N}_{\Phi_\mathrm{out}}$ from $\mathcal{L}\qty(\mathcal{H}_{\mathrm{out},\mathcal{X}})$ to $\mathcal{L}\qty(\mathcal{H}_\mathrm{out})$.
Let $\mathcal{H}_{\mathrm{out},\mathcal{X},R}$ denote a reference system satisfying $\dim\qty(\mathcal{H}_{\mathrm{out},\mathcal{X},R})=\dim\qty(\mathcal{H}_{\mathrm{out},\mathcal{X}})$.
Then, the relation~\eqref{eq:equivalence_relation} yields
\begin{widetext}
\begin{align}
    \label{eq:error_measure}
    \max_{\rho\in\mathcal{D}\qty(\mathcal{H}_{\mathrm{out},\mathcal{X}}\otimes\mathcal{H}_{\mathrm{out},\mathcal{X},R})}d_{\mathrm{T}}\qty(\qty(\Theta_{\tilde{\mathcal{N}}_\mathcal{X},\tilde{\mathcal{N}},\tilde{\rho}}\qty[\mathcal{N}_{\Phi_\mathrm{in}}]\otimes\id)\qty(\rho),\qty(\mathcal{N}_{\Phi_\mathrm{out}}\otimes\id)\qty(\rho))
    &=\max_{x_\mathrm{out}\in\mathcal{X}_\mathrm{out}}d_{\mathrm{T}}\qty(\Theta_{\tilde{\mathcal{M}},\tilde{\mathcal{N}},\tilde{\rho}}\qty[\Phi_\mathrm{in}]\qty(x_\mathrm{out}),\Phi_\mathrm{out}\qty(x_\mathrm{out}))\notag\\
     &=d_{\diamond}\qty(\Theta_{\tilde{\mathcal{M}},\tilde{\mathcal{N}},\tilde{\rho}}\qty[\Phi_\mathrm{in}],\Phi_\mathrm{out}),
\end{align}    
\end{widetext}
where $d_{\mathrm{T}}$ and $d_\diamond$ are defined as~\eqref{eq:d_trace} and~\eqref{eq:d_diamond}, respectively.

We now analyze the conversion rate from $\Phi_\mathrm{noiseless}$ to a CQ channel $\Phi\in\mathcal{C}\qty(\mathcal{X}\to\mathcal{H})$.
As in~\eqref{eq:equivalence_relation}, we correspondingly have a QQ channel $\mathcal{N}_\Phi$ from $\mathcal{L}\qty(\mathcal{H}_\mathcal{X})$ to $\mathcal{L}\qty(\mathcal{H})$, where $\mathcal{H}_\mathcal{X}=|\mathcal{X}|$.
We introduce a reference system $\mathcal{H}_{\mathcal{X},R}$ satisfying $\dim\qty(\mathcal{H}_{\mathcal{X},R})=\dim\qty(\mathcal{H}_\mathcal{X})$.
Let $\mathcal{D}_\mathrm{pure}$ denote the set of density operators for pure states of the system $\mathcal{H}_\mathcal{X},\otimes\mathcal{H}_{\mathcal{X},R}$.
We further write its subset
\begin{align}
    &\mathcal{D}_\mathrm{pure}'\notag\\
    &\coloneqq\qty{\ket{\psi_p}\bra{\psi_p}\in\mathcal{D}_\mathrm{pure}:\ket{\psi_p}=\sum_{x\in\mathcal{X}}\sqrt{p(x)}\ket{x}\otimes\ket{x}}.
\end{align} 
When the error is measured by~\eqref{eq:error_measure}, Ref.~\cite[Theorem~3.10]{berta2011quantum} shows that the following rate is achieved:
\begin{widetext}
\begin{align}
    &\max_{\ket{\psi}\bra{\psi}\in\mathcal{D}_\mathrm{pure}}D\left(\qty(\mathcal{N}_\Phi\otimes\id)\qty(\ket{\psi}\bra{\psi})\middle\|\mathcal{N}_\Phi\qty(\Tr_{\mathcal{X},R}\qty[\ket{\psi}\bra{\psi}])\otimes\Tr_{\mathcal{X}}\qty[\ket{\psi}\bra{\psi}]\right)\notag\\
    &=\max_{\ket{\psi_p}\bra{\psi_p}\in\mathcal{D}_\mathrm{pure}'}D\left(\qty(\mathcal{N}_\Phi\otimes\id)\qty(\ket{\psi_p}\bra{\psi_p})\middle\|\mathcal{N}_\Phi\qty(\Tr_{\mathcal{X},R}\qty[\ket{\psi_p}\bra{\psi_p}])\otimes\Tr_{\mathcal{X}}\qty[\ket{\psi_p}\bra{\psi_p}]\right)\\
    &=\max_{p}\sum_{x\in\mathcal{X}}p\qty(x)D\left(\Phi\qty(x)\middle\|\sum_{x'\in\mathcal{X}}p\qty(x')\Phi\qty(x')\right)\\
    &=C\qty[\Phi],
\end{align}
\end{widetext}
and then,~\eqref{eq:error_measure} guarantees~\eqref{eq:bound_ENS}, i.e., the achievability of $C\qty[\Phi]$ in the CQ channel coding.
Note that the proof of Ref.~\cite[Theorem~3]{6757002} also essentially shows the same fact.

\bibliography{citation}

\end{document}